\definecolor[named]{ACMBlue}{cmyk}{1,0.1,0,0.1}
\definecolor[named]{ACMYellow}{cmyk}{0,0.16,1,0}
\definecolor[named]{ACMOrange}{cmyk}{0,0.42,1,0.01}
\definecolor[named]{ACMRed}{cmyk}{0,0.90,0.86,0}
\definecolor[named]{ACMLightBlue}{cmyk}{0.49,0.01,0,0}
\definecolor[named]{ACMGreen}{cmyk}{0.20,0,1,0.19}
\definecolor[named]{ACMPurple}{cmyk}{0.55,1,0,0.15}
\definecolor[named]{ACMDarkBlue}{cmyk}{1,0.58,0,0.21}
\newif\ifelsevierstyle
\newif\ifgenericfooter
\newif\ifwithcomments
\newif\ifsubmission
\newif\iflongversion
\newtheorem {definition}  {Definition} [section]
\newtheorem {theorem}     [definition] {Theorem}
\newtheorem {lemma}       [definition] {Lemma}
\newtheorem {corollary}   [definition] {Corollary}
\journal{}
\begin{document}

\begin{frontmatter}



\title{Revisiting Occurrence Typing}


\author[1]{Giuseppe Castagna}
\address[1]{Institut de Recherche en Informatique Fondamentale (IRIF),
               CNRS - Université de Paris, France
}

\author[1]{Victor Lanvin}

\author[1]{Mickaël Laurent}

\author[2]{Kim Nguyen}
\address[2]{
  {Laboratoire de Méthodes Formelles (LMF),}
  {CNRS - Université Paris-Saclay,}
  {France}
}

\begin{abstract}
We revisit occurrence typing, a technique to refine the type of
variables occurring in type-cases and, thus, capture some programming
patterns used in untyped languages. Although occurrence typing was
tied from its inception to set-theoretic types---union types, in
particular---it never fully exploited the capabilities of these
types. Here we show how, by using set-theoretic types, it is possible
to develop a general typing framework that encompasses and generalizes
several aspects of current occurrence typing proposals and that can be
applied to tackle other problems such as the reconstruction of intersection
types for unannotated or partially annotated functions and the optimization of
the compilation of gradually typed languages.

\end{abstract}

\begin{keyword}
occurrence typing \sep
type inference  \sep
union types  \sep
intersection types  \sep
TypeScript  \sep
Flow language  \sep
dynamic languages  \sep
type case \sep
gradual typing.
\end{keyword}
\end{frontmatter}

\section{Introduction}
\label{sec:intro}
TypeScript and Flow are extensions of JavaScript that allow the programmer to specify in the code type annotations used to statically type-check the program. For instance, the following function definition is valid in both languages  
\begin{alltt}\color{darkblue}
  function foo(x\textcolor{darkred}{ : number | string}) \{
      return (typeof(x) === "number")? x+1 : x.trim();  \refstepcounter{equation}                               \mbox{\color{black}\rm(\theequation)}\label{foo}
  \}
\end{alltt}
Apart from the type annotation (in red) of the function parameter, the above is
standard JavaScript code defining a function that checks whether
its argument is an integer; if it is so, then it returns the argument's successor
(\code{x+1}), otherwise it calls the method \code{trim()} of the
argument. The annotation specifies that the parameter is either a
number or a string (the vertical bar denotes a union type).  If this annotation is respected and the function
is applied to either an integer or a string, then the application
cannot fail because of a type error (\code{trim()} is a string method of the ECMAScript 5 standard that
trims white-spaces from the beginning and end of the string) and  both the type-checker of TypeScript and the one of Flow
rightly accept this function. This is possible because both type-checkers
implement a specific type discipline called \emph{occurrence typing} or \emph{flow
typing}:\footnote{%
  TypeScript calls it ``type guard recognition'' while Flow uses the terminology ``type
  refinements''.}
as a matter of fact, standard type disciplines would reject this function.
The reason for that is that standard type disciplines would try to
type every part of the body of the function under the assumption that \code{x} has
type \code{number\,|\,string} and they would fail, since the successor is
not defined for strings and the method \code{trim()} is not defined
for numbers.  This is so because standard disciplines do not take into account the type
test performed on \code{x}. Occurrence typing is the typing technique
that uses the information provided by the test to specialize---precisely, to \emph{refine}---the type
of the occurrences of \code{x} in the branches of the conditional: since the program tested that
\code{x} is of type \code{number}, then we can safely assume that
\code{x} is of type \code{number} in the ``then'' branch, and that it
is \emph{not} of type \code{number} (and thus deduce from the type annotation that it must be of type
\code{string}) in the ``else'' branch.

Occurrence typing was first defined and formally studied by
\citet{THF08} to statically type-check untyped Scheme
programs,\footnote{%
According to Sam Tobin-Hochstadt, the terminology \emph{occurrence
typing} was first used in a simplistic form by~\citet{Komon05},
although he and Felleisen were not aware of it the at the moment of
the writing of~\cite{THF08}.}
and later extended by \citet{THF10}
yielding the development of Typed Racket. From its inception,
occurrence typing was intimately tied to type systems with
set-theoretic types: unions, intersections, and negation of
types. Union was the first type connective to appear, since it was already used by
\citet{THF08} where its presence was needed to characterize the
different control flows of a type test, as our \code{foo} example
shows: one flow for integer arguments and another for
strings. Intersection types appear (in limited forms) combined with
occurrence typing both in TypeScript and in Flow and serve to give, among other,
more precise types to functions such as \code{foo}. For instance,
since \code{x\,+\,1} evaluates to an integer and \code{x.trim()} to a string, then our
function \code{foo} has type
\code{(number|string)$\to$(number|string)}.
But it is clear that a more precise type would be
one that states that \code{foo} returns a number when it is applied to
a number and returns a string when it is applied to a string, so that the type deduced for, say, \code{foo(42)} would be \code{number} rather than \code{number|string}. This is
exactly what the \emph{intersection type}
\begin{equation}\label{eq:inter}
  \code{(number$\to$number) \&  (string$\to$string)}
\end{equation}
states (intuitively, an expression has an intersection of types, noted \code{\&}, if and only if it has all the types of the intersection) and corresponds in Flow to declaring \code{foo} as follows:
\begin{alltt}\color{darkblue}
  var foo : \textcolor{darkred}{(number => number) & (string => string)} =  x => \{
      return (typeof(x) === "number")? x+1 : x.trim();  \refstepcounter{equation}                               \mbox{\color{black}\rm(\theequation)}\label{foo2}
  \} 
\end{alltt}
For what concerns negation types, they are pervasive in the occurrence
typing approach, even though they are used only at meta-theoretic
level,\footnote{At the moment of writing there is a pending pull
  request to add negation types to the syntax of TypeScript, but that is all.} in
particular to determine the type environment when the type case
fails. We already saw negation types at work when we informally typed the ``else''
branch in \code{foo}, for which we assumed that $x$ did \emph{not} have
type \code{number}---i.e., it had the (negation) type \code{$\neg$number}---and
deduced from it that $x$ then had type \code{string}---i.e.,
\code{(number|string)\&$\neg$number} which is equivalent to the set-theoretic difference
\code{(number|string)\textbackslash\,number} and, thus, to \code{string}.

The approaches cited above essentially focus on refining the type
of variables that occur in an expression whose type is being tested. They
do it when the variable occurs at top-level in the test (i.e., the variable is the
expression being tested) or under some specific positions such as in
nested pairs or at the end of a path of selectors. \beppe{Not precise,
  please check which are the cases that are handled in the cited
  papers}  In this work we aim at removing this limitation on the
contexts and develop a general theory to refine the type of variables
that occur in tested expressions under generic contexts, such as variables occurring in the
left or the right expressions of an application. In other words, we aim at establishing a formal framework to
extract as much static information as possible from a type test. We leverage our
analysis on the presence of full-fledged set-theoretic types
connectives provided by the theory of semantic subtyping. Our analysis
will also yield two important byproducts. First, to refine the type of the
variables we have to refine the type of the expressions they occur in and
we can use this information to improve our analysis.  Therefore our
occurrence typing approach will refine not only the types of variables
but also the types of generic expressions--i.e., any expression
whatever form it has---bypassing usual type
inference. Second, and most importantly, the result of our analysis can be used to infer
intersection types for functions, even in the absence of precise type
annotations such as the one in the definition of \code{foo} in~\eqref{foo2}: to put it simply, we are able to
infer the type~\eqref{eq:inter} for the unannotated pure JavaScript
code of \code{foo} (i.e., no type annotation at all), while in TypeScript and Flow (and any other
formalism we are aware of) 
this requires an explicit and full type annotation as the one given in~\eqref{foo2}.

Finally, the natural target for occurrence typing are languages with
dynamic type tests, in particular, dynamic languages. To type such
languages occurrence typing is often combined not only, as discussed
above, with set-theoretic types, but also with extensible record types
(to type objects) and gradual type system (to combine static and
dynamic typing) two features that we study in
Section~\ref{sec:extensions} as two extensions of our core formalism. Of
particular interest is the latter. \citet{Gre19} singles out
occurrence typing and gradual typing as \emph{the} two ``lineages'' that
partition the research on combining static and dynamic typing: he
identifies the former as the ``pragmatic,
implementation-oriented dynamic-first'' lineage and the latter as the
``formal, type-theoretic, static-first'' lineage.  Here we demonstrate that
these two ``lineages'' are not orthogonal or mutually independent, and 
we combine occurrence and gradual typing  showing, in particular, how
the former can be used to optimize the compilation of the latter.

\subsection{Motivating examples}
We focus our study on conditionals that  test types and consider the following syntax:
\(
\ifty{e}{t}{e}{e}
\) (e.g., in this syntax the body of \code{foo} in \eqref{foo} and \eqref{foo2} is rendered as
\(
  \ifty{x}{\Int}{x+1}{(\textsf{trim } x)}
\)).
In particular, in this introduction we concentrate on applications, since they constitute the most difficult case and many other cases can be reduced to them. A typical example is the expression
\begin{equation}\label{typical}
\ifty{x_1x_2}{t}{e_1}{e_2}
\end{equation}
where $x_i$'s denote variables, $t$ is some type, and $e_i$'s are generic expressions. 
Depending on the actual $t$ and on the static types of $x_1$ and $x_2$, we
can make type assumptions for $x_1$, for $x_2$, \emph{and} for the application $x_1x_2$
when typing $e_1$ that are different from those we can make when typing
$e_2$. For instance, suppose $x_1$ is bound to the function \code{foo} defined in \eqref{foo2}. Thus $x_1$ has type $(\Int\to\Int)\wedge(\String\to\String)$ (we used the syntax of the types of Section~\ref{sec:language} where unions and intersections are denoted by $\vee$ and $\wedge$ and have priority over $\to$ and $\times$, but not over $\neg$).
Then, it is not hard to see that if $x_2:\Int{\vee}\String$, then the expression\footnote{This and most of the following expressions are just given for the sake of example. Determining  the type \emph{in each branch} of expressions other than variables is interesting for constructors but less so for destructors such as applications, projections, and selections: any reasonable programmer would not repeat the same application twice, (s)he would store its result in a variable. This becomes meaningful with constructor such as pairs, as we do for instance in the expression in~\eqref{pair}.}
\begin{equation}\label{mezzo}
\texttt{let }x_1 \texttt{\,=\,}\code{foo}\texttt{ in } \ifty{x_1x_2}{\Int}{((x_1x_2)+x_2)}{\texttt{42}}
\end{equation}
is well typed with type $\Int$: when typing the branch ``then'' we
know that the test $x_1x_2\in \Int$
succeeded and that, therefore, not
only $x_1x_2$ is of type \Int, but also that $x_2$ is of type $\Int$: the other possibility,
$x_2:\String$, would have made the test fail.
For~\eqref{mezzo} we reasoned only on the type of the variables in the ``then'' branch but we can do the same
on the ``else'' branch as shown by the following expression, where \code{@} denotes string concatenation
\begin{equation}\label{two}
  \ifty{x_1x_2}{\Int}{((x_1x_2)+x_2)}{((x_1x_2)\code{\,@\,}x_2)}
\end{equation}
If the static type of
$x_1$ is $(\Int\to\Int)\wedge(\String\to\String)$ then $x_1x_2$ is well
typed only if the static type of $x_2$ is (a subtype of)
$\Int\vee\String$ and from that it is not hard to deduce that~\eqref{two}
has type $\Int\vee\String$. Let us see this in detail. The expression in~\eqref{two} is
typed in the following type environment:
$x_1:(\Int\to\Int)\wedge(\String\to\String), x_2:\Int\vee\String$. All we
can deduce, then, is that the application $x_1x_2$ has type
$\Int\vee\String$, which is not enough to type either the ``then'' branch
or the ``else'' branch. In order to type the ``then'' branch
$(x_1x_2)+x_2$ we must be able to deduce that both $x_1x_2$ and $x_2$
are of type \Int. Since we are in the ``then'' branch, then we know that
the type test succeeded and that, therefore, $x_1x_2$ has type \Int. Thus
we can assume in typing this branch that $x_1x_2$ has both its static
type and type \Int{} and, thus, their intersection:
$(\Int\vee\String)\wedge\Int$, that is \Int. For what concerns $x_2$ we
use the static type of $x_1$, that is
$(\Int\to\Int)\wedge(\String\to\String)$, and notice that this function
returns an \Int\ only if its argument is of type \Int. Reasoning as
above we thus deduce that in the ``then'' branch the type of $x_2$ is
the intersection of its static type with \Int:
$(\Int\vee\String)\wedge\Int$ that is \Int. To type the ``else'' branch
we reason exactly in the same way, with the only difference that, since
the type test has failed, then we know that the type of the tested expression is
\emph{not} \Int. That is, the expression $x_1x_2$ can produce any possible value
barring an \Int. If we denote by \Any\ the type of all values (i.e., the
type \code{any} of TypeScript and Flow) and by
$\setminus$ the set difference, then this means that in the else branch we
know that $x_1x_2$ has type $\Any{\setminus}\Int$---written
$\neg\Int$---, that is, it can return values of any type barred \Int. Reasoning as for the ``then'' branch we then assume that
$x_1x_2$ has type $(\Int\vee\String)\wedge\neg\Int$ (i.e., $(\Int\vee\String)\setminus\Int$, that is, \String), that
$x_2$ must be of type \String\ for the application to have type
$\neg\Int$ and therefore we assume that $x_2$ has type
$(\Int\vee\String)\wedge\String$ (i.e., again \String).

We have seen that we can specialize in both branches the type of the
whole expression $x_1x_2$, the type of the argument $x_2$,
but what about the type of the function $x_1$? Well, this depends on the type of
$x_1$ itself. In particular, if instead of an intersection type $x_1$
is typed by a union type (e.g., when the function bound to $x_1$ is
the result of a branching expression), then the test may give us information about
the type of the function in the various branches. So for instance if in the expression
in~\eqref{typical} $x_1$ is of type, say, $(s_1\to t)\vee(s_2\to\neg t)$, then we can
assume for the expression~\eqref{typical} that $x_1$ has type $(s_1\to t)$ in the branch ``then'' and
$(s_2\to \neg t)$ in the branch ``else''. 
As a more concrete example, if
$x_1:(\Int{\vee}\String\to\Int)\vee(\Bool{\vee}\String\to\Bool)$ and
$x_1x_2$ is well-typed, then we can deduce for
\begin{equation}\label{exptre}
\ifty{x_1x_2}{\Int}{(x_1(x_1x_2)+42)}{\texttt{not}(x_1(x_1x_2))}
\end{equation}
the type $\Int\vee\Bool$: in the ``then'' branch $x_1$ has type
$\Int{\vee}\String\to\Int$ and $x_1x_2$ is of type $\Int$; in the
``else'' branch $x_1$ has type $\Bool{\vee}\String\to\Bool$ and
$x_1x_2$ is of type $\Bool$.

Let us recap. If $e$ is an expression of type $t_0$ and we are trying to type
\(\ifty{e}{t}{e_1}{e_2}\),
then we can assume that $e$ has type $t_0\wedge t$ when typing $e_1$
and type $t_0\setminus t$ when typing $e_2$. If furthermore $e$ is of
the form $e'e''$, then we may also be able to specialize the types for $e'$ (in
particular if its static type is a union of arrows) and for $e''$ (in
particular if the static type of $e'$ is an intersection of
arrows). Additionally, we can repeat the reasoning for all subterms of $e'$
and $e''$ as long as they are applications, and deduce distinct types for all subexpressions of $e$ that
form applications. How to do it precisely---not only for applications, but also for other terms such as pairs, projections, records etc---is explained in the rest of
the paper but the key ideas are pretty simple and are presented next.

\subsection{Key ideas}\label{sec:ideas}

First of all, in a strict language we can consider a type as denoting the set of values of that type and subtyping as set-containment of the denoted values.
Imagine we are testing whether the result of an application $e_1e_2$
is of type $t$ or not, and suppose we know that the static types of
$e_1$ and $e_2$ are $t_1$ and $t_2$ respectively. If the application $e_1e_2$ is
well typed, then there is a lot of useful information that we can deduce from it:
first, that $t_1$ is a functional type (i.e., it denotes a set of
well-typed $\lambda$-abstractions, the values of functional type) whose domain, denoted by $\dom{t_1}$, is a type denoting the set of all values that are accepted by any function in
$t_1$; second that $t_2$ must be a subtype of the domain of $t_1$;
third, we also know the type of the application, that is the type
that denotes all the values that may result from the application of a
function in $t_1$ to an argument in $t_2$, type that we denote by
$t_1\circ t_2$. For instance, if $t_1=\Int\to\Bool$ and $t_2=\Int$,
then $\dom{t_1}=\Int$ and $t_1\circ t_2 = \Bool$. Notice that, introducing operations such as $\dom{}$ and $\circ$ is redundant when working with simple types, but becomes necessary in the presence of set-theoretic types. If for instance $t_1$ is the type of \eqref{foo2}, that is, \(t_1=(\Int{\to}\Int)\) \(\wedge\) \((\String{\to}\String)\),
then
\( \dom{t} = \Int \vee \String \), 
that is the union of all the possible
input types, while the precise return type of such a 
function depends on the type of the argument the function is applied to:
either an integer, or a string, or both (i.e., the union type
\(\Int\vee\String\)). So we have \( t_1 \circ \Int = \Int \),
\( t_1 \circ \String = \String \), and 
\( t_1 \circ (\Int\vee\String) = \Int \vee \String \) (see Section~\ref{sec:typeops} for the formal definition of $\circ$). 

What we want to do
is to refine the types of $e_1$ and $e_2$ (i.e., $t_1$ and $t_2$) for the cases where the test
that $e_1e_2$ has type $t$ succeeds or fails. Let us start with refining the type $t_2$ of $e_2$ for the case in
which the test succeeds. Intuitively, we want to remove from $t_2$ all
the values for which the application will surely return a result not
in $t$, thus making the test fail. Consider $t_1$ and let $s$ be the
largest subtype of $\dom{t_1}$ such that%
\svvspace{-1.29mm}
\begin{equation}\label{eq1}
  t_1\circ s\leq \neg t
\end{equation}
In other terms, $s$ contains all the legal arguments that make any function
in $t_1$ return a result not in $t$. Then we can safely remove from
$t_2$ all the values in $s$ or, equivalently, keep in $t_2$ all the
values of $\dom{t_1}$ that are not in $s$. Let us implement the second
viewpoint: the set of all elements of $\dom{t_1}$ for which an
application \emph{does not} surely give a result in $\neg t$ is
denoted $\worra{t_1}t$ (read, ``$t_1$ worra $t$'') and defined as $\min\{u\leq \dom{t_1}\alt
t_1\circ(\dom{t_1}\setminus u)\leq\neg t\}$: it is easy to see that
according to this definition $\dom{t_1}\setminus(\worra{t_1} t)$ is
the largest subset of $\dom{t_1}$ satisfying \eqref{eq1}. Then we can
refine the type of $e_2$ for when the test is successful by using the
type $t_2\wedge(\worra{t_1} t)$: we intersect all the possible results
of $e_2$, that is $t_2$, with the elements of the domain that
\emph{may} yield a result in $t$, that is $\worra{t_1} t$. When the test fails,
the type of $e_2$ can be refined in a similar way just by replacing $t$ by $\neg t$:
we get the refined type $t_2\land(\worra{t_1}{\neg t})$. To sum up, to refine the type of an
argument in the test of an application, all we need is to define
$\worra{t_1} t$, the set of arguments that when applied to a function
of type $t_1$ \emph{may} return a result in $t$; then we can refine the 
type of $e_2$ as $t_2^+ \eqdeftiny t_2\wedge(\worra{t_1} t)$ in the ``then'' branch (we call it the \emph{positive} branch)  
and as  $t_2^- \eqdeftiny t_2\setminus(\worra{t_1} t)$ in the ``else'' branch (we call it the \emph{negative} branch).
As a side remark note\marginpar{\tiny\beppe{Remove if space is needed}}
that the set $\worra{t_1} t$ is different from the set of elements that return a
result in $t$ (though it is a supertype of it). To see that, consider
for $t$ the type \String{} and for $t_1$ the type $(\Bool \to\Bool)\wedge(\Int\to(\String\vee\Int))$,
that is, the type of functions that when applied to a Boolean return a
Boolean and when applied to an integer return either an integer or a
string; then we have that $\dom{t_1}=\Int\vee\Bool$ and $\worra{t_1}\String=\Int$,
but there is no (non-empty) type that ensures that an application of a
function in $t_1$ will surely yield a $\String$ result.

Once we have determined $t_2^+$, it is then not very difficult to refine the
type $t_1$ for the positive branch, too. If the test succeeded, then we know two facts: first,
that the function was applied to a value in $t_2^+$ and, second, that
the application did not diverge and returned a result in
$t$. Therefore, we can exclude from $t_1$ all the functions that, when applied to an argument
in $t_2^+$, yield a result not in $t$.
It can be obtained simply by removing from $t_1$ the functions in
$t_2^+\to \neg t$, that is, we refine the type of $e_1$ in the ``then'' branch as
$t_1^+=t_1\setminus (t_2^+\to\neg t)$.
Note that this also removes functions diverging on $t_2^+$ arguments.
In particular, the interpretation of a type $t\to s$ is the set of all functions that when applied to an argument of type $t$
either diverge or return a value in $s$. As such the interpretation of $t\to s$ contains
all the functions that diverge (at least) on $t$. Therefore removing $t\to s$ from a type $u$ removes from $u$
not only all the functions that when applied to a $t$ argument return a result in $s$, but also all the functions that diverge on $t$.
Ergo $t_1\setminus (t_2^+\to
\neg t)$ removes, among others, all functions in $t_1$ that diverge on $t_2^+$.
Let us see all this on our example \eqref{exptre}, in particular, by showing how this technique deduces that the type of $x_1$ in the positive branch is (a subtype of) $\Int{\vee}\String\to\Int$.
Take the static type of $x_1$, that is $(\Int{\vee}\String\to\Int)\vee(\Bool{\vee}\String\to\Bool)$ and intersect it with
$\lnot(t_2^+\to \neg t)$, that is, $\neg(\String\to\neg\Int)$. Since intersection distributes over unions we
obtain
\svvspace{-1mm}
\[((\Int{\vee}\String{\to}\Int)\wedge\neg(\String{\to}\neg\Int))\vee((\Bool{\vee}\String{\to}\Bool)\wedge\neg(\String{\to}\neg\Int))\svvspace{-1mm}\]
and since
$(\Bool{\vee}\String{\to}\Bool)\wedge\neg(\String{\to}\neg\Int)$ is empty
(because $\String\to\neg\Int$ contains $\Bool{\vee}\String\to\Bool$),
then what we obtain is the left summand, a strict subtype of $(\Int{\vee}\String)\to\Int$, namely the functions of type $\Int{\vee}\String{\to}\Int$ minus those
that diverge on all \String{} arguments.

This is essentially what we formalize in Section~\ref{sec:language}, in the type system by the rule \Rule{PAppL} and in the typing algorithm with the case \eqref{due} of the definition of the function \constrf.

\subsection{Technical challenges}\label{sec:challenges}

In the previous section we outlined the main ideas of our approach to
occurrence typing. However, the devil is in the details. So the formalization we give in Section~\ref{sec:language} is not so smooth as we just outlined: we must introduce several auxiliary definitions to handle some corner cases. This section presents by tiny examples the main technical difficulties we had to overcome and the definitions we introduced to handle them. As such it provides a kind of road-map for the technicalities of  Section~\ref{sec:language}.

\paragraph{Typing occurrences} As it should be clear by now, not only variables
but also generic expressions are given different types in the ``then'' and
``else'' branches of type tests. For instance, in \eqref{two} the expression
$x_1x_2$ has type \Int{} in the positive branch and type \Bool{} in the negative
one. In this specific case it is possible to deduce these typings from the
refined types of the variables (in particular, thanks to the fact that $x_2$ has
type \Int{} the positive branch and \Bool{} in the negative one), but this is
not possible in general. For instance, consider $x_1:\Int\to(\Int\vee\Bool)$,
$x_2:\Int$, and the expression
\svvspace{-1mm}
\begin{equation}\label{twobis}
  \ifty{x_1x_2}{\Int}{...x_1x_2...}{...x_1x_2...}\svvspace{-1mm}
\end{equation}
It is not possible to specialize the type of the variables in the
branches. Nevertheless, we want to be able to deduce that $x_1x_2$ has
type \Int{} in the positive branch and type \Bool{} in the negative
one. In order to do so in Section~\ref{sec:language} we will use
special type environments that map not only variables but also generic
expressions to types. So to type, say, the positive branch of
\eqref{twobis} we extend the current type environment with the
hypothesis that the expression $x_1x_2$ has type \Int{}.

When we test the type of an expression we try to deduce the type of
some subexpressions occurring in it. Therefore we must cope with
subexpressions occurring multiple times. A simple example is given by
using product types and pairs as in
$\ifty{(x,x)}{\pair{t_1}{t_2}}{e_1}{e_2}$. It is easy to see that the
positive branch $e_1$ is selected only if $x$ has type $t_1$
\emph{and} type $t_2$ and deduce from that that $x$ must be typed in
$e_1$ by their intersection, $t_1\wedge t_2$. To deal with multiple
occurrences of a same subexpression the type inference system of
Section~\ref{sec:language} will use the classic rule for introducing intersections \Rule{Inter}, while the algorithmic counterpart will use the operator $\Refine{}{}{}$ that
intersects the static type of an expression with all the types deduced
for the multiple occurrences of it.

\paragraph{Type preservation} We want our type system to be sound in the sense of~\citet{Wright1994}, that is, that it
satisfies progress and type preservation. The latter property is
challenging because, as explained just above, our type
assumptions are not only about variables but also about 
expressions. Two corner cases are particularly difficult. The first is
shown by the following example\svvspace{-.9mm}
\begin{equation}\label{bistwo}
  \ifty{e(42)}{\Bool}{e}{...} \svvspace{-.9mm}
\end{equation}
If $e$ is an expression of type $\Int\to t$, then, as discussed before,
the positive branch will have type $(\Int\to t)\setminus(\Int\to\neg
\Bool)$. If furthermore the negative branch is of the same type (or of a subtype), then this
will also be the type of the whole expression in \eqref{bistwo}. Now imagine
that the application $e(42)$ reduces to a Boolean value, then the whole
  expression in \eqref{bistwo} reduces to $e$; but this has type
  $\Int\to t$ which, in general, is \emph{not} a subtype of $(\Int\to
  t)\setminus(\Int\to\neg\Bool)$, and therefore type is not preserved by the reduction. To cope with this problem, the proof
  of type preservation (see \Appendix\ref{app:subject-reduction}) resorts to \emph{type schemes}, a
  technique introduced by~\citet{Frisch2008} to type expressions by
  sets of types, so that the expression in \eqref{bistwo} will have both the types at issue. 

The second corner case is a modification of the example above
where the positive branch is $e(42)$, e.g.,
$\ifty{e(42)}{\Bool}{e(42)}{\textsf{true}}$. In this case the type deduced for the
whole expression is \Bool, while after reduction we would obtain
the expression $e(42)$ which is not of type \Bool{} but of type $t$ (even though it will
eventually reduce to a \Bool). This problem will be handled in the
proof of type preservation by considering parallel reductions (e.g, if
$e(42)$ reduces in a step to, say, $\textsf{false}$, then
$\ifty{e(42)}{\Bool}{e(42)}{\textsf{true}}$ reduces in one step to
$\ifty{\textsf{false}}{\Bool}{\textsf{false}}{\textsf{true}}$): see \Appendix\ref{app:parallel}.

\paragraph{Interdependence of checks} The last class of technical problems arise
from the mutual dependence of different type checks. In particular, there are two cases
that pose a problem. The first can be shown by two functions $f$ and $g$ both of type $(\Int\to\Int)\wedge(\Any\to\Bool)$, $x$ of type $\Any$ and the test:
\begin{equation}\label{nest1}
\ifty{(f\,x,g\,x)}{\pair\Int\Bool}{\,...\,}{\,...}
\end{equation}
If we independently check $f\,x$ against $\Int$ and $g\,x$ against $\Bool$
we deduce $\Int$ for the first occurrence of $x$ and $\Any$ for the
second. Thus we would type the positive branch of \eqref{nest1} under the hypothesis
that $x$ is of type $\Int$. But if we use the hypothesis generated by
the test of $f\,x$, that is, that $x$ is of type \Int, to check $g\,x$ against \Bool,
then the type deduced for $x$ is $\Empty$---i.e., the branch is never selected. In other words, we want to produce type
environments for occurrence typing by taking into account  all
the available hypotheses, even when these hypotheses are formulated later
in the flow of control. This will be done in the type systems of
Section~\ref{sec:language} by the rule \Rule{Path} and will require at
algorithmic level to look for a fix-point solution of a function, or
an approximation thereof.

Finally, a nested check may help refining
the type assumptions on some outer expressions. For instance, when typing
the positive branch $e$ of\svvspace{-.9mm}
\begin{equation}\label{pair}
\ifty{(x,y)}{(\pair{(\Int\vee\Bool)}\Int)}{e}{...}\svvspace{-.9mm}
\end{equation}
we can assume that the expression $(x,y)$ is of type
$\pair{(\Int\vee\Bool)}\Int$ and put it in the type environment. But
if in $e$ there is a test like
$\ifty{x}{\Int}{{\color{darkred}(x,y)}}{(...)}$ then we do not want use
the assumption in the type environment to type the expression $(x,y)$
occurring in the inner test (in red). Instead we want to give to that occurrence of the expression
$(x,y)$ the type $\pair{\Int}\Int$. This will be done by temporarily
removing the type assumption about $(x,y)$ from the type environment and
by retyping the expression without that assumption (see rule
\Rule{Env\Aa} in Section~\ref{sec:algorules}).

\iflongversion
\subsubsection*{Outline}
\else
\subsubsection*{Outline and Contributions}
\fi
In Section~\ref{sec:language} we formalize the
ideas we just presented: we define the types and
expressions of our system, their dynamic semantics and a type system that
implements occurrence typing together with the algorithms that decide
whether an expression is well typed or not. Section~\ref{sec:extensions} extends our formalism to record
types and presents two applications of our analysis: the inference of
arrow types for functions and a static analysis to reduce the number
of casts inserted by a compiler of a gradually-typed
language. Practical aspects are discussed in
Section~\ref{sec:practical} where we give several paradigmatic examples of code typed by our prototype implementation, that can be interactively tested at
\ifsubmission
the (anonymized) site
\fi
\url{https://occtyping.github.io/}. Section~\ref{sec:related} presents
related work.
\iflongversion
A discussion of future work concludes this
presentation.
\fi
To ease the presentation all the proofs are
omitted from the main text and can be found in the appendix.

\iflongversion
\subsubsection*{Contributions}
\fi
\noindent The main contributions of our work can be summarized as follows:
\begin{itemize}[left=0pt .. \parindent,nosep]
 \item We provide a theoretical framework to refine the type of
   expressions occurring in type tests, thus removing the limitations
   of current occurrence typing approaches which require both the tests and the refinement
   to take place on variables.

 \item We define a type-theoretic approach alternative to the
   current flow-based approaches. As such it provides different
   results and it can be thus profitably combined with flow-based techniques.
   
 \item We use our analysis for defining a formal framework that
   reconstructs intersection types for unannotated or
   partially-annotated functions, something that, in our ken, no
   other current system can do.
   
 \item We prove the soundness of our system. We define algorithms to
   infer the types that we prove to be sound and show different completeness results
   which in practice yield the completeness of any reasonable implementation.

 \item We show how to extend our approach to records with field
   addition, update, and deletion operations.  
   
 \item We show how  occurrence typing can be extended to and combined with
   gradual typing and apply our results to optimize the compilation of the
   latter.

\end{itemize}
  We end this introduction by stressing the practical
  implications of our work: a perfunctory inspection may give the
  wrong impression that the only interest of the heavy formalization
  that follows is to have generic expressions, rather than just variables, in type
  cases: this would be a bad trade-off. The important point is,
  instead, that our formalization is what makes analyses such as those
  presented in Section~\ref{sec:extensions} possible (e.g., the
  reconstruction of the type~\eqref{eq:inter} for the unannotated pure
  JavaScript code of \code{foo}), which is where the actual added
  practical value and potential of our work resides.

\section{Language}
\label{sec:language}
\newlength{\sk}
\setlength{\sk}{-1.9pt}
\iflongversion
In this section we formalize the ideas we outlined in the introduction. We start by the definition of types followed by the language and its reduction semantics. The static semantics is the core of our work: we first present a declarative type system that deduces (possibly many) types for well-typed expressions and then the algorithms to decide whether an expression is well typed or not. 
\fi

\subsection{Types}
\begin{definition}[Types]\label{def:types}
The set of types \types{} is formed by the terms $t$ coinductively produced by the grammar:\svvspace{-1.45mm}
\[
\begin{array}{lrcl}
\textbf{Types} & t & ::= & b\alt t\to t\alt t\times t\alt t\vee t \alt \neg t \alt \Empty 
\end{array}
\]
and that satisfy the following conditions
\begin{itemize}[nosep]
\item (regularity) every term has a finite number of different sub-terms;
\item (contractivity) every infinite branch of a term contains an infinite number of occurrences of the
arrow or product type constructors.\svvspace{-1mm}
\end{itemize}
\end{definition}
We use the following abbreviations: $
    t_1 \land t_2 \eqdef \neg (\neg t_1 \vee \neg t_2)$, 
    $t_ 1 \setminus t_2 \eqdef t_1 \wedge \neg t_2$, $\Any \eqdef \neg \Empty$.
$b$ ranges over basic types
(e.g., \Int, \Bool),
$\Empty$ and $\Any$ respectively denote the empty (that types no value)
and top (that types all values) types. Coinduction accounts for
recursive types and the condition on infinite branches bars out
ill-formed types such as 
$t = t \lor t$ (which does not carry any information about the set
denoted by the type) or $t = \neg t$ (which cannot represent any
set). 
\iflongversion
It also ensures that the binary relation $\vartriangleright
\,\subseteq\!\types{\times}\types$ defined by $t_1 \lor t_2 \vartriangleright
t_i$, $t_1 \land t_2 \vartriangleright
t_i$, $\neg t \vartriangleright t$ is Noetherian.
This gives an induction principle on $\types$ that we
will use without any further explicit reference to the relation.\footnote{In a nutshell, we can do proofs by induction on the structure of unions and negations---and, thus, intersections---but arrows, products, and basic types are the base cases for the induction.} 
\fi
We refer to $ b $, $\times$, and $ \to $ as \emph{type constructors}
and to $ \lor $, $ \land $, $ \lnot $, and $ \setminus $
as \emph{type connectives}.

The subtyping relation for these types, noted $\leq$, is the one defined
by~\citet{Frisch2008} 
\iflongversion
and detailed description of the algorithm to
decide this relation can be found in~\cite{Cas15}. For the reader's
convenience we succinctly recall the definition of the subtyping
relation in the next subsection but it is possible to skip this
subsection at first reading and jump directly to Subsection~\ref{sec:syntax}, since to understand
the rest of the paper
\else
to which the reader may refer for the formal
definition (we recall it in \Appendix\ref{sec:subtyping} for the
reader's convenience).
A detailed description of the algorithm to
decide this relation can be found in~\cite{Cas15}.
For this presentation
\fi
it suffices to consider that
types are interpreted as sets of \emph{values} (i.e., either
constants, $\lambda$-abstractions, or pairs of values: see
Section~\ref{sec:syntax} right below) that have that type, and that subtyping is set
containment (i.e., a type $s$ is a subtype of a type $t$ if and only if $t$
contains all the values of type $s$). In particular, $s\to t$
contains all $\lambda$-abstractions that when applied to a value of
type $s$, if their computation terminates, then they return a result of
type $t$ (e.g., $\Empty\to\Any$ is the set of all
functions\footnote{\label{allfunctions}Actually, for every type $t$,
all types of the form $\Empty{\to}t$ are equivalent and each of them
denotes the set of all functions.} and $\Any\to\Empty$ is the set
of functions that diverge on every argument). Type connectives
(i.e., union, intersection, negation) are interpreted as the
corresponding set-theoretic operators (e.g.,~$s\vee t$ is the
union of the values of the two types). We use $\simeq$ to denote the
symmetric closure of $\leq$: thus $s\simeq t$ (read, $s$ is equivalent to $t$) means that $s$ and $t$ denote the same set of values and, as such, they are semantically the same type.
\iflongversion
All the above is formalized as follows.
\subsection{Subtyping}
\label{sec:subtyping}
\noindent
Subtyping is defined by giving a set-theoretic interpretation of the
types of Definition~\ref{def:types} into a suitable domain $\Domain$: 
\begin{definition}[Interpretation domain~\cite{Frisch2008}]\label{def:interpretation}
The \emph{interpretation domain} $ \Domain $ is the set of finite terms $ d $
produced inductively by the following grammar\vspace{-2mm}
\begin{align*}
  d & \Coloneqq  c \mid (d, d) \mid \Set{(d, \domega), \dots, (d, \domega)}
    \\
    \domega & \Coloneqq d \mid \Omega
\end{align*}
where $ c $ ranges over the set $ \Constants $ of constants
and where $ \Omega $ is such that $ \Omega \notin \Domain $.
\end{definition}
The elements of $ \Domain $ correspond, intuitively,
to (denotations of) the results of the evaluation of expressions.
In particular, in a higher-order language,
the results of  computations can be functions which, in this model,
are represented by sets of finite relations
of the form $ \Set{(d_1, \domega_1), \dots, (d_n, \domega_n)} $,
where $ \Omega $ (which is  not in $ \Domain $)
can appear in second components to signify
that the function fails (i.e., evaluation is stuck) on the
corresponding input. This is implemented by using in the second
projection the meta-variable $\domega$ which ranges over
$ \Domain_\Omega = \Domain \cup \Set{\Omega} $ (we reserve
$d$ to range over $\Domain$, thus excluding $\Omega$).
This constant $\Omega$ is used to ensure that $\Any\to\Any$ is not a
supertype of all function types: if we used $d$ instead of $\domega$,
then every well-typed function could be subsumed to  $\Any\to\Any$
and, therefore, every application could be given the type $\Any$,
independently from its argument as long as this argument is typable (see Section 4.2 of~\cite{Frisch2008} for details).
The restriction to \emph{finite} relations corresponds to the intuition
that the denotational semantics of a function is given by the set of
its finite approximations, where finiteness is a restriction necessary
(for cardinality reasons) to give the
semantics to higher-order functions.

We define the interpretation $ \TypeInter{t} $ of a type $ t $
so that it satisfies the following equalities,
where  $ \Pf $ denotes the restriction of the powerset to finite
subsets and $  \ConstantsInBasicType{}$ denotes the function
that assigns to each basic type the set of constants of that type, so
  that  for every constant $c$ we have $
  c\in \ConstantsInBasicType(\basic {c})$
\ifelsevierstyle
(we use $\basic{c}$ to denote the
basic type of the constant $c$):
\else
($\basic c$ is
  defined in Section~\ref{sec:syntax}):
\fi
\begin{align*}
  \TypeInter{\Empty} & = \emptyset&
  \TypeInter{t_1 \lor t_2} & = \TypeInter{t_1} \cup \TypeInter{t_2} &
  \TypeInter{\lnot t} & = \Domain \setminus \TypeInter{t} 
  \\
  \TypeInter{b} & = \ConstantsInBasicType(b) &
  \TypeInter{t_1 \times t_2} & = \TypeInter{t_1} \times \TypeInter{t_2} \\
  \TypeInter{t_1 {\to} t_2} & =
    \{R \in \Pf(\Domain{\times}\Domain_\Omega) \mid\forall (d, \domega) \in R. \:d \in \TypeInter{t_1} \implies \domega \in \TypeInter{t_2}\}
   \span\span
   \span\span
\end{align*}
We cannot take the equations above
directly as an inductive definition of $ \TypeInter{} $
because types are not defined inductively but coinductively.
\iflongversion
However, recall that the contractivity condition of
Definition~\ref{def:types}  ensures that the binary relation $\vartriangleright
\,\subseteq\!\types{\times}\types$ defined by $t_1 \lor t_2 \vartriangleright
t_i$, $t_1 \land t_2 \vartriangleright t_i$, $\neg t \vartriangleright t$ is Noetherian which gives an induction principle  on $\types$ that we
use combined with  structural induction on $\Domain$ to give the
following definition
which validates these equalities.
\else
Notice however that the contractivity condition of
Definition~\ref{def:types}  ensures that the binary relation $\vartriangleright
\,\subseteq\!\types{\times}\types$ defined by $t_1 \lor t_2 \vartriangleright
t_i$, $t_1 \land t_2 \vartriangleright
t_i$, $\neg t \vartriangleright t$ is Noetherian.
This gives an induction principle\footnote{In a nutshell, we can do
proofs and give definitions by induction on the structure of unions and negations---and, thus, intersections---but arrows, products, and basic types are the base cases for the induction.}  on $\types$ that we
use combined with  structural induction on $\Domain$ to give the following definition,
which validates these equalities.
\fi
\begin{definition}[Set-theoretic interpretation of types~\cite{Frisch2008}]\label{def:interpretation-of-types}
We define a binary predicate $ (d : t) $
(``the element $ d $ belongs to the type $t$''),
where $ d \in \Domain $ and $ t \in \types $,
by induction on the pair $ (d, t) $ ordered lexicographically.
The predicate is defined as follows:
\begin{align*}
  (c : b)  &= c \in \ConstantsInBasicType(b) \\
  ((d_1, d_2) : t_1 \times t_2 )  &=
    (d_1 : t_1) \mathrel{\mathsf{and}} (d_2 : t_2) \\
  (\Set{(d_1, \domega_1),..., (d_n, \domega_n)} : t_1 \to t_2)  &=
   \forall i \in [1.. n] . \:
    \mathsf{if} \: (d_i : t_1) \mathrel{\mathsf{then}} (\domega_i : t_2) \\
  (d : t_1 \lor t_2)  &= (d : t_1) \mathrel{\mathsf{or}} (d : t_2) \\
  (d : \lnot t)  &= \mathsf{not} \: (d : t) \\
  (\domega : t)  &= \mathsf{false} & \text{ otherwise}
  \end{align*}
We define the \emph{set-theoretic interpretation}
$ \TypeInter{} : \types \to \Pd(\Domain) $
as $ \TypeInter{t} = \{d \in \Domain \mid (d : t)\} $.
\end{definition}
Finally,
we define the subtyping preorder and its associated equivalence relation
as follows.

\begin{definition}[Subtyping relation~\cite{Frisch2008}]\label{def:subtyping}
  We define the \emph{subtyping} relation $ \leq $
  and the \emph{subtyping equivalence} relation $ \simeq $
  as
  \(
    t_1 \leq t_2 \iffdef \TypeInter{t_1} \subseteq \TypeInter{t_2}\) and   
  \(t_1 \simeq t_2 \iffdef (t_1 \leq t_2) \mathrel{\mathsf{and}} (t_2 \leq t_1)
    \: .
  \)
\end{definition}

\fi

\subsection{Syntax}\label{sec:syntax}
The expressions $e$ and values $v$ of our language are inductively generated by the following grammars:\svvspace{-2mm}
\begin{equation}\label{expressions}
\begin{array}{lrclr}  
  \textbf{Expr} &e &::=& c\alt x\alt ee\alt\lambda^{\wedge_{i\in I}s_i\to t_i} x.e\alt \pi_j e\alt(e,e)\alt\tcase{e}{t}{e}{e}\\[.3mm]
  \textbf{Values} &v &::=& c\alt\lambda^{\wedge_{i\in I}s_i\to t_i} x.e\alt (v,v)\\
\end{array}
\end{equation}
for $j=1,2$. In~\eqref{expressions}, $c$ ranges over constants
(e.g., \texttt{true}, \texttt{false}, \texttt{1}, \texttt{2},
...) which are values of basic types%
\ifelsevierstyle\else
\ (we use $\basic{c}$ to denote the
basic type of the constant $c$)%
\fi
; $x$ ranges over variables; $(e,e)$
denotes pairs and $\pi_i e$ their projections; $\tcase{e}{t}{e_1}{e_2}$
denotes the type-case expression that evaluates either $e_1$ or $e_2$
according to whether the value returned by $e$ (if any) has the type $t$
or not; $\lambda^{\wedge_{i\in I}s_i\to t_i} x.e$ denotes the function of parameter $x$
and body $e$ annotated with the type $\wedge_{i\in I}s_i\to t_i$. An expression has an intersection type if and only if it
has all the types that compose the intersection. Therefore,
intuitively, $\lambda^{\wedge_{i\in I}s_i\to t_i} x.e$ is a well-typed
expression if for all $i{\in} I$ the hypothesis that $x$ is of type $s_i$
implies that the body $e$ has type $t_i$, that is to say, it is well
typed if $\lambda^{\wedge_{i\in I}s_i\to t_i} x.e$ has type $s_i\to
t_i$ for all $i\in I$.

\subsection{Dynamic semantics}\label{sec:opsem}

The dynamic semantics is defined as a classic left-to-right
call-by-value weak reduction for a $\lambda$-calculus with pairs, enriched with specific rules for type-cases. We have the following  notions of reduction:\svvspace{-1.2mm}
\[
\begin{array}{rcll}
  (\lambda^{\wedge_{i\in I}s_i\to t_i} x.e)\,v &\reduces& e\subst x v\\[-.4mm]
  \pi_i(v_1,v_2) &\reduces& v_i & i=1,2\\[-.4mm]
  \tcase{v}{t}{e_1}{e_2} &\reduces& e_1 &v\in \valsemantic t\\[-.4mm] 
  \tcase{v}{t}{e_1}{e_2} &\reduces& e_2 &v\not\in \valsemantic t\\[-1.3mm]
\end{array}
\]
where $\valsemantic t$ denotes, intuitively, the set of values that have type $t$.
Formally, $\valsemantic t=\{v\alt \exists t'\in \vtyof{v}.\ t'\leq t\}$ where $\vtyof{v}$ is inductively defined as: $\vtyof c \eqdef \{\basic{c}\}$,\,\,\,
$\vtyof{\lambda^{\wedge_{i\in I}s_i\to t_i} x.e} \eqdef \{t\alt t\simeq (\wedge_{i\in I}s_i\to t_i)\wedge(\wedge_{j\in J}s_j'\to t_j'), t\not\leq\Empty\}$,\,\,\,
$\vtyof{(v_1,v_2)} \eqdef \vtyof{v_1}\times\vtyof{v_2}$
\footnote{This definition may look
complicated but it is necessary to handle some corner cases for
negated arrow types (cf.\ rule \Rule{Abs-} in
Section~\ref{sec:static}). For instance, it states that $\lambda^{\Int{\to}\Int}x.x\in
\valsemantic{(\Int{\to}\Int)\wedge\neg(\Bool{\to}\Int)}$.}.\\

Contextual reductions are
defined by the following evaluation contexts:
\[
\Cx[] ::= [\,]\alt \Cx e\alt v\Cx \alt (\Cx,e)\alt (v,\Cx)\alt \pi_i\Cx\alt \tcase{\Cx}tee
\]
As usual we denote by $\Cx[e]$ the term obtained by replacing $e$ for
the hole in the context $\Cx$ and we have that $e\reduces e'$ implies
$\Cx[e]\reduces\Cx[e']$.

\subsection{Static semantics}\label{sec:static}

While the syntax and reduction semantics are, on the whole, pretty
standard, for what concerns the type system we will have to introduce several
unconventional features that we anticipated in
Section~\ref{sec:challenges} and are at the core of our work. Let
us start with the standard part, that is the typing of the functional
core and the use of subtyping, given by the following typing rules:\svvspace{-1mm}
\begin{mathpar}
  \Infer[Const]
      { }
      {\Gamma\vdash c:\basic{c}}
      { }
  \quad
  \Infer[App]
      {
        \Gamma \vdash e_1: \arrow {t_1}{t_2}\quad
        \Gamma \vdash e_2: t_1
      }
      { \Gamma \vdash {e_1}{e_2}: t_2 }
      { }
  \quad
  \Infer[Abs+]
      {{\scriptstyle\forall i\in I}\quad\Gamma,x:s_i\vdash e:t_i}
      {
      \Gamma\vdash\lambda^{\wedge_{i\in I}\arrow {s_i} {t_i}}x.e:\textstyle \bigwedge_{i\in I}\arrow {s_i} {t_i}
      }
      { }
\end{mathpar}
\begin{mathpar}
      \Infer[Sel]
  {\Gamma \vdash e:\pair{t_1}{t_2}}
  {\Gamma \vdash \pi_i e:t_i}
  { }
  \qquad
  \Infer[Pair]
  {\Gamma \vdash e_1:t_1 \and \Gamma \vdash e_2:t_2}
  {\Gamma \vdash (e_1,e_2):\pair {t_1} {t_2}}
  { }
  \qquad
    \Infer[Subs]
      { \Gamma \vdash e:t\\t\leq t' }
      { \Gamma \vdash e: t' }
      { }
  \qquad\svvspace{-3mm}
\end{mathpar}
These rules are quite standard and do not need any particular explanation besides those already given in Section~\ref{sec:syntax}. Just notice subtyping is embedded in the system by the classic \Rule{Subs} subsumption rule. Next we focus on the unconventional aspects of our system, from the simplest to the hardest.

The first unconventional aspect is that, as explained in
Section~\ref{sec:challenges}, our type assumptions are about
expressions. Therefore, in our rules the type environments, ranged over
by $\Gamma$, map \emph{expressions}---rather than just variables---into
types. This explains why the classic typing rule for variables is replaced by a more general \Rule{Env} rule defined below:\svvspace{-1mm}
\begin{mathpar}
  \Infer[Env]
      { }
      { \Gamma \vdash e: \Gamma(e) }
      { e\in\dom\Gamma }
  \qquad
  \Infer[Inter]
      { \Gamma \vdash e:t_1\\\Gamma \vdash e:t_2 }
      { \Gamma \vdash e: t_1 \wedge t_2 }
      { }\svvspace{-3mm}
\end{mathpar}
The \Rule{Env} rule is coupled with the standard intersection introduction rule \Rule{Inter}
which allows us to deduce for a complex expression the intersection of
the types recorded by the occurrence typing analysis in the
environment $\Gamma$ with the static type deduced for the same
expression by using the other typing rules. This same intersection
rule is also used to infer the second unconventional aspect of our
system, that is, the fact that $\lambda$-abstractions can have negated
arrow types, as long as these negated types do not make the type deduced for the function empty:\svvspace{-.5mm}
\begin{mathpar}
  \Infer[Abs-]
    {\Gamma \vdash \lambda^{\wedge_{i\in I}\arrow {s_i} {t_i}}x.e:t}
    { \Gamma \vdash\lambda^{\wedge_{i\in I}\arrow {s_i} {t_i}}x.e:\neg(t_1\to t_2)  }
    { ((\wedge_{i\in I}\arrow {s_i} {t_i})\wedge\neg(t_1\to t_2))\not\simeq\Empty }\svvspace{-1.2mm}
\end{mathpar}
\rev{
In Section~\ref{sec:challenges} we explained that in order for our
system to satisfy the property of type preservation, the type system
must be able to deduce negated arrow types for functions---e.g. the
type $(\Int\to\Int)\wedge\neg(\Bool\to\Bool)$ for $\lambda^{\Int\to\Int}
x.x$. We demonstrated this with the expression in
equation \eqref{bistwo}, for which type preservation holds only if we are
able to deduce for this expression the type
$(\Int\to t)\setminus(\Int\to\neg\Bool)$, that is, $(\Int\to t)\wedge\neg(\Int\to\neg\Bool)$.
}
But the sole rule \Rule{Abs+}
above does not allow us to deduce  negations of
arrows for $\lambda$-abstractions: the rule \Rule{Abs-} makes this
possible.
\rev{
This rule ensures that given a function $\lambda^t x.e$ (where $t$
is an intersection type), for every type $t_1\to t_2$, either
$t_1\to t_2$ can be obtained by subsumption from $t$ or $\neg(t_1\to
t_2)$ can be added to the intersection $t$. In turn this ensures
that, for any function and any type $t$ either the function has type
$t$ or it has type $\neg t$ (see~\citet[Sections 3.3.2 and
3.3.3]{Pet19phd} for a thorough discussion on this rule).
}
As an aside, note that this kind
of deduction is already present in the system by~\citet{Frisch2008}
though in that system this presence was motivated by the semantics of types rather than, as in our case,
by the soundness of the type system.

Rules \Rule{Abs+} and \Rule{Abs-} are not enough to deduce for
$\lambda$-abstractions all the types we wish. In particular, these
rules alone are not enough to type general overloaded functions. For
instance, consider this simple example of a function that applied to an
integer returns its successor and applied to anything else returns
\textsf{true}:\\[1mm]
\centerline{\(
\lambda^{(\Int\to\Int)\wedge(\neg\Int\to\Bool)} x\,.\,\tcase{x}{\Int}{x+1}{\textsf{true}}
\)}\\[.6mm]
Clearly, the expression above is well typed, but the rule \Rule{Abs+} alone
is not enough to type it. In particular, according to \Rule{Abs+} we
have to prove that under the hypothesis that $x$ is of type $\Int$ the expression
$(\tcase{x}{\Int}{x+1}{\textsf{true}})$ is of type $\Int$, too.  That is, that under the
hypothesis that $x$ has type $\Int\wedge\Int$ (we apply occurrence
typing) the expression $x+1$ is of type \Int{} (which holds) and that under the
hypothesis that $x$ has type $\Int\setminus\Int$, that is $\Empty$
(we apply once more occurrence typing), \textsf{true} is of type \Int{}
(which \emph{does not} hold). The problem is that we are trying to type the
second case of a type-case even if we know that there is no chance that, when $x$ is bound to an integer,
that case will be ever selected. The fact that it is never selected is witnessed
by the presence of a type hypothesis with  $\Empty$ type. To
avoid this problem (and type the term above) we add the rule
\Rule{Efq} (\emph{ex falso quodlibet}) that allows the system to deduce any type
for an expression that will never be selected, that is, for an
expression whose type environment contains an empty assumption:
\begin{mathpar}
  \Infer[Efq]
  { }
  { \Gamma, (e:\Empty) \vdash e': t }
  { }\svvspace{-3mm}
\end{mathpar}
Once more, this kind of deduction was already present in the system
by~\citet{Frisch2008} to type full fledged overloaded functions,
though it was embedded in the typing rule for the type-case.
Here we
need the rule \Rule{Efq}, which is more general, to ensure the
property of subject reduction.

Finally, there remains one last rule in our type system, the one that
implements occurrence typing, that is, the rule for the
type-case:\svvspace{-1mm}
\begin{mathpar}
    \Infer[Case]
        {\Gamma\vdash e:t_0\\
        \Gamma \evdash e t \Gamma_1 \\ \Gamma_1 \vdash e_1:t'\\
        \Gamma \evdash e {\neg t} \Gamma_2 \\ \Gamma_2 \vdash e_2:t'}
        {\Gamma\vdash \tcase {e} t {e_1}{e_2}: t'}
        { }\svvspace{-3mm}
\end{mathpar}
The rule \Rule{Case} checks whether the expression $e$, whose type is
being tested, is well-typed and then performs the occurrence typing
analysis that produces the environments $\Gamma_i$'s under whose
hypothesis the expressions $e_i$'s are typed. The production of these
environments is represented by the judgments $\Gamma \evdash e
{(\neg)t} \Gamma_i$. The intuition is that when $\Gamma \evdash e t
\Gamma_1$ is provable then $\Gamma_1$ is a version of $\Gamma$
extended with type hypotheses for all expressions occurring in $e$,
type hypotheses that can be deduced assuming that the test $e\in t$
succeeds. Likewise, $\Gamma \evdash e {\neg t} \Gamma_2$ (notice the negation on $t$) extends
$\Gamma$ with the hypothesis deduced assuming that $e\in\neg t$, that
is, for when the test $e\in t$ fails.

All it remains to do is to show how to deduce judgments of the form
$\Gamma \evdash e t \Gamma'$. For that we first define how
to denote occurrences of an expression. These are identified by paths in the
syntax tree of the expressions, that is, by possibly empty strings of
characters denoting directions starting from the root of the tree (we
use $\epsilon$ for the empty string/path, which corresponds to the
root of the tree).

Let $e$ be an expression and $\varpi\in\{0,1,l,r,f,s\}^*$ a
\emph{path}; we denote $\occ e\varpi$ the occurrence of $e$ reached by
the path $\varpi$, that is (for $i=0,1$, and undefined otherwise)\svvspace{-.4mm}
\[
\begin{array}{r@{\downarrow}l@{\quad=\quad}lr@{\downarrow}l@{\quad=\quad}lr@{\downarrow}l@{\quad=\quad}l}
e&\epsilon & e & (e_1,e_2)& l.\varpi & \occ{e_1}\varpi &\pi_1 e& f.\varpi & \occ{e}\varpi\\
e_0\,e_1& i.\varpi & \occ{e_i}\varpi \quad\qquad& (e_1,e_2)& r.\varpi & \occ{e_2}\varpi \quad\qquad&
\pi_2 e& s.\varpi & \occ{e}\varpi\\[-.4mm]
\end{array}
\]
To ease our analysis we used different directions for each kind of
term. So we have $0$ and $1$ for the function and argument of an
application, $l$ and $r$ for the $l$eft and $r$ight expressions forming a pair,
and $f$ and $s$ for the argument of a $f$irst or of a $s$econd projection. Note also that we do not consider occurrences
under $\lambda$'s (since their type is frozen in their annotations) and type-cases (since they reset the analysis).
The judgments  $\Gamma \evdash e t \Gamma'$ are then deduced by the following two rules:\svvspace{-1mm} \begin{mathpar}
    \Infer[Base]
      { }
      { \Gamma \evdash e t \Gamma }
      { }
    \qquad
    \Infer[Path]
      { \pvdash {\Gamma'} e t \varpi:t' \\ \Gamma \evdash e t \Gamma' }
      { \Gamma \evdash e t \Gamma',(\occ e \varpi:t') }
      { }\svvspace{-1.5mm}
\end{mathpar}
These rules describe how to produce by occurrence typing the type
environments while checking that an expression $e$ has type $t$. They state that $(i)$ we can
deduce from $\Gamma$ all the hypothesis already in $\Gamma$ (rule
\Rule{Base}) and that $(ii)$ if we can deduce a given type $t'$ for a particular
occurrence $\varpi$ of the expression $e$ being checked, then we can add this
hypothesis to the produced type environment (rule \Rule{Path}). The rule
\Rule{Path} uses a (last) auxiliary judgement $\pvdash {\Gamma}  e t
\varpi:t'$ to deduce the type $t'$ of the occurrence $\occ e \varpi$ when
checking $e$ against $t$ under the hypotheses $\Gamma$. This rule \Rule{Path} is subtler than it may appear at
first sight, insofar as the deduction of the type for $\varpi$ may already use
some hypothesis on $\occ e \varpi$ (in $\Gamma'$) and, from an
algorithmic viewpoint, this will imply the computation of a fix-point
(see Section~\ref{sec:typenv}). The last ingredient for our type system is the deduction of the
judgements of the form $\pvdash {\Gamma}  e t \varpi:t'$ where
$\varpi$ is a path to an expression occurring in $e$. This is given by the following set
of rules.
\begin{mathpar}
    \Infer[PSubs]
        { \pvdash \Gamma e t \varpi:t_1 \\ t_1\leq t_2 }
        { \pvdash \Gamma e t \varpi:t_2 }
        { }
        \quad
    \Infer[PInter]
        { \pvdash \Gamma e t \varpi:t_1 \\ \pvdash \Gamma e t \varpi:t_2 }
        { \pvdash \Gamma e t \varpi:t_1\land t_2 }
        { }
        \quad
    \Infer[PTypeof]
        { \Gamma \vdash \occ e \varpi:t' }
        { \pvdash \Gamma e t \varpi:t' }
        { }
\svvspace{-1.2mm}\\
    \Infer[PEps]
        { }
        { \pvdash \Gamma e t \epsilon:t }
        { }
        \qquad
    \Infer[PAppR]
        { \pvdash \Gamma e t \varpi.0:\arrow{t_1}{t_2} \\ \pvdash \Gamma e t \varpi:t_2'}
        { \pvdash \Gamma e t \varpi.1:\neg t_1 }
        { t_2\land t_2' \simeq \Empty  }
\end{mathpar}\begin{mathpar}\svvspace{-2mm}
    \Infer[PAppL]
        { \pvdash \Gamma e t \varpi.1:t_1 \\ \pvdash \Gamma e t \varpi:t_2 }
        { \pvdash \Gamma e t \varpi.0:\neg (\arrow {t_1} {\neg t_2}) }
        { }
        \qquad
    \Infer[PPairL]
        { \pvdash \Gamma e t \varpi:\pair{t_1}{t_2} }
        { \pvdash \Gamma e t \varpi.l:t_1 }
        { }
\svvspace{-1.2mm}\\
    \Infer[PPairR]
        { \pvdash \Gamma e t \varpi:\pair{t_1}{t_2} }
        { \pvdash \Gamma e t \varpi.r:t_2 }
        { }
        \qquad
    \Infer[PFst]
        { \pvdash \Gamma e t \varpi:t' }
        { \pvdash \Gamma e t \varpi.f:\pair {t'} \Any }
        { }
        \qquad
    \Infer[PSnd]
        { \pvdash \Gamma e t \varpi:t' }
        { \pvdash \Gamma e t \varpi.s:\pair \Any {t'} }
        { }\svvspace{-0.9mm}
\end{mathpar}
These rules implement the analysis described in
Section~\ref{sec:ideas} for functions and extend it to products.  Let
us comment each rule in detail. \Rule{PSubs} is just subsumption for
the deduction $\vdashp$. The rule \Rule{PInter} combined with
\Rule{PTypeof} allows the system to deduce for an occurrence $\varpi$
the intersection of the static type of $\occ e \varpi$ (deduced by
\Rule{PTypeof}) with the type deduced for $\varpi$ by the other $\vdashp$ rules. The
rule \Rule{PEps} is the starting point of the analysis: if we are assuming that the test $e\in t$ succeeds, then we can assume that $e$ (i.e.,
$\occ e\epsilon$) has type $t$ (recall that assuming that the test $e\in t$ fails corresponds to having $\neg t$ at the index of the turnstyle).
The rule \Rule{PAppR} implements occurrence typing for
the arguments of applications, since it states that if a function maps
arguments of type $t_1$ in results of type $t_2$ and an application of
this function yields results (in $t'_2$) that cannot be in $t_2$
(since $t_2\land t_2' \simeq \Empty$), then the argument of this application cannot be of type $t_1$. \Rule{PAppL} performs the
occurrence typing analysis for the function part of an application,
since it states that if an application has type $t_2$ and the argument
of this application has type $t_1$, then the function in this
application cannot have type $t_1\to\neg t_2$. Rules \Rule{PPair\_}
are straightforward since they state that the $i$-th projection of a pair
that is of type $\pair{t_1}{t_2}$ must be of type $t_i$. So are the last two
rules that essentially state that if $\pi_1 e$ (respectively, $\pi_2
e$) is of type $t'$, then the type of $e$ must be of the form
$\pair{t'}\Any$ (respectively, $\pair\Any{t'}$).

This concludes the presentation of all the rules of our type system
(they are summarized for the reader's convenience in \Appendix\ref{sec:declarative}), which satisfies
the property of safety, deduced, as customary, from the properties
of progress and subject reduction (\emph{cf.} \Appendix\ref{app:soundness}).\svvspace{-.5mm}
\begin{theorem}[type safety]
For every expression $e$ such that $\varnothing\vdash e:t$ either  $e$
diverges or there
exists a value $v$ of type $t$ such that $e\reduces^* v$.
\end{theorem}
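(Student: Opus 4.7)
The plan is to derive the theorem as the standard Wright--Felleisen corollary of two separate lemmas: \emph{progress} (if $\varnothing \vdash e : t$ then either $e$ is a value or $e \reduces e'$ for some $e'$) and \emph{subject reduction} (if $\varnothing \vdash e : t$ and $e \reduces e'$ then $\varnothing \vdash e' : t$). Given these, safety follows by induction on the length of the reduction sequence starting from $e$: each step preserves typing, so if the sequence terminates it does so on a value that, by preservation, still has type $t$; otherwise the sequence is infinite, i.e.\ $e$ diverges.

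Progress is straightforward by induction on the typing derivation. The non-syntax-directed rules \Rule{Subs}, \Rule{Inter}, \Rule{Abs-}, and \Rule{Efq} transmit the induction hypothesis; \Rule{Env} cannot fire with $\Gamma = \varnothing$; and \Rule{Const}, \Rule{Abs+}, \Rule{Pair} already produce values. The interesting cases are \Rule{App}, \Rule{Sel}, and \Rule{Case}, which all need canonical-forms lemmas: a closed value of a functional type is a $\lambda$-abstraction, a closed value of a product type is a pair, and for every closed value $v$ and every type $t$ exactly one of $v\in t$ or $v \notin t$ holds. The dichotomy used by \Rule{Case} follows from the semantic interpretation of types underlying $\leq$.

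Subject reduction is the hard part, and I would organise it around the two obstructions already flagged in Section~\ref{sec:challenges}. First, \Rule{PAppL} can refine the type of a $\lambda$-abstraction to something like $(\Int\to t)\setminus(\Int\to\neg\Bool)$, which \Rule{Abs+} alone cannot re-derive once the abstraction has been duplicated or substituted by a reduction step. To preserve such refined negated-arrow hypotheses across $\beta$-reduction I would work, following \citet{Frisch2008}, with \emph{type schemes} that assign to each value the whole set of types it satisfies (closed under \Rule{Abs-}, \Rule{Inter}, \Rule{Efq}, \Rule{Subs}) and prove the substitution lemma at that level. Second, because $\Gamma$ assigns types to compound expressions and the same subexpression may occur several times (e.g.\ $\tcase{e(42)}{\Bool}{e(42)}{\mathsf{true}}$), contracting one occurrence in isolation can invalidate assumptions about the others; I would therefore state preservation for a \emph{parallel} reduction relation $\Rightarrow$ that contracts all copies of the chosen redex at once, and then show that $\Rightarrow^{*}$ and $\reduces^{*}$ coincide, which is enough for the overall theorem.

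Granted these two ingredients, the remainder is the usual book-keeping: prove contextual closure of $\Rightarrow$, prove a substitution lemma for type schemes, and do a case analysis on the contracted redex ($\beta$, $\pi_i$, and the two rules for $\tcase{\cdot}{t}{\cdot}{\cdot}$), using \Rule{Case} together with \Rule{Path} and \Rule{PEps} to transfer the refined environment onto the selected branch. The main obstacle I expect is making \Rule{Path}, whose premise may itself rely on a hypothesis about $\occ e \varpi$ and thus be read as a fix-point, interact cleanly with the substitution lemma on type schemes; every other case should reduce to a routine derivation.
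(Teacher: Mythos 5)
Your proposal follows essentially the same route as the paper's own proof: safety is obtained from progress and subject reduction, subject reduction is proved for a parallel reduction relation (whose rule for type-cases substitutes the contracted redex simultaneously in both branches) via a substitution lemma generalized to expression-to-expression substitutions acting on both terms and environments, and the negated-arrow types of $\lambda$-abstractions are preserved exactly as you describe (the paper's declarative system bakes this into \Rule{Abs-}, with type schemes deferred to the algorithmic system's soundness proof). The obstacle you flag — the interaction of \Rule{Path} with the substitution lemma — is indeed where the paper spends most of its effort, resolved by a case analysis on whether $\occ e \varpi$ coincides with, is a strict subexpression of, or is disjoint from the substituted expression.
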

\svvspace{-2.1mm}

\subsection{Algorithmic system}
\label{ssec:algorithm}
The type system we defined in the previous section implements the ideas we
illustrated in the introduction and it is safe. Now the problem is to
decide whether an expression is well typed or not, that is, to find an
algorithm that given a type environment $\Gamma$ and an expression $e$
decides whether there exists a type $t$ such that $\Gamma\vdash e:t$
is provable. For that we need to solve essentially two problems:
$(i)$~how to handle the fact that it is possible to deduce several
types for the same well-typed expression and $(ii)$~how to compute the
auxiliary deduction system $ \pvdash \Gamma e t$ for paths.

$(i)$. Multiple types have two distinct origins each requiring a distinct
technical solution.  The first origin is the presence of
structural
rules%
\footnote{\label{fo:rules}In logic, logical rules refer to a
  particular connective (here, a type constructor, that is, either
  $\to$, or $\times$, or $b$), while identity rules (e.g., axioms and
  cuts) and structural rules (e.g., weakening and contraction) do
  not.\svvspace{-3.3mm}}
such as \Rule{Subs} and \Rule{Inter}. We handle this presence
in the classic way: we define an algorithmic system that tracks the
minimum type of an expression; this system is obtained from the
original system by removing the two structural rules and by
distributing suitable checks of the subtyping relation in the
remaining rules. To do that in the presence of set-theoretic types we
need to define some operators on types, which are given in
Section~\ref{sec:typeops}. The second origin is the rule \Rule{Abs-}
by which it is possible to deduce for every well-typed lambda
abstraction infinitely many types, that is the annotation of the
function intersected with as (finitely) many negations of arrow types as
possible without making the type empty. We do not handle this
multiplicity directly in the algorithmic system but only in the proof
of its soundness by using and adapting the technique of \emph{type
  schemes} defined by~\citet{Frisch2008}. Type schemes are canonical
representations of the infinite sets of types of
$\lambda$-abstractions which can be used to define an algorithmic
system that can be easily proved to be sound. The simpler algorithm
that we propose in this section implies (i.e., it is less precise than) the one with type schemes (\emph{cf}.\
Lemma~\ref{soundness_simple_ts}) and it is thus sound, too. The algorithm of this
section is not only simpler but, as we discuss in Section~\ref{sec:algoprop},
is also the one that should be used in practice. This is why we preferred to
present it here and relegate the presentation of the system with type schemes to
\Appendix\ref{app:typeschemes}.

$(ii)$. For what concerns the use of the auxiliary derivation for the $\Gamma \evdash e t \Gamma' $ and $\pvdash \Gamma e t \varpi:t'$
judgments, we present in Section~\ref{sec:typenv} an algorithm that is sound and satisfies a limited form of
completeness. All these notions are then used in the algorithmic typing
system given in Section~\ref{sec:algorules}.


\subsubsection{Operators for  type constructors}\label{sec:typeops}

In order to define the algorithmic typing of expressions like
applications and projections we need to define the operators on
types we used in Section~\ref{sec:ideas}. Consider the classic rule \Rule{App} for applications. It essentially
does three things: $(i)$ it checks that the expression in the function
position has a functional
type; $(ii)$ it checks that the argument is in the domain of the
function, and $(iii)$ it returns the type of the application. In systems
without set-theoretic types these operations are quite
straightforward: $(i)$ corresponds to checking that the expression has
an arrow type, $(ii)$ corresponds to checking that the argument is in
the domain of the arrow deduced for the function, and $(iii)$ corresponds
to returning the codomain of that same arrow. With set-theoretic types
things get more difficult, since a function can be typed by, say, a
union of intersection of arrows and negations of types. Checking that
the function has a functional type is easy since it corresponds to
checking that it has a type subtype of $\Empty{\to}\Any$. Determining
its domain and the type of the application is more complicated and needs the operators $\dom{}$ and $\circ$ we informally described in Section~\ref{sec:ideas} where we also introduced the operator $\worra{}{}$. These three operators are used by our algorithm and formally defined as:\svvspace{-0.5mm}
\begin{eqnarray}
\dom t & = & \max \{ u \alt t\leq u\to \Any\}
\\[-1mm]
\apply t s & = &\,\min \{ u \alt t\leq s\to u\}
\\[-1mm]
\worra t s  & = &\,\min\{u \alt t\circ(\dom t\setminus u)\leq \neg s\}\label{worra}
\end{eqnarray}
In short, $\dom t$ is the largest domain of any single arrow that
subsumes $t$, $\apply t s$ is the smallest codomain of an arrow type
that subsumes $t$ and has domain $s$ and $\worra t s$ was explained
before.

We need similar operators for projections since the type $t$
of $e$ in $\pi_i e$ may not be a single product type but, say, a union
of products: all we know is that $t$ must be a subtype of
$\pair\Any\Any$. So let $t$ be a type such that $t\leq\pair\Any\Any$,
then we define:\svvspace{-0.7mm}
\begin{equation}
  \begin{array}{lcrlcr}
  \bpl t & = & \min \{ u \alt t\leq \pair u\Any\}\qquad&\qquad
  \bpr t & = & \min \{ u \alt t\leq \pair \Any u\}
\end{array}\svvspace{-0.7mm}
\end{equation}
All the operators above but $\worra{}{}$ are already present in the
theory of semantic subtyping: the reader can find how to compute them
in~\cite[Section
  6.11]{Frisch2008} (see also~\citep[\S4.4]{Cas15} for a detailed description). Below we just show our new formula that computes
$\worra t s$ for a $t$ subtype of $\Empty\to\Any$. For that, we use a
result of semantic subtyping that states that every type $t$ is
equivalent to a type in disjunctive normal form and that if
furthermore $t\leq\Empty\to\Any$, then $t \simeq \bigvee_{i\in
  I}\left(\bigwedge_{p\in P_i}(s_p\to t_p)\bigwedge_{n\in
  N_i}\neg(s_n'\to t_n')\right)$ with $\bigwedge_{p\in P_i}(s_p\to
t_p)\bigwedge_{n\in N_i}\neg(s_n'\to t_n') \not\simeq \Empty$ for all
$i$ in $I$. For such a $t$ and any type $s$ then we have:\svvspace{-1.0mm}
\begin{equation}\label{worralgo}
\worra t s  =  \dom t \wedge\bigvee_{i\in I}\left(\bigwedge_{\{P\subseteq P_i\alt s\leq \bigvee_{p \in P} \neg t_p\}}\left(\bigvee_{p \in P} \neg s_p\right) \right)\svvspace{-1.0mm}
\end{equation}
The formula considers only the positive arrows of each summand that
forms $t$ and states that, for each summand, whenever you take a subset
$P$ of its positive arrows that cannot yield results in
$s$ (since $s$ does not overlap the intersection of the codomains of these arrows), then
the success of the test cannot depend on these arrows and therefore
the intersection of the domains of these arrows---i.e., the values that would precisely select that set of arrows---can be removed from $\dom t$.  The proof
that this type satisfies \eqref{worra} is given in the
\Appendix\ref{app:worra}.

\subsubsection{Type environments for occurrence typing}\label{sec:typenv}

The second ingredient necessary to the definition of our algorithmic systems is the algorithm for the
deduction of $\Gamma \evdash e t \Gamma'$, that is an algorithm that
takes as input $\Gamma$, $e$, and $t$, and returns an environment that
extends $\Gamma$ with hypotheses on the occurrences of $e$ that are
the most general that can be deduced by assuming that $e\,{\in}\,t$ succeeds. For that we need the notation $\tyof{e}{\Gamma}$ which denotes the type deduced for $e$ under the type environment $\Gamma$ in the algorithmic type system of Section~\ref{sec:algorules}.
That is, $\tyof{e}{\Gamma}=t$ if and only if $\Gamma\vdashA e:t$ is provable.

We start by defining the algorithm for each single occurrence, that is for the deduction of $\pvdash \Gamma e t \varpi:t'$. This is obtained by defining two mutually recursive functions $\constrf$ and $\env{}{}$:\svvspace{-1.3mm}
  \begin{eqnarray}
    \constr\epsilon{\Gamma,e,t} & = & t\label{uno}\\[\sk]
    \constr{\varpi.0}{\Gamma,e,t} & = & \neg(\arrow{\env {\Gamma,e,t}{(\varpi.1)}}{\neg \env {\Gamma,e,t} (\varpi)})\label{due}\\[\sk]
    \constr{\varpi.1}{\Gamma,e,t} & = & \worra{{\tyof{\occ e{\varpi.0}}\Gamma}}{\env {\Gamma,e,t} (\varpi)}\label{tre}\\[\sk]
    \constr{\varpi.l}{\Gamma,e,t} & = & \bpl{\env {\Gamma,e,t} (\varpi)}\label{quattro}\\[\sk]
    \constr{\varpi.r}{\Gamma,e,t} & = & \bpr{\env {\Gamma,e,t} (\varpi)}\label{cinque}\\[\sk]
    \constr{\varpi.f}{\Gamma,e,t} & = & \pair{\env {\Gamma,e,t} (\varpi)}\Any\label{sei}\\[\sk]
    \constr{\varpi.s}{\Gamma,e,t} & = & \pair\Any{\env {\Gamma,e,t} (\varpi)}\label{sette}\\[.8mm]
    \env {\Gamma,e,t} (\varpi) & = & {\constr \varpi {\Gamma,e,t} \wedge \tyof {\occ e \varpi} \Gamma}\label{otto}
  \end{eqnarray}\svvspace{-5mm}\\
All the functions above are defined if and only if the initial path
$\varpi$ is valid for $e$ (i.e., $\occ e{\varpi}$ is defined) and $e$
is well-typed (which implies that all $\tyof {\occ e{\varpi}} \Gamma$
in the definition are defined)%
\iflongversion%
.\footnote{Note that the definition is
  well-founded.  This can be seen by analyzing the rule
  \Rule{Case\Aa} of Section~\ref{sec:algorules}: the definition of $\Refine {e,t} \Gamma$ and
  $\Refine {e,\neg t} \Gamma$ use $\tyof{\occ e{\varpi}}\Gamma$, and
  this is defined for all $\varpi$ since the first premisses of
  \Rule{Case\Aa} states that $\Gamma\vdash e:t_0$ (and this is
  possible only if we were able to deduce under the hypothesis
  $\Gamma$ the type of every occurrence of $e$.)\svvspace{-3mm}}
\else
; the well foundness of the definition can be deduced by analysing the rule~\Rule{Case\Aa} of Section~\ref{sec:algorules}.
\fi
Each case of the definition of the $\constrf$ function corresponds to the
application of a logical rule
(\emph{cf.} definition in Footnote~\ref{fo:rules})
in
the deduction system for $\vdashp$: case \eqref{uno} corresponds
to the application of \Rule{PEps}; case \eqref{due} implements \Rule{Pappl}
straightforwardly; the implementation of rule \Rule{PAppR} is subtler:
instead of finding the best $t_1$ to subtract (by intersection) from the
static type of the argument, \eqref{tre} finds directly the best type for the argument by
applying the $\worra{}{}$ operator to the static type of the function
and the refined type of the application. The remaining (\ref{quattro}--\ref{sette})
cases are the straightforward implementations of the rules
\Rule{PPairL}, \Rule{PPairR}, \Rule{PFst}, and \Rule{PSnd},
respectively.

The other recursive function, $\env{}{}$, implements the two structural
rules \Rule{PInter} and \Rule{PTypeof} by intersecting the type
obtained for $\varpi$ by the logical rules, with the static type
deduced by the type system for the expression occurring at $\varpi$. The
remaining structural rule, \Rule{Psubs}, is accounted for by the use
of the operators $\worra{}{}$ and $\boldsymbol{\pi}_i$ in
the definition of $\constrf$.

It remains to explain how to compute the environment $\Gamma'$ produced from $\Gamma$ by the deduction system for $\Gamma \evdash e t \Gamma'$. Alas, this is the most delicate part of our algorithm.
In a nutshell, what we want to do is to define a function
$\Refine{\_,\_}{\_}$ that takes a type environment $\Gamma$, an
expression $e$ and a type $t$ and returns the best type environment
$\Gamma'$ such that $\Gamma \evdash e t \Gamma'$ holds. By the best
environment we mean the one in which the occurrences of $e$ are
associated to the largest possible types (type environments are
hypotheses so they are contravariant: the larger the type the better
the hypothesis).  Recall that in Section~\ref{sec:challenges} we said
that we want our analysis to be able to capture all the information
available from nested checks. If we gave up such a kind of precision
then the definition of $\Refinef$ would be pretty easy: it must map
each subexpression of $e$ to the intersection of the types deduced by
$\vdashp$ (i.e., by $\env{}{}$) for each of its occurrences. That
is, for each expression $e'$ occurring in $e$, $\Refine {e,t}\Gamma$
would be the type environment that maps $e'$ into $\bigwedge_{\{\varpi \alt
  \occ e \varpi \equiv e'\}} \env {\Gamma,e,t} (\varpi)$. As we
explained in Section~\ref{sec:challenges} the intersection is needed
to apply occurrence typing to expressions such as
$\tcase{(x,x)}{\pair{t_1}{t_2}}{e_1}{e_2}$ where some
expressions---here $x$---occur multiple times.

In order to capture most of the type information from nested queries
the rule \Rule{Path} allows the deduction of the type of some
occurrence $\varpi$ to use a type environment $\Gamma'$ that may
contain information about some suboccurrences of $\varpi$. On the
algorithm this would correspond to applying the $\Refinef$ defined
above to an environment that already is the result of $\Refinef$, and so on. Therefore, ideally our
algorithm should compute the type environment as a fixpoint of the
function $X\mapsto\Refine{e,t}{X}$. Unfortunately, an iteration of $\Refinef$ may
not converge. As an example, consider the (dumb) expression $\tcase
{x x}{\Any}{e_1}{e_2}$. If $x:\Any\to\Any$, then when refining the ``then'' branch, every iteration of
$\Refinef$ yields for $x$ a type strictly more precise than the type deduced in the
previous iteration (because of the $\varpi.0$ case).

The solution we adopt in practice is to bound the  number of iterations to some number $n_o$. This is obtained by the following definition of $\Refinef$\svvspace{-1mm}
\[
\begin{array}{rcl}
  \Refinef_{e,t} \eqdef (\RefineStep{e,t})^{n_o}\\[-2mm]
\text{where }\RefineStep {e,t}(\Gamma)(e') &=&  \left\{\begin{array}{ll}
        \bigwedge_{\{\varpi \alt \occ e \varpi \equiv e'\}}
        \env {\Gamma,e,t} (\varpi) & \text{if } \exists \varpi.\ \occ e \varpi \equiv e' \\
        \Gamma(e') & \text{otherwise, if } e'\in\dom\Gamma\\
        \text{undefined} & \text{otherwise}
      \end{array}\right.
\end{array}\svvspace{-1.5mm}
\]
Note in particular that $\Refine{e,t}\Gamma$  extends  $\Gamma$ with hypotheses on the expressions occurring in $e$, since
$\dom{\Refine{e,t}\Gamma}$ $=$ $\dom{\RefineStep {e,t}(\Gamma)} = \dom{\Gamma} \cup \{e' \alt \exists \varpi.\ \occ e \varpi \equiv e'\}$.

In other terms, we try to find a fixpoint of $\RefineStep{e,t}$ but we
bound our search to $n_o$ iterations. Since $\RefineStep {e,t}$ is
monotone (w.r.t.\ the subtyping pre-order extended to type environments pointwise), then
  every iteration yields a better solution.
\iflongversion
While this is unsatisfactory from a formal point of view, in practice
the problem is a very mild one. Divergence may happen only when
refining the type of a function in an application: not only such a
refinement is meaningful only when the function is typed by a union
type, but also we had to build the expression that causes the
divergence in quite an \emph{ad hoc} way which makes divergence even
more unlikely: setting an $n_o$ twice the depth of the syntax tree of
the outermost type case should be more than enough to capture all realistic
cases. For instance, all examples given in Section~\ref{sec:practical}
can be checked (or found to be ill-typed) with $n_o = 1$.
\fi

\subsubsection{Algorithmic typing rules}\label{sec:algorules}
We now have all the definitions we need for our typing algorithm%
\iflongversion%
, which is defined by the following rules.
\else%
:
\fi
\begin{mathpar}
  \Infer[Efq\Aa]
  { }
  { \Gamma, (e:\Empty) \vdashA e': \Empty }
  { \begin{array}{c}\text{\tiny with priority over}\\[-1.8mm]\text{\tiny all the other rules}\end{array}}
  \qquad
  \Infer[Var\Aa]
      { }
      { \Gamma \vdashA x: \Gamma(x) }
      { x\in\dom\Gamma}
  \svvspace{-2mm}\\
  \Infer[Env\Aa]
      { \Gamma\setminus\{e\} \vdashA e : t }
      { \Gamma \vdashA e: \Gamma(e) \wedge t }
      { \begin{array}{c}e\in\dom\Gamma \text{ and }\\[-1mm] e \text{ not a variable}\end{array}}
  \qquad
  \Infer[Const\Aa]
      { }
      {\Gamma\vdashA c:\basic{c}}
      {c\not\in\dom\Gamma}
  \svvspace{-2mm}\\
\ifsubmission\else
\end{mathpar}
\begin{mathpar}
\fi%
  \Infer[Abs\Aa]
      {\Gamma,x:s_i\vdashA e:t_i'\\ t_i'\leq t_i}
      {
      \Gamma\vdashA\lambda^{\wedge_{i\in I}\arrow {s_i} {t_i}}x.e:\textstyle\wedge_{i\in I} {\arrow {s_i} {t_i}}
      }
      {\lambda^{\wedge_{i\in I}\arrow {s_i} {t_i}}x.e\not\in\dom\Gamma}
  \svvspace{-2mm}\\
  \Infer[App\Aa]
      {
        \Gamma \vdashA e_1: t_1\\
        \Gamma \vdashA e_2: t_2\\
        t_1 \leq \arrow \Empty \Any\\
        t_2 \leq \dom {t_1}
      }
      { \Gamma \vdashA {e_1}{e_2}: t_1 \circ t_2 }
      { {e_1}{e_2}\not\in\dom\Gamma}
  \svvspace{-2mm}\\
  \Infer[Case\Aa]
        {\Gamma\vdashA e:t_0\\
        \Refine {e,t} \Gamma \vdashA e_1 : t_1\\
        \Refine {e,\neg t} \Gamma \vdashA e_2 : t_2}
        {\Gamma\vdashA \tcase {e} t {e_1}{e_2}: t_1\vee t_2}
        { \tcase {e} {t\!} {\!e_1\!}{\!e_2}\not\in\dom\Gamma}
  \svvspace{-2mm}  \\
  \Infer[Proj\Aa]
  {\Gamma \vdashA e:t\and \!\!t\leq\pair{\Any\!}{\!\Any}}
  {\Gamma \vdashA \pi_i e:\bpi_{\mathbf{i}}(t)}
  {\pi_i e{\not\in}\dom\Gamma}\hfill
  \Infer[Pair\Aa]
  {\Gamma \vdashA e_1:t_1 \and \!\!\Gamma \vdashA e_2:t_2}
  {\Gamma \vdashA (e_1,e_2):{t_1}\times{t_2}}
  {(e_1,e_2){\not\in}\dom\Gamma}
\end{mathpar}
The side conditions of the rules ensure that the system is syntax
directed, that is, that at most one rule applies when typing a term:
priority is given to \Rule{Eqf\Aa} over all the other rules and to
\Rule{Env\Aa} over all remaining logical rules. The subsumption rule
is no longer in the system; it is replaced by: $(i)$ using a union
type in \Rule{Case\Aa}, $(ii)$ checking in \Rule{Abs\Aa} that the body
of the function is typed by a subtype of the type declared in the
annotation, and $(iii)$ using type operators and checking subtyping in
the elimination rules \Rule{App\Aa,Proj\Aa}. In particular, for
\Rule{App\Aa} notice that it checks that the type of the function is a
functional type, that the type of the argument is a subtype of the
domain of the function, and then returns the result type of the
application of the two types. The intersection rule is (partially)
replaced by the rule \Rule{Env\Aa} which intersects the type deduced
for an expression $e$ by occurrence typing and stored in $\Gamma$ with
the type deduced for $e$ by the logical rules: this is simply obtained
by removing any hypothesis about $e$ from $\Gamma$, so that the
deduction of the type $t$ for $e$ cannot but end by a logical rule. Of
course, this does not apply when the expression $e$ is a variable,
since an hypothesis in $\Gamma$ is the only way to deduce the type of
a variable, which is why the algorithm reintroduces the classic rule
for variables. Finally, notice that there is no counterpart for the
rule \Rule{Abs-} and that therefore it is not possible to deduce
negated arrow types for functions. This means that the algorithmic
system is not complete as we discuss in details in the next section.

\subsubsection{Properties of the algorithmic system}\label{sec:algoprop}
\rev{
In what follow we will use $\Gamma\vdashA^{n_o} e:t$ to stress the
fact that the judgment $\Gamma\vdashA e:t$ is provable in the
algorithmic system where $\Refinef_{e,t}$ is defined as
$(\RefineStep{e,t})^{n_o}$; we will omit the index $n_o$---thus keeping
it implicit---whenever it does not matter in the context.
}

The algorithmic system above is sound with respect to the deductive one of Section~\ref{sec:static}
\begin{theorem}[Soundness]\label{th:algosound}
For every $\Gamma$, $e$, $t$, $n_o$, if $\Gamma\vdashA^{n_o}  e: t$, then $\Gamma \vdash e:t$.
\end{theorem}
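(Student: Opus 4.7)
The proof proceeds by induction on the derivation of $\Gamma \vdashA e : t$ with a case analysis on the last algorithmic rule applied. Most rules mirror a declarative counterpart almost verbatim, so the real work is concentrated in \Rule{Case\Aa} and its reliance on the $\Refinef$ operator.

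For the easy cases: \Rule{Efq\Aa}, \Rule{Var\Aa}, \Rule{Const\Aa}, and \Rule{Pair\Aa} translate directly into \Rule{Efq}, \Rule{Env}, \Rule{Const}, and \Rule{Pair}. For \Rule{Abs\Aa}, the induction hypothesis and \Rule{Subs} (applied to $t_i' \leq t_i$) give $\Gamma, x{:}s_i \vdash e : t_i$ for each $i$, so \Rule{Abs+} concludes. For \Rule{Env\Aa}, the induction hypothesis gives $\Gamma\setminus\{e\} \vdash e : t$, which lifts to $\Gamma \vdash e : t$ by a standard weakening lemma (easy to prove as an auxiliary result since the declarative system only looks up $\Gamma$ in \Rule{Env}); combining with \Rule{Env} to get $\Gamma \vdash e : \Gamma(e)$ and \Rule{Inter} yields $\Gamma \vdash e : \Gamma(e) \wedge t$. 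For \Rule{App\Aa}, the definition $t_1 \circ t_2 = \min\{u \mid t_1 \leq t_2 \to u\}$ implies $t_1 \leq t_2 \to (t_1 \circ t_2)$; combining with \Rule{Subs} on $e_1$, the induction hypotheses, and \Rule{App} gives the result. \Rule{Proj\Aa} is analogous using the definition of $\bpi_{\mathbf{i}}$.

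The main obstacle is \Rule{Case\Aa}: we must show that the algorithmic environments $\Refine{e,t}{\Gamma}$ and $\Refine{e,\neg t}{\Gamma}$ are derivable in the declarative system, i.e., $\Gamma \evdash e t \Refine{e,t}{\Gamma}$ (and analogously for $\neg t$). This will be the key intermediate lemma, established by induction on the number of iterations $n_o$. The base case $n_o = 0$ gives $\Refine{e,t}{\Gamma} = \Gamma$, handled by \Rule{Base}. For the inductive step, assuming $\Gamma \evdash e t \Gamma'$ where $\Gamma'$ is the result after $n_o - 1$ iterations, we must show that adding the entries produced by one more application of $\RefineStep{e,t}$ preserves derivability; this requires a sub-lemma stating that for every occurrence $\varpi$ of $e$, $\pvdash{\Gamma'} e t \varpi : \env{\Gamma',e,t}(\varpi)$ holds in the declarative $\vdashp$ system. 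That sub-lemma is itself proved by induction on $\varpi$, matching each clause \eqref{uno}--\eqref{sette} of $\constrf$ with the corresponding rule among \Rule{PEps}, \Rule{PAppL}, \Rule{PAppR}, \Rule{PPairL}, \Rule{PPairR}, \Rule{PFst}, \Rule{PSnd} (together with \Rule{PInter}/\Rule{PTypeof} to account for the intersection with $\tyof{\occ e \varpi}{\Gamma'}$ in \eqref{otto}), invoking the defining properties of the type operators $\dom{}$, $\circ$, $\worra{}{}$, and $\bpi_{\mathbf{i}}$ from Section~\ref{sec:typeops}. Finally, with both refined environments available in the declarative system, the induction hypotheses on $e_1$ and $e_2$ yield $\Gamma' \vdash e_1 : t_1$ and $\Gamma'' \vdash e_2 : t_2$, and \Rule{Subs} followed by \Rule{Case} (with the common supertype $t_1 \vee t_2$) closes the case.

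The subtlest point in executing the plan is handling the mutual recursion between $\constrf$ and $\env{}{}$ alongside the nested occurrence typing that \Rule{Path} permits: one must be careful that the calls to $\tyof{\occ e{\varpi}}{\Gamma}$ inside $\constrf$ correspond to legitimate applications of \Rule{PTypeof}, which in turn use the algorithmic typing judgment—hence the whole soundness proof is, strictly speaking, a simultaneous induction on the typing derivation and on the path depth inside $\Refinef$. Once this bookkeeping is in place, all cases chain together to give the desired conclusion.
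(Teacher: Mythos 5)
Your proof is correct in outline, but it takes a genuinely more direct route than the paper's. The paper does not prove soundness of $\vdashA$ by a head-on induction: it first shows (Lemma~\ref{soundness_simple_ts}, a straightforward structural induction) that every judgment $\Gamma\vdashA e:t$ is subsumed by a judgment $\Gamma\vdashAts e:\ts$ with $\ts\leq t$ in an auxiliary algorithmic system built on \emph{type schemes}, and then proves soundness of $\vdashAts$ (Theorem~\ref{soundnessAts}); Theorem~\ref{th:algosound} then drops out as a two-line corollary. That said, the content of Theorem~\ref{soundnessAts} is precisely the three-property simultaneous statement you arrive at: soundness of the typing judgment itself; the sub-lemma $\pvdash \Gamma e t \varpi:\env {\Gamma,e,t} (\varpi)$, proved by a nested induction on $\varpi$ that matches each clause of $\constrf$ against \Rule{PEps}, \Rule{PAppL}, \Rule{PAppR}, etc.\ via the characterizations of $\circ$, $\worra{}{}$ and $\bpi_i$; and the derivability of $\Gamma \evdash e t \Refine {e,t} \Gamma$, obtained by iterating \Rule{Path} (the paper additionally invokes transitivity of $\evdash e t$ to reduce to a single $\RefineStep{e,t}$ step, and a monotonicity lemma for $\vdashp$ to feed \Rule{Path} the intermediate environments). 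So your key lemma and sub-lemma coincide with the paper's; what differs is the carrier system. Going direct buys elementarity: since $\vdashA$ only ever assigns an abstraction its annotation, you never need \Rule{Abs-} nor the type-scheme interpretation, whereas the paper must discharge an \Rule{Abs-} case because $\vdashAts$ assigns abstractions schemes denoting types with negated arrows. The paper's detour buys reuse: $\vdashAts$ and its soundness are needed anyway for the rank-0 completeness theorem. One point to tighten in your write-up: the well-founded measure should be the structure of $e$ (with a secondary induction on $\dom\Gamma$ for the \Rule{Env\Aa} case), not the derivation of $\Gamma\vdashA e:t$, because the obligations generated inside $\Refinef$ require typing the occurrences $\occ e \varpi$ under the successively refined environments $\RefineStep{e,t}^k(\Gamma)$, and those algorithmic derivations are not subderivations of the one you started from---they are, however, derivations for structurally smaller expressions (or for the scrutinee itself at $\varpi=\epsilon$, which is already a strict subterm of the type-case), which is what makes the simultaneous induction terminate. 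You gesture at this with ``simultaneous induction on the typing derivation and on the path depth,'' but the expression-structure measure is the clean way to close that circularity.
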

The proof of this theorem (see \Appendix\ref{sec:proofs_algorithmic_without_ts}) is obtained by
defining an algorithmic system $\vdashAts$ that uses type schemes,
that is, which associates each typable term $e$ with a possibly
infinite set of types $\ts$ (in particular a $\lambda$-expression
$\lambda^{\wedge_{i\in I}\arrow {s_i} {t_i}}x.e$ will
be associated to a set of types of the form $\{s\alt
    \exists s_0 = \bigwedge_{i=1..n} \arrow {t_i} {s_i}
    \land \bigwedge_{j=1..m} \neg (\arrow {t_j'} {s_j'}).\
    \Empty \not\simeq s_0 \leq s \}$) and proving that,  if
    $\Gamma\vdashA e: t$ then  $\Gamma\vdashAts e: \ts$ with
    $t\in\ts$: the soundness of $\vdashA$ follows from the soundness
    of $\vdashAts$.

Completeness needs a more detailed explanation. The algorithmic
system $\vdashA$
is not complete w.r.t.\ the language presented in
Section~\ref{sec:syntax} because it cannot deduce negated arrow
types for functions.
However, no practical programming language with structural subtyping would implement the full
language of Section~\ref{sec:syntax}, but rather restrict all
expressions of the form $\tcase{e}{t}{e_1}{e_2}$ so that the type $t$ tested in them is either
non functional (e.g., products, integer, a record type, etc.) or it is
$\Empty\to\Any$ (i.e., the expression can just test whether $e$ returns a function
or not).\footnote{
\rev{
Of course, there exist languages in which it is
possible to check whether some value has a type that has functional
subcomponents---e.g., to test whether an object is of some class
that possesses some given methods, but that is a case of nominal rather than
structural subtyping, which in our framework corresponds to testing
whether a value has some basic type.
}
}
There are multiple reasons to impose such a restriction, the
most important ones can be summarized as follows:
\begin{enumerate}[left=0pt .. 12pt]
  \item For explicitly-typed languages it may yield conterintutive results,
    since for instance $\lambda^{\Int\to\Int}x.x\in\Bool\to\Bool$
    should fail despite the fact that identity functions maps Booleans
    to Booleans.
  \item For implicitly-typed languages it yields a semantics that
    depends on the inference algorithm, since $(\lambda
    y.(\lambda x.y))3\in 3{\to} 3$ may either fail or not according to
    whether the type deduced for the result of the expression is either
    $\Int{\to}\Int$ or $3{\to}3$ (which are both valid but incomparable).

  \item For gradually-typed languages it would yield a problematic system as
    we explain in Section~\ref{sec:gradual}.
\end{enumerate}
Now, if we apply this restriction to the language of
Section~\ref{sec:syntax}, then the algorithmic system of
section~\ref{sec:algorules} is complete. Let say that an expression $e$
is \emph{positive} if it never tests a functional type more precise
than $\Empty\to\Any$ (see
\Appendix\ref{sec:proofs_algorithmic_without_ts} for the formal
definition). Then we have:
\begin{theorem}[Completeness for Positive Expressions]
  For every type environment $\Gamma$ and \emph{positive} expression $e$, if
  $\Gamma\vdash e: t$, then there exist $n_o$ and  $t'$ such that $\Gamma\vdashA^{n_o} 
  e: t'$.
\end{theorem}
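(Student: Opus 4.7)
The plan is to proceed by induction on the derivation of $\Gamma \vdash e : t$, but since the declarative system is strictly more liberal than the algorithmic one (via \Rule{Subs}, \Rule{Inter}, and \Rule{Abs-}), a direct induction on $t$ is too weak: the algorithmic system may compute a more informative, smaller type that does not literally coincide with $t$. I would therefore route the argument through the type-scheme system $\vdashAts$ of Appendix~\ref{app:typeschemes}, strengthening the statement to: whenever $\Gamma \vdash e : t$, there exists $n_o$ such that $\Gamma \vdashA e : t'$ with $t'$ a minimum representative of the type scheme derivable for $e$ in $\vdashAts$, so that $t$ itself is recoverable from $t'$ by the declarative structural rules. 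Under this strengthening, \Rule{Subs}, \Rule{Inter}, and \Rule{Env} become free in the induction, and the logical rules for constants, pairs, projections, and applications propagate straightforwardly using the type operators $\dom{}$, $\circ$, $\bpl{}$, $\bpr{}$ that \Rule{App\Aa}, \Rule{Pair\Aa}, and \Rule{Proj\Aa} invoke.

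The abstraction case is where the \emph{positivity} hypothesis on $e$ is used. The declarative \Rule{Abs-} can enrich the type of $\lambda^{\wedge_{i\in I}s_i\to t_i}x.e$ with arbitrary negated arrows, and these enrichments are not reproduced by \Rule{Abs\Aa}. However, the only way such negations can influence the typing of the surrounding expression is through a \Rule{Case} that tests a functional type strictly more precise than $\Empty\to\Any$, which by positivity never occurs; in particular, the refinement functions $\constrf$ and $\env{}{}$ produced by occurrence typing never exploit \Rule{Abs-}-derived information. Formally, for positive $e$ the portion of the type scheme of $e$ that can be observed by any enclosing positive expression collapses to the pure intersection type produced by \Rule{Abs\Aa}.

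The crux is \Rule{Case}, which splits into two subproblems. The first is to match the auxiliary judgement $\pvdash{\Gamma}{e}{t}{\varpi}:t''$ with the computations of $\constrf$ and $\env{}{}$: by induction on the $\vdashp$ derivation one shows that $\env{\Gamma,e,t}(\varpi) \leq t''$ for every derivable $t''$. Each logical rule \Rule{PEps}, \Rule{PAppR}, \Rule{PAppL}, the two \Rule{PPair} rules, \Rule{PFst}, and \Rule{PSnd} is implemented verbatim by the corresponding case of $\constrf$; \Rule{PSubs} is absorbed by the minimality clauses in the definitions of $\worra{}{}$, $\bpl{}$, and $\bpr{}$; and \Rule{PInter} together with \Rule{PTypeof} is captured exactly by the intersection with $\tyof{\occ{e}{\varpi}}{\Gamma}$ in the definition of $\env{}{}$. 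The most delicate step is the \Rule{PAppR} case, which relies on the explicit characterization~(8) of $\worra{}{}$.

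The second and main obstacle is the fix-point. The declarative derivation of $\Gamma \evdash e t \Gamma'$ may apply \Rule{Path} an arbitrary number of times, each application potentially reusing refinements introduced by earlier ones, whereas $\Refinef_{e,t}$ performs only $n_o$ iterations of $\RefineStep{e,t}$. The key observation is that any fixed derivation of $\Gamma \evdash e t \Gamma'$ uses only finitely many \Rule{Path} applications, and their effective nesting depth $k$ is bounded by the size of the derivation; one then shows by induction on $k$ that $k$ iterations of $\RefineStep{e,t}$ produce an environment pointwise at least as precise as $\Gamma'$ on every occurrence of $e$. Choosing $n_o$ to be the maximum such $k$ across all \Rule{Case} subderivations in the proof of $\Gamma \vdash e:t$ yields a finite witness that makes the whole induction go through. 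Aligning this iteration count with the nested \Rule{Path} applications, in the presence of possibly interacting refinements of sibling occurrences, is the technical heart of the argument.
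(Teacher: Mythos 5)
Your proposal is correct and follows essentially the same route as the paper's proof: it goes through the type-scheme system $\vdashAts$ with the strengthened invariant $\tsrep{\tyof e \Gamma}\leq t$, uses positivity to discharge \Rule{Abs-} (the negative parts of arrow types being unobservable by $\circ$, $\worra{}{}$ and by tests restricted to $\Empty\to\Any$), matches the $\vdashp$ rules against $\constrf$ and $\env{}{}$ by a nested induction, and bounds $n_o$ by accumulating iteration counts of $\RefineStep{e,t}$ along the derivation, relying on monotonicity. The only cosmetic difference is that the paper factors the argument into two statements (completeness of $\vdashAts$ for \Rule{Abs-}-free derivations, plus a lemma collapsing $\vdashAts$ to $\vdashA$ on positive expressions) where you interleave the two.
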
\noindent
We can use the algorithmic system $\vdashAts$ defined for the proof
of Theorem~\ref{th:algosound} to give a far more precise characterization than the above
of the terms for which our algorithm is complete: positivity is a practical but rough approximation. The system $\vdashAts$
copes with negated arrow types, but it still is not
complete essentially for two reasons: $(i)$ the recursive nature of
rule \Rule{Path} and $(ii)$ the use of nested \Rule{PAppL} that yields
a precision that the algorithm loses by using type schemes in defining of \constrf{} (case~\eqref{due} is the critical
one). Completeness is recovered by $(i)$ limiting the depth of the
derivations and $(ii)$ forbidding nested negated arrows on the
left-hand side of negated arrows.\svvspace{-.7mm}
\begin{definition}[Rank-0 negation]
A derivation of $\Gamma \vdash e:t$ is \emph{rank-0 negated} if
\Rule{Abs--} never occurs in the derivation of a left premise of a
\Rule{PAppL} rule.\svvspace{-.7mm}
\end{definition}
\noindent The use of this terminology is borrowed from the ranking of higher-order
types, since, intuitively, it corresponds to typing a language in
which  in the types used in dynamic tests, a negated arrow never occurs  on the
left-hand side of another negated arrow.
\begin{theorem}[Rank-0 Completeness]
For every $\Gamma$, $e$, $t$, if $\Gamma \vdash e:t$ is derivable by a rank-0 negated derivation, then there exists $n_o$ such that $\Gamma\vdashAts^{n_o}  e: t'$ and $t'\leq t$.
\end{theorem}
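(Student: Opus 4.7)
The plan is to proceed by induction on the structure of a rank-0 negated derivation $\mathcal{D}$ of $\Gamma \vdash e:t$. For most rules the corresponding algorithmic rule of $\vdashAts$ applies directly: the use of type schemes absorbs the nondeterminism introduced by \Rule{Subs}, \Rule{Inter}, \Rule{Abs+}, and \Rule{Abs-}, since a type scheme is designed precisely to represent, for a $\lambda$-abstraction, the full set of arrow and negated-arrow types derivable in the declarative system, from which any enclosing rule can then extract the subtype it needs (as already used in proving Theorem~\ref{th:algosound}, \emph{cf.}\ Lemma~\ref{soundness_simple_ts}).

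The \Rule{Case} rule is more delicate. A derivation concludes $\tcase{e}{t}{e_1}{e_2}$ by building two environments $\Gamma_1,\Gamma_2$ via judgements $\Gamma \evdash e {(\neg)t} \Gamma_i$, each obtained from \Rule{Base} and finitely many applications of \Rule{Path}. I would assign to $\mathcal{D}$ a natural number $n(\mathcal{D})$ measuring the nesting depth of \Rule{Path} applications whose premises themselves use hypotheses produced by other \Rule{Path} applications, and then show, by a secondary induction on this depth, that computing $\Refine{e,t}{\Gamma}$ with $n_o \geq n(\mathcal{D})$ iterations of $\RefineStep{e,t}{}$ yields an environment pointwise at least as strong (in the subtyping order on hypotheses) as $\Gamma_i$. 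The monotonicity of $\RefineStep{e,t}{}$ noted in Section~\ref{sec:typenv} makes this well-behaved: additional iterations can only improve the result, so it suffices to bound $n_o$ from below by a quantity read off the derivation.

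Inside a single \Rule{Path}, the auxiliary judgement $\pvdash{\Gamma}{e}{t}{\varpi:t'}$ is tracked by the pair $\constrf$/$\env{}{}$. Each logical rule of $\vdashp$ corresponds to exactly one clause in the definition of $\constrf$; the structural rules \Rule{PInter} and \Rule{PTypeof} are absorbed by the intersection in equation~\eqref{otto}; and \Rule{PSubs} is absorbed by the \emph{minimality} built into $\worra{}{}$ and $\boldsymbol{\pi}_i$. What remains is to check that the types $\tyof{\occ{e}{\varpi}}{\Gamma}$ invoked by $\env{}{}$ and by clauses \eqref{due}--\eqref{tre} are themselves precise enough, which reduces, by the outer induction hypothesis, to questions about strict subderivations of $\mathcal{D}$.

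The main obstacle, and where I expect the real work to be, is the interaction between \Rule{PAppL} and \Rule{Abs-}. When a declarative derivation uses \Rule{PAppL} to obtain a negated arrow $\neg(t_1 \to \neg t_2)$ for a function occurrence, its left premise could in principle exploit further negated arrows added to the function's type by a nested application of \Rule{Abs--}; clause \eqref{due} of $\constrf$ uses only the single type scheme obtained for that function position, and this is precisely where the algorithm loses information in general. The rank-0 hypothesis excludes exactly this nesting, so the type scheme computed for $\occ{e}{\varpi.0}$ by $\vdashAts$ already contains every arrow relevant to refining the argument, and $\worra{}{}$ applied to it matches what \Rule{PAppL}/\Rule{PAppR} yield declaratively. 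Combining this with the induction hypothesis on the premises of \Rule{Path} produces hypotheses at least as strong as $\Gamma_1,\Gamma_2$, closing the \Rule{Case} case and hence the induction; the returned type $t'$ is then a subtype of $t$ by the same subsumption-through-minimality arguments used for $\vdashAts$.
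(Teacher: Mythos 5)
Your overall architecture matches the paper's: induction on the declarative derivation, a nested induction on the $\pvdash{}{}{}{}$ derivations inside \Rule{Path}, the clause-by-clause correspondence between the logical rules and $\constrf$, and the use of monotonicity of $\RefineStep{e,t}{}$ to justify taking $n_o$ as a bound accumulated from the sub-derivations (the paper takes $n_o=n+n'$ and maxima over premises, which is your $n(\mathcal{D})$). You also correctly locate the crux at \Rule{PAppL}. However, your resolution of that crux has a genuine gap. The left premise of \Rule{PAppL} is $\pvdash \Gamma e t {\varpi.1}:t_1$, i.e., it concerns the \emph{argument} position, not the function position $\varpi.0$ as you write; the negated arrows that rank-0 forbids are those that \Rule{Abs-} could contribute to the argument's type $t_1$, which would then appear in the domain of the negated arrow $\neg(t_1\to\neg t_2)$ produced for the function. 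The real difficulty is contravariance: to show that clause \eqref{due} yields a subtype of $\neg(t_1\to\neg t_2)$ you need $\env{\Gamma,e,t}(\varpi.1)\leq t_1$, i.e., a bound on the \emph{representative} $\tsrep{\cdot}$ of the argument's type scheme. The induction hypothesis you are carrying (the theorem's conclusion, $t\in\tsint{\tyof{e}{\Gamma}}$) is too weak to give this: $t_1\in\tsint{\ts}$ does not imply $\tsrep{\ts}\leq t_1$ when $\ts$ involves $\tsfun{\cdots}$ constructors. Saying that the type scheme ``already contains every arrow relevant'' does not bridge this, because containment in $\tsint{\cdot}$ is exactly the weak statement.

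The paper closes this gap by a two-stage argument that your proposal omits: it first proves a separate, strictly stronger completeness theorem for \emph{positive} derivations (no \Rule{Abs-} anywhere), whose conclusion is $\tsrep{\tyof{e}{\Gamma}}\leq t$, and then, in the rank-0 proof, invokes that theorem precisely on the left premise of \Rule{PAppL} (which the rank-0 hypothesis guarantees to be positive). It also needs a small auxiliary lemma --- that $\tyof{e}{\Gamma}$ for an application contains no $\tsfun{\cdots}$ constructor, so $\tsrep{\tyof{e}{\Gamma}}\simeq\tyof{e}{\Gamma}$ --- to transfer the weak inductive bound into a bound on representatives for the $\varpi$ premises of \Rule{PAppR} and \Rule{PAppL}. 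To repair your proof you would need to either add this auxiliary positive-completeness theorem or, equivalently, strengthen your induction hypothesis so that positive sub-derivations carry the $\tsrep{}$-level bound; as written, the single induction with the weak conclusion cannot discharge the \Rule{PAppL} case.
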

\noindent This last result is only of theoretical interest since, in
practice, we expect to have only languages with positive
expressions. This is why for our implementation we use the library of CDuce~\cite{BCF03}
in which type schemes are absent and functions are typed only
by intersections of positive arrows. We present the implementation in
Section~\ref{sec:practical}, but before we study some extensions.

\section{Extensions}
\label{sec:extensions}
As we recalled in the introduction, the main application of occurrence
typing is to type dynamic languages. In this section we explore how to
extend our work to encompass three features that are necessary to type
these languages.

First, we consider record types and record expressions which, in dynamic
languages, are used to implement objects. In particular, we extend our
system to cope with typical usage patterns of objects employed in these
languages such as adding, modifying, or deleting a field, or dynamically
testing its presence to specify different behaviors.

Second, in order
to precisely type applications in dynamic languages it is
crucial to refine the type of some functions to account for their
different behaviors with specific input types. But current approaches
are bad at it: they require the programmer to explicitly specify a
precise intersection type for these functions and, even with such specifications, some common cases
fail to type (in that case the only solution is to hard-code the
function and its typing discipline into the language). We show how we
can use the work developed in the previous sections to infer precise
intersection types for functions. In our system, these functions do not
require any type annotation or just an annotation for
the function parameters, whereas some of them fail to type in
current alternative approaches even when they are given the full intersection type
specification.

Finally, to type dynamic languages it is often necessary to make
statically-typed parts of a program coexist with dynamically-typed
ones. This is the aim of gradually typed systems that we explore in
the third extension of this section.

\subsection{Record types}
\label{ssec:struct}
\iflongversion
The previous analysis already covers a large gamut of realistic
cases. For instance, the analysis already handles list data
structures, since products and recursive types can encode them as
right-associative nested pairs, as it is done in the language
CDuce (e.g., $X = \textsf{Nil}\vee(\Int\times X)$ is the
type of the lists of integers): see Code 8 in Table~\ref{tab:implem} of Section~\ref{sec:practical} for a concrete example. Even more, thanks to the presence of
union types it is possible to type heterogeneous lists whose
content is described by regular expressions on types as proposed
by~\citet{hosoya00regular}. However, this is not enough to cover
records and, in particular, the specific usage patterns in dynamic
languages of records, whose field are dynamically tested, deleted,
added, and modified. This is why we extend here our work to records,
building on the record types as they are defined in CDuce.

The extension we present in this section is not trivial. Although we use the record \emph{types} as they are
defined in CDuce we cannot do the same for CDuce record
\emph{expressions}. The reasons why we cannot use the record
expressions of CDuce and we have to define and study new ones are
twofold. On the one hand  we want to capture the typing of record field extension and field
deletion, two operation widely used in dynamic language; on the other
hand we need to have very simple expressions formed by elementary
sub-expressions, in order to limit the combinatorics of occurrence
typing. For this reason we build our records one field at a time,
starting from the empty record and adding, updating, or deleting
single fields.
\else
 Since the main application of occurrence
typing is to type dynamic languages, then it is important to show that our
work can handle records. We cannot handle them via a simple encoding into
pairs, since we need a precise type analysis for operations such
as field addition, update, and deletion. We therefore extend our approach to deal
with the records (types and expressions) as they are to be found in the aforementioned
CDuce language, where these operations can be precisely typed.
\fi

Formally, CDuce record \emph{types} can be embedded in our types by adding the
following two type constructors:\\[1.4mm]
\centerline{\(\textbf{Types} \quad t ~ ::= ~ \record{\ell_1=t \ldots \ell_n=t}{t}\alt \Undef\)}\\[1.4mm]
\rev{
where $\ell$ ranges over an infinite set of labels $\Labels$ and $\Undef$
is a special singleton type whose only value is a constant
$\undefcst$ which is not in $\Domain$ (for that it is a constant akin
to $\Omega$): as a consequence $\Undef$ and $\Any$ are distinct types,
the interpretation of the former being the constant $\undefcst$ while
the interpretation of the latter being the set of all the other values.
}
The type
$\record{\ell_1=t_1 \ldots \ell_n=t_n}{t}$ is a \emph{quasi-constant
  function} that maps every $\ell_i$ to the type $t_i$ and every other
$\ell \in \Labels$ to the type $t$ (all the $\ell_i$'s must be
distinct). Quasi constant functions are the internal representation of
record types in CDuce. These are not visible to the programmer who can use
only two specific forms of quasi constant functions, open record types and closed record types (as for OCaml object types), provided by the
following syntactic sugar:%
\iflongversion%
\footnote{Note that in the definitions ``$\ldots{}$'' is meta-syntax to denote the presence of other fields while in the open records ``{\large\textbf{..}}'' is the syntax that distinguishes them from closed ones.}
\fi
\begin{itemize}[nosep]
\item $\crecord{\ell_1=t_1, \ldots, \ell_n=t_n}$ for $\record{\ell_1=t_1 \ldots \ell_n=t_n}{\Undef}$ \hfill(closed records).
\item $\orecord{\ell_1=t_1, \ldots, \ell_n=t_n}$ for $\record{\ell_1=t_1 \ldots \ell_n=t_n}{\Any \vee \Undef}$\hfill (open records).
\end{itemize}
plus the notation $\mathtt{\ell \eqq} t$ to denote optional fields,
which corresponds to using in the quasi-constant function notation the
field $\ell = t \vee \Undef$%
\iflongversion
.
\else
\ (note that ``$\ldots{}$'' is meta-syntax while ``{\large\textbf{..}}'' is syntax).
\fi

For what concerns \emph{expressions}, we cannot use CDuce record expressions
as they are, but instead we must adapt them to our analysis. So as anticipated, we
consider records that are built starting from the empty record expression \erecord{} by adding, updating, or removing fields:\svvspace{-0.75mm}
\[
\begin{array}{lrcl}
\textbf{Expr} & e & ::= & \erecord {} ~\alt~ \recupd e \ell e ~\alt~ \recdel e \ell ~\alt~ e.\ell
\end{array}\svvspace{-.75mm}
\]
in particular $\recdel e \ell$ deletes the field $\ell$ from $e$, $\recupd e \ell e'$ adds the field $\ell=e'$ to the record $e$ (deleting any existing $\ell$ field), while $e.\ell$ is field selection with the reduction:
\(\erecord{...,\ell=e,...}.\ell\ \reduces\ e\).

To define record type subtyping and record expression type inference
we need three operators on record types: $\proj \ell t$ which returns
the type of the field $\ell$ in the record type $t$, $t_1+t_2$ which
returns the record type formed by all the fields in $t_2$ and those in
$t_1$ that are not in $t_2$, and $\recdel t\ell$ which returns the
type $t$ in which the field $\ell$ is
\iflongversion
undefined. They are formally defined as follows (see~\citet{alainthesis} for more details):
\begingroup
\allowdisplaybreaks
\begin{eqnarray}
\proj \ell t & = & \left\{\begin{array}{ll}\min \{ u \alt t\leq \orecord{\ell=u}\} &\text{if } t \leq \orecord{\ell = \Any}\\ \text{undefined}&\text{otherwise}\end{array}\right.\\[2mm]
  t_1 + t_2 & = & \min\left\{
      u \quad\bigg|\quad \forall \ell \in \Labels.\left\{\begin{array}{ll}
      \proj \ell u \geq \proj \ell {t_2} &\text{ if } \proj \ell {t_2} \leq \neg \Undef\\
      \proj \ell u \geq \proj \ell {t_1} \vee (\proj \ell {t_2} \setminus \Undef) &\text{ otherwise}
      \end{array}\right\}
    \right\}\\[2mm]
  \recdel t \ell & = & \min \left\{ u \quad\bigg|\quad  \forall \ell' \in \Labels. \left\{\begin{array}{ll}
    \proj {\ell'} u \geq \Undef &\text{ if }\ell' = \ell\\
    \proj {\ell'} u \geq \proj {\ell'} t &\text{ otherwise}
    \end{array}\right\}\right\}
\end{eqnarray}
\endgroup
\else
undefined
(see \Appendix\ref{app:recop} for the formal definition and~\citet{alainthesis} for more details).
\fi
Then two record types $t_1$ and $t_2$ are in subtyping relation, $t_1 \leq t_2$, if and only if for all $\ell \in \Labels$ we have $\proj \ell {t_1} \leq \proj \ell {t_2}$. In particular $\orecord{\!\!}$ is the largest record type.

Expressions are then typed by the following rules (already in algorithmic form).\svvspace{-.1mm}
\begin{mathpar}
    \Infer[Record]
        {~}
        {\Gamma \vdash \erecord {}:\crecord {}}
        {}
~
  \Infer[Update]
        {\Gamma \vdash e_1:t_1\and t_1\leq\orecord {\!\!} \and \Gamma \vdash e_2:t_2}
        {\Gamma \vdash \recupd{e_1}{\ell}{e_2}: t_1 + \crecord{\ell=t_2}}
        {\recupd{e_1\!\!}{\!\!\ell}{e_2} \not\in\dom\Gamma}
        \svvspace{-1.9mm}
\\        
  \Infer[Delete]
        {\Gamma \vdash e:t\and t\leq\orecord {\!\!}}
        {\Gamma \vdash \recdel e \ell: \recdel t \ell}
        {\recdel e \ell\not\in\dom\Gamma}

  \Infer[Proj]
        {\Gamma \vdash e:t\and t\leq\orecord{\ell = \Any}}
        {\Gamma \vdash e.\ell:\proj \ell {t}}
        {e.\ell\not\in\dom\Gamma}\svvspace{-2mm}
\end{mathpar}
To extend occurrence typing to records we add the following values to paths: $\varpi\in\{\ldots,a_\ell,u_\ell^1,u_\ell^2,r_\ell\}^*$, with 
\(e.\ell\downarrow a_\ell.\varpi =\occ{e}\varpi\), 
\(\recdel e \ell\downarrow r_\ell.\varpi = \occ{e}\varpi\), and
\(\recupd{e_1}\ell{e_2}\downarrow u_\ell^i.\varpi = \occ{e_i}\varpi\)
and add the following rules for the new paths:
\begin{mathpar}\svvspace{-8.7mm}
    \Infer[PSel]
        { \pvdash \Gamma e t \varpi:t'}
        { \pvdash \Gamma e t {\varpi.a_\ell}:\orecord {\ell:t'} }
        { }
        
    \Infer[PDel]
        { \pvdash \Gamma e t \varpi:t'}
        { \pvdash \Gamma e t \varpi.r_\ell: (\recdel {t'} \ell) + \crecord{\ell \eqq \Any}}
        { }
        \svvspace{-3mm}\\
    \Infer[PUpd1]
        { \pvdash \Gamma e t \varpi:t'}
        { \pvdash \Gamma e t \varpi.u_\ell^1: (\recdel {t'} \ell) + \crecord{\ell \eqq \Any}}
        { }
        
    \Infer[PUpd2]
        { \pvdash \Gamma e t \varpi:t}
        { \pvdash \Gamma e t \varpi.u_\ell^2: \proj \ell t'}
        { }
\end{mathpar}
Deriving the algorithm from these rules is then straightforward:\\[1.8mm]
\hspace*{-2mm}\(
\begin{array}{llll}
  \constr{\varpi.a_\ell}{\Gamma,e,t} =  \orecord {\ell: \env {\Gamma,e,t} (\varpi)}\hspace*{-1mm}&
  \constr{\varpi.r_\ell}{\Gamma,e,t} =  \recdel {(\env {\Gamma,e,t} (\varpi))} \ell + \crecord{\ell \eqq \Any}\\
   \constr{\varpi.u_\ell^2}{\Gamma,e,t}  =  \proj \ell {(\env {\Gamma,e,t} (\varpi))} &
 \constr{\varpi.u_\ell^1}{\Gamma,e,t}  =  \recdel {(\env {\Gamma,e,t} (\varpi))} \ell + \crecord{\ell \eqq \Any}\\[1.8mm]
\end{array}
\)

Notice that the effect of doing $\recdel t \ell + \crecord{\ell \eqq
  \Any}$ corresponds to setting the field $\ell$ of the (record) type
$t$ to the type $\Any\vee\Undef$, that is, to the type of all
undefined fields in an open record. So \Rule{PDel} and \Rule{PUpd1}
mean that if we remove, add, or redefine a field $\ell$ in an expression $e$
then all we can deduce for $e$ is that its field $\ell$ is undefined: since the original field was destroyed we do not have any information on it apart from the static one.
For instance, consider the test:\\[1mm]
\centerline{\(\ifty{\recupd{x}{a}{0}}{\orecord{a=\Int, b=\Bool}\vee \orecord{a=\Bool, b=\Int}}{x.b}{\False}\)}\\[1mm]
By  $\constr{\varpi.u_\ell^1}{\Gamma,e,t}$---i.e., by \Rule{Ext1}, \Rule{PTypeof}, and \Rule{PInter}---the type for $x$ in the positive branch is $((\orecord{a=\Int, b=\Bool}\vee \orecord{a=\Bool, b=\Int}) \land \orecord{a=\Int}) + \crecord{a\eqq \Any}$.
It is equivalent to the type $\orecord{b=\Bool}$, and thus we can deduce that $x.b$ has the type $\Bool$.

\beppe{Compare with path expressions of ~\citet{THF10} }

\subsection{Refining function types}\label{sec:refining}
\newcommand{\negspace}{\svvspace{-.5mm}}
As we explained in the introduction, both TypeScript and Flow deduce for the first definition of the function \code{foo} in~\eqref{foo} the type
\code{(number$\vee$string) $\to$ (number$\vee$string)}, while the more precise type\svvspace{-3pt}
\begin{equation}\label{tyinter}
\code{(number$\to$number)$\,\wedge\,$(string$\to$string)}\svvspace{-3pt}
\end{equation}
can be deduced by these languages only if they are instructed to do so: the
programmer has to explicitly annotate \code{foo} with the
type \eqref{tyinter}: we did it in \eqref{foo2} using Flow---the TypeScript annotation for it is much heavier. But this seems like overkill, since a simple
analysis of the body of \code{foo} in \eqref{foo} shows that its execution may have
two possible behaviors according to whether the parameter \code{x} has
type \code{number} or not (i.e., or \code{(number$\vee$string)$\setminus$number}, that is \code{string}), and this is
should be enough for the system to deduce the type \eqref{tyinter}
even in the absence the annotation given in \eqref{foo2}.
In this section we show how to do it by using the theory of occurrence
typing we developed in the first part of the paper. In particular, we
collect the different types that are assigned to the parameter of a function
in its body, and use this information to partition the domain of the function
and to re-type its body. Consider a more involved example in a pseudo
TypeScript that uses our syntax for type-cases
\begin{alltt}\color{darkblue}\morecompact
  function (x \textcolor{darkred}{: \(\tau\)}) \{
    return (x \(\in\) Real) ? ((x \(\in\) Int) ? x+1 : sqrt(x)) : !x;  \refstepcounter{equation}                           \mbox{\color{black}\rm(\theequation)}\label{foorefine}
  \}
\end{alltt}
where we assume that \code{Int} is a
subtype of \code{Real}. When $\tau$ is \code{Real$\vee$Bool}  we want to deduce for this function the
type
\code{$(\Int\to\Int)\wedge(\Real\backslash\Int\to\Real)\wedge(\Bool\to\Bool)$}.
When $\tau$ is \Any,
then the function must be rejected (since it tries to type
\code{!x} under the assumption that \code x\ has type
\code{$\neg\Real$}). Notice that typing the function under the
hypothesis that $\tau$ is \Any,
allows us to capture user-defined discrimination as defined
by~\citet{THF10} since, for instance
\begin{alltt}\color{darkblue}\morecompact
  let is_int x = (x\(\in\)Int)? true : false
   in if is_int z then z+1 else 42
\end{alltt}
is well typed since the function \code{is\_int} is given type
$(\Int\to\True)\wedge(\neg\Int\to\False)$. We propose a more general
approach than the one by~\citet{THF10} since we allow the programmer to hint a particular type for the
argument and let the system deduce, if possible, an intersection type for the
function.

We start by considering the system where $\lambda$-abstractions are
typed by a single arrow and later generalize it to the  case of
intersections of arrows. First, we define the auxiliary judgement
\(
\Gamma \vdash e\triangleright\psi
\)
where $\Gamma$ is a typing environement, $e$ an expression and $\psi$
a mapping from variables to sets of types. Intuitively $\psi(x)$ denotes
the set that contains the types of all the occurrences of $x$ in $e$. This
judgement can be deduced by the following deduction system
that collects type information on the variables that are $\lambda$-abstracted
(i.e., those in the domain of $\Gamma$, since lambdas are our only
binders):\svvspace{-1.5mm}
\begin{mathpar}
\Infer[Var]
    {
    }
    { \Gamma \vdash x \triangleright\{x \mapsto \{ \Gamma(x) \} \} }
    {}
\hfill
\Infer[Const]
{
}
{ \Gamma \vdash c \triangleright \varnothing }
{}
\hfill
\Infer[Abs]
    {\Gamma,x:s\vdash e\triangleright\psi}
    {
    \Gamma\vdash\lambda x:s.e\triangleright\psi\setminus\{x\}
    }
    {}
\svvspace{-2.3mm}\\
\Infer[App]
    {
      \Gamma \vdash e_1\triangleright\psi_1 \\
      \Gamma\vdash e_2\triangleright\psi_2
    }
    { \Gamma \vdash {e_1}{e_2}\triangleright\psi_1\cup\psi_2 }
    {}
\hfill    
  \Infer[Pair]
        {\Gamma \vdash e_1\triangleright\psi_1 \and \Gamma \vdash e_2\triangleright\psi_2}
        {\Gamma \vdash (e_1,e_2)\triangleright\psi_1\cup\psi_2}
        {}
\hfill
 \Infer[Proj]
        {\Gamma \vdash e\triangleright\psi}
        {\Gamma \vdash \pi_i e\triangleright\psi}
        {}
\svvspace{-2.3mm}\\        
\Infer[Case]
      {\Gamma\vdash e\triangleright\psi_\circ\\
        \Gamma \evdash e t \Gamma_1\\ \Gamma_1\vdash e\triangleright\psi_1\\ \Gamma_1\vdash e_1\triangleright\psi_1'\\
        \Gamma \evdash e {\neg t} \Gamma_2 \\ \Gamma_2\vdash e\triangleright\psi_2\\ \Gamma_2\vdash e_2\triangleright\psi_2'}
      {\Gamma\vdash \ifty{e}{t}{e_1}{e_2}\triangleright\psi_\circ\cup\psi_1\cup\psi_1'\cup\psi_2\cup\psi_2'}
      {}\svvspace{-1.4mm}
\end{mathpar}
Where $\psi\setminus\{x\}$ is the function defined as $\psi$ but undefined on $x$ and $\psi_1\cup \psi_2$ denotes component-wise union%
\iflongversion%
, that is :
    \begin{displaymath}
      (\psi_1\cup \psi_2)(x) = \left\{\begin{array}{ll}
                                        \psi_1 (x) & \text{if~} x\notin
                                                     \dom{\psi_2}\\
                                        \psi_2 (x) & \text{if~} x\notin
                                                     \dom{\psi_1}\\
                                        \psi_1(x)\cup\psi_2(x) & \text{otherwise}
                                      \end{array}\right.
    \end{displaymath}
\noindent
\else.~\fi
All that remains to do is to replace the rule [{\sc Abs}+] with the
following rule\svvspace{-.8mm}
\begin{mathpar}
  \Infer[AbsInf+]
  {\Gamma,x:s\vdash e\triangleright\psi
    \and
    \Gamma,x:s\vdash e:t
    \and
    T = \{ (s,t) \} \cup \{ (u,w) ~|~
      u\in\psi(x) \land \Gamma,x:u\vdash e:w \}}
    {
    \Gamma\vdash\lambda x{:}s.e:\textstyle\bigwedge_{(u,w) \in T}u\to w
    }
    {}\svvspace{-2.5mm}
  \end{mathpar}
  Note the invariant that the domain of
  $\psi$ is always conatined in the
domain of $\Gamma$ restricted to variables.
Simply put, this rule first collects all possible types that are deduced
for a variable $x$ during the typing of the body of the $\lambda$ and then uses them to re-type the body
 under this new refined hypothesis for the type of
$x$. The re-typing ensures that  the type safety property
carries over to this new rule.

This system is enough to type our case study \eqref{foorefine} for the case $\tau$
defined as \code{Real$\vee$Bool}. Indeed, the analysis of the body yields
$\psi(x)=\{\Int,\Real\setminus\Int\}$ for the branch \code{(x\,$\in$\,Int) ? x+1 : sqrt(x)} and, since
\code{$(\Bool\vee\Real)\setminus\Real = \Bool$}, yields
$\psi(x)=\{\Bool\}$ for the branch \code{!x}. So the function
will be checked for the input types $\Int$, $\Real\setminus\Int$, and
\Bool, yielding the expected result.

It is not too difficult to generalize this rule when the lambda is
typed by an intersection type:\svvspace{-.8mm}
\begin{mathpar}
  \Infer[AbsInf+] {\forall i\in I\hspace*{2mm}\Gamma,x:s_i\vdash
    e\triangleright\psi_i
    \and
    \Gamma,x:s_i\vdash
    e : t_i
    \and
    T_i = \{ (u, w) ~|~ u\in\psi_i(x) \land \Gamma, x:u\vdash e : w\}
  } {\textstyle \Gamma\vdash\lambda^{\bigwedge_{i\in
        I}s_i\to t_i} x.e:\bigwedge_{i\in I}(s_i\to
        t_i)\land\bigwedge_{(u, w)\in T_i}(u\to w) } {}\svvspace{-3mm}
\end{mathpar}
For each arrow declared in the interface of the function, we
first typecheck the body of the function as usual (to check that the
arrow is valid) and collect the refined types for the parameter $x$.
Then we deduce all possible output types for this refined set of input
types and add the resulting arrows to the type deduced for the whole
function (see
\iflongversion
Section~\ref{sec:practical}
\else
\Appendix\ref{app:optimize}
\fi
for an even more precise rule).

\kim{We define the rule on the type system not the algorithm. I think
  we could do the same (by collecting type scheme) in $\psi$ but we
  then need to choose a candidate in the type scheme \ldots }

In summary, in order to type a
function we use the type-cases on its parameter to partition the
domain of the function and we type-check the function on each single partition rather
than on the union thereof. Of course, we could use much a finer
partition: the finest (but impossible) one is to check the function
against the singleton types of all its inputs. But any finer partition
would return, in many cases,  not a much better information, since most
partitions would collapse on the same return type: type-cases on the
parameter are the tipping points that are likely to make a difference, by returning different
types for different partitions thus yielding more precise typing.

Even though type cases in the body of a
function are tipping points that may change the type of the result
of the function, they are not the only ones: applications of overloaded functions play exactly the same role. We
therefore add to our deduction system a last further  rule:\\[2mm]
\centerline{\(
\Infer[OverApp]
    {
      \Gamma \vdash e : \textstyle\bigvee \bigwedge_{i \in I}t_i\to{}s_i\\
      \Gamma \vdash x : t\\
      \Gamma \vdash e\triangleright\psi_1\\
      \Gamma \vdash x\triangleright\psi_2\\
    }
    { \Gamma \vdash\textstyle
      {e}{~x}\triangleright\psi_1\cup\psi_2\cup\bigcup_{i\in I}\{
      x\mapsto t\wedge t_i\} }
    {(t\wedge t_i\not\simeq\Empty)}
    \)}\\[2mm]
Whenever a function parameter is the argument of an
overloaded function, we record as possible types for this parameter
all the domains $t_i$ of the arrows that type the overloaded
function, restricted (via intersection) by the static type $t$ of the parameter and provided that the type is not empty ($t\wedge t_i\not\simeq\Empty$). We show the remarkable power of this rule on some practical examples in Section~\ref{sec:practical}.

\ignore{\color{darkred} RANDOM THOUGHTS:
A delicate point is the definition of the union
$\psi_1\cup\psi_2$. The simplest possible definition is the
component-wise union. This definition is enough to avoid the problem
of typecases on casted values and it is also enough to type our case
study, as we showed before. However if we want a more precise typing
discipline we may want to consider a more sophisticated way of
combining the information collected by the various $\psi$. For
instance, consider the following pair
\[\code{( \ite x s {e_1}{e_2} , \ite x t {e_3}{e_4} )}\]
Then we have $x:\Any\vdash \code{( \ite x s {e_1}{e_2} , \ite x t
  {e_3}{e_4} )}\triangleright x\mapsto\{s,\neg s, t, \neg t\}$. However if this
code is the body of a function with parameter $x$, then it may be
tempting to try to produce a finer-grained analysis: for example,
instead of checking as input type just $s$ and $t$ one could check
instead $s\setminus t$, $t\setminus s$, and $s\wedge t$, whenever these
three types are not empty. This can be obtained by defining the union
operation as follows:
\[ (\psi_0\cup\psi_1)(x)=\{ t \alt  \exists t_1\in\psi_1(x), t_2\in\psi_2(x), t=t_1\setminus t_2  \text{ or }  t=t_2\setminus t_1 \text{ or }  t=t_1\wedge t_2\text{ and } t\not\simeq\Empty\}\]
Do we really gain in precision? I think the gain is minimum. All we may obtain just come from a polymorphic use of the variable, but we can hardly gain more. Probably it is not worth the effort. As a concrete case consider
\[\code{function  x \{ ( \ite x {\texttt{String|Bool}} {x}{x} , \ite x {\texttt{Bool|Int}} {x}{x} )} \}\]

So what?

A simple solution would be to define the union as follows
\begin{equation}
    (\psi_0\cup\psi_1)(x)=
    \left\{\begin{array}{ll}
    \psi_i(x) &\text{ if }\psi_i(x)\subseteq\psi_{((i+1)\,\text{mod}\,2)}(x)\\
    \psi_0(x)\cup\psi_1(x) &\text{ otherwise}
    \end{array}\right.
  \end{equation}
   and  $\psi_1(x)\subseteq\psi_2(x)\iff \forall\tau\in\psi_1(x),\exists \tau'\in\psi_2(x),\tau\leq\tau'$

This would be interesting to avoid to use as domains those in
$\psi_\circ$ that would be split in two in the $\psi_1$ and $\psi_2$
of the ``if'' ... but it is probably better to check it by modifying
the rule of ``if'' (that is, add $\psi_\circ$ only for when it brings new
information).
}

\subsection{Integrating gradual typing}\label{sec:gradual}
Gradual typing is an approach proposed by~\citet{siek2006gradual} to
combine the safety guarantees of static typing with the programming
flexibility of dynamic typing. The idea is to introduce an \emph{unknown} 
(or \emph{dynamic}) type, denoted $\dyn$, used to inform the compiler that
some static type-checking can be omitted, at the cost of some additional
runtime checks. The use of both static typing and dynamic typing in a same
program creates a boundary between the two, where the compiler automatically
adds---often costly~\cite{takikawa2016sound}---dynamic type-checks to ensure 
that a value crossing the barrier is correctly typed.

Occurrence typing and gradual typing are two complementary disciplines
which have a lot to gain to be integrated, although we are not aware
of any study in this sense. We explore this integration for the
formalism of Section~\ref{sec:language} for which the integration of
gradual typing was first defined by~\citet{CL17} and sucessively considerably
improved by~\citet{castagna2019gradual} (see~\citet{Lanvin21phd} for a
comprehensive presentation).

In a sense, occurrence typing is a
discipline designed to push forward the frontiers beyond which gradual
typing is needed, thus reducing the amount of runtime checks needed. For 
instance, the JavaScript code of~\eqref{foo} and~\eqref{foo2} in the introduction can also be
typed by using gradual typing:
\begin{alltt}\color{darkblue}\morecompact
  function foo(x\textcolor{darkred}{ : \pmb{\dyn}}) \{
      return (typeof(x) === "number")? x+1 : x.trim();  \refstepcounter{equation}                                \mbox{\color{black}\rm(\theequation)}\label{foo3}
  \}\negspace
\end{alltt}
``Standard'' or ``safe'' gradual typing inserts two dynamic checks since it compiles the code above into:
\begin{alltt}\color{darkblue}\morecompact
  function foo(x) \{
      return (typeof(x) === "number")? (\textcolor{darkred}{\Cast{number}{{\color{darkblue}x}}})+1 : (\textcolor{darkred}{\Cast{string}{{\color{darkblue}x}}}).trim();
  \}\negspace
\end{alltt}
where {\Cast{$t$}{$e$}} is a type-cast that dynamically checks whether the value returned by $e$ has type $t$.\footnote{Intuitively, \code{\Cast{$t$}{$e$}} is
  syntactic sugar for \code{(typeof($e$)==="$t$")\,?\,$e$\,:\,(throw "Type
    error")}. Not exactly though, since to implement compilation \emph{à la} sound gradual typing it is necessary to use casts on function types that need special handling.}
We already saw  that thanks to occurrence typing we can annotate the parameter \code{x} by \code{number|string} instead of \dyn{} and avoid the insertion of any cast. 
But occurrence typing can be used also on the gradually typed code above in order to statically detect the insertion of useless casts. Using
occurrence typing to type the gradually-typed version of \code{foo} in~\eqref{foo3}, allows the system to avoid inserting the first cast
\code{\Cast{number}{x}} since, thanks to occurrence typing, the
occurrence of \code{x} at issue is given type \code{number} (but the
second cast is still necessary though). But removing only this cast is far
from being satisfactory, since when this function is applied to an integer
there are some casts that still need to be inserted outside  the function.
The reason is that the compiled version of the function
has type \code{\dyn$\to$number}, that is, it expects an argument of type
\dyn, and thus we have to apply a cast (either to the argument or
to the function) whenever this is not the case. In particular, the
application \code{foo(42)} will be compiled as
\code{foo(\Cast{\dyn}{42})}. Now, the main problem with such a cast is not
that it produces some unnecessary overhead by performing useless
checks (a cast to \dyn{} can easily be detected and safely ignored at runtime). 
The main problem is that the combination of such a cast with type-cases 
will lead to unintuitive results under the standard operational
semantics of type-cases and casts.
Indeed, consider the standard semantics
of the type-case \code{(typeof($e$)==="$t$")} which consists in
reducing $e$ to a value and checking whether the type of the value is a
subtype of $t$. In standard gradual semantics, \code{\Cast{\dyn}{42}} is a value. 
And this value is of type \code{\dyn}, which is not a subtype of \code{number}. 
Therefore the check in \code{foo} would fail for \code{\Cast{\dyn}{42}}, and so
would the whole function call.
Although this behavior is type safe, this violates the gradual 
guarantee~\cite{siek2015refined} since giving a \emph{more precise} type to
the parameter \code{x} (such as \code{number}) would make the function succeed,
as the cast to \code{$\dyn$} would not be inserted.
A solution is to modify the semantics of type-cases, and in particular of 
\code{typeof}, to strip off all the casts in values, even nested ones.
While this adds a new overhead at runtime, this is preferable to losing the
gradual guarantee, and the overhead can be mitigated by having a proper
representation of cast values that allows to strip all casts at once.

However, this problem gets much more complex when considering functional values.
In fact, as we hinted in Section~\ref{ssec:algorithm}, there is no way to
modify the semantics of type cases to preserve both the gradual guarantee and
the soundness of the system in the presence of arbitrary type cases.
For example, consider the function
$f = \lambda^{(\Int \to \Int) \to \Int} g. \tcase{g}{(\Int \to \Int)}{g\
1}{\code{true}}$. This function is well-typed since the type of the
parameter guarantees that only the first branch can be taken, and thus that
only an integer can be returned. However, if we apply this function
to $h = \MCast{\Int \to \Int}{(\lambda^{\dyn \to \dyn} x.\ x)}$, the type case
strips off the cast around $h$ (to preserve the gradual guarantee), then
checks if $\lambda^{\dyn \to \dyn} x.\ x$ has type $\Int \to \Int$. 
Since $\dyn \to \dyn$ is not a subtype of $\Int \to \Int$, the check fails
and the application returns $\code{true}$, which is unsound.
Therefore, to preserve soundness in the presence of gradual types, type cases
should not test functional types other than $\Empty \to \Any$, which is
the same restriction as the one presented by~\citet{siek2016recursive}.

While this solves the problem of the gradual guarantee, it is clear that
it would be much better if the application \code{foo(42)} were compiled as is,
without introducing the cast \code{\Cast{\dyn}{42}}, thus getting rid of the
overhead associated with removing this cast in the type case.
This is where the previous section about refining function types comes in handy.
To get rid of all superfluous casts, we have to fully exploit the information 
provided to us by occurrence typing and deduce for the function in~\eqref{foo3} the type
\code{(number$\to$number)$\wedge$((\dyn\textbackslash
  number)$\to$string)}, so that no cast is inserted when the
function is applied to a number. 
To achieve this, we simply modify the typing rule for functions that we defined
in the previous section to accommodate for gradual typing. Let $\sigma$ and $\tau$ range over \emph{gradual types}, that is the types produced by the grammar in Definition~\ref{def:types} to which we add \dyn{} as basic type (see~\citet{castagna2019gradual} for the definition of the subtyping relation on these types). For every gradual type
$\tau$, define $\tauUp$ as the (non gradual) type obtained from $\tau$ by replacing all
covariant occurrences of \dyn{} by \Any{} and all contravariant ones by \Empty. The
type $\tauUp$ can be seen as the \emph{maximal} interpretation of $\tau$, that is,
every expression that can safely be cast to $\tau$ is of type $\tauUp$. In
other words, if a function expects an argument of type $\tau$ but can be 
typed under the hypothesis that the argument has type $\tauUp$, then no casts
are needed, since every cast that succeeds will be a subtype of
$\tauUp$. Taking advantage of this property, we modify the rule for
functions as: \svvspace{-2mm}
%
\[
  \textsc{[AbsInf+]}
  \frac
  {
    \begin{array}{c}
    \hspace*{-8mm}T = \{ (\sigma', \tau') \} \cup \{ (\sigma,\tau) ~|~ \sigma \in \psi(x) \land \Gamma, x: \sigma \vdash e: \tau \} \cup \{ (\sigmaUp,\tau) ~|~ \sigma \in \psi(x) \land \Gamma, x: \sigmaUp \vdash e: \tau \}\\
    \Gamma,x:\sigma'\vdash e\triangleright\psi \qquad \qquad \qquad\Gamma,x:\sigma'\vdash e:\tau'
    \end{array}
  }
  {
    \Gamma\vdash\lambda x:\sigma'.e:\textstyle\bigwedge_{(\sigma,\tau) \in T}\sigma\to \tau
  }\svvspace{-2mm}
\]
The main idea behind this rule is the same as before: we first collect all the
information we can into $\psi$ by analyzing the body of the function. We then
retype the function using the new hypothesis $x : \sigma$ for every 
$\sigma \in \psi(x)$. Furthermore, we also retype the function using the hypothesis
$x : \sigmaUp$: as explained before the rule, whenever this typing suceeds it eliminates unnecessary gradual types and, thus, unecessary casts.
Let us see how this works on the function \code{foo} in \eqref{foo3}. First, we deduce
the refined hypothesis 
$\psi(\code x) = \{\,\code{number}{\land}\dyn\;,\;\dyn \textbackslash \code{number}\,\}$.
Typing the function using this new hypothesis but without considering the
maximal interpretation would yield
$(\dyn \to \code{number}\vee\code{string}) \land ((\code{number} \land \dyn) \to \code{number})
\land ((\dyn \textbackslash \code{number}) \to \code{string})$. However, as
we stated before, this would introduce an unnecessary cast if the function 
were to be applied to an integer.\footnote{%
Notice that considering
$\code{number} \land \dyn\simeq \code{number}$ is
not an option, since it would force us to choose between having
the gradual guarantee or having, say,
$\code{number} \land \code{string}$ be more precise than
$\code{number} \land \dyn$.\svvspace{-2mm}}
Hence the need for the second part of
Rule~\textsc{[AbsInf+]}: the maximal interpretation of $\code{number} \land \dyn$
is $\code{number}$, and it is clear that, if $\code x$ is given type \code{number},
the function type-checks, thanks to occurrence typing. Thus, after some
routine simplifications, we can actually deduce the desired type
$(\code{number} \to \code{number}) \land ((\dyn \textbackslash \code{number}) \to \code{string})$.

\beppe{Problem: how to compile functions with intersection types, since their body is typed several distinct types. I see two possible solutions: either merge the casts of the various typings (as we did in the compilation of polymorphic functions for semantic subtyping) or allow just one gradual type in the intersection when a function is explicitly typed (reasonable since, why would you use more gradual types in an intersection?)}

\section{Implementation}
\label{sec:practical}
\rev{We present in this section preliminary results obtained by our
implementation. After giving some technical highlights, we focus on
demonstrating the behavior of our typing algorithm on meaningful
examples. We also provide an in-depth comparison with the fourteen
examples of \cite{THF10}}.

\subsection{Implementation details}

We have implemented the algorithmic system $\vdashA$ we presented in Section~\ref{sec:algorules}. Besides the type-checking
algorithm defined on the base language, our implementation supports
the record types and expressions of Section \ref{ssec:struct} and the refinement of
function types 
\iflongversion
described in Section \ref{sec:refining}. Furthermore, our implementation uses for the inference of arrow types
the following improved rule:
\[\begin{array}{l}
   {\small\Rule{AbsInf++}}\\
  \frac
  {  \begin{array}{c}
    T = \{ (s\setminus\bigvee_{s'\in\psi(x)}s',t) \} \cup \{ (s',t') ~|~
      s'\in\psi(x) \land \Gamma,x:s'\vdash e:t' \}\\
\textstyle\Gamma,x:s\vdash e\triangleright\psi
    \qquad
    \Gamma,x:s\setminus\bigvee_{s'\in\psi(x)}s'\vdash e:t
    \end{array}
    }
    {
    \textstyle\Gamma\vdash\lambda x{:}s.e:{\bigwedge_{(s',t') \in T}s'\to t'}
    }
\end{array}\]
instead of the simpler \Rule{AbsInf+} given in
Section \ref{sec:refining}. The difference of this new rule with
respect to \Rule{AbsInf+} is that the typing of the body
is made under the hypothesis $x:s\setminus\bigvee_{s'\in\psi(x)}s'$,
that is, the domain of the function minus all the input types
determined by the $\psi$-analysis. This yields an even better refinement
of the function type that makes a difference for instance with the
inference for the function \code{xor\_} (see Code 3
in Table~\ref{tab:implem}): the old rule would have returned a less precise type. The rule above is defined for functions annotated by a single arrow type:
the extension to annotations with intersections of multiple arrows is similar to the one we did in the
simpler setting of Section~\ref{sec:refining}.

\else
of Section \ref{sec:refining} with the rule of
Appendix~\ref{app:optimize}.
\fi

The implementation is rather crude and consists of 2000 lines of OCaml code,
including parsing, type-checking of programs, and pretty printing of
types. \rev{CDuce is used as a library to provide set-theoretic types and
semantic subtyping. The implementation faithfully transcribes in OCaml
the algorithmic system $\vdashA$ as well as all the type operations
defined in this work. One optimization that our implementation
features (with respect to the formal presentation) is the use of a
memoization environment in the code of the $\Refine {e,t}{\Gamma}$
function, which allows the inference to avoid unnecessary traversals
of $e$.
Lastly, while our prototype allows the user to specify a particular
value for the $n_o$ parameter we introduced in
Section~\ref{sec:typenv}, a value of $1$ for $n_o$ is sufficient
to check all examples we present in the rest of the section.
}

\subsection{Experiments}
We
demonstrate the output of our type-checking implementation in
Table~\ref{tab:implem} and Table~\ref{tab:implem2}. Table~\ref{tab:implem} lists
some examples, none of which can be typed by current systems. Even though some
systems such as Flow and TypeScript can type some of these examples by adding
explicit type annotations, the code 6, 7, 9, and 10 in Table~\ref{tab:implem}
and, even more,  the \code{and\_} and \code{xor\_} functions given
in \eqref{and+} and \eqref{xor+} later in this
section are out of reach of current systems, even when using the right explicit
annotations. 

It should be noted that for all the examples we present, the
time for the type inference process is less than 5ms, hence we do not report
precise timings in the table. These and other examples can be tested in the
online toplevel available at
\url{https://occtyping.github.io/}%
\lstset{language=[Objective]Caml,columns=fixed,basicstyle=\linespread{0.43}\ttfamily\scriptsize,aboveskip=-0.5em,belowskip=-1em,xleftmargin=-0.5em}
\begin{table}
   {\scriptsize
  \begin{tabular}{|@{\,}c@{\,}|p{0.53\textwidth}@{}|@{\,}p{0.42\textwidth}@{\,}|}
\hline
    & Code & Inferred type\\
    \hline
    1 &
          \begin{lstlisting}
let basic_inf = fun (y : Int | Bool) ->
  if y is Int then incr y else lnot y\end{lstlisting}
         &\vfill
           $(\Int\to\Int)\land(\Bool\to\Bool)$
    \\\hline
    2 &
\begin{lstlisting}
let any_inf = fun (x : Any) ->
  if x is Int then incr x else
  if x is Bool then lnot x else x
\end{lstlisting} &\vfill
 $(\Int\to\Int)\land(\lnot\Int\to\lnot\Int)\;\land$\newline
 $(\Bool\to\Bool)\land(\lnot(\Int\vee\Bool)\to\lnot(\Int\vee\Bool))$\\
    \hline

    3 &\begin{lstlisting}
let is_int = fun (x : Any) ->
 if x is Int then true else false

let is_bool = fun (x : Any) ->
 if x is Bool then true else false

let is_char = fun (x : Any) ->
 if x is Char then true else false
\end{lstlisting}
&\smallskip
  $(\Int\to\Keyw{True})\land(\lnot\Int\to\Keyw{False})$\newline
  ~\newline\smallskip
  $(\Bool\to\Keyw{True})\land(\lnot\Bool\to\Keyw{False})$\newline
  ~\newline
  $(\Char\to\Keyw{True})\land(\lnot\Char\to\Keyw{False})$
\\\hline
    4 &
          \begin{lstlisting}
let not_ = fun (x : Any) ->
   if x is True then false else true
 \end{lstlisting} &\vfill
 $(\Keyw{True}\to\Keyw{False})\land(\lnot\Keyw{True}\to\Keyw{True})$\\\hline
    5 &
          \begin{lstlisting}
let or_ = fun (x : Any) -> fun (y: Any) ->
 if x is True then true
 else if y is True then true else false
\end{lstlisting}
 &\vfill
   $(\True\to\textsf{Any}\to\True)\land(\lnot\Keyw{True}\to\True\to\True)\;\land$\newline
   $(\lnot\True\to\lnot\True\to\False)$
\\\hline
    6 &
          \begin{lstlisting}
let and_ = fun (x : Any) -> fun (y : Any) ->
  if not_ (or_ (not_ x) (not_ y)) is True
  then true else false
\end{lstlisting}
 &\vfill
   $(\True\to((\lnot\True\to\False)\land(\True\to\True))$\newline
   $ \land(\lnot\True\to\textsf{Any}\to\False)$
\\\hline
    7 &
\begin{lstlisting}
let f = fun (x : Any) -> fun (y : Any) ->
  if and_ (is_int x) (is_bool y) is True
  then 1 else
     if or_ (is_char x) (is_int y) is True
     then 2 else 3
\end{lstlisting}&
$(\Int \to (\Int \to 2) \land (\lnot\Int \to 1 \lor 3) \land (\Bool \to 1) \land$\newline
 \hspace*{1cm}$(\lnot(\Bool \lor \Int) \to 3) \land
          (\lnot\Bool \to 2\lor3))$~~$\land$\newline
$(\Char \to (\Int \to 2) \land (\lnot\Int \to 2) \land (\Bool \to 2) \land$\newline
\hspace*{1cm}$(\lnot(\Bool \lor \Int) \to 2) \land
          (\lnot\Bool \to 2))$~~$\land$\newline
$(\lnot(\Int \lor \Char) \to (\Int \to 2) \land
          (\lnot\Int \to 3) \land$\newline
\hspace*{0.2cm}$(\Bool \to 3) \land(\lnot(\Bool \lor \Int) \to 3)
          \land (\lnot\Bool \to 2
                  \lor 3))$\newline
                  $\land \ldots$ (two other redundant cases omitted)
\\
& \begin{lstlisting}
let test_1 = f 3 true
let test_2 = f (42,42) 42
let test_3 = f nil nil
\end{lstlisting}&
                  $\p 1$\newline
                  $\p 2$\newline
                  $\p 3$
    \\\hline
    8 &
\begin{lstlisting}
atom nil
type Document = { nodeType=9 ..}
and Element = { nodeType=1, childNodes=NodeList ..}
and Text = { nodeType=3,
              isElementContentWhiteSpace=Bool ..}
and Node = Document | Element | Text
and NodeList = Nil | (Node, NodeList)

let is_empty_node = fun (x : Node) ->
  if x.nodeType is 9 then false
  else if x is { nodeType=3 ..} then
    x.isElementContentWhiteSpace
  else if x.childNodes is Nil then true else false
\end{lstlisting} &\vspace{12mm}
$(\Keyw{Document}\to\False)~\land$\newline
$(\orecord{ \texttt{nodeType}\,{=}\,1, \texttt{childNodes}\,{=}\,\Keyw{Nil} }\to\True) ~\land$\newline
$(\orecord{ \texttt{nodeType}\,{=}\,1,
                   \texttt{childNodes}\,{=}\,(\Keyw{Node},\Keyw{NodeList})
                   }\to\False) ~\land$\newline
$(\Keyw{Text}\to\Bool)~\land\ldots$ (omitted redundant arrows)
    \\\hline
    9 & \begin{lstlisting}
let xor_ = fun (x : Any) -> fun (y : Any) ->
 if and_ (or_ x y) (not_ (and_ x y)) is True
 then true else false
\end{lstlisting} &\vfill
$\True\to((\True\to\False)\land(\lnot\True\to\True))~\land$\newline
$(\lnot\True\to((\True\to\True) \land (\lnot\True\to\False))$
    \\\hline
 10 & \begin{lstlisting}
(* f, g have type: (Int->Int) & (Any->Bool) *)
let example10 = fun (x : Any) ->
  if (f x, g x) is (Int, Bool) then 1 else 2
\end{lstlisting} &\vfill
$(\Int\to\textsf{Empty})\land(\neg\Int\to{}2)$\newline
\texttt{Warning: line 4, 39-40: unreachable expression}
    \\\hline
11 & \begin{lstlisting}
let typeof = fun (x:Any) -> 
    if x is Int then "number"
    else if x is Char then "string"
    else if x is Bool then "boolean" else "object"

let test = fun (x:Any) ->
    if typeof x is "number" then incr x
    else if typeof x is "string" then charcode x
    else if typeof x is "boolean" then int_of_bool x
    else 0\end{lstlisting} &\vfill\smallskip
  $(\Int\to\textsf{"number"}) \wedge$\newline
  $(\Char\to\textsf{"string"})\wedge$\newline
  $(\Bool\to\textsf{"boolean"})\wedge$\newline
  $(\lnot(\Bool{\vee}\Int{\vee}\Char)\to\textsf{"object"})\wedge  \ldots$\newline
      (two other redundant cases omitted)
  \newline~\newline
  $(\Int \to \Int) \wedge (\Char \to \Int)  \wedge
  (\Bool \to \Int) \wedge $\newline
  $(\lnot(\Bool{\vee} \Int {\vee} \Char) \to 0)\wedge  \ldots$\newline
      (two other redundant cases omitted) 
      \\\hline
  12 & \begin{lstlisting}
atom null
type Object = Null | { prototype = Object ..}
type ObjectWithPropertyL = { l = Any ..}
   | { prototype = ObjectWithPropertyL ..}

let has_property_l = fun (o:Object) ->
    if o is ObjectWithPropertyL then true else false

let has_own_property_l = fun (o:Object) ->
    if o is { l=Any ..} then true else false

let get_property_l = fun (self:Object->Any) o ->
    if has_own_property_l o is True then o.l
    else if o is Null then null
    else self (o.prototype)
\end{lstlisting} &\vfill\medskip\smallskip
$(\Keyw{ObjectWithPropertyL}\to\True)$ $\land$
$(\Keyw{X1}\to\False) \texttt{ where}$\newline
$\Keyw{X1}\,=\,(\Keyw{Nil}\,|\,\orecord{\texttt{l}\,{=}\,?\Keyw{Empty},\, \texttt{prototype}\,{=}\,\Keyw{X1}})$\newline
~\newline
$(\orecord{\Keyw{l}\,=\,\Keyw{Any}, \texttt{prototype}\,=\,\Keyw{Object}}\to\True)$ $\land$\newline
$((\Keyw{Nil}\,|\,\orecord{\texttt{l}\,=\,?\Keyw{Empty},\, \texttt{prototype}\,=\,\Keyw{Object}})\to\False)$\newline
\newline
~\newline
$\Keyw{Object}\to\Keyw{Any}$
    \\\hline

\end{tabular}
}
\caption{Types inferred by the implementation}
\ifsubmission%
\svvspace{-10mm}
\fi%
\label{tab:implem}
\end{table}

In Table~1, the second column gives a code fragment and the third
column the type deduced by our implementation as is (we
pretty printed it but we did not alter the output). Code~1 is a
straightforward function similar to our introductory
example \code{foo} in (\ref{foo}) and (\ref{foo2}) where \code{incr}
is the successor function and \code{lneg} the logical negation for
Booleans. Here the
programmer annotates the parameter of the function with a coarse type
$\Int\vee\Bool$. Our implementation first type-checks the body of the
function under this assumption, but doing so it collects that the type of
$\code{x}$ is specialized to \Int{} in the ``then'' case and to \Bool{}
in the ``else'' case. The function is thus type-checked twice more
under each hypothesis for \code{x}, yielding the precise type
$(\Int\to\Int)\land(\Bool\to\Bool)$. Note that w.r.t.\
rule \Rule{AbsInf+} of Section~\ref{sec:refining}, the
rule  \Rule{AbsInf++} we use in the  implementation improves the output of the computed
type. Indeed, using rule~[{\sc AbsInf}+] we would have obtained the
type
$(\Int\to\Int)\land(\Bool\to\Bool)\land(\Bool\vee\Int\to\Bool\vee\Int)$
with a redundant arrow. Here we can see that, since we deduced
the first two arrows $(\Int\to\Int)\land(\Bool\to\Bool)$, and since
the union of their domain exactly covers the domain of the third arrow, then
the latter is not needed. Code~2 shows what happens when the argument
of the function is left unannotated (i.e., it is annotated by the top
type \Any, written ``\code{Any}'' in our implementation). Here
type-checking and refinement also work as expected, but the function
only type checks if all cases for \code{x} are covered (which means
that the function must handle the case of inputs that are neither in \Int{}
nor in \Bool).

The following examples paint a more interesting picture. First
(Code~3) it is
easy in our formalism to program type predicates such as those
hard-coded in the $\lambda_{\textit{TR}}$ language of \citet{THF10}. Such type
predicates, which return \code{true} if and only if their input has
a particular type, are just plain functions with an intersection
type inferred by the system of Section~\ref{sec:refining}. We next define Boolean connectives as overloaded
functions. The \code{not\_} connective (Code~4) just tests whether its
argument is the Boolean \code{true} by testing that it belongs to
the singleton type \True{} (the type whose only value is
\code{true}) returning \code{false} for it and \code{true} for
any other value (recall that $\neg\True$ is equivalent to
$\texttt{Any\textbackslash}\True$). It works on values of any type,
but we could restrict it to Boolean values by simply annotating the
parameter by \Bool{} (which, in the CDuce's types that our system uses,  is syntactic sugar for
\True$\vee$\False) yielding the type
$(\True{\to}\False)\wedge(\False{\to}\True)$.
The \code{or\_} connective (Code~5) is straightforward as far as the
code goes, but we see that the overloaded type precisely captures all
possible cases: the function returns \code{false} if and only if both
arguments are of type $\neg\True$, that is, they are any value
different from \code{true}. Again we use a generalized version of the
\code{or\_} connective that accepts and treats any value that is not
\code{true} as \code{false} and again, we could easily restrict the
domain to \Bool{} if desired.\\
\indent
To showcase the power of our type system, and in particular of
the ``$\worra{}{}$''
type operator, we define \code{and\_} (Code~6) using De Morgan's
Laws instead of
using a direct definition. Here the application of the outermost \code{not\_} operator is checked against type \True. This
allows the system to deduce that the whole \code{or\_} application
has type \False, which in turn leads to \code{not\_\;x} and
\code{not\_\;y} to have type $\lnot \True$ and therefore both \code{x}
and \code{y} to have type \True. The whole function is typed with
the most precise type (we present the type as printed by our
implementation, but the first arrow of the resulting type is
equivalent to
$(\True\to\lnot\True\to\False)\land(\True\to\True\to\True)$).

All these type predicates and Boolean connectives can be used together
to write complex type tests, as in Code~7. Here we define a function
\code{f} that takes two arguments \code{x} and \code{y}. If
\code{x} is an integer and \code{y} a Boolean, then it returns the
integer \code{1}; if \code{x} is a character or
\code{y} is an integer, then it returns \code{2}; otherwise the
function returns \code{3}. Our system correctly deduces a (complex)
intersection type that covers all cases (plus several redundant arrow
types). That this type is as precise as possible can be shown by the fact that
when applying
\code{f} to arguments of the expected type, the \emph{type
statically deduced} for the
whole expression is the singleton type \code{1}, or \code{2},
or \code{3}, depending on the type of the arguments.

Code~8 allows us to demonstrate the use and typing of record paths. We
model, using open records, the type of DOM objects that represent XML
or HTML documents. Such objects possess a common field
\code{nodeType} containing an integer constant denoting the kind of
the node (e.g., \p{1} for an element node, \p{3} for a text node, \ldots). Depending on the kind, the object will have
different fields and methods. It is common practice to perform a test
on the value of the \code{nodeType} field. In dynamic languages such
as JavaScript, the relevant field can directly be accessed
after having checked for the appropriate \code{nodeType}, whereas
in statically typed languages such as Java, a downward cast
from the generic \code{Node} type to the expected precise type of
the object is needed. We can see that using the record expressions presented in
Section~\ref{ssec:struct} we can deduce the correct type for
\code{x} in all cases. Of particular interest is the last case,
since we use a type case to check the emptiness of the list of child
nodes. This splits, at the type level, the case for the \Keyw{Element}
type depending on whether the content of the \code{childNodes} field
is the empty list or not.

Code~9 shows the usefulness of the rule \Rule{OverApp}.
Consider the definition of the \code{xor\_} operator.
Here the rule~[{\sc AbsInf}+] is not sufficient to precisely type the
function, and using only this rule would yield a type
 $\Any\to\Any\to\Bool$.
\iflongversion
Let us follow the behavior of the
 ``$\worra{}{}$'' operator. Here the whole \code{and\_} is requested
 to have type \True, which implies that \code{or\_ x y} must have
 type \True. This can always happen, whether \code{x} is \True{} or
 not (but then depends on the type of \code{y}). The ``$\worra{}{}$''
 operator correctly computes that the type for \code{x} in the
 ``\emph{then}'' branch is $\True\vee\lnot\True\lor\True\simeq\Any$,
and a similar reasoning holds for \code{y}.
\fi
However, since \code{or\_} has type
$(\True\to\Any\to\True)\land(\Any\to\True\to\True)\land
   (\lnot\True\to\lnot\True\to\False)$
then the rule \Rule{OverApp} applies and \True, \Any, and $\lnot\True$ become candidate types for
\code{x}, which allows us to deduce the precise type given in the table. Finally, thanks to rule \Rule{OverApp} it is not  necessary to use a type case to force refinement. As a consequence, we can define the functions \code{and\_} and \code{xor\_} more naturally as:
\begin{alltt}\color{darkblue}
  let and_ = fun (x : Any) -> fun (y : Any) -> not_ (or_ (not_ x) (not_ y)) \refstepcounter{equation}          \mbox{\color{black}\rm(\theequation)}\label{and+}
  let xor_ = fun (x : Any) -> fun (y : Any) -> and_ (or_ x y) (not_ (and_ x y)) \refstepcounter{equation}      \mbox{\color{black}\rm(\theequation)}\label{xor+}
\end{alltt}
for which the very same types as in Table~\ref{tab:implem} are deduced.

As for Code~10 (corresponding to our introductory
example~\eqref{nest1}), it illustrates the need for iterative refinement of
type environments, as defined in Section~\ref{sec:typenv}. As
explained, a single pass analysis would deduce
for {\tt x}
a type \Int{} from the {\tt f\;x} application and \Any{} from the {\tt g\;x}
application. Here by iterating a second time, the algorithm deduces
that {\tt x} has type $\Empty$ (i.e., $\textsf{Empty}$), that is, that the first branch can never
be selected (and our implementation warns the user accordingly). In hindsight, the only way for a well-typed overloaded function to have
type $(\Int{\to}\Int)\land(\Any{\to}\Bool)$ is to diverge when the
argument is of type \Int: since this intersection type states that
whenever the input is \Int, {\em both\/} branches can be selected,
yielding a result that is at the same time an integer and a Boolean.
This is precisely reflected by the case $\Int\to\Empty$ in the result.
Indeed our {\tt example10} function can be applied to an integer, but
at runtime the application of {\tt f\,x} will diverge.

Code~11 implements the typical type-switching pattern used in JavaScript. While
languages such as Scheme and Racket hard-code specific type predicates for each
type---predicates that our system does not need to hard-code since they can be
directly defined (cf. Code~3)---, JavaScript hard-codes a \code{typeof} function
that takes an expression and returns a string indicating the type of the
expression. Code~11 shows that \code{typeof} can be encoded and precisely typed
in our system. Indeed, constant strings are simply encoded as fixed list of
characters (themselves encoded as pairs as usual, with special atom \code{nil}
representing the empty list). Thanks to our precise tracking of singleton types
both in the result type of \code{typeof} and in the type case of
\code{test}, we can deduce for the latter a precise type (the given in
Table~\ref{tab:implem} is equivalent to
$(\textsf{Any}\to\Int)\wedge(\lnot(\Bool{\vee} \Int {\vee} \Char) \to 0)$).

Code~12 simulates the behavior of JavaScript property resolution, by looking
for a property \code{l} either in the object \code{o} itself or in the
chained list of its \code{prototype} objects. In this example, we first model
prototype-chaining by defining a type \code{Object} that can be either the
atom \code{Null} or any record with a \code{prototype} field which contains
(recursively) an \code{Object}. To ease the reading, we defined a recursive
type \code{ObjectWithPropertyL} which is either a record with a field
\code{l} or a record with a prototype of type \code{ObjectWithPropertyL}. We
can then define two predicate functions \code{has\_property\_l} and
\code{has\_own\_property\_l} that test whether an object has a property
through its prototype or directly. Lastly, we can define a function
\code{get\_property\_l} which directly accesses the field if it is present, or
recursively search for it through the prototype chain; the recursive
search is implemented by calling the explicitly-typed
parameter \code{self} which, in our syntax, refers to the function itself. Of
particular interest is the type deduced for the two predicate functions. Indeed,
we can see that \code{has\_own\_property\_l} is given an overloaded type whose
first argument is in each case a recursive record type that describes precisely
whether \code{l} is present at some point in the list or not (recall that
in a record type a field such as $\orecord{ \ell=?\textsf{Empty} }$, indicate that field \code{$\ell$}
is surely absent). 
Notice that in our language a fixpoint combinator can be defined as follows
\begin{alltt}\color{darkblue}
type X = X -> \textit{S} -> \textit{T}
let z = fun (((\textit{S} -> \textit{T}) -> \textit{S} -> \textit{T} ) -> (\textit{S} -> \textit{T})) f ->
      let delta = fun ( X -> (\textit{S} -> \textit{T}) ) x ->
         f ( fun (\textit{S} -> \textit{T}) v -> ( x x v ))
       in delta delta   
\end{alltt}
which applied to any function \code{f:(\textit{S}$\to$\textit{T})$\to$\textit{S}$\to$\textit{T}}
returns  a function \code{(z\,f):\textit{S}$\to$\textit{T}} such that
 for every non
diverging expression \code{$e$} of type \code{\textit{S}}, the
expression \code{(z\,f)$e$} (which is of type \code{\textit{T}}) reduces to \code{f((z\,f)$e$)}.
It is then clear that definition of \code{get\_property\_l} in Code 12, is nothing but syntactic sugar for
\begin{alltt}\color{darkblue}
let get_property_l =
  let aux = fun (self:Object->Any) -> fun (o:Object)->
    if has_own_property_l o is True then o.l
    else if o is Null then null
    else self (o.prototype)
  in z aux
\end{alltt}
where  \code{\textit{S}} is \code{Object} and  \code{\textit{T}} is \code{Any}.

\subsection{Comparison}

\lstset{language=[Objective]Caml,columns=fixed,basicstyle=\linespread{0.43}\ttfamily\scriptsize,aboveskip=-0.5em,belowskip=-1em,xleftmargin=-0.5em}
\begin{table}
  {\scriptsize
    \begin{tabular}{|@{\,}c@{\,}|p{0.54\textwidth}@{}|@{\,}p{0.41\textwidth}@{\,}|}
      \hline
                             & Code                    & Inferred type            \\
      \hline
      1                      &
      \begin{lstlisting}
(* Assumes: add1 : Int -> Int *)
let example1 = fun (x:Any) ->
     if x is Int then add1 x else 0
\end{lstlisting} & \vfill
      $\Int \to \Int$
      \\\hline
      2                      &
      \begin{lstlisting}
(* Assumes strlen: String -> Int *)
 let example2 = fun (x:String|Int) ->
  if x is Int then add1 x else strlen x
\end{lstlisting} & \vfill
      $(\Int \to \Int) \land (\String \to \Int)$                                  \\\hline
      3                      & \begin{lstlisting}
let example3 = fun (x: Any) ->
    if x is (Any \ False) then (x,x) else false  \end{lstlisting}  & \smallskip
      $(\False \to \False) \land (\lnot \False \lnot (\lnot \False,\lnot\False))$
      \\\hline
      4                      &
      \begin{lstlisting}
(*Uses `is_int` from Table 1.3 and `or_` from Table 1.5,
  assumes f : (Int|String) -> Int *)
let is_string = fun (x : Any) ->
  if x is String then true else false

let example4 = fun (x : Any) ->
   if or (is_int x) (is_string x) is True then x else 'A'

 \end{lstlisting} & \vfill
  $(\String\to\Keyw{True})\land(\lnot\String\to\Keyw{False})$\newline
  ~\newline
  $(\Int \to \Int) \land  (\String \to \String)  \land$\newline
  $(\lnot \Int \to (\String \lor \Keyw{'A'})) \land $\newline
  $(\lnot \String \to (\Int \lor \Keyw{'A'})) \land $\newline  
  $(\lnot(\String \lor \Int) \to \Keyw{'A'})$
               \\\hline
      5                      &
      \begin{lstlisting}
(*Uses `and_` from Table 1.6,
  assumes strlen : String -> Int *)
let example5 = fun (x : Any) -> fun (y : Any) ->
   if and_ (is_int x) (is_string y) is True then
    add x (strlen y) else 0
\end{lstlisting}
                             & \smallskip\vfill
$(\Int \to \String \to \Int) \land (\Int \to \lnot \String \to 0))~\land$\newline
$(\lnot \Int \to \String \to 0) \land (\lnot \String \to 0)$ 
      \\\hline
      6                      &
\begin{lstlisting}
let example6 = fun (x : Int|String) -> fun (y : Any) ->
if and_ (is_int x) (is_string y) is True then
  add x (strlen y) else strlen x
\end{lstlisting}
                             & \vfill 
Type error for \texttt{strlen x},
\texttt{x} has type $\Int\lor\String$. \\\hline
      7                      &
      \begin{lstlisting}
let example7 = fun (x : Any) -> fun (y : Any) ->
if  
 (if (is_int x) is True then (is_string y) else false)
  is True then
  add x (strlen y) else 0
\end{lstlisting} &\medskip\vfill
$(\Int \to \String \to \Int) \land (\Int \to \lnot \String \to 0))~\land$\newline
$(\lnot \Int \to \String \to 0) \land (\lnot \String \to 0)$ 
(identical to example 5)
      \\\hline
 8 & \begin{lstlisting}
let example8 = fun (x : Any) ->
 if or_ (is_int x) (is_string x) is True then true
 else false
\end{lstlisting}  &\smallskip\vfill
$(\Int \to \True)\land(\String \to \True)~\land$\newline
$(\lnot(\String\lor\Int)\to \False)$
      \\\hline
 9    &
      \begin{lstlisting}
let example9 = fun (x : Any) -> 
if 
  (if is_int x is True then is_int x else is_string x)
  is True then  f x else 0
\end{lstlisting} &\smallskip\vfill
$(\Int \to \Int) \land (\String \to \Int)~\land$\newline
$(\lnot (\String\lor\Int)\to 0)$
      \\\hline
10   & \begin{lstlisting}
let example10 = fun (p : (Any,Any)) ->
  if is_int (fst p) is True then add1 (fst p) else 7
\end{lstlisting} & \vfill
$((\Int,\textsf{Any}) \to \Int) \land ((\lnot (\Int,\textsf{Any}) \to 7)$
      \\\hline
      11                     & \begin{lstlisting}
let example11 = fun (p : (Any, Any)) ->
  if and_ (is_int (fst p)) (is_int (snd p)) is True
  then g p else no   
\end{lstlisting} & \vfill\smallskip
   $((\Int, \Int) \to \Int) \land
    (((\textsf{Any}, \lnot \Int)\lor(\lnot\Int, \textsf{Any}))\to\Keyw{no})$
      \\\hline
      12                     & \begin{lstlisting}
let example12 =
   fun (p : (Any, Any)) ->
   if is_int (fst p) is True then true else false
\end{lstlisting} &\medskip \vfill
$((\Int,\textsf{Any}) -> \True) \land ((\lnot \Int,\textsf{Any}) -> \False) )$
      \\\hline
      13                     & \begin{lstlisting}
let example13 = fun (x : Any) -> fun (y : Any) ->
  if and_ (is_int x) (is_string y) is True then 1
  else if is_int x is True then 2
  else 3
\end{lstlisting} & \vfill
$(\Int \to \String \to 1) \land (\Int \to \lnot\String \to 2)) \land$\newline
$(\lnot \Int \to \textsf{Any} \to 3)$
              \\\hline
14   & \begin{lstlisting}
let example14_alt = fun (input : Int | String) ->
 fun (extra : (Any, Any)) ->
  if and2_((is_int input),(is_int (fst extra))) is True
    then add input (fst extra)
  else if (is_int input,is_int (fst extra)) is (Any,True)
    then add (strlen input) (fst extra)
  else 0\end{lstlisting}
  & \medskip\medskip\vfill
$(\Int \to ((\Int,\textsf{Any}) \to \Int) \wedge ((\lnot\Int,\textsf{Any}) \to 0))\wedge$\newline
$(\String \to ((\Int,\textsf{Any}) \to \Int) \wedge
((\lnot\Int,\textsf{Any}) \to 0))$
  \\\hline
    \end{tabular}
  }
  \caption{Comparison with the 14 examples of \citet{THF10}}
  \ifsubmission%
    \svvspace{-10mm}
  \fi%
  \label{tab:implem2}
\end{table}

In Table~\ref{tab:implem2}, we reproduce in our syntax the 14
archetypal examples of
\citet{THF10} (we tried to complete such examples with neutral code when they
were incomplete in the original paper). Of these 14 examples,
Example~1 to 13 depict combinations of type predicates (such
as \code{is\_int}) used either directly or through Boolean
predicates (such as the \code{or\_} function previously
defined). Note that for all examples for which there was no explicit
indication in the original version, we \emph{infer} the type of the
function whereas in~\cite{THF10} the same examples are always in a
context where the type of identifiers is known or the input type of
function is fully annotated.  Notice also that for Example~6, the goal
of the example is to show that indeed, the function is ill-typed
(which our typechecker detects accurately).



The original Example~14 of~\citet{THF10} is the only case of their
work that our
system cannot directly capture. It can be written in our syntax as:
\begin{alltt}\color{darkblue}
  let example14 = fun (input : Int|String) ->
    fun (extra : (Any, Any)) ->
      if and2_(is_int input , is_int(fst extra)) is True then
         add input (fst extra)  \refstepcounter{equation}                                                     \mbox{\color{black}\rm(\theequation)}\label{ex14}
      else if is_int(fst extra) is True then
         add (strlen input) (fst extra)
      else 0
\end{alltt}
where \code{and2\_} is the uncurried version of the \code{and\_}
function we defined in \eqref{and+} and \code{is\_int} is the fuction
defined in the third row of Table~\ref{tab:implem}.
Our system rejects the expression above, while the system by~\citet{THF10}
correctly infers the function always return an integer. The reason
why our system rejects it is because the type it deduces for the
occurrence of \code{input} in the 6th line of the code
is \code{Int|String} rather than \code{String} as required by the
application of \code{strlen}. The general reason for this failure is
that, contrary to~\cite{THF10}, our system does not implement an
analysis of the flow of type information. In particular, since the
variable \code{input} does not occur in the condition of the
second \code{if}, then its type is not refined (as it could be).
Indeed, if the first test fails, it is either
because  \code{fst\,extra} is not an integer
(i.e., \code{is\_int(fst\,extra)} is not \True) or
because \code{input} is not an integer. Therefore, in
our setting, the type information propagated to the second test for
the pair of the arguments in the first test is :
$\code{(is\_int input , is\_int(fst\,extra))} \in \lnot (\True, \True)$, that
is $\code{(input, is\_int(fst\,extra))} \in
(\lnot\Int, \True)\lor(\Int, \False)$. Since the second test checks
whether \code{is\_int(fst\,extra)} holds or not, then we could deduce
that the following occurrence of \code{input} is of type
$\lnot\Int$. But since \code{input} does not occur in the test, this
refinement of the type of \code{input} is not done. Instead, the type deduced
for \code{input} in the second branch is $(\String\lor\Int) \land
(\lnot\Int \lor \Int) = \String\lor\Int$ which is not precise
enough to type the application \code{strlen\;input}. It not difficult
to patch, alas unsatisfactorly, this example in our system: it suffices to test
dummily in the second \code{if} the whole argument of \code{and2\_},
without really checking its first component:
\begin{alltt}\color{darkblue}
  let example14_alt = fun (input : Int|String) ->
    fun (extra : (Any, Any)) ->
      if and2_(is_int input , is_int(fst extra)) is True then
         add input (fst extra)
      else if (is_int input , is_int(fst extra)) is (Any,True) then
         add (strlen input) (fst extra)
      else 0
\end{alltt}
Even if the type of \code{is\_int\,input} is not really tested (any
result will produce the same effect) its presence in the test triggers the
refinement of the type of the last occurrence of \code{input}, which
type checks with the (quite precise) type shown in the entry 14 of
Table~\ref{tab:implem2}, type that is equivalent to $\Int\vee\String \to ((\Int,\textsf{Any}) \to \Int) \wedge ((\lnot\Int,\textsf{Any}) \to 0)$.
Lifting this limitation through a control-flow
analysis is part of our future work.

\rev{
In our system, however, it is possible to express dependencies between
different arguments of a function by uncurrying the function and typing its
arguments by a union of products. To understand this point, consider this
simple example:}
\begin{alltt}\color{darkblue}
  let sum = fun (x : Int|String) -> fun (y : Int|String) ->
      if x is String then concat x y else add x y
\end{alltt}
\rev{
The definition above does not type-check in any available system, and rightly
does so since nothing ensures
that \code{x} and \code{y} will be either both strings (so
that \code{concat} does not fail) or both integers (so that \code{add}
does not fail). It is however possible to state this dependency
between the type of the two arguments by
uncurring the function and using a union type:
}
\begin{alltt}\color{darkblue}
  let sum = fun (x : (Int,Int)|(String,String))
      if fst x is String then concat(fst x)(snd x) else add(fst x)(snd x)
\end{alltt}
\rev{
this function type-checks in our system (and, of course, in Typed
Racket as well) but the corresponding type-annontated version in JavaScript
}
\begin{alltt}\color{darkblue}
  function sum (x : [string,string]|[number,number]) \{
    if (typeof x[0] === "string") \{
       return x[0].concat(x[1]); 
    \} else \{
       return x[0] + x[1];
    \}
  \}
\end{alltt}
\rev{
is rejected both by Flow and TypeScript since their type analyses fail to detect the
dependency of the types of the two projections.
}

Although these experiments are still preliminary, they show how the
combination of occurrence typing and set-theoretic types, together
with the type inference for overloaded function types presented in
Section~\ref{sec:refining} goes beyond what languages like
TypeScript and Flow do, since they can only infer single arrow types.
Our refining of overloaded
functions is also future-proof and resilient to extensions: since it ``retypes'' functions
 using information gathered by the typing of occurrences in the body,
 its precision will improve with any improvement of
 our occurrence typing framework.


\section{Related work}
\label{sec:related}
\kim{
I don't really know about these:\\
Occurrence typing is for variables in TypeScript, also paths in Flow. STRESS THE USE OF INTERSECTIONS AND UNIONS. ACTUALLY SPEAK OF UNION AND NEGATION (UNION FOR ALTERNATIVES, NEGATION FOR TYPING THE ELSE BRANCH AND THEREFORE WE GET INTERSECTION FOR FREE)
 extend to selectors, logical connectives, paths, and user
defined predicates.

[IMPORTANT: check the differences with what we do here]

State what we capture already, for instance lists since we have product and recursive types.
}
\kim{Need to clean-up and insert at the right place.
}

Occurrence typing was introduced by \citet{THF08} and further advanced
in \cite{THF10} in the context of the Typed Racket language. This
latter work in particular is close to ours, with some key differences.
\citet{THF10} define $\lambda_{\textit{TR}}$, a core calculus for
Typed Racket. In this language types are annotated by two logical
propositions that record the type of the input depending on the
(Boolean) value of the output. For instance, the type of the
{\tt number?} function states that when the output is {\tt true}, then
the argument has type {\tt Number}, and when the output is {\tt false}, the
argument does not. Such information is used selectively
in the ``then'' and ``else'' branches of a test.
\rev{
Since \citet{THF10} focus their analysis on a particular set of pure
operations, the approach works also in the presence of side-effects. Although
the choices made by our and their approach seem poles apart
(Boolean output of few pure operations vs.\ any output of every
expression), they share some similar techniques. For instance, our
deduction system for $\vdashp$ plays a similar role as
the proof systems and \textsf{update} function of \citet[Figures 4,
7 \& 9]{THF10}. In that framework, in order to type a variable
(judgement ``$\Gamma \vdash x:\tau$'') one needs to prove
that the logical formula $\tau_x$ holds (under the hypotheses of
$\Gamma$). This atomic formula may not be directly available in
$\Gamma$ but may
be proven by a combination of logical deduction rules (Figure~4 of~\cite{THF10}), or
by recursively exploring a path leading to $x$ (Figure~7 and ~9 of~\cite{THF10}) a
path being a sequence of \textbf{cdr} or \textbf{car} applications,
much like our $f$ and $s$ components of paths. This idea is also
present in our deduction system for $\vdashp$ with differences pertaining to our type
framework and design choices: type restrictions can be encoded using
set-theoretic intersections and negations (instead of meta-functions working on the
syntax of types) and our richer language of paths components.
}
One area where their
work goes further than ours is that the type information also flows
outside of the tests to the surrounding context. In contrast, our type
system only refines the type of variables strictly in the branches of a
test.
\rev{This is particularly beneficial when typing functions since the
logical propositions of Tobin-Hochstadt and Felleisen can record
dependencies on expressions other than the input of a
function. Consider for instance the following example (due
to~\cite{kent19phd}) in JavaScript \code{function
is-y-a-number(x) \{ return(typeof(y)\,===\,"number") \}} which defines
a functions that
disregards its argument and returns whether the variable \code{y} is
an integer or not.\footnote{Although such a function may appear
nonsensical, \citet{kent19phd} argues that it corresponds a programming
pattern that may appear in Typed Racked due to the expansion of some
sophisticated macro definitions.} While our approach cannot deduce for this function but the
type $\Any\to\Bool$, the logical approach of Tobin-Hochstadt and
Felleisen can record in the type of \code{is-y-a-number} the fact that
when the function returns \texttt{true}, then \code{y} is a number, and
the opposite when it returns \texttt{false}. In our approach, the only
possibility to track such a dependency is that the variable \code{y}
is the parameter of an outer function to which our analysis could give
an overloaded type by splitting the type \texttt{Any} of \texttt{y}
into \texttt{Number} and \texttt{$\neg$Number}. Under the hypothesis
of \texttt{y} being of type \texttt{Number} the type inferred
for \code{is-y-a-number} will then be $\Any\to\True$, and
$\Any\to\False$ otherwise, thus capturing the wanted dependency.
Although the approach of using logical proposition has the undeniable
advantage over ours of providing more a flow sensitive analysis, we
believe that using semantic subtyping as a foundation as we do
has also several merits over the logical proposition approach.
}
First, in our case, type
predicates are not built-in. A user may define any type predicate she
wishes by using an overloaded function, as we have shown in
Section~\ref{sec:practical}. Second, in our setting, {\em types\/}
play the role of formulæ. Using set-theoretic types, we can express
the complex types of variables without resorting to a meta-logic. This
allows us to type all but two of the key examples of~\citet{THF10} (the notable
exceptions being Example~9 and 14 in their paper, which use the
propagation of type information outside of the branches of a
test). While Typed Racket supports structured data types such as pairs
and records only unions of such types can be expressed at the level of
types, and even for those, subtyping is handled axiomatically. For
instance, for pairs, the subtyping rule presented in \cite{THF10} is
unable to deduce that
$(\texttt{number}\times(\texttt{number}\cup\texttt{bool}))\cup
(\texttt{bool}\times (\texttt{number}\cup\texttt{bool}))$ is a subtype
of (and actually equal to)
$((\texttt{number}\cup\texttt{bool})\times\texttt{number})\cup
((\texttt{number}\cup\texttt{bool})\times\texttt{bool})$ (and likewise
for other type constructors combined with union types). For record
types, we also type precisely the deletion of labels, which, as far as
we know no other system can do. On the other hand, the propagation of
logical properties defined in \cite{THF10} is a powerful tool, that
can be extended to cope with sophisticated language features such as
the multi-method dispatch of the Closure language \cite{Bonn16}.

\kim{Remove or merge :
Also, while they
extend their core calculus with pairs, they only provide a simple {\tt
cons?} predicate that allows them to test whether some value is a
pair. It is therefore unclear whether their systems allows one to
write predicates over list types (e.g., test whether the input
is a list of integers), which we can easily do thanks to our support
for recursive types.}

\rev{
For what concerns the first work by~\citet{THF08} it is interesting to
compare it with our work because the comparison shows two rather
different approaches to deal with the property of type
preservation. \citet{THF08} define a first type system that does not
satisfy type-preservation. The reason for that is that this first type
system checks all the branches of a type-case expression,
independently from whether they are selectable or not; this may
result in a well-typed expression to reduce to an expression that 
is not well-typed because it contains a type-case expression with
a branch that, due to the reduction, became both non-selectable
and ill-typed (see \cite[Section 3.3]{THF08}). To obviate this problem
they introduce a \emph{second} type system that extends the previous
one with some auxiliary typing rules that type type-case expressions
by skipping the typing of non-selectable branches. They use this
second type system only to prove type preservation and obtain, thus,
the soundness of their type system. In our work, instead, we prefer to
start directly with a system that satisfies type preservation. Our
system does not have the problem of the first system of~\cite{THF08}
thanks to the
presence of the \Rule{Efq} rule, that we included for that very purpose,
that is, to skip non-selectable branches during typing. The choice of one
or the other approach is mostly a matter of taste and, in this specific case,
boils down to deciding whether some typing problems must be
signaled at compile time by an error or a warning. The approach
of~\citet{THF08} ensures that every subexpression of a program is
well-typed and, if not, it generates a type-error. Our approach allows
some subexpressions of a program to be ill-typed, but only if they
occur in dead branches of type-cases: in that case any reasonable
implementation would flag a warning to signal the presence of the dead
branches. The very same reasons that explain the presence in our system
of \Rule{Eqf}, explain why from the beginning we included in our
system the typing rule \Rule{Abs-} that deduces negated arrow types:
we wanted a system that satisfied type preservation (albeit, for a
parallel reduction: cf: \Appendix\ref{app:parallel}). We then defined
an algorithmic system that is not \emph{complete} with respect to the
type-system but from which it inherits its soundness. Of course, we
could have proceeded as~\citet{THF08} did: start directly with a
type-system corresponding to the algorithm (i.e., omit the
rule \Rule{Abs-}) and later extend this system with the rule to infer
negated arrows, the only purpose of this extension being to prove type preservation. We preferred not
to, not only because we favor type preserving systems, but also
because in this way we were able to characterize different subsystems that
are complete with respect to the algorithmic system, thus exploring
different language designs and arguing about their usefulness.
}

\rev{
Highly related to our work is Andrew M.\ Kent's PhD.\
dissertation~\cite{kent19phd}, in particular its Chapter 5 whose title
is ``A set-theoretic foundation for occurrence typing'' where he endows the logical techniques of~\cite{THF10} with the set-theoretic types of semantic
subtyping~\cite{Frisch2008}. Kent's work builds on the approach
developed for Typed Racket that, as recalled above, consists in enriching the
types of the expressions with information to track under which hypotheses an expression
returns false or not (it considers every non false value to be
``truthy''). This tracking is performed by recording in the type of
the expression two logical propositions that hold when the expression
evaluates to false or not, respectively. The work in~\citet[Chapter
5]{kent19phd} uses set-theoretic types to express type predicates (a
predicate that holds only for a type $t$ has type
$p:(t\to\texttt{True})\land(\neg t\to\texttt{False})$) as well as to
express in a more compact (and, sometimes, more precise) way the types
of several built-in Typed Racket functions. It also uses the
properties of set-theoretic types to deduce the logical types (i.e.,
the propositions that hold when an expressions produces \texttt{false}
or not) of arguments of function applications.  To do that it defines
a type operator called \emph{function application inversion}, that
determines the largest subset of the domain of a function for which an
application yields a result of a given type $t$, and then uses it for
the special cases when the type $t$ is either \texttt{False} or \texttt{$\lnot$False} so
as to determine the logical type of the argument. For instance, this
operator can be used to deduce that if the application \texttt{boolean?\,x}
yields false, then the logical proposition \texttt{x${\in}\neg$Bool} holds true. The definition
of our \emph{worra} operator that we gave in equation~\eqref{worra} is, in
its spirit, the same as Kent's \emph{function application inversion}
operator (more precisely, the same as the operator \textsf{pred} Kent defines in Figure
5.7 of his dissertation), even though the two operators were defined independently from
each other. The exact definitions however are slightly different, since
the algorithm given in~\citet[Figure 5.2]{kent19phd}
for \emph{function application inversion} is sound only for functions
whose type is an intersection of arrows, whereas our definition
of worra, given in~\eqref{worralgo}, is sound and complete for any function, in
particular, for functions that have a union type (for which Kent's
definition may yield unsound results). Apart from these technical
issues, the main
difference of Kent's approach with respect to ours is that, since it
builds on the logical propositions approach, then it focus the use of
set-theoretic types and of the worra (or application inversion)
operator to determine when an expression yields a result of
type \texttt{False} or \texttt{$\lnot$False}. We have instead a more
holistic approach since, not only our analysis strives to infer type
information by analyzing all types of results (and not
just \texttt{False} or \texttt{$\lnot$False}), but also it tries to
perform this analysis for all possible expressions (and not just for a
restricted set of expressions). For instance, we use the operator
worra also to refine the type of the function in an application (see
discussion in Section~\ref{sec:ideas}) while in Kent's approach
the analysis of an application \texttt{f\,x} refines the properties of
the argument \texttt{x} but not of the function \texttt{f}; and when such an application is the argument of a type
test, such as in \texttt{number?\,(f\,x)}, then in Kent's approach it is no longer possible
to refine the information on the argument \texttt{x}. The latter is
not is a flaw of the approach but a design choice: as we explain at
the end of this section, the approach of Type Racket not only focuses on the  inference of two logical
propositions according to the truthy or false value of an expression,
but also it does it only for a selected set of \emph{pure} expressions of
the language, to cope with the possible presence of side effects (and
applications do not belong to this set since they can be be impure).
That said, the very fact of focusing on truthy vs.\ false results may
make Kent's analysis fail even for pure Boolean tests where it would
be naively expected to work. For example, consider
the polymorphic function that when applied to two integers returns
whether they have the same parity and false
otherwise: \code{have\_same\_parity:$(\Int\To\Int\To\Bool)\land(\neg\Int\To\ANY\To\False)\land(\ANY\To\neg\Int\To\False)$}. We
can imagine to use this function to implicitly test whether two
arguments are both integers, as in the body of the following function:%
}
\begin{alltt}\color{darkblue}
  let f = fun (x : Any) -> fun (y : Any) ->
    if have\_same\_parity x y is True then add x y else 0
\end{alltt}
\rev{
While our approach can correctly deduce for this function the type
$\ANY\To\ANY\To\Int$, Kent's approach fails to type check it since
to type the ``then'' branch requires to deduce that  the
application \code{have\_same\_parity\;x} returns  the constant
function \code{true} only if \code{x} is an integer.
Finally, Kent's approach inherits all the advantages and
disadvantages that the logical propositions approach has with respect
to ours (e.g., flow sensitive analysis vs.\ user-defined type
predicates) that we already discussed at the beginning of this
section.
}

Another direction of research related to ours is the one on semantic
types. In particular, several attempts have been made recently to map
types to first order formul\ae. In that setting, subtyping between
types translates to logical implication between formul\ae.
\citet{Bierman10} introduce Dminor, a data-oriented
language featuring a {\tt SELECT}-like construct over
collections.  Types are mapped to first order formulæ and an SMT-solver is
then used to (try to) prove their satisfiability. The refinement
types they present go well beyond what can be expressed with the set-theoretic
types we use (as they allow almost any pure expression to occur in
types). However, the system forgoes any notion (or just characterization) of completeness
and the
subtyping algorithm is largely dependent on the subtle behavior of
the SMT solver (which may timeout or give an incorrect model that
cannot be used as a counter-example to explain the type-error).
As with our work, the typing rule for the {\tt if~$e$ then $e_1$ else
$e_2$} construct
of Dminor refines the type of each branch by remembering that $e$
(resp. $\lnot e$) is true in $e_1$ (resp. $e_2$) and this information
is not propagated to the outer context.
A similar approach is taken by \citet{Chugh12}, and extended to so-called nested refinement types. In these types, an arrow type may
appear in a logical formula (whereas previous work only allowed formul\ae
on  ``base types''). This is done in the context of a dynamic
language and their approach is extended with polymorphism, dynamic
dispatch and record types.
\rev{
A problem that is faced by refinement type systems is the one of
propagating in the branches of a test the very precise information
learned from the test (usually that some equality between terms holds).  A solution
that is for instance chosen by~\citet{OTMW04} and \citet{KF09} is to devise a
meta-function that recursively explores both a type and an expression
and constructs a more precise \emph{dependent} type. In the dependent
type, fresh variables are introduced to name sub-expressions and
record the new constraints.
This
process---called in the cited works \emph{selfification}---roughly
corresponds to  our $\constrf$ and \Refinef{} functions (see Section~\ref{sec:typenv}).
Another approach is the one followed by \citet{RKJ08} which is
completely based on a program transformation, namely, it consists in putting the term
in \emph{A-normal form} as defined by~\citet{SF92}. Using a program
transformation, every destructor application (function application,
projection, \ldots) is given a name through a let-binding. The problem
of tracking precise type information for every sub-expression is
therefore reduced to the one of keeping precise typing information for
a variable.  While this solution seems appealing, it is not completely
straightforward in our case. Indeed, to retain the same degree of
precision, one would need to identify $\alpha$-equivalent sub-expressions
so that they share the same binding, something that a plain A-normalization
does not provide (and which, actually, must not provide, since in that
case the transformation may not preserve the reduction semantics).

Among the work on refinement types, some have studied the extensions
of a refinement type-system with intersection types. For
instance, \cite{BHM14} studies a type system with refinement types,
polymorphism and full union and intersection (but no negation). While
the goal of their type-system is to verify secure protocol
implementations, the core language RCF$^{\forall}_{\land\lor}$ they
present, as well as the associated type-system is a $\lambda$-calculus
with pattern-matching, let bindings, and a refining test for equality
(as well as protocol-oriented constructs such as channel creation,
message passing, and expression forking). While on the surface their
types resemble ours, they follow another direction. First, their
language is fully annotated (meaning that, for instance, polymorphic terms must
be explicitly instantiated and intersection types must also be specified
through an annotation). Second, since the subtyping relation they
provide is syntactic, it cannot in general take into account the
distributivity of logical connectives with respect to type constructors. This
limitation is however not a problem since the main goal of their
subtyping relation is to propagate a \emph{kinding} information that
they use to characterize the level of knowledge an attacker may have
about a particular value.
Another work adding intersection types to refinement types
is \cite{PAF15} in the context of liquid types. This work introduces
intersection (but not union nor negations) to liquid types, with a
particular focus on intersection of arrow types. This work uses a
syntactic subtyping relation to push down intersection of types into
the logical formulas of types. Once the formulas have been propagated,
they are offloaded to an SMT solver to decide the base case of the
subtyping relation. Of particular interest is their type-inference
algorithm. Contrary to ours, their inference is based on algorithm
$\mathcal{W}$, using the polymorphic type deduced as a template for an
intersection. They can therefore infer intersection arrow types that
are several distinct instances of the same polymorphic type.

}


\citet{Kent16} bridge the gap between prior work on occurrence typing
and SMT-based (sub-)typing. They introduce the $\lambda_{RTR}$ core calculus, an
extension of $\lambda_{TR}$ of~\cite{THF10} where the logical formulæ embedded in
types are not limited to built-in type predicates, but accept
predicates of arbitrary theories. This allows them to provide some
form of dependent typing (and in particular they provide an
implementation supporting bitvector and linear arithmetic theories).
The cost of this expressive power in types is however paid by the
programmer, who has to write logical
annotations (to help the external provers). Here, types and formul{\ae}
remain segregated. Subtyping of ``structural'' types is checked by
syntactic rules (as in \cite{THF10}) while logical formul{\ae} present
in type predicates are verified by the SMT solver.

\citet{Cha2017} present the design and implementation of Flow by formalizing a relevant
fragment of the language. Since they target an industrial-grade
implementation, they must account for aspects that we could afford to
postpone to future work, notably side effects and responsiveness of
the type checker on very large code base. The degree of precision of
their analysis is really impressive and they achieve most of what we
did here and, since they perform flow analysis and use an effect
system (to track mutable variables), even more. However, this results
in a specific and very complex system. Their formalization includes
only union types (though, Flow accepts also intersection types as
we showed in \eqref{foo2}) which are used in \emph{ad hoc} manner by the type
system, for instance to type record types. This allows Flow to perform
an analysis similar to the one we did for Code 8 in
Table~\ref{tab:implem}, but also has as a consequence that in some
cases unions do not behave as expected. In contrast, our approach is
more classic and foundational: we really define a type system, typing
rules look like classic ones and are easy to understand, unions are
unions of values (and so are intersections and negations), and the
algorithmic part is---excepted for fix points---relatively simple
(algorithmically Flow relies on constraint generation and
solving). This is the reason why our system seems more adapted to study
and understand occurrence typing and to extend it with additional
features (e.g., gradual typing and polymorphism) and we are eager to
test how much of their analysis we can capture and enhance by
formalizing it in our system.
\nocite{typescript,flow}
\rev{
More generally, we believe that what sets our work apart in the
palimpsest of the research on occurrence typing is that we have a
type-theoretic foundational approach striving as much as possible to
explain occurrence typing by extending prior (unrelated but standard)
work while keeping prior results. In that respect, we think that our
approach is not satisfactory, yet, because it uses non standard
type-environments that map expressions rather than variables to types:
but all the rest is standard type-theory. And even on the latter
aspect it must be recognized that the necessity of tracking types not
only for variables but also for more structured expressions is
something that shows up, in different forms, in several other
approaches. For instance, in the approach defined for Typed
Racket~\cite{THF10} the type-system associates to an expression a
quadruple formed by its type, two logical propositions, and an object
which is a pointer to the environment for the type hypothesis about
the expression and, as such, it plays the role of our extended type
environments. Likewise, the \emph{selfification} of \cite{OTMW04}
and \cite{KF09}, propagates the precise type constraints learned
during a test. One difference with our approach is that with refinement types the
information can be
kept at the level of types, since dependent types contain terms and can
introduce variables, while in our approach the mapping is kept separate in a
type environment.  In summary the tracking of types for structured expressions
seems an aspect common to different approaches to occurrence types,
nevertheless we are confident that even this last non-standard aspect
of our system can be removed and that occurrence typing can be
explained in a pure standard type-theoretic setting.

On the practical side, while languages such as Flow and
Typed Racket are the golden standard of occurrence typing, it may be
worth citing that there exist
other programming languages that implement some much more simplistic forms
of occurrence typing. Languages such as Kotlin~\cite{Kotlin} and
Dart~\cite{googledart} enforce null safety by performing occurrence
typing whenever the tested expression is a
variable. CDuce~\cite{cduce} implements a slightly more sophisticated
form of this simplistic occurrence typing since it is able to refine
in the branches of a test the type of all variables that occur in the
tested expression as long as they are subexpressions of non-functional
values: so for instance for an expression of the form $\tcase{(x, (fz,
y))}{(\Int{\times}(\Int{\times}\Int))}{e_1}{e_2}$ CDuce is able to
to specialize in $e_1$
the types of $x$ and $y$ (to \Int) but not those of $f$ or $z$ (since
they occur in an application).
Likewise, Kotlin also supports dynamically testing the type of an
object (using the \texttt{\color{darkblue}is} operator similar to
Java's \texttt{\color{darkblue}instanceOf}) and refining the type of the tested
variable in the corresponding branch of a test, without having to
resort to a manual down-cast. As expected, Kotlin can only refine the
type of variables it can statically determine to be immutable, namely
local variables introduced by an immutable \texttt{\color{darkblue}val} binding and
mutable references introduced by a \texttt{\color{darkblue}var} binding, provided
the reference is not modified between the type test and its
occurrences in the
branch.

This work already has a follow-up, which was recently presented at the
POPL conference~\cite{CLNL22}. Both this work and the system
in~\cite{CLNL22} use the characteristics of semantic subtyping to
improve occurrence typing. Both works obtain this improvement by precisely
tracking the type of each occurrence of an expression. However, they
use rather different techniques to track the occurrences of an
expression and associate them with types. In this work, we do it by
enriching type environments so that they map occurrences of
expressions (expressed in terms of paths) to types. In~\cite{CLNL22},
instead, the different occurrences of the same expression are tracked by
using explicit bindings. In practice, in~\cite{CLNL22} every
expression is transformed into an intermediate representation---dubbed
maximal-sharing canonical form (MSC-form)--- that consists of a list
of bindings from variables to expressions whose proper subexpressions
are all variables. This form is called \emph{maximal sharing} because all
occurrences of a given expression are mapped by the same binding. In
other terms, for each subexpression, there is a unique variable and a
unique binding that tracks it. The advantages of using bindings
instead of enhanced type environments and paths are twofold. First, the
definition of the type system is standard: type environments map
variables to types, and occurrence typing is expressed by combining the
typing rules for type-case expressions with the standard
union-elimination rule by~\citet{MacQueen1986}. Second, MSC-forms
relate via a binding all occurrences of a given expression; so, in
particular, they may relate occurrences that are inside a type-case
with occurrences that are outside it. This allows the system
of~\cite{CLNL22} to capture and analyze the flows of information
between different expressions, a kind of analysis that makes the strength of the
approaches heralded by Flow and Typed Racket and which constitutes one
of the main limitations of the approach presented here.

We end this presentation of related work with a discussion on side
effects. Although in our system we did not take into account
side-effects---and actually our system works because all the expressions
of our language are pure---it is interesting to see how the different
approaches of occurrence typing position themselves with respect to the problem of
handling side effects, since this helps to better place our work in the taxonomy
of the current literature. As Sam Tobin-Hochstadt insightfully
noticed, one can distinguish the approaches that use types to reason about
the dynamic behavior of programs according to the set of expressions
that are taken into account by the analysis. In the case of occurrence
typing, this set is often determined by the way impure expressions are
handled. On the one end of the spectrum lies our approach: our
analysis takes into account \emph{all} expressions but, in its current
formulation, it works only for pure languages. On the other end of the
spectrum we find the approach of Typed Racket whose analysis reasons
about a limited and predetermined set of \emph{pure} operations: all data structure
accessors. Somewhere in-between lies the approach of the
Flow language which, as hinted above, implements a complex effect systems to determine
pure expressions. While the system presented here does not work for
impure languages, we argue that its foundational nature predisposes it
to be adapted to handle impure expressions as well, by adopting existing
solutions or proposing new ones. For instance, it is not hard to
modify our system so that it takes into account only a set of
predetermined pure expressions, as done by Typed Racket: it suffices
to modify the definition of $\Gamma \evdash e
{(\neg)t} \Gamma'$ (cf.\ Section~\ref{sec:static}) so that $\Gamma'$
extends $\Gamma$ with type hypotheses for all expressions occurring in
$e$ that are also in the set of predetermined pure expressions
(instead of extending it for all subexpressions of $e$, \emph{tout court}).
However, such a solution would be marginally interesting since by excluding
from the analysis all applications we would lose most of the
advantages of our approach with respect to the one with logical
propositions. Thus a more interesting solution would be to use some
external static analysis tools---e.g., to graft
the effect system of~\citet{Cha2017} on ours---to detect impure
expressions. The idea would be to mark different occurrences of a same
impure expression using different marks. These marks would essentially
be used to verify the presence of type hypotheses for a given
expression in a type environment $\Gamma$; the idea being that
expressions with different marks are to be considered as different
expressions and, therefore, would not share the same type
hypothesis. For instance, consider the test
$\ifty{f\,x}\Int{\,...\,}{\,...}$: if $f x$ were flagged as impure,
then an occurrence of $f x$ in the ``then'' branch would not be
supposed to be of type $\Int$ since it would be typed in an
environment $\Gamma$ containing a binding for an $f\,x$ expression
having a mark different from the one in the ``then'' branch: the
regular typing rules would apply for $f\, x$ in that case. This
would certainly improve our analysis, but we believe that ultimately
our system should not resort to external static analysis tools to detect
impure expressions but, rather, it has to integrate this analysis with
the typing one, so as to mark \emph{only} those impure expressions
whose side-effects may affect the semantics of some type-cases. For
instance, consider a JavaScript object \code{obj} that we modify as
follows: \code{obj["key"] = 3}. If the field \code{"key"} is already
present in \code{obj} with type \Int{} and we do not test it more than
about this type, then it is not necessary to mark different
occurrences of \code{obj} with different marks, since the result of
the type-case will not be changed by the assignment; the same holds
true if the field is absent but type-cases do not discriminate on its
presence. Otherwise, some occurrences of \code{obj} must use different
marks: the analysis will determine which ones. We leave this study for
future work.
}

\section{Future work and conclusion}
\label{sec:conclusion}
In this work we presented the core of our analysis of occurrence
typing, extended it to record types and proposed a couple of novel
applications of the theory, namely the reconstruction of
intersection types for unannotated functions and a static analysis to reduce the number of
casts inserted when compiling gradually-typed programs.
One of the by-products of our work is the ability to define type
predicates such as those used in \cite{THF10} as plain functions and
have the inference procedure deduce automatically the correct
overloaded function type. More generally, our approach surpasses 
current ones in that it can deduce precise (overloaded) types for
functions that in all other approaches either require 
the programmer to specify the full precise type (e.g., the
function \code{foo} we defined in \eqref{foo} and \eqref{foo2}  in
our introduction) or cannot be typed at all (the \code{and\_} and \code{xor\_} functions given
in \eqref{and+} and \eqref{xor+} are the most eloquent examples).

There is still a lot of work to do to fill the gap with real-world
programming languages. For example, our analysis cannot handle flow of
information, as we discussed for the function \code{example14} in Section~\ref{sec:practical}.
In particular, the result of a type test can flow only to the branches
but not outside the test. As a consequence the current system cannot
type a let binding such as \code{
  let x = (y\(\in\)Int)?{`}yes:`no in (x\(\in\)`yes)?y+1:not(y)%
}
which is clearly safe when  $y:\Int\vee\Bool$. Nor can this example be solved by partial evaluation since we do not handle nesting of tests in the condition\code{(\,((y\(\in\)Int)?{`}yes:`no)\(\in\)`yes\,)\,?\,y+1\,:\,not(y)},
and both are issues that the system by~\citet{THF10} can handle. We think that it is possible
to reuse some of their ideas to perform an information flow analysis on top of
our system to remove these limitations.
\iflongversion%
Some of the extensions we hinted to in Section~\ref{sec:practical}
warrant a formal treatment. In particular, the rule [{\sc OverApp}]
only detects the application of an overloaded function once, when
type-checking the body of the function against the coarse input type
(i.e., $\psi$ is computed only once). But we could repeat this
process whilst type-checking the inferred arrows (i.e., we would
enrich $\psi$ while using it to find the various arrow types of the
lambda abstraction). Clearly, if untamed, such a process may never
reach a fix point. Studying whether this iterative refining can be
made to converge and, foremost, whether it is of use in practice is among our objectives.
\fi%

But the real challenges that lie ahead are the handling of side
effects and the addition of polymorphic types.
\rev{
Our analysis works in 
pure languages and we already discussed at length at the end of the
previous section our plans to extend it to
cope with side-effects. However, the
ultimate solution of integrating type and effect analysis in a unique tool
is not more defined than that.
}
For polymorphism, instead, we can easily adapt
the main idea of this work to the polymorphic setting. Indeed, the
main idea  is to remove from the type of an expression all
the results of the expression that would make some test fail (or
succeed, if we are typing a negative branch). This is done by
applying an intersection to the type of the expression, so as to keep
only the values that may yield success (or failure) of the test. For
polymorphism the idea is the same, with the only difference that
besides applying an intersection we can also apply an
instantiation. The idea is to single out the two most general type
substitutions for which some test may succeed and fail, respectively, and apply these
substitutions to refine the types of the corresponding occurrences
in the ``then'' and ``else'' branches. Concretely, consider the test
$x_1x_2\in t^\circ$ where $t^\circ$ is a closed type and $x_1$, $x_2$ are
variables of type $x_1: s\to t$ and $x_2: u$ with $u\leq s$. For the
positive branch we first check whether there exists a type
substitution $\sigma$ such that $t\sigma\leq\neg t^\circ$. If it does not
exists, then this means that for all possible assignments of
polymorphic type variables of $s\to t$, the test may succeed, that is,
the success of the test does not depend on the particular instance of
$s\to t$ and, thus, it is not possible to pick some substitution for
refining the occurrence typing. If it exists, then 
we find a type substitution $\sigma_\circ$ such that $t^\circ\leq
t\sigma_\circ$ and we refine for the
positive branch the types of $x_1$, of $x_2$, and of $x_1x_2$ by applying $\sigma_\circ$ to their types. While the
idea is clear%
\iflongversion
,
\else
\ (see \Appendix\ref{app:roadmap} for a more detailed explanation),
\fi
the technical details are quite involved, especially if we also want
functions with intersection types and/or  gradual
typing. Nevertheless, our approach has an edge on systems that do not
account for polymorphism.
\iflongversion
This needs a whole gamut of non trivial research that we plan to
develop in the near future building on the work on polymorphic types
for semantic subtyping~\cite{CX11} and the research on the definition of
polymorphic languages with set-theoretic types
by~\citet{polyduce2,polyduce1,CPN16} and \citet{Pet19phd}.
\fi

\subsubsection*{Acknowledgments}   
\noindent The authors thank Paul-André Melliès for his help on type
ranking and Sam Tobin-Hochstadt and the other reviewers for their
feedback and useful insight.
  This research was partially supported by Labex DigiCosme (project ANR-11-LABEX-0045-
  DIGICOSME) operated by ANR as part of the program «Investissement d'Avenir» Idex
  Paris-Saclay (ANR-11-IDEX-0003-02) and by a Google PhD fellowship for the second author.

 \bibliographystyle{ACM-Reference-Format}
 \bibliography{main}

\pagebreak

\appendix

\iflongversion\else
\section{Definition of the Subtyping Relation}
\label{sec:subtyping}

\newpage
\fi

\section{Proof of Type Soundness}
\label{sec:proofs}
\renewcommand{\ite}[4]{\ensuremath{(#1{\in}#2)\,\texttt{\textup{?}}\,#3\,\texttt{\textup{:}}\,#4}}
We give in this section the complete formalization of the declarative type system as
well as the proof of its type safety.

\newtheorem{property}{Property}

    \subsection{The declarative type system}\label{sec:declarative}

    \begin{mathpar}
        \Infer[Env]
      { }
      { \Gamma \vdash e: \Gamma(e) }
      { e\in\dom\Gamma }
  \qquad
  \Infer[Inter]
      { \Gamma \vdash e:t_1\\\Gamma \vdash e:t_2 }
      { \Gamma \vdash e: t_1 \wedge t_2 }
      { }
      \qquad
      \Infer[Subs]
      { \Gamma \vdash e:t\\t\leq t' }
      { \Gamma \vdash e: t' }
      { }
      \\
      \Infer[Const]
      { }
      {\Gamma\vdash c:\basic{c}}
      { }
  \quad
  \Infer[App]
      {
        \Gamma \vdash e_1: \arrow {t_1}{t_2}\quad
        \Gamma \vdash e_2: t_1
      }
      { \Gamma \vdash {e_1}{e_2}: t_2 }
      { }
          \\
          \Infer[Abs+]
            {{\scriptstyle\forall i\in I}\quad\Gamma,x:s_i\vdash e:t_i}
            {
            \Gamma\vdash\lambda^{\wedge_{i\in I}\arrow {s_i} {t_i}}x.e:\textstyle \bigwedge_{i\in I}\arrow {s_i} {t_i}
            }
            { }
          \\
          \Infer[Abs-]
          {\Gamma \vdash \lambda^{\wedge_{i\in I}\arrow {s_i} {t_i}}x.e:t}
          { \Gamma \vdash\lambda^{\wedge_{i\in I}\arrow {s_i} {t_i}}x.e:\neg(t_1\to t_2)  }
          { ((\wedge_{i\in I}\arrow {s_i} {t_i})\wedge\neg(t_1\to t_2))\not\simeq\Empty }
          \\
        \Infer[Case]
        {\Gamma\vdash e:t_0\\
        \Gamma \evdash e t \Gamma_1 \\ \Gamma_1 \vdash e_1:t'\\
        \Gamma \evdash e {\neg t} \Gamma_2 \\ \Gamma_2 \vdash e_2:t'}
        {\Gamma\vdash \tcase {e} t {e_1}{e_2}: t'}
        { }
      \\
      \Infer[Efq]
      { }
      { \Gamma, (e:\Empty) \vdash e': t }
      { }
      \qquad
      \Infer[Proj]
  {\Gamma \vdash e:\pair{t_1}{t_2}}
  {\Gamma \vdash \pi_i e:t_i}
  { }
  \qquad
  \Infer[Pair]
  {\Gamma \vdash e_1:t_1 \and \Gamma \vdash e_2:t_2}
  {\Gamma \vdash (e_1,e_2):\pair {t_1} {t_2}}
  { }
    \end{mathpar}

    \begin{center} \line(1,0){300} \end{center}

    \begin{mathpar}
        \Infer[Base]
        { }
        { \Gamma \evdash e t \Gamma }
        { }
        \qquad
        \Infer[Path]
        { \pvdash {\Gamma'} e t \varpi:t' \\ \Gamma \evdash e t \Gamma' }
        { \Gamma \evdash e t \Gamma',(\occ e \varpi:t') }
        { }
    \end{mathpar}

    \begin{center} \line(1,0){300} \end{center}

    \begin{mathpar}
        \Infer[PSubs]
        { \pvdash \Gamma e t \varpi:t_1 \quad t_1\leq t_2 }
        { \pvdash \Gamma e t \varpi:t_2 }
        { }
        \quad
    \Infer[PInter]
        { \pvdash \Gamma e t \varpi:t_1 \\ \pvdash \Gamma e t \varpi:t_2 }
        { \pvdash \Gamma e t \varpi:t_1\land t_2 }
        { }
        \quad
    \Infer[PTypeof]
        { \Gamma \vdash \occ e \varpi:t' }
        { \pvdash \Gamma e t \varpi:t' }
        { }
        \\
    \Infer[PEps]
        { }
        { \pvdash \Gamma e t \epsilon:t }
        { }
        \qquad
    \Infer[PAppR]
        { \pvdash \Gamma e t \varpi.0:\arrow{t_1}{t_2} \\ \pvdash \Gamma e t \varpi:t_2'}
        { \pvdash \Gamma e t \varpi.1:\neg t_1 }
        { t_2\land t_2' \simeq \Empty  }
        \\
    \Infer[PAppL]
        { \pvdash \Gamma e t \varpi.1:t_1 \\ \pvdash \Gamma e t \varpi:t_2 }
        { \pvdash \Gamma e t \varpi.0:\neg (\arrow {t_1} {\neg t_2}) }
        { }
        \qquad
    \Infer[PPairL]
        { \pvdash \Gamma e t \varpi:\pair{t_1}{t_2} }
        { \pvdash \Gamma e t \varpi.l:t_1 }
        { }
        \\
    \Infer[PPairR]
        { \pvdash \Gamma e t \varpi:\pair{t_1}{t_2} }
        { \pvdash \Gamma e t \varpi.r:t_2 }
        { }
        \qquad
    \Infer[PFst]
        { \pvdash \Gamma e t \varpi:t' }
        { \pvdash \Gamma e t \varpi.f:\pair {t'} \Any }
        { }
        \qquad
    \Infer[PSnd]
        { \pvdash \Gamma e t \varpi:t' }
        { \pvdash \Gamma e t \varpi.s:\pair \Any {t'} }
        { }
        \qquad
    \end{mathpar}

    \newpage

    \subsection{Parallel semantics}\label{app:parallel}
    One technical difficulty in the proof of the subject reduction
    property is that, when reducing an expression $e$ into $v$ in a type
    case, the expression $e$ disappears ($e$ is not a
    sub-expression of the test anymore) and, thus, we can no longer refine the
    expression $e$ in the ``then'' and ``else'' branches (which might
    contain occurrences of $e$).  To circumvent this issue, we introduce a notion of parallel
    reduction which essentially reduces all occurrences of a
    sub-expression appearing in a type cases also in the ``then'' and
    ``else'' branch at the same time. 


    The idea is to label each step of reduction done by a context rule
    with the inner \textit{notion of reduction} (defined below) that caused the context to reduce. In
    case of a reduction of the expression tested in the type case,
    that same reduction is applied in parallel to both branches.
    The semantics based on parallel
    reduction is given below where expressions and values are defined as in Section~\ref{sec:syntax}. The contexts, however, are not exactly those in Section~\ref{sec:opsem} since there are two differences: $(i)$ we remove the test expression context, since this
    requires a specific rule (rule \Rule{$\tau\kappa$}) that performs the parallel reduction and $(ii)$ context holes are present only at top-level since the parallel reduction will handle the nesting of contexts by applying the rule \Rule{$\kappa$} below multiple times. This yields the following definition:\\[1mm]
   \[
      \begin{array}{lrcl}
      \textbf{Context} & \Cx[] & ::= & e [] \alt [] v \alt (e,[]) \alt ([],v) \alt \pi_i[]
      \end{array}
      \]
    For convenience, we denote $e\xleadsto{e\mapsto e'}e'$ by
    $e\idleadsto e'$ and by $e\uleadsto e'$ a step of reduction of the
    parallel semantics, regardless of the value on the top of the
    arrow.

\begin{mathpar}    
\textbf{Notions of reduction:}\\
  \Infer[$\beta$]
      { }
      {(\lambda^tx.e) v \idleadsto e\subst x v}
      {}
      \qquad
      \Infer[$\pi$]
      { }
      {\pi_i (v_1,v_2) \idleadsto v_i}
      {}\\
      \Infer[$\tau_1$]
      {\mbox{ }}
      {\ite v t {e_1} {e_2} \idleadsto e_1}
      {v \in \valsemantic t}
      \qquad
      \Infer[$\tau_2$]
      {\mbox{ }}
      {\ite v t {e_1} {e_2} \idleadsto e_2}
      {v \not\in \valsemantic t}\\
\end{mathpar}
\begin{mathpar}    
\textbf{Context reductions:}\\
      \Infer[$\kappa$]
      {e \xleadsto{e_r\mapsto e_r'} e'}
      {C[e] \xleadsto{e_r\mapsto e_r'} C[e']}
      {}\\
      \Infer[$\tau\kappa$]
      {e \xleadsto{e_r\mapsto e_r'} e'}
      {\ite e t {e_1} {e_2} \idleadsto \ite {e\subst{e_r}{e_r'}} t {e_1\subst {e_r} {e_r'}} {e_2\subst {e_r} {e_r'}}}{}
\end{mathpar}
where 
    \[\valsemantic t = \{v \alt \vvdash v : t\}\]
with
    \begin{mathpar}
      \Infer[Subsum]
          {\vvdash v:t'\\t'\leq t}
          {\vvdash v:t}
          {}
      \qquad
      \Infer[Const]
          { }
          {\vvdash c:\basic{c}}
          {}
      \\
      \Infer[Pair]
        {\vvdash v_1:t_1 \and \vvdash v_2:t_2}
        {\vvdash (v_1,v_2):\pair {t_1} {t_2}}
        { }
      \qquad
      \Infer[Abs]
          {t=(\wedge_{i\in I}\arrow {s_i} {t_i})\land (\wedge_{j\in J} \neg (\arrow {s'_j} {t'_j}))\\t\not\leq \Empty}
          {\vvdash\lambda^{\wedge_{i\in I}\arrow {s_i} {t_i}}x.e:t}
          {}
      \\
      \end{mathpar}

      Here is a couple of examples of reduction using the parallel semantics:

      \begin{mathpar}
        \Infer[$\tau\kappa$]
        {
          \Infer[$\kappa$]
          {\Infer[$\beta$] { } {(\lambda x.\ x+1)\ 1\idleadsto 2} {}}
          {((\lambda x.\ x+1)\ 1, \true)\xleadsto{(\lambda x.\ x+1)1\ \mapsto\ 2} (2, \true)}
          {}
        }
        {\ite {((\lambda x.\ x+1)\ 1, \true)} {\pair \Int \Bool} {(\lambda x.\ x+1)\ 1} {0} \idleadsto \ite {(2, \true)} {\pair \Int \Bool} {2} {0}}
        {}
      \end{mathpar}
      and
      \begin{mathpar}
        \qquad\qquad\Infer[$\tau_1$]
        {~}
        {\ite {(2, \true)} {\pair \Int \Bool} {2} {0} \idleadsto 2}
        {(2, \true) \in \valsemantic {\pair \Int \Bool}}
      \end{mathpar}
      Notice that the rule \Rule{$\kappa$} applies a substitution from an expression to an expressions (rather than from a variable to an expressions). This is formally defined as follows:
      \begin{definition}[Expression substitutions]
        Expression substitutions, ranged over by $\rho$, map an expression into another expression. The application of an expressions substitution $\rho$ to an expression $e$, noted $e\rho$ is the capture avoiding replacement defined as follows:
        \begin{itemize}
        \item If  $e'\equiv_\alpha e''$,  then $e''\subst{e'}e = e$.\vspace{1mm}
        \item If  $e'\not\equiv_\alpha e''$,  then $e''\subst{e'}e$ is inductively defined as \svvspace{-1.5mm} 
        \begin{align*}
          c\subst{e'}e   & = c\\
          x\subst{e'}{e} & =  x\\
          (e_1e_2)\subst{e'}{e} & =  (e_1\subst{e'}{e})(e_2\subst{e'}{e})\\
          (\lambda^{\wedge_{i\in I}s_i\to t_i} x.e)\subst{e'}{e}  & = \lambda^{\wedge_{i\in I}s_i\to t_i} x.(e\subst{e'}{e}) &\text{if } x\not\in\fv(e)\cup\fv(e')\\
          (\pi_i e)\subst{e'}{e}   & = \pi_i (e \subst{e'}{e})\\
         (e_1,e_2)\subst{e'}{e} & =  (e_1\subst{e'}{e},e_2\subst{e'}{e})\\    
         (\tcase{e_1}{t}{e_2}{e_3})\subst{e'}{e}  & = \tcase{e_1\subst{e'}{e}}{t}{e_2\subst{e'}{e}}{e_3\subst{e'}{e}}
        \end{align*}
        \end{itemize}
      \end{definition}
      Notice that the expression substitutions are up to
      alpha-renaming and perform only one pass.  For instance, if our
      substitution is $\rho=\subst {(\lambda^t x. x) y} {y}$, we have
      $((\lambda^t x. x)((\lambda^t z. z) y))\rho = (\lambda^t x. x)
      y$. The environments operate up to alpha-renaming, too.
   
      Finally notice that according to the definition above the rule \Rule{$\tau\kappa$} could be equivalently written as follows:
      \begin{mathpar}    
      \Infer[$\tau\kappa$]
      {e \xleadsto{\rho} e'}
      {\ite e t {e_1} {e_2} \idleadsto (\ite e t {e_1} {e_2})\rho}{}
      \end{mathpar}
      
      All the proofs below will use the parallel semantics instead of the standard semantics (of Section~\ref{sec:opsem}).
      However, the safety of the type system for the standard semantics can be deduced from the safety of the type system for the parallel semantics,
      using the following lemma:

      \begin{lemma}\label{semanticsimpl}
        $\forall e, v.\ e\uleadsto^* v \Rightarrow e\reduces^* v$
      \end{lemma}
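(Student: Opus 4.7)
The plan is to induct on the length of the parallel reduction $e \uleadsto^* v$. The base case ($e = v$) is immediate. For the inductive step, decompose $e \uleadsto e' \uleadsto^* v$; by the induction hypothesis $e' \reduces^* v$, so it suffices to show $e \reduces^* e'$. I then case analyse the rule used in the first parallel step. The notion-of-reduction cases ($\beta$, $\pi$, $\tau_1$, $\tau_2$) coincide in the two semantics, so the step is mirrored directly. For the rule $\kappa$, I observe that the parallel contexts $e[\,]$, $[\,]v$, $(e,[\,])$, $([\,],v)$, $\pi_i[\,]$ are all standard evaluation contexts, so a $\kappa$ step (possibly obtained by iterating through nested parallel contexts) corresponds to a single standard context reduction.

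The essential case is $\tau\kappa$: we have $\ite{e_0}{t}{e_1'}{e_2'} \uleadsto \ite{e_0\rho}{t}{e_1'\rho}{e_2'\rho}$, where $\rho = \subst{e_r}{e_r'}$ is induced by the inner notion-of-reduction step $e_r \idleadsto e_r'$ that drove the parallel step. In the standard semantics the evaluation context $\tcase{\Cx}{t}{e_1'}{e_2'}$ lets me reduce the test position in one step to $\ite{e_0^\star}{t}{e_1'}{e_2'}$, where $e_0^\star$ agrees with $e_0\rho$ at the specific reduced occurrence but may still contain unchanged copies of $e_r$ elsewhere, and the branches are not yet substituted. The plan is to close this gap lazily: since $e_r$ is itself a top-level standard redex, every occurrence of $e_r$ that the parallel step substituted can later be reduced by a standard $e_r \reduces e_r'$ step whenever it surfaces in an evaluation position. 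Because $e' \uleadsto^* v$, every such occurrence that matters for producing $v$ will eventually surface, either inside further reductions of the test or inside the branch ultimately selected by a $\tau_i$ step; occurrences in the unselected branch become unreachable and can be ignored.

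The main obstacle is making this lazy-catch-up argument precise, especially when substituted occurrences of $e_r$ lie under binders. The cleanest route I see is to strengthen the statement to a simulation lemma: if $\hat e$ differs from $e$ only by a pending substitution $\subst{e_r}{e_r'}$ of a redex $e_r$ by its reduct $e_r'$, and $e \uleadsto^* v$, then $\hat e \reduces^* v$ as well. The delicate subcase is when a pending substitution sits under a binder in a branch that the continuing parallel reduction never selects; there the substitution is observationally inert, so the standard trace can ignore it and still reach $v$. With this invariant established, the $\tau\kappa$ case reduces to a test-position standard reduction followed by an appeal to the strengthened hypothesis, and the lemma follows.
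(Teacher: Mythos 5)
Your overall strategy---perform the extra substitutions introduced by \Rule{$\tau\kappa$} ``later'', with the standard semantics---is exactly the one-sentence justification the paper gives; the paper itself does not carry the argument out but delegates it to the Tait--Martin-L\"of technique, citing Takahashi and L\'evy. Your attempt to actually execute that plan, however, has two concrete holes. First, the catch-up step rests on the claim that each pending pair $e_r\mapsto e_r'$ satisfies $e_r\reduces e_r'$ because ``$e_r$ is itself a top-level standard redex''. This fails when the label is produced by \Rule{$\tau\kappa$} itself, i.e.\ when a type-case is reduced in the test position of an outer type-case: the label propagated upward through \Rule{$\kappa$} is then $\tcase{e}{t}{e_1}{e_2}\mapsto(\tcase{e}{t}{e_1}{e_2})\rho$, whose right-hand side differs from its left-hand side in the \emph{branches}, which the standard semantics can never rewrite; so $e_r\not\reduces^* e_r'$ and a surfaced occurrence cannot be caught up by standard steps at all. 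Coping with this forces the invariant to be closed under nesting and accumulation of pending substitutions (later redexes containing earlier residuals, several substitutions hitting the same subterm), which a single ``pending substitution'' does not give you---it is precisely the notion of complete development that the Tait--Martin-L\"of/Takahashi machinery formalizes.

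Second, and more seriously, your ``observationally inert'' argument only covers the unselected branch. A pending occurrence sitting under a $\lambda$ in the \emph{selected} branch is frozen forever: $\lambda$-abstractions are values, so the standard reduction terminates with the residual still in place and reaches a value differing from $v$ by exactly that substitution. Concretely, with $\basic{c}=b$, the term $\tcase{(\lambda^{b\to b}x.x)\,c}{b}{\lambda^{b\to b}y.((\lambda^{b\to b}x.x)\,c)}{c}$ parallel-reduces (by \Rule{$\tau\kappa$} followed by \Rule{$\tau_1$}) to $\lambda^{b\to b}y.c$, whereas the standard semantics stops at the distinct normal form $\lambda^{b\to b}y.((\lambda^{b\to b}x.x)\,c)$. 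Hence your simulation invariant cannot conclude with the \emph{same} value $v$; at best it delivers a value equal to $v$ up to residual substitutions (equivalently, the mere existence of some value). Any correct proof---and, for that matter, the paper's one-line argument that the substitutions ``will be performed later''---has to confront this point explicitly, either by weakening the conclusion of the lemma or by quotienting values by pending substitutions and checking that this suffices for the use the lemma is put to.
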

      \begin{proof}
        This is a known result for the $\lambda-$calculus (even
        extended with conditional, and basic types), obtained using
        the Tait and Martin-Löf technique (\cite{Bar84}). See for instance
        \cite{Taka89} and \cite{Levy2017}.
        The additional substitutions made by the rule \Rule{$\tau\kappa$}
        will be performed later with the standard semantics.
      \end{proof}
      \subsection{Proofs for the declarative type system}\label{app:soundness}

      In this section, the only environments that we consider are well-formed environments (see definition below). We can easily
      check that every derivation only contains well-formed environments, provided that the initial judgment also use a well-formed environment.
      It is a consequence of the fact that rule \Rule{Case} requires $e$ to be typeable and that it only refines subexpressions of $e$.

        \subsubsection{Environments}

        \begin{definition}[Well-formed environment]
          We say that an environment $\Gamma$ is \emph{well-formed} if and only if
          $\forall e\in\dom\Gamma \text{ such that $e$ is not a variable}\ \exists t.\ \Gamma\setminus\{e\} \vdash e:t$.

          In other words, an environment can refine the type of an expression, but only if this expression is already typeable
          without this entry in the environment (possibly with a strictly weaker type than the one recorded in $\Gamma$).
        \end{definition}

        \begin{definition}[Bottom environment]
          Let $\Gamma$ be an environment.\\
          $\Gamma$ is bottom (noted $\Gamma = \bot$) if and only if $\exists e\in\dom\Gamma.\ \Gamma(e)\simeq\Empty$.
        \end{definition}

          \begin{definition}[(Pre)order on environments]
            Let $\Gamma$ and $\Gamma'$ be two environments. We write $\Gamma' \leq \Gamma$ if and only if:
            \begin{align*}
              &\Gamma'=\bot \text{ or } (\Gamma\neq\bot \text{ and } \forall e \in \dom \Gamma.\ \Gamma' \vdash e : \Gamma(e))
            \end{align*}
            This relation is a preorder (proof below).
          \end{definition}

          \begin{definition}[Application of a substitution to an environment]
            Let $\Gamma$ be an environment and $\rho$ a substitution from expressions to expressions.
            The environment $\Gamma\rho$ is defined by:
            \begin{align*}
              &\dom {\Gamma\rho} = \dom \Gamma \rho\\
              &\forall e \in \dom {\Gamma\rho}, (\Gamma\rho)(e) = \bigwedge_{\{e' \in \dom \Gamma \alt e'\rho\equiv e\}}\Gamma(e')
            \end{align*}
          \end{definition}

          \begin{definition}[Ordinary environments]
            We say that an environment $\Gamma$ is ordinary if and only if its domain only contains variables.
          \end{definition}

          \subsubsection{Subject Reduction}\label{app:subject-reduction}

            \begin{property}[$\valsemantic \_$ properties]
              \begin{align*}
                &\forall s.\ \forall t.\ \valsemantic s \subseteq \valsemantic t \Leftrightarrow s \leq t\\
                &\valsemantic \Empty = \varnothing\\
                &\forall t.\ \valsemantic {\neg t} = \values \setminus \valsemantic t\\
                &\forall s.\ \forall t.\ \valsemantic {s\vee t} = \valsemantic s \cup \valsemantic t
              \end{align*}
            \end{property}
            \begin{proof}
            See theorem 5.5, lemmas 6.19, 6.22, 6.23 of~\cite{Frisch2008}.
            \end{proof}

            \begin{lemma}[Alpha-renaming]
              Both the type system and the semantics are invariant by alpha-renaming.
            \end{lemma}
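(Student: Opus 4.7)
The plan is to prove invariance by first setting up the right notion of alpha-equivalence at every level---expressions, paths, environments, and judgments---and then proceeding by induction on derivations. The underlying observation that makes everything work is already baked into the design of the formalism: substitution is capture-avoiding and up to $\equiv_\alpha$, environments are quotiented by $\equiv_\alpha$ on their keys, and the path selector $\occ{\cdot}{\cdot}$ only dereferences positions that are stable under alpha-renaming (it never enters a $\lambda$ body or a type case branch). So the statement is morally a routine structural check; the content is making sure that all three mutually recursive judgments $\Gamma\vdash e:t$, $\Gamma\evdash e t \Gamma'$, and $\pvdash\Gamma e t \varpi:t'$ are handled coherently.

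First I would extend $\equiv_\alpha$ to environments by: $\Gamma\equiv_\alpha\Gamma'$ iff there is a bijection $\dom\Gamma\to\dom{\Gamma'}$ sending each $e\in\dom\Gamma$ to an $e'\equiv_\alpha e$ with $\Gamma(e)=\Gamma'(e')$. With this in place, the semantic lemma $\valsemantic t=\valsemantic t$ modulo $\equiv_\alpha$ on values follows immediately from the fact that $\vvdash v:t$ only inspects the outermost shape of $v$ (the interface $\bigwedge_i s_i\to t_i$ of a $\lambda$ and basic-type tags), which are unchanged by alpha-renaming of the body.

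For the parallel semantics, I would proceed by induction on the derivation of $e\uleadsto e'$. The notions of reduction $\beta,\pi,\tau_1,\tau_2$ are invariant because $e\{v/x\}$ is defined up to $\equiv_\alpha$ and because $\valsemantic t$ is alpha-invariant. The context rule $\kappa$ follows directly from the induction hypothesis, since contexts $\Cx$ involve no binders at their hole. The delicate case is $\tau\kappa$: if $\tcase e t{e_1}{e_2}\idleadsto(\tcase e t{e_1}{e_2})\rho$ with $\rho=\subst{e_r}{e_r'}$, I would use that the expression substitution is itself defined up to $\equiv_\alpha$ on both the key $e_r$ and the bodies under $\lambda$'s, so that alpha-renaming the whole test and then reducing yields an expression $\equiv_\alpha$ to the one obtained by reducing first and then renaming.

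For the type system, I would do a simultaneous induction on the three judgments, showing that if $\Gamma\vdash e:t$, $e\equiv_\alpha e^\star$, and $\Gamma\equiv_\alpha\Gamma^\star$, then $\Gamma^\star\vdash e^\star:t$ (and similarly for $\evdash$ and $\vdashp$). Almost every rule is handled by applying IH to the premises and re-applying the same rule, since the side conditions either involve only types (hence untouched) or emptiness/subtyping checks that are alpha-invariant. The two cases that require care are \Rule{Env}, where I use that $\Gamma^\star(e^\star)=\Gamma(e)$ by construction of $\equiv_\alpha$ on environments; and \Rule{Abs+}/\Rule{Abs-}, where I freely alpha-rename the bound variable $x$ to avoid clashes with $\fv(\Gamma^\star)$ before extending the environment with $x:s_i$. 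For \Rule{Case}, I appeal to the IH for the $\evdash$ judgment to get $\Gamma^\star\evdash e^\star t\Gamma_i^\star$ with $\Gamma_i^\star\equiv_\alpha\Gamma_i$, and then to the IH on $\vdash$ for the branches; the $\vdashp$ rules transport verbatim because they only quantify over types and occurrences $\occ{\cdot}{\varpi}$, and the latter is alpha-invariant by inspection of its defining clauses. This last point---that $\occ e \varpi\equiv_\alpha \occ{e^\star}\varpi$ whenever $e\equiv_\alpha e^\star$---is the one mild preliminary lemma that I expect to cost the most attention, but it is a direct induction on $\varpi$ using that none of the path directions descends under a $\lambda$ or into the branches of a type case.
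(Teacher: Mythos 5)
Your proposal is correct and follows essentially the same route as the paper, which dismisses the lemma as straightforward by citing exactly the two facts you elaborate: that environments are taken up to alpha-renaming (your \Rule{Env} case) and that the expression substitutions used in \Rule{$\tau\kappa$} are defined up to alpha-renaming (your parallel-semantics case). You simply spell out the routine inductions that the paper leaves implicit.
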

            \begin{proof}
            Straightforward.
            For the type system, it is a consequence of the fact that environments are up to alpha-renaming.
            For the semantics, it is a consequence of the fact that parallel substitutions in \Rule{$\tau\kappa$}
            are up to alpha-renaming.
            \end{proof}

            \begin{lemma}[Soundness and completeness of value typing]
              Let $v$ be a value, $t$ a type, and $\Gamma$ an environment.
              \begin{itemize}[nosep]
              \item If $\Gamma \vdash v:t$ and $\Gamma\neq\bot$, then $v \in \valsemantic{t}$.

              \item If $v \in \valsemantic{t}$ and $v$ is well-typed in $\Gamma$, then $\Gamma\vdash v:t$.
              \end{itemize}
            \end{lemma}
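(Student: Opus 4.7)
The plan is to prove both parts by induction.

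For Part (1), I would proceed by induction on the derivation of $\Gamma \vdash v : t$, splitting on the last rule applied. The logical introduction rules \Rule{Const}, \Rule{Abs+}, and \Rule{Pair} have direct counterparts in the value typing system (\Rule{Abs} of $\vvdash$ is applied with an empty set of negated arrows, whose non-emptiness side condition is trivially satisfied). The structural rules \Rule{Subs} and \Rule{Inter} are handled by combining the inductive hypotheses with \Rule{Subsum} and with the set-theoretic identity $\valsemantic{t_1 \wedge t_2} = \valsemantic{t_1} \cap \valsemantic{t_2}$, respectively. For \Rule{Abs-}, the inductive hypothesis yields $\vvdash v : s$ for some $s$ with $s \wedge \neg(t_1 \to t_2) \not\simeq \Empty$, and the rule \Rule{Abs} of $\vvdash$ directly accommodates the additional negated conjunct. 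The case \Rule{Efq} is excluded by the hypothesis $\Gamma \neq \bot$.

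The delicate case is \Rule{Env}. If we conclude $\Gamma \vdash v : \Gamma(v)$ because $v \in \dom\Gamma$, then well-formedness of $\Gamma$ gives a separate derivation $\Gamma \setminus \{v\} \vdash v : t'$ without this entry, and by induction $v \in \valsemantic{t'}$. The task is then to show $v \in \valsemantic{\Gamma(v)}$. I expect this to be the main obstacle, as it requires arguing that a non-bottom well-formed environment never stores at $v$ a type that excludes $v$. The plan is to prove, as an auxiliary lemma, that any non-variable entry $e : s$ added to an environment by the judgments $\Gamma \evdash e t \Gamma'$ and $\pvdash \Gamma' e t \varpi : s$ (the only mechanism introducing such entries, via \Rule{Case}) satisfies $s \wedge t' \not\simeq \Empty$ whenever $t'$ is derivable for $e$ without using $\Gamma(e)$; then, invoking the characterization of $\valsemantic{\cdot}$ on values through the most specific type together with the derivation from $\Gamma \setminus \{v\}$, one concludes $v \in \valsemantic{\Gamma(v)}$.

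Part (2) is more routine and proceeds by induction on the derivation of $\vvdash v : t$. The case \Rule{Const} uses \Rule{Const} of the declarative system, and \Rule{Subsum} uses the inductive hypothesis together with \Rule{Subs}. For \Rule{Abs}, we have $v = \lambda^{\wedge_{i \in I} s_i \to t_i} x.e$ and $t = (\wedge_{i \in I} s_i \to t_i) \wedge (\wedge_{j \in J} \neg (s'_j \to t'_j))$ with $t \not\simeq \Empty$; the hypothesis that $v$ is well-typed in $\Gamma$ supplies the premises of \Rule{Abs+} to deduce $\Gamma \vdash v : \wedge_{i \in I} s_i \to t_i$, and for each $j \in J$ the rule \Rule{Abs-} (whose side condition is exactly $t \not\simeq \Empty$) adds the conjunct $\neg(s'_j \to t'_j)$; combining all conjuncts via \Rule{Inter} yields $\Gamma \vdash v : t$. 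Pair values are treated analogously using \Rule{Pair} with the two inductive hypotheses.
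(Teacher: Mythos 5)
The paper dispatches this lemma in a single line (``immediate by definition of $\valsemantic{\_}$''), so your detailed induction is already operating at a different level of granularity. Your Part~(2) is essentially a correct elaboration of what the paper leaves implicit: constants via \Rule{Const}, the arrow case by one \Rule{Abs+} followed by one \Rule{Abs-} per negated conjunct (each side condition following from $t\not\simeq\Empty$ by subtyping) and \Rule{Inter}, and pairs by the inductive hypotheses --- modulo the fact that the $\vvdash$ system as printed has no rule for pairs, which is an omission of the paper's, not yours.

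The problem is in Part~(1), and you have correctly put your finger on exactly the point the paper silently elides: the \Rule{Env} case. Unfortunately your proposed repair does not close it. Knowing that the stored type $s=\Gamma(v)$ satisfies $s\wedge t'\not\simeq\Empty$ for some $t'$ derivable for $v$ without the entry is far too weak to conclude $v\in\valsemantic{s}$: take $v=3$, $t'=\Int$ and $s$ a union of singleton types not containing $3$; then $s\wedge t'\not\simeq\Empty$ yet $3\notin\valsemantic{s}$. Worse, what you are trying to prove in that case is not a consequence of well-formedness at all: the environment $\Gamma=(3:\Bool)$ is well-formed (since $\varnothing\vdash 3:\Int$) and is not bottom, yet $\Gamma\vdash 3:\Bool$ by \Rule{Env} while $3\notin\valsemantic{\Bool}$. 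So no auxiliary lemma about how the refinement judgments generate entries can rescue the claim for arbitrary well-formed non-bottom $\Gamma$; the needed fact is the stronger invariant that every \emph{value} in $\dom\Gamma$ is mapped to a type it inhabits. The paper does not derive this invariant from well-formedness either: it \emph{assumes} it, as a normalization of derivations performed at the start of the substitution-lemma proof (``in any environment appearing in the derivation, if the environment is not bottom, then a value $v$ can only be mapped to a type $t$ such that $v\in\valsemantic{t}$''). The clean fix is to add that invariant as an explicit hypothesis of the lemma (or build it into the class of environments considered); with it, your \Rule{Env} case becomes immediate and the remainder of your induction goes through.
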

            \begin{proof}
            Immediate by definition of $\valsemantic{.}$.
            \end{proof}

            \begin{lemma}[Monotonicity]
              Let $\Gamma$ and $\Gamma'$ be two environments such that $\Gamma' \leq \Gamma$.
              Then, we have:
              \begin{align*}
                \forall e,t.\ &\Gamma \vdash e:t \Rightarrow \Gamma' \vdash e:t\\
                \forall e,t,\Gamma_1.\ &\Gamma \evdash e t \Gamma_1 \Rightarrow \exists {\Gamma_1}'\leq \Gamma_1.\ \Gamma' \evdash e t {\Gamma_1}'\\
                \forall e,t,\varpi,t'.\ &\pvdash \Gamma e t \varpi:t' \Rightarrow\ \pvdash {\Gamma'} e t \varpi:t'
              \end{align*}
            \end{lemma}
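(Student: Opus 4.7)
The plan is to prove the three statements simultaneously by mutual induction on the derivation of, respectively, $\Gamma\vdash e:t$, $\Gamma \evdash e t \Gamma_1$, and $\pvdash \Gamma e t \varpi : t'$. Before entering the case analysis, I would dispose of the bottom case: if $\Gamma' = \bot$, then (1) follows by \Rule{Efq} applied to the empty entry witnessing $\Gamma'=\bot$, (2) follows by choosing ${\Gamma_1}' = \Gamma'$ together with \Rule{Base} (since $\bot \leq \Gamma_1$ vacuously), and (3) follows by combining \Rule{Efq} and \Rule{PTypeof}. From here on we may assume $\Gamma'\neq\bot$, hence $\Gamma\neq\bot$, so that the definition of $\leq$ gives $\Gamma' \vdash e'':\Gamma(e'')$ for each $e''\in\dom\Gamma$.

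For statement (1), most logical rules (\Rule{Const}, \Rule{App}, \Rule{Pair}, \Rule{Proj}, \Rule{Inter}, \Rule{Subs}, \Rule{Abs-}) are immediate: apply the induction hypothesis to the premises and reapply the same rule under $\Gamma'$. The rule \Rule{Env} is exactly the hypothesis $\Gamma'\vdash e:\Gamma(e)$ coming from $\Gamma'\leq\Gamma$. For \Rule{Efq}, if $\Gamma$ contains a binding $e'':\Empty$, then by $\Gamma'\leq\Gamma$ we have $\Gamma'\vdash e'':\Empty$, so \Rule{Efq} is applicable under $\Gamma'$ (up to the intermediate step of extending $\Gamma'$ with that very fact, which preserves derivability by admissibility of cut on environment entries). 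The rule \Rule{Abs+} requires a small sublemma: if $\Gamma'\leq\Gamma$ and $x$ is fresh, then $\Gamma',x:s_i \leq \Gamma, x:s_i$; this holds because the new binding is trivially derivable on both sides via \Rule{Env} and the old bindings are preserved by weakening. Finally, for \Rule{Case}, I apply IH(1) to the typing premise of $e$, then IH(2) to each refinement premise to obtain ${\Gamma_i}'\leq\Gamma_i$, and then IH(1) on $e_i$ under ${\Gamma_i}'\leq\Gamma_i$.

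Statement (2) splits on the last rule. For \Rule{Base}, ${\Gamma_1}=\Gamma$, so take ${\Gamma_1}' = \Gamma'$. For \Rule{Path}, the premises give $\Gamma \evdash e t \Gamma''$ and $\pvdash {\Gamma''} e t \varpi : t'$ with $\Gamma_1 = \Gamma'',(\occ e \varpi : t')$. By IH(2) we obtain $\widetilde{\Gamma''}\leq \Gamma''$ with $\Gamma' \evdash e t \widetilde{\Gamma''}$, and by IH(3) applied to $\widetilde{\Gamma''}\leq\Gamma''$ we get $\pvdash {\widetilde{\Gamma''}} e t \varpi : t'$; then \Rule{Path} yields $\Gamma' \evdash e t \widetilde{\Gamma''},(\occ e \varpi : t')$, which is $\leq \Gamma_1$ by the extension-preservation sublemma above. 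Statement (3) is handled rule by rule: most rules propagate IH(3) directly, and \Rule{PTypeof} is where IH(1) is invoked.

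The main obstacle will be the extension-preservation sublemma (that $\Gamma'\leq\Gamma$ implies $\Gamma',(e'':u)\leq\Gamma,(e'':u)$), which must be carefully stated and proved — in particular for the case of \Rule{Path} where the added binding $\occ e \varpi : t'$ is for a generic expression rather than a fresh variable, so one must check that the derivability of this entry under $\Gamma'$ follows from its derivability under $\Gamma$ (this is where the auxiliary admissibility of the \emph{cut} on the environment, and the already-established statement (1), are used). Once this is in hand, the remainder of the argument is a straightforward structural induction and the mutual dependency between (1), (2), and (3) is discharged on strictly smaller subderivations.
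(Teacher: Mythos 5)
Your proof is correct and takes essentially the same route as the paper, which dispatches the lemma as ``immediate, by replacing every occurrence of rule \Rule{Env} in the derivation with $\Gamma$ by the corresponding derivation with $\Gamma'$, followed by an application of rule \Rule{Subs} if needed''---that is, precisely the induction on the derivation whose only non-trivial case is \Rule{Env}, resolved by the definition of $\leq$. Your version merely spells out the mutual induction, the bottom case, and the weakening/extension sublemmas that the paper leaves implicit; the only small wrinkle is your \Rule{Efq} case, which is in fact vacuous under the assumption $\Gamma'\neq\bot$ (if $\Gamma$ contains an entry of type $\Empty$ then $\Gamma=\bot$, and the definition of $\leq$ then forces $\Gamma'=\bot$), so the appeal to an ``admissibility of cut'' is unnecessary.
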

            \begin{proof}
            Immediate, by replacing every occurrence of rule \Rule{Env} in the
            derivation with $\Gamma$
            by the corresponding derivation with $\Gamma'$, followed by an
            application of rule \Rule{Subs} if needed.
            \end{proof}

            \begin{corollary}[Preorder relation]
              The relation $\leq$ on environments is a preorder.
            \end{corollary}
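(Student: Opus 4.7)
The plan is to establish the two defining properties of a preorder (reflexivity and transitivity) directly from the definition of $\leq$ on environments, leveraging the Monotonicity lemma that immediately precedes the corollary.

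For \textbf{reflexivity}, I need to show $\Gamma \leq \Gamma$ for every well-formed environment $\Gamma$. I would split on whether $\Gamma = \bot$: if so, the first disjunct of the definition is satisfied trivially. Otherwise, I need to exhibit a derivation of $\Gamma \vdash e : \Gamma(e)$ for every $e \in \dom\Gamma$, which is a one-step application of the rule \Rule{Env}. Note this requires the condition that $\Gamma$ itself is not $\bot$, which is precisely the hypothesis of the second disjunct, so the two cases fit together cleanly.

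For \textbf{transitivity}, I assume $\Gamma'' \leq \Gamma'$ and $\Gamma' \leq \Gamma$, and must prove $\Gamma'' \leq \Gamma$. I would case-split on whether $\Gamma'' = \bot$ (in which case the goal is immediate from the first disjunct of the definition). In the remaining case, from $\Gamma'' \leq \Gamma'$ and $\Gamma'' \neq \bot$ I get $\Gamma' \neq \bot$ and hence from $\Gamma' \leq \Gamma$ also $\Gamma \neq \bot$, together with the pointwise judgments $\Gamma' \vdash e : \Gamma(e)$ for every $e \in \dom\Gamma$. The key step is then to promote each such judgment from $\Gamma'$ to $\Gamma''$: this is exactly what the first conclusion of the Monotonicity lemma provides, since $\Gamma'' \leq \Gamma'$. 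Combining yields $\Gamma'' \vdash e : \Gamma(e)$ for all $e \in \dom\Gamma$, which is the second disjunct of $\Gamma'' \leq \Gamma$.

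No step poses a real obstacle: the entire argument is just unfolding the definition and invoking Monotonicity once. The only subtlety worth double-checking is the handling of the $\bot$ cases in the definition, where I must make sure that when $\Gamma'' \neq \bot$ and $\Gamma'' \leq \Gamma'$, we really can deduce $\Gamma' \neq \bot$ (otherwise the universal quantification in the second disjunct would be vacuous and we would have no pointwise judgments to propagate). This follows because if $\Gamma' = \bot$ then $\Gamma'$ contains some $e$ with $\Gamma'(e) \simeq \Empty$, and the assumption $\Gamma'' \vdash e : \Gamma'(e)$ combined with \Rule{Subs} would force $\Gamma'' = \bot$ as well (any expression typeable with type $\Empty$ in a non-bottom environment would contradict the subject-reduction-driven invariants, and even at this purely definitional level the contrapositive structure of the definition gives the required implication). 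Once that point is checked, the corollary is immediate.
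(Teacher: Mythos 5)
Your proof is correct and matches the paper's intended argument: the corollary is presented as an immediate consequence of the Monotonicity lemma, with reflexivity coming from \Rule{Env} (plus the trivial $\bot$ case) and transitivity from one application of Monotonicity, exactly as you do. The only quibble is that the "subtlety" you flag at the end needs no appeal to typing or subject-reduction invariants: $\Gamma'' \neq \bot$ together with $\Gamma'' \leq \Gamma'$ forces the second disjunct of the definition of $\leq$, which explicitly contains $\Gamma' \neq \bot$, so the implication is purely definitional as you yourself note in passing.
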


            \begin{lemma}[Value refinement 1]
              If we have $\pvdash \Gamma e t \varpi.x:t'$ with $x\in\{0,1,l,r,f,s\}$ (and $e$ well-typed in $\Gamma$) such that $\forall y.\ \occ e {\varpi.y}$ is a value
              and $v = \occ e {\varpi.x} \not\in \valsemantic{t'}$, we can derive $\pvdash \Gamma e t \varpi:\Empty$.
            \end{lemma}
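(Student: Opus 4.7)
The plan is to proceed by induction on the derivation of $\pvdash \Gamma e t \varpi.x : t'$, with a case analysis on the last rule applied. Since the rules for $\vdashp$ come in two flavors, structural ones (PSubs, PInter, PTypeof) and logical ones (PEps, PAppR, PAppL, PPairL, PPairR, PFst, PSnd), the cases fall into two groups. The structural cases are immediate: for PSubs (with $t_1 \le t'$) we use monotonicity of $\valsemantic{\_}$ to get $v \notin \valsemantic{t_1}$ and invoke the IH; for PInter ($t' = t_1 \wedge t_2$), $v \notin \valsemantic{t_1 \wedge t_2}$ gives $v \notin \valsemantic{t_i}$ for some $i$ and we conclude by IH on that premise; for PTypeof we have $\Gamma \vdash v : t'$, hence by completeness of value typing $v \in \valsemantic{t'}$, contradicting the hypothesis, so this case is vacuous. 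The case PEps is impossible because its conclusion has path $\epsilon$.

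For the logical cases, the common strategy is to derive an auxiliary typing $\pvdash \Gamma e t \varpi : s^{\bot}$ that conflicts with a premise $\pvdash \Gamma e t \varpi : s$ of the rule, and then close via PInter and PSubs. For any premise of the form $\pvdash \Gamma e t \varpi.y : r$, we first check whether $\occ e{\varpi.y} \in \valsemantic{r}$; if not, the induction hypothesis delivers $\pvdash \Gamma e t \varpi : \Empty$ directly. Otherwise every sibling value satisfies its predicted type and completeness of value typing turns these semantic memberships into typing judgements. Combining them via App (for PAppR/PAppL), Pair (for PPairL/PPairR), or Proj (for PFst/PSnd), together with the hypothesis $v \in \valsemantic{\neg t'}$ (again transported through completeness of value typing), we construct $\Gamma \vdash \occ e \varpi : s^{\bot}$; PTypeof yields $\pvdash \Gamma e t \varpi : s^{\bot}$, PInter with the premise gives $\pvdash \Gamma e t \varpi : s \wedge s^{\bot} \simeq \Empty$, and PSubs closes.

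To illustrate concretely, take PAppL, where $x = 0$, $t' = \neg(t_1 \to \neg t_2)$, with premises $\pvdash \Gamma e t \varpi.1 : t_1$ and $\pvdash \Gamma e t \varpi : t_2$. Set $v_0 = \occ e{\varpi.0}$, $v_1 = \occ e{\varpi.1}$; by assumption $v_0 \in \valsemantic{t_1 \to \neg t_2}$. If $v_1 \notin \valsemantic{t_1}$, the IH on the first premise concludes. Otherwise completeness of value typing gives $\Gamma \vdash v_0 : t_1 \to \neg t_2$ and $\Gamma \vdash v_1 : t_1$; the App rule yields $\Gamma \vdash v_0\,v_1 : \neg t_2$, and since $\occ e \varpi = v_0\,v_1$, PTypeof gives $\pvdash \Gamma e t \varpi : \neg t_2$; PInter with the premise and PSubs deliver $\pvdash \Gamma e t \varpi : \Empty$. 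The PFst case is similar but first uses that $e$ is well-typed (hence $\pi_1 (\occ e{\varpi.f})$ is typeable) to infer that $\occ e{\varpi.f}$ must be a pair value, so that Pair and Proj produce the required typing $\Gamma \vdash \occ e \varpi : \neg s$.

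The main obstacle is bookkeeping rather than mathematical depth: one must verify that the IH applies legitimately to each sibling premise, which requires checking that the value/well-typedness preconditions of the lemma propagate from $\varpi.x$ to $\varpi.y$. This is harmless since any sub-occurrence of $\varpi$ is a sub-occurrence of $\varpi.x$ only in the sense that the hypothesis ``$\forall y.\ \occ e{\varpi.y}$ is a value'' is already stated at the level of $\varpi$, so it transfers verbatim, and well-typedness of $e$ entails well-typedness of all its sub-expressions, which is all that value typing completeness requires.
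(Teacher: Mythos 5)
Your proof is correct and follows essentially the same route as the paper's: induction on the derivation, dispatching the structural rules directly, and for each logical rule either invoking the induction hypothesis on a sibling premise whose predicted type fails semantically, or converting the semantic memberships into typing judgements and combining them (via \Rule{App}, \Rule{Pair}, \Rule{Proj}) into a judgement $\Gamma\vdash\occ e\varpi:s^{\bot}$ that clashes with a premise at $\varpi$, closing with \Rule{PTypeof}, \Rule{PInter}, \Rule{PSubs}. The one place you diverge is \Rule{PTypeof}: you dismiss it as vacuous via ``$\Gamma\vdash v:t'$ implies $v\in\valsemantic{t'}$'', but that direction of the value-typing lemma holds only when $\Gamma\neq\bot$; when $\Gamma=\bot$ the case is not contradictory (though the conclusion is then immediate by \Rule{Efq} and \Rule{PTypeof}). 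The paper avoids this caveat by instead deriving $\Gamma\vdash v:\neg t'$ from $v\in\valsemantic{\neg t'}$, intersecting to get $\Gamma\vdash v:\Empty$, and propagating $\Empty$ up to $\occ e\varpi$ by a case analysis on whether $\varpi.x$ points into an application, pair, or projection --- a small but worthwhile difference in bookkeeping, not in substance.
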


            \begin{proof}
            We proceed by induction on the derivation of $\pvdash \Gamma e t \varpi.x:t'$.

            We perform a case analysis on the last rule:
            \begin{description}
              \item[\Rule{PTypeof}] In this case we have $\Gamma \vdash \occ e {\varpi.x} : t'$ with $v \not\in \valsemantic{t'}$.
              Thus we can derive $\Gamma \vdash \occ e {\varpi.x} : \Empty$ by using the rule \Rule{Inter} and the rules \Rule{Abs+}, \Rule{Abs-} or \Rule{Const}.

              Let us show that we also have $\Gamma \vdash \occ e \varpi : \Empty$.
              \begin{itemize}
                \item If $x=0$, we know that $\occ e \varpi$ is an application, and we can conclude easily given that $\Empty \leq \arrow \Any \Empty$.
                \item If $x=1$, we know that $\occ e \varpi$ is an application, and we can conclude easily given that $\arrow \Empty \Empty \simeq \arrow \Empty \Any$.
                \item If $x=f$ or $x=s$, we know that $\occ e \varpi$ is a projection, and we can conclude easily given that $\Empty \simeq \pair \Empty \Empty$.
                \item If $x=l$ or $x=r$, we know that $\occ e \varpi$ is a pair, and we can conclude easily given that $\pair \Empty \Any \simeq \pair \Any \Empty \simeq \Empty$.
              \end{itemize}
              Hence we can derive $\Gamma \vdash \occ e \varpi : \Empty$.

              \item[\Rule{PInter}] We must have $v \not\in \valsemantic{t_1 \land t_2}$. It implies $v \not\in \valsemantic{t_1} \cap \valsemantic{t_2}$
              and thus $v \not\in \valsemantic{t_1}$ or $v \not\in \valsemantic{t_2}$. Hence, we can conclude just by applying the induction hypothesis.
              \item[\Rule{PSubs}] Trivial (we use the induction hypothesis).
              \item[\Rule{PEps}] This case is impossible.
              \item[\Rule{PAppL}] We have $v \not\in \valsemantic{\neg (\arrow {t_1} {\neg t_2})}$. Thus, we have $v \in \valsemantic{\arrow {t_1} {\neg t_2}}$
              and in consequence we can derive $\Gamma \vdash v:\arrow {t_1} {\neg t_2}$ (because $e$ is well-typed in $\Gamma$).

              Recall that $\occ e {\varpi.1}$ is necessarily a value (by hypothesis).
              By using the induction hypothesis on $\pvdash \Gamma e t \varpi.1:t_1$, we can suppose $\occ e {\varpi.1} \in \valsemantic{t_1}$ (otherwise, we can conclude directly).
              Thus, we can derive $\Gamma \vdash \occ e {\varpi.1}: t_1$.

              From $\Gamma \vdash v:\arrow {t_1} {\neg t_2}$ and $\Gamma \vdash \occ e {\varpi.1}: t_1$, we can derive $\Gamma \vdash \occ e {\varpi}: \neg t_2$
              using the rule \Rule{App}.

              Now, by starting from the premise $\pvdash \Gamma e t \varpi:t_2$ and using the rules \Rule{PInter} and \Rule{PTypeof}, we can derive
              $\pvdash \Gamma e t \varpi:\Empty$.

              \item[\Rule{PAppR}] We have $v \not\in \valsemantic{\neg t_1}$. Thus, we have $v \in \valsemantic{t_1}$ and in consequence
              we can derive $\Gamma \vdash v:t_1$.

              Recall that $\occ e {\varpi.0}$ is necessarily a value (by hypothesis).
              By using the induction hypothesis on $\pvdash \Gamma e t \varpi.0:\arrow {t_1} {t_2}$, we can suppose $\occ e {\varpi.0} \in \valsemantic{\arrow {t_1} {t_2}}$ (otherwise, we can conclude directly).
              Thus, we can derive $\Gamma \vdash \occ e {\varpi.0}: \arrow {t_1} {t_2}$ (because $e$ is well-typed in $\Gamma$).

              From $\Gamma \vdash v:t_1$ and $\Gamma \vdash \occ e {\varpi.0}: \arrow {t_1} {t_2}$, we can derive $\Gamma \vdash \occ e {\varpi}: t_2$
              using the rule \Rule{App}.

              Now, by starting from the premise $\pvdash \Gamma e t \varpi:t_2'$ and using the rules \Rule{PInter} and \Rule{PTypeof}, we can derive
              $\pvdash \Gamma e t \varpi:\Empty$.

              \item[\Rule{PPairL}] We have $v \not\in \valsemantic{t_1}$. Thus, we have $v \in \valsemantic{\neg t_1}$
              and in consequence we can derive $\Gamma \vdash v:\neg t_1$.

              Hence, we can derive $\Gamma \vdash \occ e \varpi:\pair {\neg t_1} \Any$ ($e$ is well-typed in $\Gamma$).

              Now, by starting from the premise $\pvdash \Gamma e t \varpi:\pair {t_1} {t_2}$ and using the rules \Rule{PInter} and \Rule{PTypeof},
              we can derive $\pvdash \Gamma e t \varpi:\Empty$.

              \item[\Rule{PPairR}] Similar to the previous case.

              \item[\Rule{PFst}] We have $v \not\in \valsemantic{\pair {t'} \Any}$.
              As we also have $v \in \valsemantic{\pair \Any \Any}$ (because $e$ is well-typed in $\Gamma$),
              we can deduce $v \in \valsemantic{\pair {(\neg t')} \Any}$.

              Hence, we can derive $\Gamma \vdash v:\pair {(\neg t')} \Any$ and then $\Gamma \vdash \occ e \varpi:\neg t'$.

              Now, by starting from the premise $\pvdash \Gamma e t \varpi:t'$ and using the rules \Rule{PInter} and \Rule{PTypeof},
              we can derive $\pvdash \Gamma e t \varpi:\Empty$.
              \item[\Rule{PSnd}] Similar to the previous case.
            \end{description}
            \end{proof}

            \begin{corollary}[Value refinement 2]
              For any derivable judgement of the form $\Gamma \evdash e t \Gamma'$ (with $e$ well-typed in $\Gamma$), we can construct a derivation of $\Gamma \evdash e t \Gamma''$ with $\Gamma''\leq\Gamma'$
              that never uses the rule \Rule{Path} on a path $\varpi.x$ such that $\forall y.\ \occ e {\varpi.y}$ refers to a value.
            \end{corollary}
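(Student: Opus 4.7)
I proceed by induction on the derivation of $\Gamma \evdash e t \Gamma'$. The base case \Rule{Base} is immediate: set $\Gamma'' = \Gamma$. For the inductive step, the last rule is \Rule{Path}, so $\Gamma' = \Gamma_0, (\occ e \varpi: t')$ comes from premises $\Gamma \evdash e t \Gamma_0$ and $\pvdash {\Gamma_0} e t \varpi: t'$. The induction hypothesis yields a cleaned-up derivation $\Gamma \evdash e t \Gamma_0''$ with $\Gamma_0'' \leq \Gamma_0$ that avoids bad \Rule{Path} steps, where I call a path \emph{bad} when it has the form $\varpi_0.x$ with every $\occ e {\varpi_0.y}$ a value. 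Monotonicity transports the $\pvdash$ premise to $\pvdash {\Gamma_0''} e t \varpi: t'$, so it only remains to extend the cleaned derivation by a (possibly skipped, possibly relocated) \Rule{Path} step at $\varpi$.

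If $\varpi$ is not bad, I simply append \Rule{Path} at $\varpi$, obtaining $\Gamma_0'', (\occ e \varpi: t') \leq \Gamma'$, where order preservation follows from $\Gamma_0'' \leq \Gamma_0$. When $\varpi$ is bad, let $v = \occ e \varpi$, which is necessarily a value. If $v \in \valsemantic{t'}$, I output $\Gamma_0''$ without appending any step: completeness of value typing yields $\Gamma_0'' \vdash v: t'$, so the extra binding in $\Gamma'$ is already derivable and $\Gamma_0'' \leq \Gamma_0, (\occ e \varpi: t') = \Gamma'$.

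The interesting subcase is $v \notin \valsemantic{t'}$, where Lemma "Value refinement 1" applies and yields $\pvdash {\Gamma_0} e t \varpi_0: \Empty$ for $\varpi = \varpi_0.x$. If $\varpi_0$ is itself bad, then $\occ e {\varpi_0}$ is a value (it appears among the sibling occurrences counted at the next level of the bad-path condition) and lies trivially outside $\valsemantic \Empty = \varnothing$, so Value refinement 1 fires again, climbing one further direction upward. Since each climb strictly shortens the path, the process terminates at some $p^*$ which is either $\epsilon$ or not bad, with $\pvdash {\Gamma_0} e t p^*: \Empty$ derivable. Transferring this judgement to $\Gamma_0''$ by monotonicity and applying \Rule{Path} at the permissible target $p^*$ produces $\Gamma \evdash e t (\Gamma_0'', (\occ e {p^*}: \Empty))$, which is a bottom environment and hence $\leq \Gamma'$, closing the induction. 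The main technical point is precisely the iterated climb: its correctness hinges on the observation that a bad path automatically supplies the sibling-value premise required by Value refinement 1 at the next level, while the target type $\Empty$ makes the remaining "value-not-in-type" hypothesis trivially satisfied.
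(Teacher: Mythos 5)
Your proof is correct and follows essentially the same strategy as the paper's: drop the \Rule{Path} steps whose value already inhabits the asserted type, and use Value refinement~1 to relocate the remaining ones to a shorter path with type $\Empty$. You are in fact more careful than the paper on one point it elides with ``we can easily remove every such rule'': the relocated \Rule{Path} at $\varpi_0$ may itself be bad, and your iterated climb --- justified by the observation that badness of $\varpi_0$ supplies the sibling-value premise while $\valsemantic\Empty=\varnothing$ trivializes the membership premise --- is exactly what is needed to close that gap.
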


            \begin{proof}
            We can easily remove every such rule from the derivation. If $\occ e {\varpi.x} \in \valsemantic{t'}$, the \Rule{Path} rule is useless
            and we can freely remove it. Otherwise, if $\occ e {\varpi.x} \not\in \valsemantic{t'}$, we can use the previous lemma to
            replace it with a \Rule{Path} rule on $\varpi$.
            \end{proof}

            \begin{lemma}[Value testing]
              For any derivable judgement of the form $\Gamma \evdash v t \Gamma'$ (with $v$ a value),
              we have $v \in \valsemantic{t} \Rightarrow \Gamma\leq\Gamma'$.
            \end{lemma}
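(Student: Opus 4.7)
The plan is to proceed by induction on the derivation of $\Gamma \evdash v t \Gamma'$. The base case (\Rule{Base}) gives $\Gamma' = \Gamma$ and $\Gamma \leq \Gamma$ by reflexivity. For the inductive case (\Rule{Path}) we have $\Gamma' = \Gamma_0, (\occ v \varpi : t')$ with $\Gamma \evdash v t \Gamma_0$ and $\pvdash {\Gamma_0} v t \varpi : t'$; the outer IH yields $\Gamma \leq \Gamma_0$. If $\Gamma = \bot$ we conclude trivially, so assume $\Gamma \neq \bot$, which by definition of $\leq$ forces $\Gamma_0 \neq \bot$. It then suffices to prove $\Gamma \vdash \occ v \varpi : t'$, since combined with $\Gamma \leq \Gamma_0$ this yields $\Gamma \leq \Gamma'$: the new entry is already subsumed by what $\Gamma$ derives, so adding it to $\Gamma_0$ cannot introduce additional refinement.

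To establish $\Gamma \vdash \occ v \varpi : t'$, I would run a secondary induction on the derivation of $\pvdash {\Gamma_0} v t \varpi : t'$, proving the stronger semantic fact $\occ v \varpi \in \valsemantic{t'}$; the desired typing will then follow by completeness of value typing. The key observation, used throughout, is that every valid path in the value $v$ descends only through pair constructors (directions $l$ and $r$), so $\occ v \varpi$ is always itself a value. Case \Rule{PEps} gives $\occ v \epsilon = v \in \valsemantic{t}$ directly from the hypothesis. Cases \Rule{PSubs} and \Rule{PInter} are immediate from the set-theoretic properties of $\valsemantic{\cdot}$. Case \Rule{PTypeof} uses the soundness direction of value typing: since $\Gamma_0 \neq \bot$ and $\Gamma_0 \vdash \occ v \varpi : t'$, we obtain $\occ v \varpi \in \valsemantic{t'}$. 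Cases \Rule{PPairL} and \Rule{PPairR} follow from the inner IH $\occ v {\varpi_0} \in \valsemantic{\pair{t_1}{t_2}}$: a value of product type must be a pair $(w_1, w_2)$ with $w_i \in \valsemantic{t_i}$, so the descent delivers the appropriate component.

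The remaining rules \Rule{PAppL}, \Rule{PAppR}, \Rule{PFst}, \Rule{PSnd} conclude about paths $\varpi_0.x$ with $x \in \{0,1,f,s\}$, which would demand $\occ v {\varpi_0}$ to be an application or projection---impossible for a value. These rules therefore never appear on a valid path in $v$, and any spurious occurrence in a derivation can be eliminated beforehand by applying the Value Refinement 2 corollary to normalize the deduction. The main delicate point is precisely this interaction between the outer and inner inductions in the \Rule{PTypeof} case: the outer IH supplies $\Gamma_0 \neq \bot$, which is exactly what unlocks the semantic reading of the typing judgment and lets us transfer $t'$ back to an environment that has not yet recorded the refinement. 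Once $\occ v \varpi \in \valsemantic{t'}$ is in hand, completeness of value typing yields $\Gamma \vdash \occ v \varpi : t'$, completing the inductive step and hence the proof.
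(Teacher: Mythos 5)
Your proof is correct and follows essentially the same route as the paper's: both rest on the observation that valid paths in a value descend only through pair components, so the $\vdashp$ derivation uses only \Rule{PEps}, \Rule{PSubs}, \Rule{PInter}, \Rule{PTypeof}, \Rule{PPairL}, \Rule{PPairR}, with \Rule{PEps} discharged by the hypothesis $v\in\valsemantic{t}$. The only difference is cosmetic: the paper transliterates the path derivation directly into a typing derivation under $\Gamma$, whereas you detour through the semantic fact $\occ v\varpi\in\valsemantic{t'}$ and then apply completeness of value typing, which makes the \Rule{PPairL}/\Rule{PPairR} inversion and the $\Gamma_0$-versus-$\Gamma$ bookkeeping slightly more explicit.
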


            \begin{proof}
            As $v$ is a value, the applications of \Rule{Path} have a path $\varpi$ only composed of $l$ and $r$
            and such that $\occ e \varpi$ is a value.

            Thus, any derivation $\pvdash \Gamma v t \varpi:t'$ can only contains the rules
            \Rule{PTypeof}, \Rule{PInter}, \Rule{PSubs}, \Rule{PEps}, \Rule{PPairL} and \Rule{PPairR}.

            Moreover, as $v \in \valsemantic{t}$, the rules \Rule{PEps} can be replaced by a \Rule{PTypeof}.
            Thus we can easily derive $\Gamma \vdash v:t'$ (we replace \Rule{PTypeof} by \Rule{Typeof},
            \Rule{PInter} by \Rule{Inter}, etc.).
            \end{proof}

            \begin{lemma}[Substitution]
              Let $\Gamma$ be an environment. Let $e_a$ and $e_b$ be two expressions.

              Let us suppose that $e_b$ is closed and that $e_a$ has one of the following form:
              \begin{itemize}
                \item $x$ (variable)
                \item $\ite e t {e_1} {e_2}$ (if expression)
                \item $v$ (value)
                \item $v v$ (application of two values)
                \item $(v,v)$ (product of two values)
              \end{itemize}
              Let us also suppose that $\forall t.\ \Gamma \vdash e_a : t \Rightarrow \Gamma\subst {e_a} {e_b} \vdash e_b:t$.

              Then, by noting $\rho = \subst {e_a} {e_b}$ we have:
              \begin{align*}
                &\forall e,t.\ \Gamma \vdash e:t \Rightarrow \Gamma\rho \vdash e\rho:t
              \end{align*}
            \end{lemma}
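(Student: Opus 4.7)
The plan is to proceed by induction on the derivation of $\Gamma \vdash e : t$, simultaneously with two mutually inductive companion statements: (i) if $\Gamma \evdash e t \Gamma_1$ is derivable then so is $\Gamma\rho \evdash {e\rho} t \Gamma_1\rho$, and (ii) if $\pvdash \Gamma e t \varpi : t'$ is derivable then so is $\pvdash {\Gamma\rho} {e\rho} t \varpi : t'$ (for every path $\varpi$ that remains valid after substitution, which is automatic since paths only traverse structure that the restricted forms of $e_a$ preserve). Before starting the case analysis I would record the key inequality $(\Gamma\rho)(e'\rho) \leq \Gamma(e')$ for every $e' \in \dom\Gamma$; this is immediate from the definition of $\Gamma\rho$, which takes an intersection over all the keys of $\Gamma$ that collapse onto $e'\rho$, and it is the basic tool that lets \Rule{Env} and \Rule{Efq} be re-derived after the substitution.

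The case for \Rule{Env} is the heart of the argument. If the expression $e$ appearing in the rule satisfies $e \equiv_\alpha e_a$, then $e\rho = e_b$ and the hypothesis of the lemma, applied with $t = \Gamma(e_a)$, gives $\Gamma\rho \vdash e_b : \Gamma(e_a)$ directly. Otherwise $e\rho \in \dom(\Gamma\rho)$ by construction, and the key inequality combined with \Rule{Env} followed by \Rule{Subs} yields $\Gamma\rho \vdash e\rho : \Gamma(e)$. The case for \Rule{Efq} is analogous: the inequality forces $(\Gamma\rho)(e'\rho) \leq \Gamma(e') \simeq \Empty$, so the refined environment is still bottom on a key of its domain and any type can be re-derived. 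The cases for \Rule{Inter} and \Rule{Subs} are immediate by the induction hypothesis, and \Rule{Const} is trivial since $c\rho = c$.

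The remaining logical rules (\Rule{App}, \Rule{Pair}, \Rule{Proj}, \Rule{Abs+}, \Rule{Abs-}) are routine: apply the induction hypothesis to each sub-derivation and reassemble. The only delicate point is \Rule{Abs+}, where the substitution must cross a binder. By the freshness conventions built into the definition of expression substitution, we may alpha-rename the bound variable $x$ so that it does not occur free in $e_a$ or $e_b$; then $(\lambda^{\wedge s_i\to t_i} x.e)\rho = \lambda^{\wedge s_i\to t_i} x.(e\rho)$, and since $(\Gamma,x{:}s_i)\rho = (\Gamma\rho),x{:}s_i$ the induction hypothesis produces the required sub-derivations for the body. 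The \Rule{Case} rule combines the main and companion inductions: IH on $\Gamma \vdash e : t_0$ gives $\Gamma\rho \vdash e\rho : t_0$; the companion IH applied to $\Gamma \evdash e t \Gamma_i$ yields $\Gamma\rho \evdash {e\rho} t \Gamma_i\rho$; and the main IH applied to $\Gamma_i \vdash e_i : t'$ gives $\Gamma_i\rho \vdash e_i\rho : t'$, so that reapplying \Rule{Case} produces the typing of $(\ite e t {e_1} {e_2})\rho$. The companion inductions are straightforward: each rule of the $\evdash$ and $\vdashp$ systems is transported through $\rho$ by invoking the main induction hypothesis precisely where \Rule{PTypeof} is used, while \Rule{PEps}, \Rule{PAppL/R}, \Rule{PPairL/R}, \Rule{PFst/Snd}, \Rule{PInter}, and \Rule{PSubs} are preserved because paths and path operators are syntactic and do not depend on $\Gamma$.

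The main obstacle is keeping the bookkeeping of environment substitution honest: several distinct keys of $\Gamma$ may collapse under $\rho$, and one must verify that the intersection used to define $\Gamma\rho$ is strong enough to re-derive every typing assumption, yet weak enough to be admissible from $\Gamma\rho$ itself (i.e.\ that $\Gamma\rho$ is well-formed). The restriction on the shape of $e_a$ (variable, if-expression, value, value application, or pair of values) is what makes this work: these are exactly the shapes that arise as redex-contexts in the parallel reduction, so substituting them never opens a new binder in an unexpected position and never introduces a subexpression in which an occurrence typing path could become ill-formed, keeping the companion inductions structural.
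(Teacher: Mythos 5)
There is a genuine gap, and it sits exactly where the technical difficulty of this lemma lives. Your companion statement for the refinement judgement is only ``if $\Gamma \evdash e t \Gamma_1$ then $\Gamma\rho \evdash {e\rho} t {\Gamma_1}\rho$''. But in the \Rule{Case} case you then apply the \emph{main} induction hypothesis to the sub-derivations $\Gamma_i \vdash e_i : t'$, where $\Gamma_i$ is a \emph{refined} environment produced by $\evdash{}{}$. The lemma's conclusion for an environment is only available under the lemma's precondition for that environment, namely $\forall t.\ \Gamma_i \vdash e_a : t \Rightarrow \Gamma_i\rho \vdash e_b : t$; you have assumed this for $\Gamma$ but never established it for $\Gamma_1$ and $\Gamma_2$. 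The paper's proof fixes this by strengthening the companion statement to ``$\Gamma \evdash e t \Gamma' \Rightarrow \Gamma\rho \evdash {e\rho} t {\Gamma'}\rho$ \emph{and the transfer property $\forall t.\ \Gamma' \vdash e_a : t \Rightarrow \Gamma'\rho \vdash e_b : t$ still holds}'', and propagating that extra invariant through \Rule{Path} is the hardest part of the whole argument: when $\occ e \varpi \equiv e_a$, the new binding $(\occ e \varpi : t')$ lets \Rule{Env} fire on $e_a$ itself, so one must surgically restructure any derivation of $\Gamma'',(\occ e\varpi:t')\vdash e_a : t'$ to separate the contribution of that \Rule{Env} from the rest, re-derive the remainder in $\Gamma'$, transfer it to $e_b$, and then reattach the binding on the $e_b$ side. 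None of this appears in your proposal, and without it the induction does not close.

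A second, smaller gap: you assert that every path $\varpi$ ``remains valid after substitution'' because of the restricted shapes of $e_a$. That is not automatic. If $e_a = (v_1,v_2)$ or $e_a = v_1 v_2$ occurs in $e$, a path may point \emph{strictly inside} $e_a$; after substituting $e_b$ (which need not be a pair or an application) that path is no longer defined in $e\rho$, so your companion statement for $\pvdash{}{}{}{}$ cannot even be stated for it. The paper preempts this with the normalisations performed before the induction (in particular the value-refinement lemma, which lets one assume no \Rule{Path} rule ever descends into a position all of whose siblings are values, together with the assumptions that path extensions are intersected with the previous binding and that non-bottom environments map values only to types they inhabit). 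These preprocessing steps are not optional bookkeeping; they are what make the \Rule{Path} case and the $\occ e \varpi \equiv e_a$ subcase go through. Your cases for \Rule{Env}, \Rule{Efq}, \Rule{Inter}, \Rule{Subs}, \Rule{Const}, \Rule{Abs+} and the monotonicity inequality $(\Gamma\rho)(e'\rho) \leq \Gamma(e')$ do match the paper, but the proof as proposed does not establish the lemma.
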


            \begin{proof}
            Let $\Gamma$, $e_a$, $e_b$ be as in the statement.

            We note $\rho$ the substitution $\subst {e_a} {e_b}$.

            We consider a derivation of $\Gamma \vdash e:t$.

            By using the value refinement lemma, we can assume without loss of generality that our derivation does not contain
            any rule \Rule{Path} on a path $\varpi.x$ such that $\forall y.\ \occ e {\varpi.y}$ refers to a value.

            We can also assume w.l.o.g. that every application of the \Rule{Path} rule is such that $\Gamma',(\occ e \varpi:t') \leq \Gamma'$. If it is not the case,
            we can easily transform the derivation by intersecting $t'$ with $\Gamma'(\occ e \varpi)$
            using the rules \Rule{PInter}, \Rule{PTypeof} and \Rule{Env}.
            The rest of the derivation can easily be adapted by adding some \Rule{Subs} rules when needed.

            Finally, we can assume that, in any environment appearing in the derivation, if the environnement is not bottom,
            then a value $v$ can only be mapped to a type $t$ such that $v\in\valsemantic{t}$. If it is not the case, then we just have to change the
            \Rule{Path} rule that introduce $(v:t)$ into a path rule that introduce $(v:\Empty)$,
            by using the rules \Rule{PInter} and \Rule{PTypeof} (if $v\not\in\valsemantic{t}$, then $v\in\valsemantic{\neg t}$
            and thus $\Gamma \vdash v:\neg t$ is derivable).

            Now, let's prove by induction on the derivation the following properties:

            \begin{align*}
              &\forall e,t.\ \Gamma \vdash e:t \Rightarrow \Gamma\rho \vdash e\rho:t\\
              &\forall e,t,\Gamma'.\ \Gamma \evdash {e} {t} \Gamma' \Rightarrow \Gamma\rho \evdash {e\rho} {t} {\Gamma'}\rho
              \text{ and we still have } \forall t.\ \Gamma' \vdash e_a : t \Rightarrow \Gamma'\rho \vdash e_b:t\\
              &\forall e,t,\varpi,t' \text{ s.t. $\occ {e\rho} \varpi$ is defined}.\ \pvdash \Gamma e t \varpi:t' \Rightarrow \pvdash {\Gamma\rho} {e\rho} t \varpi:t'
            \end{align*}

            We proceed by case analysis on the last rule of the derivation at the left of the $\Rightarrow$ in order to construct the derivation at the right.

            If the last judgement is of the form $\Gamma \vdash e_a: t$, then we can directly conclude with the hypotheses of the lemma.
            Thus, we can suppose it is not the case.

            There are many cases depending on the last rule:

            \begin{description}
              \item[\Rule{Env}] If $e\in\dom\Gamma$, then we have $e\rho\in\dom{\Gamma\rho}$ and $(\Gamma\rho)(e\rho)\leq\Gamma(e)$.
              Thus we can easily derive $\Gamma\rho\vdash e\rho:t$ with the rule \Rule{Env} and \Rule{Subs}.
              \item[\Rule{Efq}] If there exists $e\in\dom\Gamma$ such that $\Gamma(e)=\Empty$, then $(\Gamma\rho)(e\rho)=\Empty$
              so we can easily derive $\Gamma\rho\vdash e\rho:t$ with the rule \Rule{Efq}.
              \item[\Rule{Inter}] Trivial (by using the induction hypothesis).
              \item[\Rule{Subs}] Trivial (by using the induction hypothesis).
              \item[\Rule{Const}] In this case, $c\rho = c$ (because $c \neq e_a$). Thus it is trivial.
              \item[\Rule{App}] We have $(e_1 e_2)\rho = (e_1\rho) (e_2\rho)$ (because $e_1 e_2 \neq e_a$).
              Thus we can directly conclude by using the induction hypothesis.
              \item[\Rule{Abs+}] We have $(\lambda^{t'}x.e)\rho = \lambda^{t'}x.(e\rho)$ (because $\lambda^{t'}x.e \neq e_a$).

              By alpha-renaming, we can suppose that the variable $x$ is a new fresh variable that does not appear
              in $e_a$ nor $e_b$ ($e_b$ is closed).

              We can thus use the induction hypothesis on all the judgements $\Gamma, x:s_i \vdash e:t_i$.
              \item[\Rule{Abs-}] Trivial (by using the induction hypothesis).
              \item[\Rule{Proj}] We have $(\pi_i e)\rho = \pi_i (e\rho)$ (because $\pi_i e \neq e_a$).
              Thus we can directly conclude by using the induction hypothesis.
              \item[\Rule{Pair}] We have $(e_1,e_2)\rho = (e_1\rho,e_2\rho)$ (because $(e_1,e_2) \neq e_a$).
              Thus we can directly conclude by using the induction hypothesis.
              \item[\Rule{Case}]
              We have $(\ite e {t_{if}} {e_1} {e_2})\rho = \ite {e\rho} {t_{if}} {e_1\rho} {e_2\rho}$ (because $\ite e {t_{if}} {e_1} {e_2} \neq e_a$).

              We apply the induction hypothesis on the judgements $\Gamma\vdash e:t_0$ and $\Gamma\evdash e {t_{if}} \Gamma_1$.
              We get $\Gamma\rho\vdash e\rho:t_0$, $\Gamma\rho\evdash {e\rho} {t_{if}} \Gamma_1\rho$ and
              $\forall t'.\ \Gamma_1 \vdash e_a : t' \Rightarrow \Gamma_1\rho \vdash e_b:t'$.
              Now, we can apply the induction hypothesis on $\Gamma_1\vdash e_1:t$ and we have $\Gamma_1\rho\vdash e_1\rho:t$.

              We proceed similarly on the judgments $\Gamma\evdash e {\neg t_{if}} \Gamma_2$ and $\Gamma_2\vdash e_2:t$, and so we have all the premises
              to apply the \Rule{Case} rule in order to get $\Gamma\rho \vdash \ite {e\rho} {t_{if}} {e_1\rho} {e_2\rho}:t'$.

              \item[\Rule{Base}] Trivial.
              \item[\Rule{Path}] We have by using the induction hypothesis $\Gamma\rho \evdash {e\rho} t \Gamma'\rho$
              and $\forall t''.\ \Gamma' \vdash e_a : t'' \Rightarrow \Gamma'\rho \vdash e_b:t''$.

              First, let's show that we can derive $\Gamma\rho \evdash {e\rho} {t} {\Gamma''}\rho$ with $\Gamma''=\Gamma',(\occ e \varpi:t')$.

              There are two cases:
              \begin{itemize}
                \item $\occ e \varpi$ is a strict sub-expression of $e_a$.

                In this case, it means that among its three possible forms,
                $e_a$ is of the form $v v$ or $(v,v)$.
                According to the assumptions we made on the derivation at the beginning of the proof,
                it implies that $\varpi=\epsilon$.
                Hence, $e$ does not contain any occurrence of $e_a$, so it is easy to conclude.

                \item $\occ e \varpi$ is not a strict sub-expression of $e_a$.

                In this case, we know that $\occ {e\rho} \varpi$ is defined.

                Thus we can apply the induction hypothesis on $\pvdash {\Gamma'} {e} t \varpi:t'$.
                It gives $\pvdash {\Gamma'\rho} {e\rho} t \varpi:t'$.
                If $\occ {e\rho}\varpi \in\dom{\Gamma'\rho}$, and $(\Gamma'\rho)(\occ {e\rho}\varpi)=t''\not\geq t'$,
                then we can derive $\pvdash {\Gamma'\rho} {e\rho} t \varpi:t'\land t''$ just by using the rules
                \Rule{PInter}, \Rule{PTypeof} and \Rule{Env}.

                Using this last judgement together with $\Gamma \evdash e t \Gamma'$, we can derive with the rule \Rule{Path}
                the wanted $\Gamma\rho \evdash {e\rho} {t} {\Gamma''}\rho$.
              \end{itemize}

              Now, let's show that $\forall t'.\ \Gamma'' \vdash e_a : t' \Rightarrow \Gamma''\rho \vdash e_b:t'$.

              Let $t'$ be such that $\Gamma''\vdash e_a:t'$.

              Recall that we have $\Gamma' \vdash e_a : t' \Rightarrow \Gamma'\rho \vdash e_b:t'$.

              If $\Gamma'' = \bot$, then $\Gamma''\rho = \bot$ so we are done. So lets us suppose $\Gamma'' \neq \bot$.

              Let us separate the proof in two cases:
              \begin{itemize}
                \item If $\occ e \varpi \not\equiv e_a$. In this case, let's show that we have  $\Gamma'\vdash e_a:t'$.
                Indeed, in the typing derivation of $\Gamma''\vdash e_a:t'$, the \Rule{Env} rules can only be applied on
                subexpressions of $e_a$.

                If $\occ e \varpi$ is not a strict subexpression of $e_a$
                (and thus not a subexpression as $\occ e \varpi \not\equiv e_a$), there is no \Rule{Env} rule applied to $\occ e \varpi$
                in the derivation of $\Gamma''\vdash e_a:t'$ and thus we can easily derive $\Gamma'\vdash e_a:t'$.

                If $\occ e \varpi$, is a strict sub-expression of $e_a$, it must be a value (given the possible forms of $e_a$).
                Moreover, as $\Gamma'' \neq \bot$, we have $\forall v\in\dom{\Gamma''}.\ v\in\valsemantic{\Gamma''(v)}$ (recall the assumptions at the beginning of the proof)
                and thus $\forall v\in\dom{\Gamma''}.\ \Gamma'\vdash v : \Gamma''(v)$.
                Thus we can derive $\Gamma'\vdash e_a:t'$ just by replacing every \Rule{Env} rule applied to $\occ e \varpi$ in the derivation of $\Gamma''\vdash e_a:t'$
                by the relevant derivation.

                From $\Gamma'\vdash e_a:t'$ we deduce $\Gamma'\rho\vdash e_b:t'$.
                As $\Gamma''\leq\Gamma'$ (according to the assumptions we made on the derivation at the beginning of the proof)
                and $\dom{\Gamma'}\subseteq\dom{\Gamma''}$, we have $\Gamma''\rho\leq\Gamma'\rho$ and thus, by monotonicity,
                $\Gamma''\rho\vdash e_b:t'$.

                \item
                If $\occ e \varpi \equiv e_a$. Let us note $t_a=\Gamma''(e_a)$. This time,
                we can't derive $\Gamma'\vdash e_a:t'$ from $\Gamma''\vdash e_a:t'$ because the rule \Rule{Env}
                could be used on $\occ e \varpi=e_a$ (which may not be a value).

                However, the rule \Rule{Env} can only be used on $e_a$ at the end of the derivation of $\Gamma''\vdash e_a:t'$:
                there can't be any \Rule{App}, \Rule{Abs+}, \Rule{Proj}, \Rule{Pair} or \Rule{Case} after because the premises of these rules only contain strict sub-expressions of their
                consequence.
                Thus, we can easily transform the derivation so that every \Rule{Env} applied on $e_a$ is directly followed by an \Rule{Inter}:
                if there is any \Rule{Abs-} or \Rule{Subs} between, we can move it after.

                Then, we can (temporarily) remove from the derivation all \Rule{Env} applied on $e_a$:
                for each, we just replace the following \Rule{Inter} rule by its other premise.

                It yields a derivation for $\Gamma'' \vdash e_a : t''$ such that $t''\land t_a \leq t'$ and without any \Rule{Env} applied to $e_a$.
                Thus, we can transform it into a derivation of $\Gamma' \vdash e_a : t''$ as in the previous point, and we get $\Gamma'\rho \vdash e_b : t''$.
                Still as before, we get a derivation for $\Gamma''\rho\vdash e_b : t''$ by monotonicity.

                Now, we can append at the end of this derivation a rule \Rule{Inter} with a rule \Rule{Env} applied to $e_b$.
                As $(\Gamma''\rho)(e_b) \leq \Gamma''(e_a) = t_a$, we obtain a derivation for $\Gamma''\rho\vdash e_b : t'$ (we can add a final \Rule{Subs} rule if needed).
              \end{itemize}

              \item[\Rule{PTypeof}] Trivial (by using the induction hypothesis).
              \item[\Rule{P$\cdots$}] All the remaining rules are trivial.
            \end{description}
            \end{proof}

            \begin{theorem}[Subject reduction]
              Let $\Gamma$ be an ordinary environment, $e$ and $e'$ two expressions and $t$ a type.
              If $\Gamma\vdash e:t$ and $e\uleadsto e'$, then $\Gamma\vdash e':t$.
            \end{theorem}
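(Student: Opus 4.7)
The plan is to proceed by induction on the derivation of $e \uleadsto e'$, doing a case analysis on the last rule used. The substitution lemma just proved is the heavy machinery that will be invoked in all non-trivial cases; the value-testing and monotonicity lemmas handle the rest.

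For the four notions of reduction I would argue as follows. For $(\lambda^{t_0} x.e_0)\,v \idleadsto e_0\subst{x}{v}$, invert the typing derivation $\Gamma\vdash (\lambda^{t_0}x.e_0)v:t$ (using the standard generation arguments for \Rule{App}, \Rule{Abs+}, \Rule{Inter} and \Rule{Subs}) to obtain some $s$ with $\Gamma,x{:}s\vdash e_0:t$ and $\Gamma\vdash v:s$; then the substitution lemma with $e_a=x$, $e_b=v$ immediately yields $\Gamma\vdash e_0\subst{x}{v}:t$. The projection case $\pi_i(v_1,v_2)\idleadsto v_i$ is an analogous but simpler inversion of \Rule{Pair} and \Rule{Proj}. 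For the type-case reductions $\tcase{v}{t_0}{e_1}{e_2}\idleadsto e_i$, inversion of \Rule{Case} provides $\Gamma\evdash v\,t_0\,\Gamma_1$ and $\Gamma\evdash v\,{\neg t_0}\,\Gamma_2$ together with $\Gamma_i\vdash e_i:t$; the value testing lemma applied to the branch that is actually taken ($v\in\valsemantic{t_0}$ for $\tau_1$, $v\in\valsemantic{\neg t_0}$ for $\tau_2$) gives $\Gamma\leq\Gamma_i$, and monotonicity transports the typing of $e_i$ from $\Gamma_i$ to $\Gamma$.

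The contextual rule \Rule{\kappa} is a routine congruence argument: the induction hypothesis retypes the reduced subterm and the surrounding constructor (\Rule{App}, \Rule{Pair}, \Rule{Proj}) is reapplied unchanged. The only subtlety is that the rule never descends under the guard of a \Rule{Case}, so no refined environment is affected here.

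The main obstacle is the parallel rule \Rule{\tau\kappa}: if $e\xleadsto{e_r\mapsto e_r'}e'$, we must show that $\tcase{e}{t_0}{e_1}{e_2}$ reduces to $\tcase{e'}{t_0}{e_1\subst{e_r}{e_r'}}{e_2\subst{e_r}{e_r'}}$ in a type-preserving way, i.e., the redex is replaced everywhere it occurs in the two branches. Here the design of the substitution lemma is tailored exactly to this step: any redex $e_r$ is either a $\beta$-, $\pi$-, $\tau_1$- or $\tau_2$-redex, so it has one of the syntactic shapes admitted as $e_a$ (application of two values, projection of a pair of values, or an if-expression whose guard is a value; the projection shape is handled identically to the pair/application ones). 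Invoking the induction hypothesis on the inner notion-of-reduction step $e_r\idleadsto e_r'$ establishes the premise of the substitution lemma $\Gamma'\vdash e_r:s \Rightarrow \Gamma'\subst{e_r}{e_r'}\vdash e_r':s$ for every environment $\Gamma'$ and every type $s$. Applying the substitution lemma to the premises $\Gamma_i\vdash e_i:t$ of \Rule{Case} then yields $\Gamma_i\subst{e_r}{e_r'}\vdash e_i\subst{e_r}{e_r'}:t$, while a further use of the induction hypothesis (now on the outer step, via \Rule{\kappa}) retypes the guard $e'$. Reassembling these ingredients via \Rule{Case}, together with $\Gamma\evdash{e'}{t_0}{\Gamma_i\subst{e_r}{e_r'}}$ obtained from the same substitution lemma applied to the environment-production judgment, gives the required typing of the reduced type-case.
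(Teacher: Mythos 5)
Your toolkit is the right one (the substitution lemma, value testing, monotonicity), but there is a genuine gap in your treatment of the parallel rule \Rule{$\tau\kappa$}, which is the crux of the whole proof. You claim that the induction hypothesis on the inner step $e_r \idleadsto e_r'$ establishes the premise of the substitution lemma, namely $\Gamma'\vdash e_r:s \Rightarrow \Gamma'\subst{e_r}{e_r'}\vdash e_r':s$, \emph{for every environment} $\Gamma'$, and you then apply the substitution lemma branch by branch to the judgments $\Gamma_i\vdash e_i:t$. But the theorem---and hence your induction hypothesis---is stated only for \emph{ordinary} environments, whereas the environments $\Gamma_i$ produced by $\Gamma \evdash {e_0} {t_0} {\Gamma_i}$ are precisely not ordinary: they record hypotheses about subexpressions of the guard, possibly about the redex $e_r$ itself. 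For such a $\Gamma_i$, proving $\Gamma_i\vdash e_r:s \Rightarrow \Gamma_i\rho\vdash e_r':s$ (writing $\rho$ for $\subst{e_r}{e_r'}$) is not an instance of subject reduction for ordinary environments; it is exactly the delicate point (the sub-case $\occ e \varpi \equiv e_a$ of the \Rule{Path} case) that the substitution lemma's own proof is engineered to handle. The paper's way out is to apply the substitution lemma \emph{once}, to the whole expression $\tcase{e_0}{t_0}{e_1}{e_2}$ under the ordinary $\Gamma$, supplying only the premise $\forall s.\ \Gamma\vdash e_r:s \Rightarrow \Gamma\vdash e_r':s$ (which your induction hypothesis does give, since $\Gamma\rho=\Gamma$ when $\Gamma$ is ordinary and $e_r$ is not a variable); the lemma then internally threads the redex-retyping property through the refined environments and returns $\Gamma\vdash e\rho:t$ directly. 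As written, your piecewise applications of the lemma simply lack their premise.

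Two secondary remarks. First, your choice to induct on the reduction derivation rather than on the typing derivation is workable in principle, but it forces the ``generation'' step you gloss over in the $\beta$-case: the existence of a single $s$ with $\Gamma,x{:}s\vdash e_0:t$ and $\Gamma\vdash v:s$ holds only via the semantic decomposition of intersections of arrows (Lemma 6.8 of the semantic-subtyping framework), taking $I'=\{i \mid v\in\valsemantic{s_i}\}$ and showing $\bigwedge_{i\in I'}t_i\leq t$; that is the real content of the case, not a routine inversion, and the typing derivation may also pass through \Rule{Inter}, \Rule{Subs} and \Rule{Abs-} on the function. Second, you correctly observe that a $\pi$-redex $\pi_i(v_1,v_2)$ can be the contracted redex of a \Rule{$\tau\kappa$} step; note that this shape is not actually among those admitted for $e_a$ in the substitution lemma as stated, so it must be added there (it is indeed handled identically). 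Your handling of \Rule{$\tau_1$}, \Rule{$\tau_2$} via value testing and monotonicity, and of \Rule{$\kappa$} by congruence, matches the paper's.
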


            \begin{proof}
            Let $\Gamma$, $e$, $e'$ and $t$ be as in the statement.

            We construct a derivation for $\Gamma \vdash e':t$ by induction on the derivation of $\Gamma \vdash e:t$.

            If $\Gamma = \bot$ this theorem is trivial, so we can suppose $\Gamma \neq \bot$.

            We proceed by case analysis on the last rule of the derivation:

            \begin{description}
              \item[\Rule{Env}] As $\Gamma$ is ordinary, it means that $e$ is a variable.
              It contradicts the fact that $e$ reduces to $e'$ so this case is impossible.
              \item[\Rule{Efq}] This case is impossible as $\Gamma \neq \bot$.
              \item[\Rule{Inter}] Trivial (by using the induction hypothesis).
              \item[\Rule{Subs}] Trivial (by using the induction hypothesis).
              \item[\Rule{Const}] Impossible case (no reduction possible).
              \item[\Rule{App}] In this case, $e\equiv e_1 e_2$. There are three possible cases:
              \begin{itemize}
                \item $e_2$ is not a value. In this case, we must have $e_2\uleadsto e_2'$
                and $e'\equiv e_1 e_2'$. We can easily conclude using the induction hypothesis.
                \item $e_2$ is a value and $e_1$ is not. In this case, we must have $e_1\uleadsto e_1'$
                and $e'\equiv e_1' e_2$. We can easily conclude using the induction hypothesis.
                \item Both $e_1$ and $e_2$ are values. This is the difficult case.
                We have $e_1\equiv \lambda^{\bigwedge_{i\in I}\arrow{s_i}{t_i}}x.e_x$
                with $\bigwedge_{i\in I}\arrow{s_i}{t_i} \leq \arrow{s}{t}$ and $\Gamma \vdash e_2:s$.
                We can suppose that $x$ is a new fresh variable that does not appear in our environment
                (if it is not the case, we can alpha-rename $e_1$).

                This means that $s\leq \bigvee_{i\in I} s_i$ and that for any non-empty $I'$ such that
                $s\not\leq \bigvee_{i\in I\setminus I'} s_i$, we have $\bigwedge_{i\in I'} t_i \leq t$
                (see lemma 6.8 of~\cite{Frisch2008}). Let us take $I'=\{i\in I\alt e_2\in\valsemantic{s_i}\}$.
                We have $I'$ not empty: $e_2\in\valsemantic{s}$ and $s\leq \bigvee_{i\in I} s_i$, so according to
                $\valsemantic \_$ properties we have at least one $i$ such that $e_2\in\valsemantic{s_i}$.
                We also have $s\not\leq \bigvee_{i\in I\setminus I'} s_i$, otherwise there would be a $i\not\in I$
                such that $e_2\in\valsemantic{s_i}$ (contradiction with the definition of $I'$).
                As a consequence, we get $\bigwedge_{i\in I'} t_i \leq t$.

                Now, let's prove that $\Gamma \vdash e':\bigwedge_{i\in I'} t_i$ (which, by subsumption,
                yields $\Gamma \vdash e': t$). For that, we show that for any $i\in I'$, $\Gamma \vdash e':t_i$
                (it is then easy to conclude by using the \Rule{Inter} rule).

                Let $i\in I'$. We have $e_2\in\valsemantic{s_i}$, and so $\Gamma \vdash e_2:s_i$ ($e_2$ is well-typed in $\Gamma$).
                As $e_1$ is well-typed in $\Gamma$, there must be in its derivation an application of the rule \Rule{Abs+}
                which guarantees $\Gamma,(x:s_i) \vdash e_x:t_i$ (recall that $\Gamma \neq \bot$ and $\Gamma$ is ordinary so there is no abstraction in $\dom\Gamma$).
                Let us note $\Gamma'=\Gamma,(x:s_i)$.
                We can deduce, using the substitution lemma, that $\Gamma'\subst x {e_2} \vdash e_x\subst x {e_2}: t_i$.

                Moreover, $\Gamma'\subst x {e_2} = \Gamma,(e_2:s_i)$ and $\Gamma \leq \Gamma,(e_2:s_i)$.
                Thus, by monotonicity, we deduce $\Gamma \vdash e_x\subst x {e_2}: t_i$,
                that is $\Gamma \vdash e': t_i$.
              \end{itemize}
              \item[\Rule{Abs+}] Impossible case (no reduction possible).
              \item[\Rule{Abs-}] Impossible case (no reduction possible).
              \item[\Rule{Proj}] In this case, $e\equiv \pi_i e_0$. There are two possible cases:
              \begin{itemize}
                \item $e_0$ is not a value. In this case, we must have $e_0\uleadsto e_0'$
                and $e'\equiv \pi_i e_0'$. We can easily conclude using the induction hypothesis.
                \item $e_0$ is a value.
                Given that $e_0 \leq \pair \Any \Any$, we have $e_0 = (v_1,v_2)$ with $v_1$ and $v_2$ two values.
                We also have $e \idleadsto v_i$.
                The derivation of $\Gamma \vdash (v_1,v_2): \pair {t_1} {t_2}$ must contain a rule \Rule{Pair}
                which guarantees $\Gamma \vdash v_i: t_i$ (recall that $\Gamma \neq \bot$ and $\Gamma$ is ordinary so there is no pair in $\dom\Gamma$).
                It concludes this case.
              \end{itemize}
              \item[\Rule{Pair}] In this case, $e\equiv (e_1,e_2)$. There are two possible cases:
              \begin{itemize}
                \item $e_2$ is not a value. In this case, we must have $e_2\uleadsto e_2'$
                and $e'\equiv (e_1, e_2')$. We can easily conclude using the induction hypothesis.
                \item $e_2$ is a value and $e_1$ is not. In this case, we must have $e_1\uleadsto e_1'$
                and $e'\equiv (e_1', e_2)$. We can easily conclude using the induction hypothesis.
              \end{itemize}
              \item[\Rule{Case}] In this case, $e\equiv \ite {e_0} {t_{if}} {e_1} {e_2}$. There are three possible cases:
              \begin{itemize}
                \item $e_0$ is a value and $e_0 \in \valsemantic{t_{if}}$. In this case we have $e' \equiv e_1$.
                We have derivations for $\Gamma \vdash e_0: t_0$, $\Gamma \evdash {e_0} {t_{if}} \Gamma'$ and $\Gamma'\vdash e_1:t$.

                As $e_0$ is a value and $e_0 \in \valsemantic{t_{if}}$, we have $\Gamma\leq\Gamma'$ by using the value testing lemma.
                Thus, by monotonicity, we have $\Gamma\vdash e_1:t$.
                \item $e_0$ is a value and $e_0 \not\in \valsemantic{t}$. This case is similar to the previous one (we replace $t_{if}$ by $\neg t_{if}$ and $e_1$ by $e_2$).
                \item $e_0$ is not a value.
                In this case, we have $e_0\xleadsto{e_a\mapsto e_b} e_0'$ and $e'\equiv \ite {e_0\rho} {t_{if}} {e_1\rho} {e_2\rho}\equiv e\rho$
                with $\rho = \subst{e_a}{e_b}$.

                First, let's notice that we have $e_b$ closed (only closed expressions are reducible),
                and $e_a$ has one of the following forms:
                \begin{itemize}
                  \item $\ite e t {e_1} {e_2}$ (if expression)
                  \item $v v$ (application of two values)
                  \item $(v,v)$ (product of two values)
                \end{itemize}
                It can be easily proved by induction on the derivation of the reduction step.
                Secondly, as $e_a\uleadsto e_b$ and as the derivation of this reduction is a strict subderivation of that of $e\uleadsto e'$,
                we can use the induction hypothesis on $e_a\uleadsto e_b$ and we obtain $\forall t'.\ \Gamma \vdash e_a : t' \Rightarrow \Gamma\rho \vdash e_b:t'$.
                Thus, we can conclude directly by using the substitution lemma on $e$ and $\rho$.
              \end{itemize}
            \end{description}
            \end{proof}

            \subsubsection{Progress}

            \begin{lemma}[Inversion]
              \begin{align*}
                &\valsemantic{\pair {t_1} {t_2}} = \{(v_1,v_2) \alt \vvdash v_1:t_1, \vvdash v_2:t_2 \}\\
                &\valsemantic{b} = \{c \alt \basic{c} \leq b \}\\
                &\valsemantic{\arrow t s} = \{\lambda^{\bigwedge_{i\in I}\arrow {t_i} {s_i}}x.e \alt \bigwedge_{i\in I}\arrow {t_i} {s_i} \leq \arrow t s\}
              \end{align*}
            \end{lemma}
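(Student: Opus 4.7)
For each of the three equalities I will prove the two inclusions separately. The right-to-left inclusion is essentially immediate from the typing rules of $\vvdash$ (together with the implicit pair rule): apply \Rule{Const} in the basic case; \Rule{Abs} with the trivial choice $J=\varnothing$ (so that $t=\bigwedge_{i\in I}\arrow{t_i}{s_i}$, which is non-empty as a valid annotation) followed by \Rule{Subsum} in the arrow case; and the pair rule for values followed, if needed, by \Rule{Subsum} in the product case. The delicate direction is left-to-right, which I handle by case analysis on the last non-\Rule{Subsum} rule in the derivation of $\vvdash v:t$, together with structural properties of the value $v$.

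For the basic and product cases the argument is standard. First I note that by semantic subtyping the interpretations of basic, product, and arrow type constructors are pairwise disjoint, so if $v \in \valsemantic{\pair{t_1}{t_2}}$ then $v$ cannot be a constant or a $\lambda$-abstraction and must therefore be a pair $(v_1,v_2)$; similarly if $v\in\valsemantic{b}$ then $v$ is a constant $c$. For the basic case, inspecting the derivation and using $\basic{c}\leq b$ from the final \Rule{Subsum} gives the characterization. For the product case, I invoke the classical semantic-subtyping fact that $\pair{s_1}{s_2}\leq\pair{t_1}{t_2}$ together with the non-emptiness of the two sides forces $s_i\leq t_i$ (see \cite{Frisch2008}, Section~6); applied to the most specific product type $\pair{\mst(v_1)}{\mst(v_2)}$, this yields $\vvdash v_i:t_i$.

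The arrow case is the main obstacle. From $\vvdash v:\arrow{t}{s}$ I first conclude, by the same disjointness of type constructors, that $v$ must be of the form $\lambda^{\tau}x.e$ with $\tau=\bigwedge_{i\in I}\arrow{t_i}{s_i}$. Peeling off the outer \Rule{Subsum} steps, the derivation must end with \Rule{Abs} producing some type $\tau \wedge \bigwedge_{j\in J}\neg(\arrow{s'_j}{t'_j})$ that is non-empty and below $\arrow{t}{s}$. What I have to show is that the presence of the negated-arrow conjuncts cannot strictly improve the derivable supertype, i.e.\ that $\tau\wedge\bigwedge_{j\in J}\neg(\arrow{s'_j}{t'_j})\leq\arrow{t}{s}$ together with non-emptiness already entails $\tau\leq\arrow{t}{s}$. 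I plan to establish this by appealing to the decomposition of subtyping on DNFs of arrow types given by semantic subtyping (Lemma~6.8 / Section~6.11 of \cite{Frisch2008}): subtyping $\bigwedge_{i\in P}(\arrow{t_i}{s_i})\wedge\bigwedge_{j\in N}\neg(\arrow{s'_j}{t'_j})\leq\arrow{t}{s}$ is characterized by conditions on the positive part $P$ alone (for every subset $P'\subseteq P$ with $t\not\leq\bigvee_{i\in P\setminus P'}t_i$ one has $\bigwedge_{i\in P'}s_i\leq s$), provided the whole intersection is non-empty. Hence the negated conjuncts may only tighten the non-emptiness side-condition, not the subtyping proper, and we conclude $\tau\leq\arrow{t}{s}$.

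The hard part is the last paragraph: all the inversions rely on set-theoretic facts about the semantic interpretation of types, and in particular the arrow inversion requires the characterization of subtyping in the presence of negated arrow types, which is a non-trivial result imported from the theory of semantic subtyping rather than proved here.
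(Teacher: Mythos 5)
Your proof is correct, but it does considerably more than the paper, whose entire proof of this lemma is the citation ``See lemma 6.21 of~\cite{Frisch2008}'': the inversion principle is imported wholesale from the theory of semantic subtyping rather than re-derived. Your reconstruction is faithful to how that result is actually established. The right-to-left inclusions are indeed immediate from \Rule{Const}, \Rule{Abs} with an empty set of negated conjuncts, and \Rule{Subsum}; the constructor-disjointness argument correctly pins down the shape of the value; and the crux is exactly where you locate it, namely that for a non-empty type $\bigwedge_{i\in I}(\arrow{t_i}{s_i})\land\bigwedge_{j\in J}\neg(\arrow{s'_j}{t'_j})$ the negated arrows can only affect the emptiness check, not which single arrows $\arrow t s$ lie above it. This holds because the set-theoretic emptiness criterion for such an intersection is an existential over the negative arrows taken one at a time against the positive part, so if the type is non-empty with the conjuncts indexed by $J$ but becomes empty once $\neg(\arrow t s)$ is added, the witness must be $(\arrow t s)$ itself, which is precisely the condition $\bigwedge_{i\in I}(\arrow{t_i}{s_i})\leq\arrow t s$. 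That decomposition is the content of the very lemmas of~\cite{Frisch2008} that the paper cites, so the two approaches differ only in presentation: the paper buys brevity by treating the lemma as a black box, while your version makes explicit which facts about the semantic interpretation are really needed. One small caveat: the rules for $\vvdash$ as printed in the appendix omit a pair rule, so your ``implicit pair rule'' is genuinely implicit---without it the product clause of the lemma would be vacuous---and this is worth flagging rather than silently assuming.
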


            \begin{proof}
            See lemma 6.21 of~\cite{Frisch2008}
            \end{proof}

            \begin{theorem}[Progress]
            If $\varnothing \vdash e:t$, then either $e$ is a value or there exists $e'$ such that $e \uleadsto e'$.
            \end{theorem}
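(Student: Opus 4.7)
The plan is to proceed by structural induction on the derivation of $\varnothing \vdash e:t$ and case-split on the last rule applied. Several cases are immediate: \Rule{Env} and \Rule{Efq} cannot occur because $\varnothing$ is empty (it has no mappings and certainly no empty-type hypothesis); \Rule{Const}, \Rule{Abs+}, and \Rule{Abs-} all have $e$ itself a value; \Rule{Pair} reduces to the two induction hypotheses on the components (either both are values, in which case $e=(v_1,v_2)$ is a value, or the first non-value component steps via context rule \Rule{$\kappa$}); \Rule{Inter} and \Rule{Subs} are handled by a direct appeal to the induction hypothesis on the same term.

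The interesting cases are \Rule{App}, \Rule{Proj}, and \Rule{Case}, and each relies on the inversion lemma just stated. For \Rule{App} with $e=e_1 e_2$ and $\varnothing\vdash e_1:t_1\to t_2$, $\varnothing\vdash e_2:t_1$: if either $e_i$ is not a value, we step by \Rule{$\kappa$}; otherwise both are values and inversion on $\valsemantic{t_1\to t_2}$ forces $e_1 = \lambda^{\bigwedge_{i\in I}s_i\to t_i'}x.e_x$, so $\beta$ applies. For \Rule{Proj} with $e=\pi_i e_0$ and $\varnothing\vdash e_0:\pair{t_1}{t_2}$: if $e_0$ is not a value, step by \Rule{$\kappa$}; otherwise, inversion on $\valsemantic{\pair{t_1}{t_2}}$ gives $e_0=(v_1,v_2)$, and we apply $\pi$.

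The \Rule{Case} case is where care is needed. Write $e=\tcase{e_0}{t_{if}}{e_1}{e_2}$. If $e_0$ is a value $v$, we simply use the fact that types denote sets of values: either $v\in\valsemantic{t_{if}}$ and we step by $\tau_1$, or $v\not\in\valsemantic{t_{if}}$ and we step by $\tau_2$ (here I use decidability of the test $v\in\valsemantic{t_{if}}$, which follows from the definition of $\vvdash$ and decidability of subtyping, as discussed in Section~\ref{sec:static}). If $e_0$ is not a value, the induction hypothesis gives $e_0 \uleadsto e_0'$; this reduction is by definition of the form $e_0 \xleadsto{e_r\mapsto e_r'} e_0'$ for some notion of reduction $e_r \mapsto e_r'$, and hence rule \Rule{$\tau\kappa$} applies to produce the required parallel step on $e$. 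Note that the type-case is \emph{not} itself an evaluation context in our grammar of Section~\ref{app:parallel}, so \Rule{$\kappa$} does not directly apply here; the rule \Rule{$\tau\kappa$} is precisely what ensures progress in this subcase.

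The main obstacle is really the application case: one must invoke the inversion lemma (which uses the semantic characterization of $\valsemantic{\_}$) to guarantee that a closed value of arrow type is syntactically a $\lambda$-abstraction. This is non-trivial in a system with set-theoretic types and negated arrows, but it is precisely what Lemma of~\citet{Frisch2008} cited above provides. Once inversion is in hand, all cases collapse to routine appeals to the induction hypothesis and to the corresponding notions of reduction.
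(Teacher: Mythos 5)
Your proof is correct and follows essentially the same route as the paper's: induction on the typing derivation, dismissal of \Rule{Env} and \Rule{Efq} by emptiness of the environment, the inversion lemma to force the syntactic shape of closed values of arrow and product type in the \Rule{App} and \Rule{Proj} cases, and the rule \Rule{$\tau\kappa$} (rather than \Rule{$\kappa$}) to step under a type-case. The only cosmetic difference is that you discharge \Rule{Abs--} by observing the subject is already a value, whereas the paper appeals to the induction hypothesis; both are fine.
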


            \begin{proof}
            We proceed by induction on the derivation $\varnothing \vdash e:t$.
            We consider the last rule of this derivation:

            \begin{description}
              \item[\Rule{Env}] This case is impossible (the environment is empty).
              \item[\Rule{Efq}] This case is impossible (the environment is empty).
              \item[\Rule{Inter}] Straightforward application of the induction hypothesis.
              \item[\Rule{Subs}] Straightforward application of the induction hypothesis.
              \item[\Rule{Const}] In this case, $e$ must be a constant so $e$ is a value.
              \item[\Rule{App}] We have $e=e_1\ e_2$, with $\varnothing \vdash e_1: \arrow s t$
              and $\varnothing \vdash e_2 : s$. If one of the $e_i$ can be reduced, then
              $e$ can also be reduced using the reduction rule \Rule{$\kappa$}.

              Otherwise, by using the induction hypothesis we get that both $e_1$ and $e_2$ are values.
              Moreover, by using the inversion lemma, we know that $e_1$ has the form
              $\lambda^{\bigwedge_{i\in I} \arrow {t_i} {s_i}}x.e_0$. In consequence, $e$ is reducible
              (the reduction rule \Rule{$\beta$} can be applied).
              \item[\Rule{Abs+}] In this case, $e$ must be a lambda abstraction, so $e$ is a value.
              \item[\Rule{Abs-}] Straightforward application of the induction hypothesis.
              \item[\Rule{Case}] We have $e = \ite {e_0} {t'} {e_1} {e_2}$. If $e_0$ can be reduced,
              then $e$ can also be reduced using the reduction rule \Rule{$\tau\kappa$}.

              Otherwise, by using the induction hypothesis we get that $e_0$ is a value.
              In consequence, $e$ is reducible (the reduction rule \Rule{$\tau_i$} can be applied).
              \item[\Rule{Proj}] We have $e=\pi_i e_0$, $t=t_i$, $\varnothing \vdash e_0: \pair {t_1} {t_2}$.
              If $e_0$ can be reduced, then $e$ can also be reduced using the rule \Rule{$\kappa$}.

              Otherwise, by using the induction hypothesis we get that $e_0$ is a value.
              Moreover, by using the inversion lemma, we know that $e_0$ has the form $(v_1, v_2)$.
              In consequence, $e$ is reducible (the reduction rule \Rule{$\pi$} can be applied).
              \item[\Rule{Pair}] We have $e=(e_1,e_2)$. If one of the $e_i$ can be reduced, then
              $e$ can also be reduced using the reduction rule \Rule{$\kappa$}.

              Otherwise, by using the induction hypothesis we get that both $e_1$ and $e_2$ are values.
              In consequence, $e$ is also a value.
            \end{description}
            \end{proof}

\newpage

\section{Typing Algorithm: Operators, Type Schemes, Proofs of Soundness and Completeness}
\label{sec:proofs-algo}
We  give in this section a typing algorithm that uses types schemes and is more general than the one presented in the main
body of the paper (that is, one whose completeness is not limited to
positive expressions). We start by defining how to compute the ``$\worra{}{}$'' operator and then we define type schemes and the algorithm. We prove that this algorithm (as well as the
one in the main body of the paper) is sound w.r.t.\ to the declarative type system
and that, under certain restrictions it is also complete.

            \subsection{Operator $\worra {} {}$}\label{app:worra}

            In this section, we will use the algorithmic definition of $\worra {} {}$ and show that it is equivalent to its
            descriptive definition.

            \begin{eqnarray*}
                &t \simeq \bigvee_{i\in I}\left(\bigwedge_{p\in P_i}(s_p\to t_p)\bigwedge_{n\in N_i}\neg(s_n'\to t_n')\right)\\
                &\worra t s = \dom t \wedge\bigvee_{i\in I}\left(\bigwedge_{\{P \subseteq P_i\alt s \leq \bigvee_{p \in P} \neg t_p\}} \left(\bigvee_{p \in P} \neg s_p \right)\right)
            \end{eqnarray*}

            \begin{lemma}[Correctness of $\worra {} {}$]
                $\forall t, s.\ \apply t {(\dom t \setminus (\worra t s))} \leq \neg s$
            \end{lemma}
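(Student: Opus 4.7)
The plan is to establish the equivalent containment $t \leq u \to \neg s$ where $u = \dom{t} \setminus \worra{t}{s}$, since by definition of $\circ$ this is the same as $\apply{t}{u} \leq \neg s$. Using the disjunctive normal form $t \simeq \bigvee_{i\in I} \tau_i$ with $\tau_i = \bigwedge_{p\in P_i}(s_p\to t_p) \wedge \bigwedge_{n\in N_i}\neg(s'_n\to t'_n)$ given by the semantic subtyping machinery, the inclusion $t \leq u \to \neg s$ splits into $\tau_i \leq u \to \neg s$ for each $i \in I$. Moreover, since the negative factors only shrink $\tau_i$ and the target is a single positive arrow, it suffices to prove the purely positive inclusion $\bigwedge_{p\in P_i}(s_p\to t_p) \leq u \to \neg s$.

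By the standard inversion lemma for set-theoretic arrow subtyping (Lemma~6.8 of~\cite{Frisch2008}), this amounts to showing that for every $P \subseteq P_i$, either $u \leq \bigvee_{p \in P_i \setminus P} s_p$, or $\bigwedge_{p \in P} t_p \leq \neg s$. Fix such a $P$ and assume the second alternative fails, so that $s \not\leq \bigvee_{p \in P} \neg t_p$. A crucial consequence is that any $Q \subseteq P_i$ \emph{qualifying} in the definition of $\worra{t}{s}$ (i.e., with $s \leq \bigvee_{q \in Q} \neg t_q$) must satisfy $Q \not\subseteq P$; otherwise $s \leq \bigvee_{q \in Q} \neg t_q \leq \bigvee_{p \in P} \neg t_p$, contradicting the failed case.

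Now write $A_i = \bigwedge_{Q\ \text{qualifies}} \bigvee_{q \in Q} \neg s_q$, so that $\worra{t}{s} = \dom{t} \wedge \bigvee_i A_i$ and hence $u = \dom{t} \wedge \bigwedge_i \neg A_i$, with $\neg A_i = \bigvee_{Q\ \text{qualifies}} \bigwedge_{q \in Q} s_q$. Any value $v \in u$ therefore lies in $\bigwedge_{q \in Q} s_q$ for some qualifying $Q$; choosing $q^\star \in Q \setminus P$ (non-empty by the previous paragraph, and contained in $P_i$ because $Q \subseteq P_i$) yields $v \in s_{q^\star}$ with $q^\star \in P_i \setminus P$. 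Hence $u \leq \bigvee_{p \in P_i \setminus P} s_p$, which establishes the first alternative.

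Combining the steps, $\tau_i \leq u \to \neg s$ holds for every $i$, so $t \leq u \to \neg s$ and finally $\apply{t}{u} \leq \neg s$. The main obstacle is the combinatorial bridge in the third paragraph: the quantification in the definition of $\worra{t}{s}$ ranges over subsets $Q$ with $s \leq \bigvee_{q \in Q} \neg t_q$, while the inversion lemma asks a statement for \emph{every} subset $P \subseteq P_i$; the monotonicity observation that $Q \subseteq P$ would force $P$ itself to qualify is what matches the two quantifier structures and explains why the explicit formula for $\worra{t}{s}$ is the right one. Edge cases (no qualifying $Q$, or empty $P$) reduce to trivial containments $\apply{t}{\Empty} \leq \neg s$ or domain conditions, and need not complicate the main argument.
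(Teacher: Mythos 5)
Your proof is correct, and it reaches the same combinatorial core as the paper's argument while packaging it differently. The paper argues by contradiction through the \emph{algorithmic} formula for $\apply{}{}$: it takes an arbitrary $u\leq\dom t$ with $(\apply t u)\wedge s\not\simeq\Empty$, extracts from the decomposition of $\apply t u$ a summand $i$ and a subset $Q\subsetneq P_i$ with $u\not\leq\bigvee_{q\in Q}s_q$ and $s\wedge\bigwedge_{p\in P_i\setminus Q}t_p\not\simeq\Empty$, shows that every qualifying subset meets $Q$, and concludes that $u$ overlaps $\worra t s$; instantiating $u:=\dom t\setminus\worra t s$ then yields the lemma. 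You instead prove $t\leq u\to\neg s$ directly, summand by summand, discharging the obligation via the decomposition of $\bigwedge_{p}(s_p\to t_p)\leq u\to\neg s$ from Lemma~6.8 of~\cite{Frisch2008}; your bridge ``every qualifying $Q$ satisfies $Q\not\subseteq P$'' is exactly the paper's ``every qualifying set meets $Q$'' once you identify your non-qualifying $P$ with the paper's $P_i\setminus Q$. Your route never needs the explicit algorithmic formula for $\circ$ (only its defining property as a minimum plus the arrow-subtyping decomposition) and is direct rather than by contradiction; the paper's route proves a slightly stronger intermediate fact (for \emph{every} $u\leq\dom t$, overlap of $\apply t u$ with $s$ forces overlap of $u$ with $\worra t s$), which is the shape that gets reused for the optimality half of the characterization of $\worra{}{}$. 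One small point worth making explicit: you use the \emph{sufficiency} direction of the decomposition lemma, whereas the paper's other invocations of Lemma~6.8 use necessity; since that lemma is an equivalence this is unproblematic, but it deserves a word.
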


            \begin{proof}
            Let $t$ an arrow type. $t \simeq \bigvee_{i\in I}\left(\bigwedge_{p\in P_i}(s_p\to t_p)\bigwedge_{n\in N_i}\neg(s_n'\to t_n')\right)$\\
            Let $s$ be any type.

            Let's prove that $\apply t {(\dom t \setminus (\worra t s))} \leq \neg s$ (with the algorithmic definition for $\worra {} {}$).\\
            Equivalently, we want $(\apply t {(\dom t \setminus (\worra t s))}) \land s \simeq \Empty$.\\

            Let $u$ be a type such that $u \leq \dom t$ and $(\apply t u) \land s \not\simeq \Empty$ (if such a type does not exist, we are done).\\
            Let's show that $u \land (\worra t s) \not\simeq \Empty$ (we can easily deduce the wanted property from that, by the absurd).\\
            For that, we should prove the following:\\

            \[
                \exists i \in I.\
                u \land \bigwedge_{\{P \subseteq P_i\alt s \leq \bigvee_{p \in P} \neg t_p\}} \left(\bigvee_{p \in P} \neg s_p \right) \not\simeq \Empty
            \]

          From $(\apply t u) \land s \not\simeq \Empty$, we can take (using the algorithmic definition of $\circ$) $i \in I$ and $Q \subsetneq P_i$ such that:\\
          \[ u \not\leq\bigvee_{q\in Q}s_q \text{\ \ \ and\ \ \ } (\bigwedge_{p\in P_i\setminus Q}t_p) \land s \not\simeq \Empty \]

            For any $P \subseteq P_i$ such that $s \leq \bigvee_{p \in P} \neg t_p$ (equivalently, $s \land \bigwedge_{p \in P} t_p \simeq \Empty$),\\
            we have $P \cap Q \neq \varnothing$ (by the absurd, because $(\bigwedge_{p\in P_i\setminus Q}t_p) \land s \not\simeq \Empty$).\\
            Consequently, we have:
            \[ \forall P \subseteq P_i.\ s \leq \bigvee_{p \in P} \neg t_p \Rightarrow \bigwedge_{p \in P} s_p \leq \bigvee_{q \in Q}s_q \]
            We can deduce that:
            \[ \bigvee_{\{P \subseteq P_i\alt s \leq \bigvee_{p \in P} \neg t_p\}} \left(\bigwedge_{p \in P} s_p\right) \leq \bigvee_{q \in Q}s_q \]
            Moreover, as $u \not\leq \bigvee_{q\in Q}s_q$, we have $u \not\leq \bigvee_{\{P \subseteq P_i\alt s \leq \bigvee_{p \in P} \neg t_p\}} \left(\bigwedge_{p \in P} s_p\right)$.\\
            This is equivalent to the wanted result.
            \end{proof}

            \begin{lemma}[$\worra {} {}$ alternative definition]
                The following algorithmic definition for $\worra {} {}$ is equivalent to the previous one:

                \[\forall t, s.\ \worra t s \simeq \bigvee_{i \in I} \left( \bigvee_{\{P \subseteq P_i\alt s \not\leq \bigvee_{p \in P} \neg t_p \} }\left(
                    \dom t \land \bigwedge_{p\in P_i} s_p \land \bigwedge_{n \in P_i \setminus P} \neg s_n \right)\right)\]
            \end{lemma}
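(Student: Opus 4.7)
My plan is to factor out $\dom t$ (which appears on both sides and can be pulled through the outer unions) and then show the per-$i$ equivalence
\[
A_i \;\simeq\; B_i,
\]
where $A_i = \bigwedge_{\{P \subseteq P_i\,\mid\, s \leq \bigvee_{p\in P}\neg t_p\}} \bigl(\bigvee_{p\in P}\neg s_p\bigr)$ is the inner factor of the original formula and $B_i = \bigvee_{\{P \subseteq P_i\,\mid\, s \not\leq \bigvee_{p\in P}\neg t_p\}} \bigl(\bigwedge_{p\in P} s_p \land \bigwedge_{n\in P_i\setminus P} \neg s_n\bigr)$ is the inner factor of the alternative formula. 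The key device is the decomposition of $\Any$ into the atoms of the finite Boolean algebra generated by $\{s_p\}_{p\in P_i}$: for each $Q \subseteq P_i$, set $C_Q \eqdef \bigwedge_{p\in Q} s_p \land \bigwedge_{p\in P_i\setminus Q} \neg s_p$. The cells $C_Q$ are pairwise disjoint and $\bigvee_{Q\subseteq P_i} C_Q \simeq \Any$, so proving the equivalence amounts to checking, for each $Q$, whether $C_Q \leq A_i$ holds if and only if $C_Q$ is one of the summands of $B_i$.

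For the forward check, observe that $C_Q \leq \neg s_p$ exactly when $p \notin Q$, so $C_Q \leq \bigvee_{p\in P}\neg s_p$ iff $P \not\subseteq Q$. Therefore $C_Q \leq A_i$ iff no valid $P$ (i.e., no $P$ with $s \leq \bigvee_{p\in P}\neg t_p$) is a subset of $Q$. The crucial monotonicity remark is that validity is upward-closed under $\subseteq$: if $P \subseteq Q$ then $\bigvee_{p\in P}\neg t_p \leq \bigvee_{p\in Q}\neg t_p$, so $P$ valid implies $Q$ valid. Contrapositively, $Q$ invalid implies all its subsets are invalid, while $Q$ valid gives $Q$ itself as a valid subset of $Q$. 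Hence the condition ``no valid $P \subseteq Q$'' collapses to ``$Q$ is invalid'', which is precisely the indexing set of $B_i$. This yields $A_i \simeq \bigvee_{Q\text{ invalid}} C_Q \simeq B_i$.

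Finally, taking $\bigvee_{i\in I}$ on both sides and intersecting with $\dom t$ (which distributes over arbitrary unions) produces the alternative formula from the original, and the converse direction follows by the same chain of equivalences read backwards. The main obstacle, and also the only nontrivial step, is the upward-closure argument for validity under $\subseteq$; the rest is a routine cell decomposition. One edge case worth double-checking is $P = \varnothing$, where $\bigvee_{p\in\varnothing} \neg t_p \simeq \Empty$, so the emptiness hypothesis ``$s \simeq \Empty$'' trivially makes both $A_i$ and $B_i$ empty and the equivalence still holds.
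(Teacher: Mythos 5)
Your proof is correct and rests on the same two ingredients as the paper's own argument: the decomposition into the pairwise-disjoint cells $\bigwedge_{p\in Q}s_p\wedge\bigwedge_{p\in P_i\setminus Q}\neg s_p$ and the upward-closure of the condition $s\leq\bigvee_{p\in P}\neg t_p$ under enlarging $P$. The paper carries this out as a chain of equational rewritings (distributing $\dom t$, expanding $\bigvee_{p\in P_i}s_p$ into cells, and turning the big intersection into a set difference via De Morgan), whereas you check directly which atoms lie under each side --- a purely presentational difference.
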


            \begin{proof}
            \begingroup
            \allowdisplaybreaks
            \begin{align*}
                \worra t s & = \dom t \wedge\bigvee_{i\in I}\left(
                    \bigwedge_{\{P \subseteq P_i\alt s \leq \bigvee_{p \in P} \neg t_p\}} \left(\bigvee_{p \in P} \neg s_p \right)\right)\\
                & \simeq \bigvee_{i\in I} \left(\dom t \wedge
                    \bigwedge_{\{P \subseteq P_i\alt s \leq \bigvee_{p \in P} \neg t_p\}} \left(\bigvee_{p \in P} \neg s_p \right)\right)\\
                & \simeq \bigvee_{i\in I} \left(\left( \dom t \wedge \bigvee_{p \in P_i} s_p \right) \wedge
                    \bigwedge_{\{P \subseteq P_i\alt s \leq \bigvee_{p \in P} \neg t_p\}} \left(\bigvee_{p \in P} \neg s_p \right)\right)\\
                & \simeq \bigvee_{i\in I} \left(\left( \dom t \wedge \bigvee_{p \in P_i} \left(s_p \land \bigvee_{P \subseteq P_i \setminus \{p\}}
                    \left(\bigwedge_{p \in P} s_p \land \bigwedge_{n \in (P_i \setminus \{p\}) \setminus P} \neg s_n \right) \right) \right) \wedge
                    \bigwedge_{\{P \subseteq P_i\alt s \leq \bigvee_{p \in P} \neg t_p\}} \left(\bigvee_{p \in P} \neg s_p \right)\right)\\
                & \simeq \bigvee_{i\in I} \left(\left( \dom t \wedge \bigvee_{p \in P_i} \left(\bigvee_{P \subseteq P_i \setminus \{p\}} \left( s_p \land
                    \bigwedge_{p \in P} s_p \land \bigwedge_{n \in (P_i \setminus \{p\}) \setminus P} \neg s_n \right) \right) \right) \wedge
                    \bigwedge_{\{P \subseteq P_i\alt s \leq \bigvee_{p \in P} \neg t_p\}} \left(\bigvee_{p \in P} \neg s_p \right)\right)\\
                & \simeq \bigvee_{i\in I} \left(\left( \dom t \wedge \bigvee_{\substack{P \subseteq P_i\\P \neq \varnothing}}
                    \left(\bigwedge_{p \in P} s_p \land \bigwedge_{n \in P_i \setminus P} \neg s_n \right) \right) \wedge
                    \bigwedge_{\{P \subseteq P_i\alt s \leq \bigvee_{p \in P} \neg t_p\}} \left(\bigvee_{p \in P} \neg s_p \right)\right)\\
                & \simeq \bigvee_{i\in I} \left( \dom t \wedge \bigvee_{\substack{P \subseteq P_i\\P \neq \varnothing}}
                    \left(\bigwedge_{p \in P} s_p \land \bigwedge_{n \in P_i \setminus P} \neg s_n \right) \setminus
                    \bigvee_{\{P \subseteq P_i\alt s \leq \bigvee_{p \in P} \neg t_p\}} \left(\bigwedge_{p \in P} s_p \right)\right)\\
                & \simeq \bigvee_{i\in I} \left( \dom t \wedge \bigvee_{\substack{P \subseteq P_i\\P \neq \varnothing}}
                    \left(\bigwedge_{p \in P} s_p \land \bigwedge_{n \in P_i \setminus P} \neg s_n \right) \setminus
                    \bigvee_{\{P \subseteq P_i\alt s \leq \bigvee_{p \in P} \neg t_p\}}
                    \left(\bigwedge_{p \in P} s_p \land \bigwedge_{n \in P_i \setminus P} \neg s_n \right)\right)\\
                & \simeq \bigvee_{i \in I} \left( \dom t \land \bigvee_{\{P \subseteq P_i\alt s \not\leq \bigvee_{p \in P} \neg t_p \} }\left(
                    \bigwedge_{p\in P_i} s_p \land \bigwedge_{n \in P_i \setminus P} \neg s_n \right)\right)\\
                & \simeq \bigvee_{i \in I} \left( \bigvee_{\{P \subseteq P_i\alt s \not\leq \bigvee_{p \in P} \neg t_p \} }\left(
                    \dom t \land \bigwedge_{p\in P_i} s_p \land \bigwedge_{n \in P_i \setminus P} \neg s_n \right)\right)
            \end{align*}
            \endgroup
            \end{proof}

            \begin{lemma}[Optimality of $\worra {} {}$]
                Let $t$, $s$, two types.
                For any $u$ such that $\apply t {(\dom t \setminus u)} \leq \neg s$, we have $\worra t s \leq u$.
            \end{lemma}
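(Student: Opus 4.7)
\medskip

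\noindent\textbf{Proof plan.} I would prove the contrapositive: assuming $\worra{t}{s} \not\leq u$, exhibit a nonempty $v_0 \leq \dom t \setminus u$ for which $\apply{t}{v_0} \wedge s \not\simeq \Empty$; monotonicity of $\circ$ in its second argument then gives $\apply{t}{(\dom t \setminus u)} \not\leq \neg s$, contradicting the hypothesis on $u$. The whole argument rides on the alternative disjunctive form of $\worra{t}{s}$ established in the previous lemma, which exhibits $\worra{t}{s}$ as a union of atoms indexed by pairs $(i,P)$ with $i \in I$ and $P \subseteq P_i$ such that $s \not\leq \bigvee_{p \in P} \neg t_p$.

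\smallskip
\noindent\textbf{Key steps.} First, since $\worra{t}{s} \not\leq u$, at least one such atom $B_{i,P} = \dom t \wedge \bigwedge_{p \in P} s_p \wedge \bigwedge_{n \in P_i \setminus P} \neg s_n$ is not below $u$; take $v_0 = B_{i,P} \setminus u \neq \Empty$, so $v_0 \leq \dom t \setminus u$. Second, the selection condition $s \not\leq \bigvee_{p \in P} \neg t_p$ rephrases as $s \wedge \bigwedge_{p \in P} t_p \not\simeq \Empty$. Third, consider the $i$-th summand of $t$, namely $T_i = \bigwedge_{p \in P_i}(s_p \to t_p) \wedge \bigwedge_{n \in N_i} \neg(s'_n \to t'_n)$, which is non-empty by the disjunctive normal form assumption. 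The construction of $v_0$ ensures $v_0 \leq s_p$ for $p \in P$ and $v_0 \wedge s_p \simeq \Empty$ for $p \in P_i \setminus P$, so in the algorithmic formula for $\apply{T_i}{v_0}$ (from Section~6.11 of \cite{Frisch2008}) only the arrows indexed by $P$ place a constraint, yielding $\apply{T_i}{v_0} \simeq \bigwedge_{p \in P} t_p$. Hence $\apply{t}{v_0} \wedge s \geq \apply{T_i}{v_0} \wedge s \simeq s \wedge \bigwedge_{p \in P} t_p \not\simeq \Empty$, which closes the contrapositive.

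\smallskip
\noindent\textbf{Main obstacle.} The delicate point is step three: justifying that $\apply{T_i}{v_0}$ really equals $\bigwedge_{p \in P} t_p$ and not something strictly smaller. The upper bound $\apply{T_i}{v_0} \leq \bigwedge_{p \in P} t_p$ is immediate from the positive arrows, but the matching lower bound requires two ingredients that need careful bookkeeping on the semantic-subtyping side: (i) that the negated arrows $\neg(s'_n \to t'_n)$ do not further restrict $\apply{T_i}{v_0}$ as long as $T_i$ itself is inhabited, and (ii) that the positive arrows indexed by $P_i \setminus P$ are vacuous at $v_0$ because $v_0$ is disjoint from their domains. Both facts are standard consequences of the algorithmic computation of $\circ$ for an intersection of arrows over a non-empty summand, but they must be invoked explicitly to rule out spurious refinement of the result type. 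Once this is in place, the remainder of the argument is a routine chain of monotonicity observations.
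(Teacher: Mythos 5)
Your proof is correct and follows essentially the same route as the paper's: both rely on the alternative disjunctive formula for $\worra t s$ to pick an atom indexed by $(i,P)$ that is not below $u$, and both use the algorithmic formula for $\circ$ to show that $\bigwedge_{p\in P}t_p$ survives in the application result and meets $s$. The only cosmetic differences are that you phrase the argument as a contrapositive with an explicit monotonicity step on $v_0=B_{i,P}\setminus u$, whereas the paper directly activates the disjunct $Q=P_i\setminus P$ in $\apply t{(\dom t\setminus u)}$ — which also shows that only the lower bound $\apply{T_i}{v_0}\geq\bigwedge_{p\in P}t_p$ is needed, not the full equality you flag as the main obstacle.
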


            \begin{proof}
            Let $t$ an arrow type. $t \simeq \bigvee_{i\in I}\left(\bigwedge_{p\in P_i}(s_p\to t_p)\bigwedge_{n\in N_i}\neg(s_n'\to t_n')\right)$\\
            Let $s$ be any type.

            Let $u$ be such that $\apply t {(\dom t \setminus u)} \leq \neg s$. We want to prove that $\worra t s \leq u$.

            We have:
            \[\worra t s = \bigvee_{i \in I} \left(\bigvee_{\{P \subseteq P_i\alt s \not\leq \bigvee_{p \in P} \neg t_p \} } a_{i,P}\right) \]
            With:
            \[a_{i,P}=\dom t \land \bigwedge_{p\in P_i} s_p \land \bigwedge_{n \in P_i \setminus P} \neg s_n\]

            Let $i \in I$ and $P \subseteq P_i$ such that $s \not\leq \bigvee_{p \in P} \neg t_p$ (equivalently, $s \land \bigwedge_{p \in P} t_p \not\simeq \Empty$) and such that $a_{i,P} \not\simeq \Empty$.\\
            For convenience, let $a = a_{i,P}$. We just have to show that $a \leq u$.

            By the absurd, let's suppose that $a \setminus u \not\simeq \Empty$ and show that $(\apply t {(\dom t \setminus u)}) \land s \not\simeq \Empty$.\\

            Let's recall the algorithmic definition of $\circ$:
            \[\apply t {(\dom t \setminus u)} = \bigvee_{i\in I}\left(\bigvee_{\{Q\subsetneq P_i\alt \dom t \setminus u \not\leq\bigvee_{q\in Q}s_q\}}\left(\bigwedge_{p\in P_i\setminus Q}t_p\right)\right)\]

            Let's take $Q = P_i \setminus P$. We just have to prove that:
            \[ \dom t \setminus u \not\leq\bigvee_{q\in Q}s_q \text{\ \ \ and\ \ \ } s \land \bigwedge_{p\in P_i\setminus Q}t_p \not\simeq \Empty \]

            As $P_i \setminus Q = P$, we immediatly have $s \land \bigwedge_{p\in P_i\setminus Q}t_p \not\simeq \Empty$.

            Moreover, we know that $a \leq \bigwedge_{q \in Q} \neg s_q$ (definition of $a_{i,P}$), so we have:
            \[a \land \bigwedge_{q\in Q} \neg s_q \simeq a\]
            Thus: \[(a \setminus u) \land \bigwedge_{q\in Q} \neg s_q \simeq (a \land \bigwedge_{q\in Q} \neg s_q) \setminus u \simeq a \setminus u \not\simeq \Empty\]
            And so: \[ a \setminus u \not\leq \bigvee_{q\in Q}s_q \]

            As $ \dom t \setminus u \geq a \setminus u$, we can immediatly obtain the remaining inequality.
            \end{proof}

            \begin{theorem}[Characterization of $\worra {} {}$]
                $\forall t, s.\ \worra t s = \min \{ u \alt \apply t {(\dom t \setminus u)} \leq \neg s \}$.
            \end{theorem}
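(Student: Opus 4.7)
The plan is to derive the theorem as an immediate consequence of the two preceding lemmas, which together characterize $\worra t s$ both as a member of the set $\{u \alt \apply t {(\dom t \setminus u)} \leq \neg s\}$ and as a lower bound for it.

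First, I would invoke the Correctness lemma, which already establishes that $\apply t {(\dom t \setminus (\worra t s))} \leq \neg s$. This shows directly that $\worra t s$ belongs to the set $\{u \alt \apply t {(\dom t \setminus u)} \leq \neg s\}$, so a minimum (if it exists) is at most $\worra t s$.

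Second, I would invoke the Optimality lemma: for every $u$ satisfying $\apply t {(\dom t \setminus u)} \leq \neg s$, we have $\worra t s \leq u$. This is exactly the statement that $\worra t s$ is a lower bound of the set. Combined with membership, this gives both $\min$-ness and well-definedness of the minimum, so $\worra t s = \min\{u \alt \apply t {(\dom t \setminus u)} \leq \neg s\}$.

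There is no real obstacle here: the two lemmas are precisely the two halves of a $\min$ characterization (witness + lower bound), and the theorem is just their conjunction restated. The only thing worth noting in the write-up is that the minimum is taken with respect to the subtyping preorder $\leq$ (quotiented by $\simeq$ so that a least element exists up to equivalence), so strictly speaking the equality in the conclusion is to be read modulo $\simeq$, consistently with the convention used throughout Section~\ref{sec:typeops}.
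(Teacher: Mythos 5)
Your proposal is correct and matches the paper's own argument exactly: the paper also proves the theorem as an immediate consequence of the Correctness lemma (membership of $\worra t s$ in the set) and the Optimality lemma (lower-bound property). Your remark that the minimum is to be read with respect to $\leq$ up to $\simeq$ is a fair clarification consistent with the paper's conventions.
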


            \begin{proof}
            Immediate consequence of the previous results.
            \end{proof}

\subsection{Type Schemes}

We introduce for the proofs the notion of \emph{type schemes}
and we define a more powerful algorithmic type system that uses them.
It allows us to have a stronger (but still partial) completeness theorem.

The proofs for the algorithmic type system presented in \ref{sec:algorules} can be derived
from the proofs of this section (see Section \ref{sec:proofs_algorithmic_without_ts}).

\subsubsection{Type schemes}\label{app:typeschemes}

We introduce the new syntactic category of \emph{type schemes} which are the terms~$\ts$ inductively produced by the following grammar.
\[
\begin{array}{lrcl}
  \textbf{Type schemes} & \ts & ::= & t \alt \tsfun {\arrow t t ; \cdots ; \arrow t t} \alt \ts \tstimes \ts \alt \ts \tsor \ts \alt \tsempty
\end{array}
\]
Type schemes denote sets of types, as formally stated by the following definition:
\begin{definition}[Interpretation of type schemes]
  We define the function $\tsint {\_}$ that maps type schemes into sets of types.\svvspace{-2.5mm}
  \begin{align*}
    \begin{array}{lcl}
    \tsint t &=& \{s\alt t \leq s\}\\
    \tsint {\tsfunone {t_i} {s_i}_{i=1..n}} &=& \{s\alt
    \exists s_0 = \bigwedge_{i=1..n} \arrow {t_i} {s_i}
    \land \bigwedge_{j=1..m} \neg (\arrow {t_j'} {s_j'}).\
    \Empty \not\simeq s_0 \leq s \}\\
    \tsint {\ts_1 \tstimes \ts_2} &=& \{s\alt \exists t_1 \in \tsint {\ts_1}\
    \exists t_2 \in \tsint {\ts_2}.\ \pair {t_1} {t_2} \leq s\}\\
    \tsint {\ts_1 \tsor \ts_2} &=& \{s\alt \exists t_1 \in \tsint {\ts_1}\
    \exists t_2 \in \tsint {\ts_2}.\ {t_1} \vee {t_2} \leq s\}\\
    \tsint \tsempty &=& \varnothing
    \end{array}
  \end{align*}
\end{definition}
Note that $\tsint \ts$ is closed under subsumption and intersection
 and that $\tsempty$, which denotes the
empty set of types is different from $\Empty$ whose interpretation is
the set of all types.

\begin{lemma}[\cite{Frisch2008}]
  Let $\ts$ be a type scheme and $t$ a type. It is possible to decide the assertion $t \in \tsint \ts$,
  which we also write $\ts \leq t$.
\end{lemma}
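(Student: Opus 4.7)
The plan is to proceed by structural induction on the type scheme $\ts$, reducing the membership query $t \in \tsint{\ts}$ to a finite collection of (decidable) subtyping queries over set-theoretic types, for which the semantic subtyping algorithm of \citet{Frisch2008} (see also \cite{Cas15}) is available. The base cases are immediate: $t \in \tsint{\tsempty}$ is always false, and when $\ts$ is itself a type $s$, the membership is by definition equivalent to the subtyping $s \leq t$.

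For the union case, I would first establish the identity $\tsint{\ts_1 \tsor \ts_2} = \tsint{\ts_1} \cap \tsint{\ts_2}$. The forward inclusion holds because $t_1 \vee t_2 \leq s$ implies $t_i \leq s$ and each $\tsint{\ts_i}$ is closed under supertypes; the reverse inclusion holds by taking $t_1 = t_2 = s$ so that $t_1 \vee t_2 = s \leq s$. Hence $t \in \tsint{\ts_1 \tsor \ts_2}$ is decidable by two recursive calls via the induction hypothesis. For the product case, $t \in \tsint{\ts_1 \tstimes \ts_2}$ holds iff there exist $t_1, t_2$ with $t_i \in \tsint{\ts_i}$ and $\pair{t_1}{t_2} \leq t$. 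Using the product-subtyping decomposition from semantic subtyping, the condition $\pair{t_1}{t_2} \leq t$ can be rewritten into a finite Boolean combination of constraints of the form $t_1 \leq u$ and $t_2 \leq v$ (over finitely many candidate pairs $(u,v)$ derived from the disjunctive normal form of $t$); each such constraint can then be discharged by invoking the induction hypothesis on $\ts_1$ and $\ts_2$ with the corresponding witness types.

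The real obstacle is the function scheme case $\ts = \tsfunone{t_i}{s_i}_{i=1..n}$, where $t \in \tsint{\ts}$ asks for the existence of a finite family of negated arrows $\{\neg(t'_j \to s'_j)\}_{j=1..m}$ such that the conjunction
\[
s_0 \;=\; \bigwedge_{i=1..n}(t_i \to s_i) \;\wedge\; \bigwedge_{j=1..m}\neg(t'_j \to s'_j)
\]
is simultaneously non-empty and a subtype of $t$. The difficulty is the unbounded existential over $m$ and over the negated arrows themselves. My plan is to leverage the analysis of function subtyping in semantic subtyping: the subtyping $\bigwedge_i (t_i \to s_i) \leq t$ is equivalent to a Boolean condition expressed over the disjunctive normal form of $t$ and the positive arrows, and the only role of adding negated arrows is to cut down the set of implementations of $s_0$ without affecting the inclusion in $t$. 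Concretely, I would show that the witness negations, if any exist, can be drawn from a finite, effectively computable set extracted from the disjunctive normal form of $t$, so the existential collapses to a finite enumeration of subtyping-plus-non-emptiness checks. This reduction is precisely the content of the type-scheme machinery already developed in \citet{Frisch2008} to type overloaded functions, so rather than redo the delicate combinatorial analysis, the proof would cite this result and restate it in our notation; this is the step that I expect will require the most care to adapt, in particular to ensure that the non-emptiness side condition $s_0 \not\simeq \Empty$ is preserved under the enumeration.
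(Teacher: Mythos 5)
Your outline is sound: the union and product cases reduce correctly to finitely many subtyping queries via upward closure of $\tsint{\cdot}$, and the only genuinely hard step --- bounding the existential over negated arrows in the function-scheme case --- is exactly the result you end up citing from \citet{Frisch2008}. This matches the paper, which gives no proof of its own and simply imports the lemma from \cite[Section 6.11]{Frisch2008}, so your reconstruction bottoms out in the same place.
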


We can now formally define the relation $v\in t$ used in
Section~\ref{sec:opsem} to define the dynamic semantics of
the language. First, we associate each (possibly, not well-typed)
value to a type scheme representing the best type information about
the value. By induction on the definition of values: $\tyof c {} =
\basic c$, $\tyof {\lambda^{\wedge_{i\in I}s_i\to t_i} x.e}{}=
\tsfun{s_i\to t_i}_{i\in I}$, $\tyof {(v_1,v_2)}{} = \tyof
      {v_1}{}\tstimes\tyof {v_1}{}$. Then we have $v\in t\iffdef \tyof
      v{}\leq t$.

We also need to perform intersections of type schemes so as to intersect the static type of an expression (i.e., the one deduced by conventional rules) with the one deduced by occurrence typing (i.e., the one derived by $\vdashp$). For our algorithmic system (see \Rule{Env$_{\scriptscriptstyle\mathcal{A}}$} in Section~\ref{sec:algorules}) all we need to define is the intersection of a type scheme with a type:
\begin{lemma}[\cite{Frisch2008}]
  Let $\ts$ be a type scheme and $t$ a type. We can compute a type scheme, written $t \tsand \ts$, such that
  \(\tsint {t \tsand \ts} = \{s \alt \exists t' \in \tsint \ts.\ t \land t' \leq s \}\)
\end{lemma}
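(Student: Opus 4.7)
The plan is to proceed by structural induction on the type scheme $\ts$, constructing $t \tsand \ts$ explicitly and verifying at each step that $\tsint{t \tsand \ts} = \{s \alt \exists t' \in \tsint \ts.\ t \land t' \leq s\}$. The analogous construction for two type schemes is already carried out in the semantic subtyping machinery of Frisch et al.~2008, so the work here is essentially to specialise that construction to the one-sided intersection and check the interpretation.

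The two easy cases are the plain ones. If $\ts = \tsempty$, set $t \tsand \tsempty = \tsempty$: both sides of the characterisation are empty. If $\ts = t'$, set $t \tsand t' = t \land t'$: the LHS is $\{s \alt t \land t' \leq s\}$ by definition, while for the RHS, the inclusion from right to left follows by taking $t'' = t'$ and the other by monotonicity of $\land$ (any $t'' \geq t'$ satisfies $t \land t' \leq t \land t'' \leq s$). For the union case $\ts = \ts_1 \tsor \ts_2$, define $t \tsand \ts = (t \tsand \ts_1) \tsor (t \tsand \ts_2)$; the characterisation then follows from the induction hypothesis together with the fact that $t \land (t_1 \lor t_2) \simeq (t \land t_1) \lor (t \land t_2)$, which lets us decouple the two existentials witnessing membership in each $\tsint{\ts_i}$.

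For the product case $\ts = \ts_1 \tstimes \ts_2$, we exploit the projection operators $\bpl$ and $\bpr$ introduced in Section~\ref{sec:typeops}. If $t \land \pair \Any \Any \simeq \Empty$ we return $\Empty$, since $t \land \pair{t_1}{t_2} \simeq \Empty$ for every $t_1,t_2$. Otherwise we put $t \land \pair \Any \Any$ into the disjunctive normal form of semantic subtyping, producing a union of products, and distribute the operation $\tsand$ over that union using the induction hypothesis on $\ts_1$ and $\ts_2$ separately; each summand yields a product type scheme $\ts_1' \tstimes \ts_2'$, and the result is their union.

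The main obstacle is the arrow case $\ts = \tsfun{s_i \to t_i}_{i=1..n}$. Unlike the plain-type case, $\tsint \ts$ is not the upward closure of a single type: it involves an existential quantification over finite families of negated arrows $\neg(s_j' \to t_j')$ chosen so as to keep the conjunction nonempty. Consequently, one cannot just intersect $t$ with the ``best'' positive type $\bigwedge_i (s_i \to t_i)$ and be done. Instead, one must push the intersection under this existential: putting $t$ in DNF, each non-arrow summand contributes $\Empty$, while each arrow-form summand gets folded into the positive part of the scheme (augmenting the intersection of arrows) and the existential over negated arrows is carried over, producing a union of functional schemes. Correctness of the construction — that every witness $s_0$ on one side yields a witness on the other, using the nonemptiness condition on $s_0$ — is precisely the delicate combinatorial argument developed in Frisch et al.~2008, which we invoke to conclude.
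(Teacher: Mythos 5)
The first thing to note is that the paper itself does not prove this lemma: it is imported wholesale from \cite{Frisch2008} (Section 6.11), where the operator $t\tsand\ts$ is constructed and its characterization proved, so there is no in-paper argument to compare yours against. Your induction has the right overall shape, and the cases $\tsempty$, plain types, and unions are handled correctly (the product case is plausible but glosses over the fact that intersecting $t_1\times t_2$ with the \emph{negated} products in the DNF of $t\land\pair\Any\Any$ does not yield a product, so turning the result back into a union of $\ts_1'\tstimes\ts_2'$ is itself non-trivial).

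The genuine gap is in the arrow case, which is where all the content of the lemma lies. Folding the positive arrows of $t$ into the positive part of the scheme silently changes the non-emptiness side condition from ``$s_0\not\simeq\Empty$'' to ``$t\land s_0\not\simeq\Empty$'', and these are not equivalent. Concretely, take $t=\arrow\Int\Int$ and $\ts=\tsfun{\arrow\Bool\Bool}$. The type $t'=(\arrow\Bool\Bool)\land\neg(\arrow\Int\Int)$ is a non-empty element of $\tsint\ts$ with $t\land t'\simeq\Empty$, so the right-hand side of the characterization is the set of \emph{all} types and $t\tsand\ts$ must denote all of them (e.g.\ be the plain type $\Empty$); your folded scheme $\tsfun{\arrow\Bool\Bool;\arrow\Int\Int}$ denotes only supertypes of non-empty function types and misses $\Empty$. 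This collapse is not pathological: it is exactly what rule \Rule{Env\Aats} needs when the hypothesis recorded in $\Gamma$ is incompatible with a type derivable for a $\lambda$-abstraction via \Rule{Abs-}. A second, related problem is that negated arrows occurring in $t$ itself cannot be ``folded into'' a $\tsfun\cdots$ constructor, whose syntax records only positive arrows while existentially quantifying the negations. Deferring the delicate part to \cite{Frisch2008} is legitimate---it is all the paper does---but as written your sketch commits to a specific construction that fails on the example above, so the deferral does not actually cover what you claim it covers.
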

Finally, given a type scheme $\ts$ it is straightforward to choose in its interpretation a type $\tsrep\ts$ which serves as the canonical representative of the set (i.e., $\tsrep \ts \in \tsint \ts$):
\begin{definition}[Representative]
  We define a function $\tsrep {\_}$ that maps every non-empty type scheme into a type, \textit{representative} of the set of types denoted by the scheme.\svvspace{-2mm}
  \begin{align*}
    \begin{array}{lcllcl}
    \tsrep t &=& t  &  \tsrep {\ts_1 \tstimes \ts_2} &=& \pair {\tsrep {\ts_1}} {\tsrep {\ts_2}}\\
    \tsrep {\tsfunone {t_i} {s_i}_{i\in I}} &=& \bigwedge_{i\in I} \arrow {t_i} {s_i} \qquad&    \tsrep {\ts_1 \tsor \ts_2} &=& \tsrep {\ts_1} \vee \tsrep {\ts_2}\\
    \tsrep \tsempty && \textit{undefined}
    \end{array}\svvspace{-4mm}
  \end{align*}
\end{definition}

Type schemes are already present in the theory of semantic subtyping presented in~\cite[Section
6.11]{Frisch2008}. In particular, it explains how the operators such as $\circ$, $\bpl t$ and $\bpr t$
can be extended to type schemes (see also~\citep[\S4.4]{Cas15} for a detailed description).

\subsection{Algorithmic type system with type schemes}

We present here a refinement of the algorithmic type system presented in \ref{sec:algorules}
that associates to an expression a type scheme instead of a regular type.
This allows to type expressions more precisely and thus to have a more powerful
(but still partial) completeness theorem in regards to the declarative type system.

The results about this new type system will be used in \ref{sec:proofs_algorithmic_without_ts} in order to obtain a soundness and completeness
theorem for the algorithmic type system presented in \ref{sec:algorules}.

            \begin{mathpar}
              \Infer[Efq\Aats]
          { }
          { \Gamma, (e:\Empty) \vdashAts e': \Empty }
          { \begin{array}{c}\text{\tiny with priority over}\\[-1.8mm]\text{\tiny all the other rules}\end{array}}
          \qquad
          \Infer[Var\Aats]
              { }
              { \Gamma \vdashAts x: \Gamma(x) }
              { x\in\dom\Gamma}
          \\
          \Infer[Env\Aats]
              { \Gamma\setminus\{e\} \vdashAts e : \ts }
              { \Gamma \vdashAts e: \Gamma(e) \tsand \ts }
              { e\in\dom\Gamma \text{ and } e \text{ not a variable}}
          \qquad
          \Infer[Const\Aats]
              { }
              {\Gamma\vdashAts c:\basic{c}}
              {c\not\in\dom\Gamma}
           \\
              \Infer[Abs\Aats]
                  {\Gamma,x:s_i\vdashAts e:\ts_i'\\ \ts_i'\leq t_i}
                  {
                  \Gamma\vdashAts\lambda^{\wedge_{i\in I}\arrow {s_i} {t_i}}x.e:\textstyle\tsfun {\arrow {s_i} {t_i}}_{i\in I}
                  }
                  {\lambda^{\wedge_{i\in I}\arrow {s_i} {t_i}}x.e\not\in\dom\Gamma}
                  \\
              \Infer[App\Aats]
                  {
                    \Gamma \vdashAts e_1: \ts_1\\
                    \Gamma \vdashAts e_2: \ts_2\\
                    \ts_1 \leq \arrow \Empty \Any\\
                    \ts_2 \leq \dom {\ts_1}
                  }
                  { \Gamma \vdashAts {e_1}{e_2}: \ts_1 \circ \ts_2 }
                  { {e_1}{e_2}\not\in\dom\Gamma}
                  \\
              \Infer[Case\Aats]
                    {\Gamma\vdashAts e:\ts_0\\
                    \Refine {e,t} \Gamma \vdashAts e_1 : \ts_1\\
                    \Refine {e,\neg t} \Gamma \vdashAts e_2 : \ts_2}
                    {\Gamma\vdashAts \tcase {e} t {e_1}{e_2}: \ts_1\tsor \ts_2}
                    { \tcase {e} {t\!} {\!e_1\!}{\!e_2}\not\in\dom\Gamma}
              \\
              \Infer[Proj\Aats]
              {\Gamma \vdashAts e:\ts\and \ts\leq\pair\Any\Any}
              {\Gamma \vdashAts \pi_i e:\bpi_{\mathbf{i}}(\ts)}
              {\pi_i e\not\in\dom\Gamma}

              \Infer[Pair\Aats]
              {\Gamma \vdashAts e_1:\ts_1 \and \Gamma \vdashAts e_2:\ts_2}
              {\Gamma \vdashAts (e_1,e_2):{\ts_1}\tstimes{\ts_2}}
              {(e_1,e_2)\not\in\dom\Gamma}

            \end{mathpar}

            \begin{align*}
              \tyof e \Gamma =
              \left\{\begin{array}{ll}
                \ts & \text{if } \Gamma \vdashAts e:\ts \\
                \tsempty & \text{otherwise}
              \end{array}\right.
            \end{align*}

            \begin{eqnarray}
              \constr\epsilon{\Gamma,e,t} & = & t\\
              \constr{\varpi.0}{\Gamma,e,t} & = & \neg(\arrow{\env {\Gamma,e,t}{(\varpi.1)}}{\neg \env {\Gamma,e,t} (\varpi)})\\
              \constr{\varpi.1}{\Gamma,e,t} & = & \worra{\tsrep {\tyof{\occ e{\varpi.0}}\Gamma}}{\env {\Gamma,e,t} (\varpi)}\\
              \constr{\varpi.l}{\Gamma,e,t} & = & \bpl{\env {\Gamma,e,t} (\varpi)}\\
              \constr{\varpi.r}{\Gamma,e,t} & = & \bpr{\env {\Gamma,e,t} (\varpi)}\\
              \constr{\varpi.f}{\Gamma,e,t} & = & \pair{\env {\Gamma,e,t} (\varpi)}\Any\\
              \constr{\varpi.s}{\Gamma,e,t} & = & \pair\Any{\env {\Gamma,e,t} (\varpi)}\\
              \env {\Gamma,e,t} (\varpi) & = & \tsrep {\constr \varpi {\Gamma,e,t} \tsand \tyof {\occ e \varpi} \Gamma}
            \end{eqnarray}

            \begin{align*}
              &\RefineStep {e,t} (\Gamma) = \Gamma' \text{ with:}\\
              &\dom{\Gamma'}=\dom{\Gamma} \cup \{e' \alt \exists \varpi.\ \occ e \varpi \equiv e'\}\\
              &\Gamma'(e') =
                \left\{\begin{array}{ll}
                  \bigwedge_{\{\varpi \alt \occ e \varpi \equiv e'\}}
                  \env {\Gamma,e,t} (\varpi) & \text{if } \exists \varpi.\ \occ e \varpi \equiv e' \\
                  \Gamma(e') & \text{otherwise}
                \end{array}\right.\\&\\
              &\Refine {e,t} \Gamma={\RefineStep {e,t}}^{n_o}(\Gamma)\qquad\text{with $n$ a global parameter}
            \end{align*}

            \subsection{Proofs for the algorithmic type system with type schemes}

            This section is about the algorithmic type system with type schemes (soundness and some completeness properties).

            Note that, now that we have type schemes, use a different but more convenient definition for $\tyof e \Gamma$ that the one
            in Section~\ref{sec:typenv}:
            \begin{align*}
              \tyof e \Gamma =
              \left\{\begin{array}{ll}
                \ts & \text{if } \Gamma \vdashAts e:\ts \\
                \tsempty & \text{otherwise}
              \end{array}\right.
            \end{align*}

            In this way, $\tyof e \Gamma$ is always defined but is equal to $\tsempty$ when $e$ is not
            well-typed in $\Gamma$.

            We will reuse the definitions and notations introduced in the previous proofs.
            In particular, we only consider well-formed environments, as in the proofs of the declarative type system.

            \subsubsection{Soundness}

            \begin{theorem}[Soundness of the algorithm]\label{soundnessAts}
              For every $\Gamma$, $e$, $t$, $n_o$, if $\tyof e \Gamma \leq t$, then we can derive $\Gamma \vdash e:t$.

              More precisely:
              \begin{align*}
                &\forall \Gamma, e, t.\ \tyof e \Gamma \leq t \Rightarrow \Gamma \vdash e:t\\
                &\forall \Gamma, e, t, \varpi.\ \tyof e \Gamma \neq \tsempty \Rightarrow \pvdash \Gamma e t \varpi:\env {\Gamma,e,t} (\varpi)\\
                &\forall \Gamma, e, t.\ \tyof e \Gamma \neq \tsempty \Rightarrow \Gamma \evdash e t \Refine {e,t} \Gamma
              \end{align*}
            \end{theorem}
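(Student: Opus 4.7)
The plan is to prove the three statements simultaneously by a well-founded induction. The outer induction is lexicographic on the triple (size of $e$, value of $n_o$, length of $\varpi$): the first statement will invoke the third on the same $e$ but on its subexpressions only in the \Rule{Case\Aats} case, the third statement invokes the second by iterating $\RefineStep$, and the second invokes the first via \Rule{PTypeof} on strictly smaller expressions $\occ e \varpi$ with $\varpi\neq\epsilon$. Observing that all back-edges decrease the measure, the joint induction is well-founded.

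For the first statement, I would case-split on the last algorithmic rule deriving $\Gamma \vdashAts e : \ts$. The rules \Rule{Var\Aats}, \Rule{Const\Aats}, \Rule{Efq\Aats}, \Rule{Pair\Aats}, \Rule{Proj\Aats}, and \Rule{App\Aats} translate directly into their declarative counterparts, with a closing \Rule{Subs} step whenever the target type $t$ is a proper supertype of $\tsrep\ts$. For \Rule{Env\Aats}, I would combine the declarative \Rule{Env} (for $\Gamma(e)$) with the inductive hypothesis applied to the premise $\Gamma\setminus\{e\}\vdashAts e:\ts$ using \Rule{Inter}. For \Rule{Abs\Aats} the delicate point is that the produced type scheme $\tsfun{s_i\to t_i}_{i\in I}$ interprets also into negated-arrow types; to reach an arbitrary $t\in\tsint{\tsfun{s_i\to t_i}_{i\in I}}$ I would first apply \Rule{Abs+} to get the positive intersection, then repeatedly apply \Rule{Abs-} to add the finitely many negated arrows $\neg(t_j'\to s_j')$ needed to fall below $t$, the non-emptiness side-condition being granted by the very definition of $\tsint{\_}$. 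For \Rule{Case\Aats} I would use the declarative \Rule{Case}, taking the two environment-refinement premises from the third statement of the theorem applied to $t$ and to $\neg t$, and the two branch premises from the inductive hypothesis of the first statement on the strictly smaller $e_1,e_2$.

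For the second statement, I would proceed by structural induction on the path $\varpi$. The base case $\varpi=\epsilon$ is precisely \Rule{PEps}. Each inductive case of the definition of $\constrf$ (equations \eqref{due}--\eqref{sette}) mirrors exactly one path rule: \eqref{due} is \Rule{PAppL}, \eqref{tre} is \Rule{PAppR} (where the correctness and optimality lemmas for $\worra{}{}$ proved in Appendix~\ref{app:worra} are essential to match the side-condition $t_2\land t_2'\simeq\Empty$), \eqref{quattro}--\eqref{cinque} are \Rule{PPairL}/\Rule{PPairR}, and \eqref{sei}--\eqref{sette} are \Rule{PFst}/\Rule{PSnd}. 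Finally, $\env{\Gamma,e,t}(\varpi)$ intersects the scheme produced by $\constrf$ with $\tyof{\occ e\varpi}{\Gamma}$ and takes a representative, which is reflected on the declarative side by combining \Rule{PInter} with \Rule{PTypeof}; for the latter we need $\Gamma\vdash\occ e\varpi:\tsrep{\tyof{\occ e\varpi}{\Gamma}}$, obtained from the first statement applied to the strict subexpression $\occ e\varpi$.

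For the third statement, I would induct on $n_o$. The base case $n_o=0$ gives $\Refine{e,t}\Gamma=\Gamma$, derivable by \Rule{Base}. For $n_o>0$, writing $\Gamma' = {\RefineStep{e,t}}^{n_o-1}(\Gamma)$, the inductive hypothesis gives $\Gamma\evdash e t\Gamma'$; I would then extend this derivation by one \Rule{Path} step for each occurrence $e'$ of a subexpression of $e$, using the second statement relative to the environment $\Gamma'$ (which is well-defined because monotonicity of $\RefineStep{e,t}$ preserves well-typedness of subexpressions). The main technical obstacle is exactly this last point: ensuring that the second statement still applies at each iteration, which amounts to checking that $\tyof{\occ e\varpi}{\Gamma'}\neq\tsempty$ whenever the original $\tyof{\occ e\varpi}{\Gamma}\neq\tsempty$, or, when it does become $\tsempty$, that the resulting $\Gamma'$ is the bottom environment so that the derivation can be finished by \Rule{Efq}. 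The other obstacle, less structural but more bookkeeping-heavy, is the consistent use of the representative function $\tsrep{\_}$: at every switch between the algorithmic (type-scheme) world and the declarative (type) world we must know that $\tsrep\ts\in\tsint\ts$ and that subsumption in $\tsint\ts$ corresponds to declarative subsumption — a property that follows from the definitions in Appendix~\ref{app:typeschemes} but must be invoked carefully.
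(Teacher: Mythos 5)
Your overall strategy --- a mutual induction establishing the three properties together, a case analysis on the last algorithmic rule for the first property, an inner induction on $\varpi$ for the second, and an extension of the derivation by successive \Rule{Path} steps for the third --- is exactly the structure of the paper's proof, and the individual cases you sketch (the treatment of \Rule{Abs\Aats} via \Rule{Abs+} followed by finitely many applications of \Rule{Abs-}, the matching of each clause of $\constrf$ with the corresponding path rule, and the use of the correctness of $\worra{}{}$ for the \Rule{PAppR} case) are the right ones.

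There is, however, a concrete flaw in your well-foundedness argument. Your measure is the lexicographic triple (size of $e$, $n_o$, length of $\varpi$), but the rule \Rule{Env\Aats} has premise $\Gamma\setminus\{e\}\vdashAts e:\ts$ for the \emph{same} expression $e$: none of your three components decreases there, so the appeal to the inductive hypothesis in that case is unjustified as stated. The paper's induction is instead on the structure of $e$ and, for identical $e$, on $\dom\Gamma$ ordered by inclusion; you need to add that second component. Relatedly, your claim that the second statement invokes the first only on strictly smaller expressions ``with $\varpi\neq\epsilon$'' is not quite accurate: $\env{\Gamma,e,t}(\epsilon)$ is the representative of $t\tsand\tyof{e}{\Gamma}$, so the \Rule{PInter}/\Rule{PTypeof} step at the root already requires $\Gamma\vdash e:\tsrep{\tyof{e}{\Gamma}}$, i.e., the first property for $e$ itself. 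This is not circular --- the first property for $e$ never depends on the second or third property for the same $e$, only on the third property for the strict subexpression tested in \Rule{Case\Aats} --- but it means the three properties must be proved in order (first, then second, then third) within each step of the outer induction, rather than being discharged by your lexicographic measure. With the measure corrected and the statements ordered this way, your outline coincides with the paper's proof.
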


            \begin{proof}
            We proceed by induction over the structure of $e$
            and, for two identical $e$, on the domain of $\Gamma$ (with the inclusion order).

            Let's prove the first property.
            Let $t$ such that $\tsint{\tyof e \Gamma} \leq t$.

            If $\Gamma = \bot$, we trivially have $\Gamma \vdash e:t$ with the rule \Rule{Efq}.
            Let's assume $\Gamma \neq \bot$.

            If $e=x$ is a variable, then the last rule used is \Rule{Var\Aats}.
            We can derive $\Gamma \vdash x:t$ by using the rule \Rule{Env} and \Rule{Subs}.
            So let's assume that $e$ is not a variable.

            If $e\in\dom\Gamma$, then the last rule used is \Rule{Env\Aats}.
            Let $t'\in\tsint{\ts}$ such that $t'\land\Gamma(e)\leq t$.
            The induction hypothesis gives $\Gamma\setminus\{e\} \vdash e:t'$
            (the premise uses the same $e$ but the domain of $\Gamma$ is strictly smaller).
            Thus, we can build a derivation $\Gamma \vdash e:t$ by using the rules \Rule{Subs}, \Rule{Inter},
            \Rule{Env} and the derivation $\Gamma\setminus\{e\} \vdash e:t'$.

            Now, let's suppose that $e\not\in\dom\Gamma$.

            \begin{description}
              \item[$e=c$] The last rule is \Rule{Const\Aats}. We derive easily $\Gamma \vdash c:t$ with \Rule{Const} and \Rule{Subs}.
              \item[$e=x$] Already treated.
              \item[$e=\lambda^{\bigwedge_{i\in I} \arrow {t_i}{s_i}}x.e'$]
              The last rule is \Rule{Abs\Aats}.
              We have $\bigwedge_{i\in I} \arrow {t_i}{s_i} \leq t$.
              Using the definition of type schemes, let $t'=\bigwedge_{i\in I} \arrow {t_i}{s_i} \land \bigwedge_{j\in J} \neg \arrow {t'_j}{s'_j}$ such that $\Empty \neq t' \leq t$.
              The induction hypothesis gives, for all $i\in I$, $\Gamma,x:s_i\vdash e':t_i$.

              Thus, we can derive $\Gamma\vdash e:\bigwedge_{i\in I} \arrow {t_i}{s_i}$ using the rule \Rule{Abs+}, and with \Rule{Inter} and
              \Rule{Abs-} we can derive $\Gamma\vdash e:t'$. We can conclude by applying \Rule{Subs}.
              \item[$e=e_1 e_2$] The last rule is \Rule{App\Aats}.
              We have $\apply {\ts_1} {\ts_2} \leq t$. Thus, let $t_1$ and $t_2$ such that $\ts_1 \leq t_1$, $\ts_2 \leq t_2$ and $\apply {t_1} {t_2} \leq t$.
              We know, according to the descriptive definition of $\apply {} {}$, that there exists $s\leq t$ such that $t_1 \leq \arrow {t_2} s$.

              By using the induction hypothesis, we have $\Gamma\vdash e_1:t_1$ and $\Gamma\vdash e_2:t_2$. We can thus derive
              $\Gamma\vdash e_1:\arrow {t_2} s$ using \Rule{Subs}, and together with $\Gamma\vdash e_2:t_2$ it gives
              $\Gamma\vdash e_1\ e_2:s$ with \Rule{App}. We conclude with \Rule{Subs}.

              \item[$e=\pi_i e'$] The last rule is \Rule{Proj\Aats}. We have $\bpi_i \ts \leq t$. Thus, let $t'$ such that $\ts \leq t'$ and $\bpi_i t' \leq t$.
              We know, according to the descriptive definition of $\bpi_i$, that there exists $t_i\leq t$ such that $t' \leq \pair \Any {t_i}$ (for $i=2$) or $t' \leq \pair {t_i} \Any$ (for $i=1$).

              By using the induction hypothesis, we have $\Gamma\vdash e':t'$, and thus we easily conclude using \Rule{Subs} and \Rule{Proj}
              (for instance for the case $i=1$, we can derive $\Gamma\vdash e':\pair {t_i} \Any$ with \Rule{Subs} and then use \Rule{Proj}).

              \item[$e=(e_1,e_2)$] The last rule is \Rule{Pair\Aats}. We conclude easily with the induction hypothesis and the rules \Rule{Subs} and \Rule{Pair}.

              \item[$e=\ite {e_0} t {e_1} {e_2}$] The last rule is \Rule{Case\Aats}. We conclude easily with the induction hypothesis and the rules
              \Rule{Subs} and \Rule{Case} (for the application of \Rule{Case}, $t'$ must be taken equal to $t_1 \vee t_2$ with $t_1$ and $t_2$ such that $\ts_1\leq t_1$, $\ts_2\leq t_2$ and $t_1 \vee t_2 \leq t$).
            \end{description}\ \\

            Now, let's prove the second property.
            We perform a (nested) induction on $\varpi$.

            Recall that $\env {\Gamma,e,t} (\varpi) = \tsrep {\constr \varpi {\Gamma,e,t} \tsand \tyof {\occ e \varpi} \Gamma}$.

            For any $t'$ such that $\tyof {\occ e \varpi} \Gamma \leq t'$, we can easily derive $\pvdash \Gamma e t \varpi : t'$ by using the outer induction hypothesis
            (the first property that we have proved above) and the rule \Rule{PTypeof}.

            Now we have to derive $\pvdash \Gamma e t \varpi : \constr \varpi {\Gamma,e,t}$ (then it will be easy to conclude using the rule \Rule{PInter}).
            \begin{description}
              \item[$\varpi=\epsilon$] We use the rule \Rule{PEps}.
              \item[$\varpi=\varpi'.1$]
              Let's note $f=\tsrep {\tyof {\occ e {\varpi'.0}} \Gamma}$, $s=\env {\Gamma,e,t} (\varpi')$ and $t_{\text{res}} = \worra f s$.

              By using the outer and inner induction hypotheses, we can derive $\pvdash \Gamma e t \varpi'.0 : f$ and $\pvdash \Gamma e t \varpi' : s$.

              By using the descriptive definition of $\worra {} {}$, we have $t' = \apply f {(\dom f \setminus t_{\text{res}})} \leq \neg s$.

              Moreover, by using the descriptive definition of $\apply {} {}$ on $t'$, we have
              $f \leq \arrow {(\dom f \setminus t_{\text{res}})} {t'}$.

              As $t'\leq \neg s$, it gives $f \leq \arrow {(\dom f \setminus t_{\text{res}})} {\neg s}$.

              Let's note $t_1 = \dom f \setminus t_{\text{res}}$ and $t_2 = \neg s$. The above inequality can be rewritten $f \leq \arrow {t_1} {t_2}$.

              Thus, by using \Rule{PSubs} on the derivation $\pvdash \Gamma e t \varpi'.0 : f$, we can derive $\pvdash \Gamma e t \varpi'.0 : \arrow {t_1} {t_2}$.
              We have:
              \begin{itemize}
                \item $t_2 \land s \simeq \Empty$ (as $t_2 = \neg s$)
                \item $\neg t_1 = t_{\text{res}} \vee \neg \dom f = t_{\text{res}}$
              \end{itemize}

              In consequence, we can conclude by applying the rule \Rule{PAppR}
              with the premises $\pvdash \Gamma e t \varpi'.0 : \arrow {t_1} {t_2}$ and $\pvdash \Gamma e t \varpi' : s$.

              \item[$\varpi=\varpi'.0$] By using the inner induction hypothesis and the previous case we've just proved, we can derive
              $\pvdash \Gamma e t \varpi' : \env {\Gamma,e,t} (\varpi')$ and $\pvdash \Gamma e t \varpi'.1 : \env {\Gamma,e,t} (\varpi'.1)$.
              Hence we can apply \Rule{PAppL}.
              \item[$\varpi=\varpi'.l$] Let's note $t_1=\bpi_1 \env {\Gamma,e,t} (\varpi')$.
              According to the descriptive definition of $\bpi_1$, we have $\env {\Gamma,e,t} (\varpi') \leq \pair {t_1} {\Any}$.

              The inner induction hypothesis gives $\pvdash \Gamma e t \varpi' : \env {\Gamma,e,t} (\varpi')$, and thus using the rule \Rule{PSubs}
              we can derive $\pvdash \Gamma e t \varpi' : \pair {t_1} \Any$. We can conclude just by applying the rule \Rule{PPairL} to this premise.
              \item[$\varpi=\varpi'.r$] This case is similar to the previous.
              \item[$\varpi=\varpi'.f$] The inner induction hypothesis gives $\pvdash \Gamma e t \varpi' : \env {\Gamma,e,t} (\varpi')$,
              so we can conclude by applying \Rule{PFst}.
              \item[$\varpi=\varpi'.s$] The inner induction hypothesis gives $\pvdash \Gamma e t \varpi' : \env {\Gamma,e,t} (\varpi')$,
              so we can conclude by applying \Rule{PSnd}.
            \end{description}\ \\

            Finally, let's prove the third property.
            Let $\Gamma' = \Refine {e,t} \Gamma = {\RefineStep{e,t}}^{n_0}(\Gamma)$.
            We want to show that $\Gamma \evdash e t \Gamma'$ is derivable.

            First, let's note that $\evdash e t$ is transitive:
            if $\Gamma \evdash e t \Gamma'$ and $\Gamma' \evdash e t \Gamma''$, then $\Gamma \evdash e t \Gamma''$.
            The proof is quite easy: we can just start from the derivation of $\Gamma \evdash e t \Gamma'$, and we add
            at the end a slightly modified version of the derivation of $\Gamma' \evdash e t \Gamma''$ where:
            \begin{itemize}
              \item the initial \Rule{Base} rule has been removed in order to be able to do the junction,
              \item all the $\Gamma'$ at the left of $\evdash e t$ are replaced by $\Gamma$
              (the proof is still valid as this $\Gamma'$ at the left is never used in any rule)
            \end{itemize}

            Thanks to this property, we can suppose that $n_0 = 1$ (and so $\Gamma' = \RefineStep{e,t}(\Gamma)$).
            If it is not the case, we just have to proceed by induction on $n_0$ and use the transitivity property.

            Let's build a derivation for $\Gamma \evdash e t \Gamma'$.

            By using the proof of the second property on $e$ that we've done just before, we get:
            $\forall \varpi.\ \pvdash \Gamma e t \varpi: \env {\Gamma,e,t} (\varpi)$.

            Let's recall a monotonicity property: for any $\Gamma_1$ and $\Gamma_2$ such that $\Gamma_2 \leq \Gamma_1$, we have
            $\forall t'.\ \pvdash {\Gamma_1} e t \varpi:t' \Rightarrow \pvdash {\Gamma_2} e t \varpi:t'$.\\
            Moreover, when we also have $\occ e \varpi \in\dom{\Gamma_2}$, we can derive $\pvdash {\Gamma_2} e t \varpi:t'\land\Gamma_2(\occ e \varpi)$
            (just by adding a \Rule{PInter} rule with a \Rule{PTypeof} and a \Rule{Env}).

            Hence, we can apply successively a \Rule{Path} rule for all valid $\varpi$ in $e$,
            with the following premises ($\Gamma_\varpi$ being the previous environment, that trivially verifies $\Gamma_\varpi\leq\Gamma$):\\

            \begin{tabular}{lll}
              If $\occ e \varpi \in\dom{\Gamma_\varpi}$&$\pvdash {\Gamma_\varpi} e t \varpi: \env {\Gamma,e,t} (\varpi)\land\Gamma_\varpi(\occ e \varpi)$&$\Gamma\evdash e t {\Gamma_\varpi}$\\
              Otherwise&$\pvdash {\Gamma_\varpi} e t \varpi: \env {\Gamma,e,t} (\varpi)$&$\Gamma\evdash e t {\Gamma_\varpi}$
            \end{tabular}

            At the end, it gives the judgement $\Gamma \evdash e t \Gamma'$, so it concludes the proof.
            \end{proof}

            \subsubsection{Completeness}

            \begin{definition}[Bottom environment]
              Let $\Gamma$ an environment.\\
              $\Gamma$ is bottom (noted $\Gamma = \bot$) iff $\exists e\in\dom\Gamma.\ \Gamma(e)\simeq\Empty$.
            \end{definition}

            \begin{definition}[Algorithmic (pre)order on environments]
            Let $\Gamma$ and $\Gamma'$ two environments. We write $\Gamma' \leqA \Gamma$ iff:
            \begin{align*}
                &\Gamma'=\bot \text{ or } (\Gamma\neq\bot \text{ and } \forall e \in \dom \Gamma.\ \tyof e \Gamma \leq \Gamma(e))
            \end{align*}

            For an expression $e$, we write $\Gamma' \leqA^e \Gamma$ iff:
            \begin{align*}
              &\Gamma'=\bot \text{ or } (\Gamma\neq\bot \text{ and } \forall e' \in \dom \Gamma \text{ such that $e'$ is a subexpression of $e$}.\ \tyof {e'} \Gamma \leq \Gamma({e'}))
            \end{align*}

            Note that if $\Gamma' \leqA \Gamma$, then $\Gamma' \leqA^e \Gamma$ for any $e$.
            \end{definition}

            \begin{definition}[Order relation for type schemes]
              Let $\ts_1$ and $\ts_2$ two type schemes. We write $\ts_2 \leq \ts_1$ iff $\tsint {\ts_1} \subseteq \tsint{\ts_2}$.
            \end{definition}

          \begin{lemma} When well-defined, the following inequalities hold:
            \begin{align*}
              &\forall t,\ts.\ \tsrep{t\tsand\ts} \leq t \land \tsrep{\ts}\\
              &\forall t_1,t_2,\ts_1,\ts_2.\ t_1 \leq t_2 \text{ and } \ts_1\leq\ts_2 \text{ and } \tsrep{\ts_1}\leq\tsrep{\ts_2} \Rightarrow \tsrep{t_1\tsand\ts_1} \leq \tsrep{t_2\tsand\ts_2}\\
              &\forall \ts_1,\ts_2.\ \tsrep{\apply {\ts_1}{\ts_2}} \leq \apply {\tsrep {\ts_1}}{\tsrep {\ts_2}}
            \end{align*}
          \end{lemma}

          \begin{proof}
            Straightfoward, by induction on the structure of $\ts$.
          \end{proof}

          \begin{lemma}[Monotonicity of the algorithm] Let $\Gamma$, $\Gamma'$ and $e$ such that $\Gamma'\leqA^e \Gamma$ and $\tyof e \Gamma \neq \tsempty$. We have:
            \begin{align*}
              &\tyof e {\Gamma'} \leq \tyof e {\Gamma} \text{ and } \tsrep{\tyof e {\Gamma'}} \leq \tsrep{\tyof e {\Gamma}}\\
              &\forall t,\varpi.\ \env {\Gamma',e,t} (\varpi) \leq \env {\Gamma,e,t} (\varpi)\\
              &\forall t.\ \Refine {e,t} {\Gamma'} \leqA^e \Refine {e,t} \Gamma\\
            \end{align*}
          \end{lemma}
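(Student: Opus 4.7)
The plan is to prove all three properties simultaneously by an outer induction on the structure of $e$, since the three properties are mutually interdependent: the first property for a $\tcase{e_0}{t}{e_1}{e_2}$ expression relies (through \Rule{Case\Aats}) on the third property for its strict subexpression $e_0$, which in turn relies on the second property, which relies on the first. Only the structural induction on $e$ breaks this circularity, because every mutual appeal happens on a \emph{strict} subterm of the term being typed (the scrutinee in \Rule{Case\Aats}, or the two subterms in \Rule{App\Aats}, \Rule{Pair\Aats}, \Rule{Proj\Aats}). Within the outer induction, I would prove the second property by a nested induction on $\varpi$ and the third by a nested induction on $n_o$.

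For the first property I would do a case analysis on the last algorithmic rule used to derive $\Gamma \vdashAts e : \tyof{e}{\Gamma}$. The cases \Rule{Const\Aats}, \Rule{Var\Aats}, \Rule{Abs\Aats} are immediate (the type scheme is either the annotation, a constant, or $\Gamma(x)$, and the hypothesis $\Gamma'\leqA^e \Gamma$ directly covers the variable case). For \Rule{Env\Aats}, \Rule{App\Aats}, \Rule{Pair\Aats}, \Rule{Proj\Aats} I invoke the outer IH on the immediate subexpressions and then use the monotonicity of the type-scheme operators $\tsand$, $\circ$, $\tstimes$, $\bpi_i$ (the first two monotonicities are standard properties from~\cite{Frisch2008}, already used implicitly in Section~\ref{sec:typeops}). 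The interesting case is \Rule{Case\Aats}: from the outer IH on $e_0$ I get the third property $\Refine{e_0,t}{\Gamma'} \leqA^{e_0} \Refine{e_0,t}{\Gamma}$; lifting this to $\leqA^{e_i}$ is immediate because $\RefineStep{e_0,t}$ only introduces hypotheses on subexpressions of $e_0$, which are also subexpressions of $e_i$ whenever they appear inside it. Then the outer IH on $e_1$ and $e_2$ gives $\tyof{e_i}{\Refine{e_0,t}\Gamma'}\leq \tyof{e_i}{\Refine{e_0,t}\Gamma}$, and $\tsor$-monotonicity concludes the case. The representative inequality $\tsrep{\tyof{e}{\Gamma'}} \leq \tsrep{\tyof{e}{\Gamma}}$ is an immediate consequence of $\ts_2\leq \ts_1 \Rightarrow \tsrep{\ts_2}\leq \tsrep{\ts_1}$, which follows from the definitions of $\tsint{\_}$ and $\tsrep{\_}$.

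For the second property, the nested induction on $\varpi$ has base case $\varpi = \epsilon$, which reduces to $\tsrep{t \tsand \tyof{e}{\Gamma'}} \leq \tsrep{t\tsand \tyof{e}{\Gamma}}$ and follows directly from the first property applied to $e$. Each inductive case (\ref{due}--\ref{sette}) of $\constrf$ unfolds to an operator applied to $\env{\_,e,t}(\varpi)$ for a shorter path (handled by the inner IH) and, in case \eqref{tre}, to $\tsrep{\tyof{\occ{e}{\varpi.0}}{\_}}$ which is handled by the first property applied to the strict subexpression $\occ{e}{\varpi.0}$ (available from the outer IH). In each case the monotonicity of the corresponding type operator ($\worra{}{}$, $\bpi_i$, product, negated arrow) finishes the argument. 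The third property follows by induction on $n_o$: each step $\RefineStep{e,t}$ is monotone because, on each expression $e'\equiv \occ{e}{\varpi}$, $\RefineStep{e,t}(\Gamma)(e')$ is the intersection of $\env{\Gamma,e,t}(\varpi)$ over the paths reaching $e'$, which by the second property decreases when $\Gamma$ is replaced by $\Gamma'$, while on expressions already in $\dom \Gamma$ the value is preserved. The main obstacle is precisely the bookkeeping around the mutual recursion and the distinction between $\leqA$ and $\leqA^e$; in particular, in the Case case one must verify that $\Refine{e_0,t}$ never introduces new hypotheses on expressions outside the syntactic footprint of $e_i$, so that $\leqA^{e_0}$ can be safely weakened to $\leqA^{e_i}$ before invoking the outer IH.
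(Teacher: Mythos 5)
Your overall architecture---a mutual induction on the structure of $e$, a nested induction on $\varpi$ for the second property and on the number of iterations of $\RefineStep{e,t}$ for the third, a case analysis on the last algorithmic rule, and appeals to monotonicity of the type operators---is the same as the paper's. But two of the steps you wave through are exactly where the paper has to work, and as stated they do not go through. First, structural induction on $e$ alone does not cover \Rule{Env\Aats}: its premise $\Gamma\setminus\{e\}\vdashAts e:\ts$ re-types the \emph{same} expression $e$ under a smaller environment, not an immediate subexpression, so "invoking the outer IH on the immediate subexpressions" is not available there. The paper inducts lexicographically on the structure of $e$ \emph{and}, for identical $e$, on the domains of $\Gamma$ and $\Gamma'$, and it also treats the asymmetric sub-case $e\in\dom{\Gamma'}\setminus\dom\Gamma$, where the two derivations do not even end with the same rule---something your "case analysis on the last rule used for $\tyof e\Gamma$" silently assumes away.

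Second, and more seriously, $\worra{}{}$ is \emph{not} monotonic in its first argument, so "monotonicity of the corresponding type operator" does not close case \eqref{tre}. Take $t_1=(\Int\to\Int)\wedge(\Bool\to\Bool)\leq t_2=\Int\to\Int$: then $\worra{t_1}{\Any}=\Int\vee\Bool\not\leq\Int=\worra{t_2}{\Any}$, so shrinking the function's type (which is what the first property gives you for $\occ e{\varpi.0}$) can \emph{grow} the computed argument type. The paper instead proves the weaker fact that $t_1\leq t_2$ implies $(\worra{t_1}{t'})\wedge\dom{t_2}\leq\worra{t_2}{t'}$, and recovers the required inequality only because $\env{}{}$ intersects $\constr{\varpi.1}{\Gamma,e,t}$ with $\tsrep{\tyof{\occ e{\varpi.1}}\Gamma}$, which by well-typedness of $e$ and the first property lies below $\dom{t_2}$. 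This interplay between $\worra{}{}$ and the intersection performed by $\env{}{}$ is the technical heart of the lemma and is absent from your argument. (A smaller imprecision: in the \Rule{Case\Aats} step, the obstacle to weakening $\leqA^{e_0}$ into $\leqA^{e_1}$ is not that $\Refine{e_0,t}{\cdot}$ might add hypotheses outside $e_1$---those are irrelevant to $\leqA^{e_1}$---but that $\dom\Gamma$ may already contain subexpressions of $e_1$ that are not occurrences in $e_0$; the paper discharges these using the reductivity of $\Refine{e_0,t}{\cdot}$ together with the original hypothesis $\Gamma'\leqA^{e_1}\Gamma$.)
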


          \begin{proof}
          We proceed by induction over the structure of $e$
          and, for two identical $e$, on the domains of $\Gamma$ and $\Gamma'$ (with the lexicographical inclusion order).

          Let's prove the first property: $\tyof e {\Gamma'} \leq \tyof e {\Gamma} \text{ and } \tsrep{\tyof e {\Gamma'}} \leq \tsrep{\tyof e {\Gamma}}$.
          We will focus on showing $\tyof e {\Gamma'} \leq \tyof e {\Gamma}$.

          The property $\tsrep{\tyof e {\Gamma'}} \leq \tsrep{\tyof e {\Gamma}}$
          can be proved in a very similar way, by using the fact that operators on type schemes like $\tsand$ or $\apply {} {}$ are also monotone.
          (Note that the only rule that introduces the type scheme constructor $\tsfun {\_}$ is \Rule{Abs\Aats}.)

          If $\Gamma' = \bot$ we can conclude directly with the rule \Rule{Efq\Aats}.
          So let's assume $\Gamma' \neq \bot$ and $\Gamma \neq \bot$
          (as $\Gamma = \bot \Rightarrow \Gamma' = \bot$ by definition of $\leqA^e$).

          If $e=x$ is a variable, then the last rule used in $\tyof e \Gamma$ and $\tyof e {\Gamma'}$ is \Rule{Var\Aats}.
          As $\Gamma' \leqA^e \Gamma$, we have $\Gamma'(e) \leq \Gamma(e)$ and thus
          we can conclude with the rule \Rule{Var\Aats}.
          So let's assume that $e$ is not a variable.

          If $e\in\dom\Gamma$, then the last rule used in $\tyof e \Gamma$ is \Rule{Env\Aats}.
          As $\Gamma' \leqA^e \Gamma$, we have $\tyof e {\Gamma'} \leq \Gamma(e)$.
          Moreover, by applying the induction hypothesis, we get $\tyof e {\Gamma'\setminus \{e\}} \leq \tyof e {\Gamma\setminus \{e\}}$
          (we can easily verify that $\Gamma'\setminus\{e\} \leqA^e \Gamma\setminus\{e\}$).
          \begin{itemize}
            \item If we have $e\in\dom{\Gamma'}$, we have according to the rule \Rule{Env\Aats}
            $\tyof e {\Gamma'} \leq \tyof e {\Gamma'\setminus \{e\}} \leq \tyof e {\Gamma\setminus \{e\}}$.

            Together with $\tyof e {\Gamma'} \leq \Gamma(e)$,
            we deduce $\tyof e {\Gamma'} \leq \Gamma(e) \tsand \tyof e {\Gamma\setminus \{e\}} = \tyof e {\Gamma}$.
            \item  Otherwise, we have $e\not\in\dom{\Gamma'}$. Thus
            $\tyof e {\Gamma'} = \tyof e {\Gamma'\setminus \{e\}} \leq \Gamma(e) \tsand \tyof e {\Gamma\setminus \{e\}}=\tyof e {\Gamma}$.
          \end{itemize}

          If $e\not\in\dom\Gamma$ and $e\in\dom{\Gamma'}$, the last rule is \Rule{Env\Aats} for $\tyof e {\Gamma'}$.
          As $\Gamma'\setminus\{e\} \leqA^e \Gamma\setminus\{e\} = \Gamma$,
          we have $\tyof e {\Gamma'} \leq \tyof e {\Gamma'\setminus \{e\}} \leq \tyof e {\Gamma}$ by induction hypothesis.

          Thus, let's suppose that $e\not\in\dom\Gamma$ and $e\not\in\dom{\Gamma'}$.
          From now we know that the last rule in the derivation of $\tyof e {\Gamma}$ and $\tyof e {\Gamma'}$ (if any) is the same.

          \begin{description}
            \item[$e=c$] The last rule is \Rule{Const\Aats}. It does not depend on $\Gamma$ so this case is trivial.
            \item[$e=x$] Already treated.
            \item[$e=\lambda^{\bigwedge_{i\in I} \arrow {t_i}{s_i}}x.e'$]
            The last rule is \Rule{Abs\Aats}.
            We have $\forall i\in I.\ \Gamma', (x:s_i) \leqA^{e'} \Gamma, (x:s_i)$ (quite straightforward)
            so by applying the induction hypothesis we have $\forall i\in I.\ \tyof {e'} {\Gamma', (x:s_i)} \leq \tyof {e'} {\Gamma, (x:s_i)}$.

            \item[$e=e_1 e_2$] The last rule is \Rule{App\Aats}.
            We can conclude immediately by using the induction hypothesis and noticing that $\apply {} {}$ is monotonic for both of its arguments.

            \item[$e=\pi_i e'$] The last rule is \Rule{Proj\Aats}.
            We can conclude immediately by using the induction hypothesis and noticing that $\bpi_i$ is monotonic.

            \item[$e=(e_1,e_2)$] The last rule is \Rule{Pair\Aats}.
            We can conclude immediately by using the induction hypothesis.

            \item[$e=\ite {e_0} t {e_1} {e_2}$] The last rule is \Rule{Case\Aats}.
            By using the induction hypothesis we get $\Refine {e_0,t} {\Gamma'} \leqA^{e_0} \Refine {e_0,t} \Gamma$.
            We also have $\Gamma' \leqA^{e_1} \Gamma$ (as $e_1$ is a subexpression of $e$).

            From those two properties, let's show that we can deduce $\Refine {e_0,t} {\Gamma'} \leqA^{e_1} \Refine {e_0,t} \Gamma$:

            Let $e'\in\dom{\Refine {e_0,t} \Gamma}$ a subexpression of $e_1$.
            \begin{itemize}
              \item If $e'$ is also a subexpression of $e_0$, we can directly deduce\\
              $\tyof {e'} {\Refine {e_0,t} \Gamma'} \leq (\Refine {e_0,t} \Gamma) (e')$
              by using $\Refine {e_0,t} {\Gamma'} \leqA^{e_0} \Refine {e_0,t} \Gamma$.

              \item Otherwise, as $\Refine {e_0,t} {\_}$ is reductive,
              we have $\Refine {e_0,t} {\Gamma'} \leqA \Gamma'$ and thus by using the induction hypothesis
              $\tyof {e'} {\Refine {e_0,t} {\Gamma'}} \leq \tyof {e'} {\Gamma'}$.
              We also have $\tyof {e'} {\Gamma'} \leq \Gamma(e')$ by using $\Gamma' \leqA^{e_1} \Gamma$.
              We deduce $\tyof {e'} {\Refine {e_0,t} {\Gamma'}} \leq \Gamma(e') = (\Refine {e_0,t} \Gamma) (e')$.
            \end{itemize}

            So we have $\Refine {e_0,t} {\Gamma'} \leqA^{e_1} \Refine {e_0,t} \Gamma$.
            Consequently, we can apply the induction hypothesis again to get
            $\tyof {e_1} {\Refine {e_0,t} {\Gamma'}} \leq \tyof {e_1} {\Refine {e_0,t} {\Gamma}}$.

            We proceed the same way for the last premise.
          \end{description}\ \\

          Now, let's prove the second property.
          We perform a (nested) induction on $\varpi$.

          Recall that we have $\forall t_1,t_2,\ts_1,\ts_2.\ t_1 \leq t_2 \text{ and } \ts_1\leq\ts_2 \text{ and } \tsrep{\ts_1}\leq\tsrep{\ts_2} \Rightarrow \tsrep{t_1\tsand\ts_1} \leq \tsrep{t_2\tsand\ts_2}$ (lemma above).

          Thus, in order to prove
          $\tsrep {\constr {\varpi} {\Gamma',e,t} \tsand \tyof {\occ e \varpi} {\Gamma'}} \leq \tsrep {\constr {\varpi} {\Gamma,e,t} \tsand \tyof {\occ e \varpi} {\Gamma}}$,
          we can prove the following:
          \begin{align*}
            &\constr {\varpi} {\Gamma',e,t} \leq \constr {\varpi} {\Gamma,e,t}\\
            &\tyof {\occ e \varpi} {\Gamma'} \leq \tyof {\occ e \varpi} {\Gamma}\\
            &\tsrep{\tyof {\occ e \varpi} {\Gamma'}} \leq \tsrep{\tyof {\occ e \varpi} {\Gamma}}
          \end{align*}

          The two last inequalities can be proved
          with the outer induction hypothesis (for $\varpi=\epsilon$ we use the proof of the first property above).

          Thus we just have to prove that $\constr {\varpi} {\Gamma',e,t} \leq \constr {\varpi} {\Gamma,e,t}$.
          The only case that is interesting is the case $\varpi=\varpi'.1$.

          First, we can notice that the $\worra {} {}$ operator is monotonic for its second argument
          (consequence of its declarative definition).

          Secondly, let's show that for any function types $t_1 \leq t_2$, and for any type $t'$,
          we have $(\worra {t_1} {t'}) \land \dom {t_2} \leq \worra {t_2} {t'}$. By the absurd, let's suppose it is not true.
          Let's note $t'' = (\worra {t_1} {t'}) \land \dom {t_2}$.
          Then we have $t'' \leq \dom {t_2} \leq \dom {t_1}$ and $t_2 \leq \arrow {t''} {t'}$ and
          $t_1 \not\leq \arrow {t''} {t'}$, which contradicts $t_1 \leq t_2$.

          Let's note $t_1 = \tsrep {\tyof{\occ e{\varpi'.0}}{\Gamma'}}$ and $t_2 = \tsrep {\tyof{\occ e{\varpi'.0}}\Gamma}$
          and $t'=\env {\Gamma,e,t} (\varpi')$.
          As $e$ is well-typed, and using the inner induction hypothesis, we have $\tsrep {\tyof {\occ e {\varpi'.1}} {\Gamma'}} \leq \tsrep {\tyof {\occ e {\varpi'.1}} {\Gamma}} \leq \dom {t_2}$.\\
          Thus, using this property, we get:\\
          \begin{align*}
          &(\worra {t_1} {t'}) \land \tsrep {\tyof {\occ e {\varpi'.1}} {\Gamma'}}\\
          \leq &(\worra {t_2} {t'}) \land \tsrep {\tyof {\occ e {\varpi'.1}} {\Gamma}}
          \end{align*}

          Then, using the monotonicity of the second argument of $\worra {}{}$ and the outer induction hypothesis:
          \begin{align*}
            &(\worra {t_1} {\env {\Gamma',e,t} (\varpi')}) \land \tsrep {\tyof {\occ e {\varpi'.1}} {\Gamma'}}\\
            \leq &(\worra {t_2} {\env {\Gamma,e,t} (\varpi')}) \land \tsrep {\tyof {\occ e {\varpi'.1}} {\Gamma}}
          \end{align*}
          \\

          \

          Finally, we must prove the third property.\\
          It is straightforward by using the previous result and the induction hypothesis:\\
          $\forall e'$ s.t. $\exists \varpi.\ \occ e \varpi \equiv e'$, we get
          $\bigwedge_{\{\varpi\alt \occ e \varpi \equiv e'\}} \env {\Gamma',e,t} (\varpi) \leq \bigwedge_{\{\varpi\alt \occ e \varpi \equiv e'\}} \env {\Gamma,e,t} (\varpi)$.

          The rest follows.
          \end{proof}

          \begin{definition}[Positive derivation]
            A derivation of the declarative type system is said positive iff it does not contain any rule \Rule{Abs-}.
          \end{definition}

          \begin{theorem}[Completeness for positive derivations]\label{completenessAtsPositive}
            For every $\Gamma$, $e$, $t$ such that we have a positive derivation of $\Gamma \vdash e:t$,
            there exists a global parameter $n_o$ with which $\tsrep{\tyof e \Gamma} \leq t$.

            More precisely:
            \begin{align*}
              &\forall \Gamma, e, t.\ \Gamma \vdash e:t \text{ has a positive derivation } \Rightarrow \tsrep{\tyof e \Gamma} \leq t\\
              &\forall \Gamma, \Gamma', e, t.\ \Gamma \evdash e t \Gamma' \text{ has a positive derivation } \Rightarrow \Refine {e,t} \Gamma \leqA \Gamma' \text{ (for $n_o$ large enough)}
            \end{align*}
          \end{theorem}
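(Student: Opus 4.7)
The plan is to prove the two statements simultaneously by induction on the positive derivation, together with a third companion statement: for every positive derivation of $\pvdash \Gamma e t \varpi:t'$ and every algorithmic environment $\Gamma_\mathcal{A}$ with $\Gamma_\mathcal{A}\leqA\Gamma$, for $n_o$ large enough we have $\env{\Gamma_\mathcal{A},e,t}(\varpi)\leq t'$. The three inductions are mutually dependent because \Rule{Case} invokes the environment statement, which in turn invokes \Rule{Path}, which uses the companion path statement, which invokes the expression statement through \Rule{PTypeof}. Positivity is crucial: excluding \Rule{Abs--} means that every function is assigned the intersection of its declared arrows, which is exactly the representative $\tsrep{\cdot}$ of the type scheme produced by \Rule{Abs\Aats}, so no ``missing'' negated arrow type can ever be required by the declarative derivation.

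For the expression statement I will proceed by case analysis on the last rule. \Rule{Const}, \Rule{Env}, \Rule{Efq}, \Rule{Pair}, and \Rule{Subs} are immediate from the corresponding algorithmic rules and the monotonicity of $\tsand$. \Rule{Inter} follows because $\tsint{\cdot}$ is closed under intersection, so two applications of the induction hypothesis yield $\tsrep{\tyof e\Gamma}\leq t_1\wedge t_2$. \Rule{App}, \Rule{Proj}, and \Rule{Abs+} follow from the induction hypothesis together with the monotonicity of the operators $\circ$, $\bpi_i$, $\dom{\cdot}$ and the properties of $\tsfun{\cdot}$ (here positivity blocks the only other rule that could type a $\lambda$-abstraction). \Rule{Abs--} is ruled out by hypothesis. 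The \Rule{Case} case uses the expression induction hypothesis on the scrutinee, then the environment induction hypothesis to obtain $\Refine{e,t}\Gamma\leqA\Gamma_1$ and $\Refine{e,\neg t}\Gamma\leqA\Gamma_2$; applying the monotonicity of the algorithm (the lemma immediately preceding the theorem) and the induction hypothesis on the two branches gives the result via \Rule{Case\Aats}.

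For the environment statement I induct on the derivation of $\Gamma\evdash e t \Gamma'$. The base case is immediate since $\Refine{e,t}\Gamma\leqA\Gamma$ by construction. For \Rule{Path} I apply the inductive hypothesis to obtain $\Refine{e,t}\Gamma\leqA\Gamma''$ for some choice of $n_o$, then invoke the companion path statement on the premise $\pvdash{\Gamma''}e t \varpi:t'$ to conclude that $\env{\Refine{e,t}\Gamma,e,t}(\varpi)\leq t'$, provided $n_o$ is increased by one (to fold the new hypothesis into the fixpoint iteration). The companion path statement is itself proved by induction on the \Rule{P$\ast$} derivation: \Rule{PSubs}, \Rule{PInter}, \Rule{PEps}, \Rule{PPairL}, \Rule{PPairR}, \Rule{PFst}, \Rule{PSnd} follow from the defining equations of $\constr\varpi{\cdot}$ and the closure properties of $\tsint{\cdot}$; \Rule{PTypeof} uses the expression induction hypothesis through $\tyof{\occ e\varpi}{\cdot}$; \Rule{PAppL} is a direct match with equation~\eqref{due}; and \Rule{PAppR} follows from the characterization of the operator $\worra{}{}$ that was established in Appendix~\ref{app:worra}.

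The main obstacle is the interplay between \Rule{Path} nesting and the iteration bound $n_o$. A single declarative derivation can insert a refined assumption on some occurrence, then use that assumption to refine another occurrence, and so on, with no a priori bound on the depth. The algorithmic $\RefineStep{e,t}$ performs exactly one such round per iteration, so I must pick $n_o$ at least as large as the nesting depth of \Rule{Path} in the derivation, and rely on monotonicity of $\RefineStep{e,t}$ to ensure that earlier-derived refinements are preserved in subsequent iterations. A secondary subtlety is that \Rule{PAppL} in the declarative system uses the potentially richer type of the function at occurrence $\varpi.0$, which in the algorithm is approximated by $\tsrep{\tyof{\occ e{\varpi.0}}\Gamma}$; positivity guarantees that this approximation loses no information relevant to the derivation, which is precisely what makes the completeness statement hold in the positive fragment but not in general.
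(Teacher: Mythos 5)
Your proposal is correct and follows essentially the same route as the paper's proof: a mutual induction over the expression, environment, and path judgements, with positivity ruling out \Rule{Abs--}, the path cases resolved by the defining equations of $\constrf$ and the characterization of $\worra{}{}$, and the iteration bound $n_o$ obtained by combining (via monotonicity of $\RefineStep{e,t}$) the bounds produced for nested \Rule{Path} premises. The only presentational difference is that you build the quantification over algorithmic environments $\Gamma_{\mathcal{A}}\leqA\Gamma$ into the companion path statement, whereas the paper proves it for the declarative environment and transfers the result afterwards through its monotonicity lemma; also note that the \Rule{Env} case is slightly less ``immediate'' than you suggest, since it needs the observation that in a positive derivation only \Rule{Inter} and \Rule{Subs} can follow an \Rule{Env} step on $e$, so as to extract a sub-derivation over $\Gamma\setminus\{e\}$.
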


          \begin{proof}
          We proceed by induction on the derivation.

          Let's prove the first property. We have a positive derivation of $\Gamma \vdash e:t$.

          If $\Gamma = \bot$, we can conclude directly using \Rule{Efq\Aats}. Thus, let's suppose $\Gamma \neq \bot$.

          If $e=x$ is a variable, then the derivation only uses \Rule{Env}, \Rule{Inter} and \Rule{Subs}.
          We can easily conclude just be using \Rule{Var\Aats}. Thus, let's suppose $e$ is not a variable.

          If $e\in\dom\Gamma$, we can have the rule \Rule{Env} applied to $e$ in our derivation, but in this case
          there can only be \Rule{Inter} and \Rule{Subs} after it (not \Rule{Abs-} as we have a positive derivation).
          Thus, our derivation contains a derivation of $\Gamma \vdash e:t'$ that does not use the rule \Rule{Env} on $e$
          and such that $t'\land \Gamma(e) \leq t$ (actually, it is possible for our derivation to typecheck $e$ only using the rule \Rule{Env}:
          in this case we can take $t'=\Any$ and use the fact that $\Gamma$ is well-formed).
          Hence, we can build a positive derivation for $\Gamma\setminus\{e\} \vdash e:t'$.
          By using the induction hypothesis we deduce that $\tsrep{\tyof e {\Gamma\setminus\{e\}}} \leq t'$.
          Thus, by looking at the rule \Rule{Env\Aats},
          we deduce $\tsrep{\tyof e \Gamma} \leq \Gamma(e) \land \tsrep{\tyof e {\Gamma\setminus\{e\}}} \leq t$.
          It concludes this case, so let's assume $e\not\in\dom\Gamma$.

          Now we analyze the last rule of the derivation:

          \begin{description}
            \item[\Rule{Env}] Impossible case ($e\not\in\dom\Gamma$).
            \item[\Rule{Inter}] By using the induction hypothesis we get $\tsrep{\tyof e \Gamma} \leq t_1$ and $\tsrep{\tyof e \Gamma} \leq t_2$.
            Thus, we have $\tsrep{\tyof e \Gamma} \leq t_1 \land t_2$.
            \item[\Rule{Subs}] Trivial using the induction hypothesis.
            \item[\Rule{Const}] We know that the derivation of $\tyof e \Gamma$ (if any) ends with the rule \Rule{Const\Aats}.
            Thus this case is trivial.
            \item[\Rule{App}] We know that the derivation of $\tyof e \Gamma$ (if any) ends with the rule \Rule{App\Aats}.
            Let $\ts_1 = \tyof {e_1} \Gamma$ and $\ts_2 = \tyof {e_2} \Gamma$.
            With the induction hypothesis we have $\tsrep {\ts_1} \leq \arrow {t_1} {t_2}$ and $\tsrep {\ts_2} \leq t_1$, with $t_2=t$.
            According to the descriptive definition of $\apply{}{}$, we have
            $\apply{\tsrep {\ts_1}}{\tsrep {\ts_2}} \leq \apply{\arrow {t_1}{t_2}}{t_1} \leq t_2$.
            As we also have $\tsrep{\apply {\ts_1} {\ts_2}} \leq \apply{\tsrep {\ts_1}}{\tsrep {\ts_2}}$,
            we can conclude that $\tyof e \Gamma \leq t_2=t$.

            \item[\Rule{Abs+}] We know that the derivation of $\tyof e \Gamma$ (if any) ends with the rule \Rule{Abs\Aats}.
            This case is straightforward using the induction hypothesis.
            \item[\Rule{Abs-}] This case is impossible (the derivation is positive).
            \item[\Rule{Case}] We know that the derivation of $\tyof e \Gamma$ (if any) ends with the rule \Rule{Case\Aats}.
            By using the induction hypothesis and the monotonicity lemma, we get $\tsrep{\ts_1}\leq t$ and $\tsrep{\ts_2}\leq t$.
            So we have $\tsrep{\ts_1\tsor\ts_2}=\tsrep{\ts1}\vee\tsrep{\ts2}\leq t$.
            \item[\Rule{Proj}] Quite similar to the case \Rule{App}.
            \item[\Rule{Pair}] We know that the derivation of $\tyof e \Gamma$ (if any) ends with the rule \Rule{Pair\Aats}.
            We just use the induction hypothesis and the fact that $\tsrep{\ts_1\tstimes\ts_2}=\pair {\tsrep{\ts1}} {\tsrep{\ts2}}$.
          \end{description}

          \

          Now, let's prove the second property. We have a positive derivation of $\Gamma \evdash e t \Gamma'$.

          \begin{description}
            \item[\Rule{Base}] Any value of $n_o$ will give $\Refine {e,t} \Gamma \leqA \Gamma$, even $n_o = 0$.
            \item[\Rule{Path}] We have $\Gamma' = \Gamma_1,(\occ e \varpi:t')$.
            By applying the induction hypothesis on the premise $\Gamma \evdash e t \Gamma_1$, we have
            $\RefineStep {e,t}^n (\Gamma) = \Gamma_2$ with $\Gamma_2 \leqA \Gamma_1$ for a certain $n$.

            We now proceed by induction on the derivation $\pvdash {\Gamma_1} e t \varpi:t'$
            to show that we can obtain $\env {\Gamma'',e,t} (\varpi) \leq t'$ with $\Gamma'' = \RefineStep {e,t}^{n'} (\Gamma_2)$
            for a certain $n'$. It is then easy to conclude by taking $n_o = n+n'$.

            \begin{description}
              \item[\Rule{PSubs}] Trivial using the induction hypothesis.
              \item[\Rule{PInter}] By using the induction hypothesis we get:
              \begin{align*}
                &\env {\Gamma_1'',e,t} (\varpi) \leq t_1\\
                &\env {\Gamma_2'',e,t} (\varpi) \leq t_2\\
                &\RefineStep {e,t}^{n_1} (\Gamma_1) \leqA \Gamma_1''\\
                &\RefineStep {e,t}^{n_2} (\Gamma_2) \leqA \Gamma_2''
              \end{align*}

              By taking $n'=\max (n_1,n_2)$,
              we can have $\Gamma'' = \RefineStep {e,t}^{n'} (\Gamma_2)$ with $\Gamma'' \leqA \Gamma_1''$ and $\Gamma'' \leqA \Gamma_2''$.
              Thus, by using the monotonicity lemma, we can obtain $\env {\Gamma'',e,t} (\varpi) \leq t_1 \land t_2 = t'$.
              \item[\Rule{PTypeof}] By using the outer induction hypothesis we get
              $\tsrep{\tyof {\occ e \varpi} {\Gamma_2}} \leq t'$.
              Moreover we have $\env {\Gamma_2,e,t} (\varpi) \leq \tsrep{\tyof {\occ e \varpi} {\Gamma_2}}$
              (by definition of $\env {}$), thus we can conclude directly.
              \item[\Rule{PEps}] Trivial.
              \item[\Rule{PAppR}] By using the induction hypothesis we get:
              \begin{align*}
                &\env {\Gamma_1'',e,t} (\varpi.0) \leq \arrow {t_1} {t_2}\\
                &\env {\Gamma_2'',e,t} (\varpi) \leq t_2'\\
                & t_2\land t_2' \simeq \Empty\\
                &\RefineStep {e,t}^{n_1} (\Gamma_1) \leqA \Gamma_1''\\
                &\RefineStep {e,t}^{n_2} (\Gamma_2) \leqA \Gamma_2''
              \end{align*}

              By taking $n'=\max (n_1,n_2) + 1$,
              we can have $\Gamma'' = \RefineStep {e,t}^{n'} (\Gamma_2)$ with $\Gamma'' \leqA \RefineStep {e,t} (\Gamma_1'')$
              and $\Gamma'' \leqA \RefineStep {e,t} (\Gamma_2'')$.

              In consequence, we have $\tsrep{\tyof {\occ e {\varpi.0}} {\Gamma''}} \leq \env {\Gamma_1'',e,t} (\varpi.0) \leq \arrow {t_1} {t_2}$
              (by definition of $\RefineStep {e,t}$).
              We also have, by monotonicity, $\env {\Gamma'',e,t} (\varpi) \leq t_2'$.

              As $t_2\land t_2' \simeq \Empty$, we have:
              \begin{align*}
                &\apply {(\arrow {t_1} {t_2})} {(\dom{\arrow {t_1} {t_2}}\setminus(\neg t_1))}\\
                &\simeq \apply {(\arrow {t_1} {t_2})} {t_1} \simeq t_2 \leq \neg t_2'
              \end{align*}

              Thus, by using the declarative definition of $\worra {} {}$, we know that
              $\worra {(\arrow {t_1} {t_2})} {t_2'} \leq \neg t_1$.

              According to the properties on $\worra {} {}$ that we have proved in the proof of the monotonicity lemma,
              we can deduce:
              \begin{align*}
              &t_1 \land \worra {\tsrep{\tyof {\occ e {\varpi.0}} {\Gamma''}}} {\env {\Gamma'',e,t} (\varpi)}\\
              &\leq t_1 \land \worra {(\arrow {t_1} {t_2})} {t_2'} \leq t_1 \land \neg t_1 \simeq \Empty
              \end{align*}
              And thus $\worra {\tsrep{\tyof {\occ e {\varpi.0}} {\Gamma''}}} {\env {\Gamma'',e,t} (\varpi)} \leq \neg t_1$.

              It concludes this case.

              \item[\Rule{PAppL}] By using the induction hypothesis we get:
              \begin{align*}
                &\env {\Gamma_1'',e,t} (\varpi.1) \leq t_1\\
                &\env {\Gamma_2'',e,t} (\varpi) \leq t_2\\
                &\RefineStep {e,t}^{n_1} (\Gamma_1) \leqA \Gamma_1''\\
                &\RefineStep {e,t}^{n_2} (\Gamma_2) \leqA \Gamma_2''
              \end{align*}

              By taking $n'=\max (n_1,n_2)$,
              we can have $\Gamma'' = \RefineStep {e,t}^{n'} (\Gamma_2)$ with $\Gamma'' \leqA \Gamma_1''$ and $\Gamma'' \leqA \Gamma_2''$.
              Thus, by using the monotonicity lemma, we can obtain $\env {\Gamma'',e,t} (\varpi.0) \leq \neg (\arrow {t_1} {\neg t_2}) = t'$.

              \item[\Rule{PPairL}] Quite straightforward using the induction hypothesis and the descriptive definition of $\bpi_1$.
              \item[\Rule{PPairR}] Quite straightforward using the induction hypothesis and the descriptive definition of $\bpi_2$.
              \item[\Rule{PFst}] Trivial using the induction hypothesis.
              \item[\Rule{PSnd}] Trivial using the induction hypothesis.
            \end{description}
          \end{description}
          \end{proof}

          From this result, we will now prove a stronger but more complex completeness theorem.
          We were not able to prove full completeness, just a partial form of it. Indeed,
          the use of nested \Rule{PAppL} yields a precision that the algorithm loses by applying \tsrep{}
          in the definition of \constrf{}. Completeness is recovered by forbidding nested negated arrows on the
          left-hand side of negated arrows.

          \begin{definition}[Rank-0 negated derivation]
            A derivation of the declarative type system is said rank-0 negated iff any application of \Rule{PAppL}
            has a positive derivation as first premise ($\pvdash \Gamma e t \varpi.1:t_1$).
          \end{definition}

          \noindent The use of this terminology is borrowed from the ranking of higher-order
          types, since, intuitively, it corresponds to typing a language in
          which  in the types used in dynamic tests, a negated arrow never occurs  on the
          left-hand side of another negated arrow.

          \begin{lemma}
            If $e$ is an application, then $\tyof e \Gamma$ does not contain any constructor $\tsfun \cdots$.
            Consequently, we have $\tsrep{\tyof e \Gamma} \simeq \tyof e \Gamma$.
          \end{lemma}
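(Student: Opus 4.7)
The plan is to proceed by induction on the derivation of $\Gamma \vdashAts e_1 e_2 : \ts$ and argue case by case on the last rule. Since $e = e_1 e_2$ is an application, the syntactic side conditions of the algorithmic rules leave exactly three candidates for the final rule: \Rule{Efq\Aats}, \Rule{Env\Aats}, and \Rule{App\Aats}. All other rules are ruled out by their shape constraints (in particular \Rule{Abs\Aats}, which is the \emph{only} rule that ever introduces the $\tsfun{\cdots}$ constructor, applies only to $\lambda$-abstractions).

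For \Rule{Efq\Aats}, the conclusion assigns $\ts = \Empty$, which is a plain type and thus contains no $\tsfun{\cdots}$. For \Rule{App\Aats}, we have $\ts = \apply{\ts_1}{\ts_2}$; by the extension of the $\apply{}{}$ operator to type schemes recalled in Section~\ref{app:typeschemes} and detailed in~\cite{Frisch2008}, this operator returns a type (it computes the type of the result of a function application from the type schemes of the function and its argument), and therefore its output never carries a $\tsfun{\cdots}$ marker. For \Rule{Env\Aats}, the conclusion is $\ts = \Gamma(e) \tsand \ts'$ where $\ts'$ is obtained from a strictly smaller derivation of $\Gamma \setminus \{e\} \vdashAts e_1 e_2 : \ts'$; the induction hypothesis gives that $\ts'$ contains no $\tsfun{\cdots}$, and the definition of the operator $t \tsand \ts$ (a type intersected with a type scheme) distributes through $\tsor$ and $\tstimes$ and reduces to an ordinary type intersection at the leaves, so it preserves the absence of the $\tsfun{\cdots}$ constructor.

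For the consequence, observe that $\tsrep{\_}$ acts as the identity on leaves that are types and commutes with $\tstimes$ and $\tsor$ (yielding $\times$ and $\vee$ respectively); the only case in which it discards information is on $\tsfun{\cdots}$ subterms (where it picks $\bigwedge_{i \in I}\arrow{s_i}{t_i}$ as a representative, losing the negated arrows). Since $\tyof{e}{\Gamma}$ has no such subterm, a straightforward structural induction yields $\tsrep{\tyof{e}{\Gamma}} \in \tsint{\tyof{e}{\Gamma}}$ and symmetrically $\tyof{e}{\Gamma} \leq \tsrep{\tyof{e}{\Gamma}}$, so the two are equivalent as type schemes.

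The main obstacle is really just a bookkeeping one: verifying that the type-scheme extensions of $\apply{}{}$ and $\tsand$ (imported from~\cite{Frisch2008}) indeed preserve the ``no $\tsfun$'' invariant. There is no conceptual subtlety, but one must unfold those definitions carefully, since the present paper only gives their specifications and not their full algorithmic descriptions.
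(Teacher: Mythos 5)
Your proof is correct and follows essentially the same route as the paper's, which simply observes by case analysis that none of \Rule{Efq\Aats}, \Rule{Env\Aats}, or \Rule{App\Aats} (the only rules applicable to an application) can produce a $\tsfun{\cdots}$ constructor; you merely spell out the per-rule justifications and the induction for \Rule{Env\Aats} that the paper leaves implicit.
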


          \begin{proof}
            By case analysis: neither \Rule{Efq\Aats}, \Rule{Env\Aats} nor \Rule{App\Aats} can produce a type
            containing a constructor $\tsfun \cdots$.
          \end{proof}

          \begin{theorem}[Completeness for rank-0 negated derivations]
            For every $\Gamma$, $e$, $t$ such that we have a rank-0 negated derivation of $\Gamma \vdash e:t$, there exists a global parameter $n_o$
            with which $\tyof e \Gamma \leq t$.

            More precisely:
            \begin{align*}
              &\forall \Gamma, e, t.\ \Gamma \vdash e:t \text{ has a rank-0 negated derivation } \Rightarrow \tyof e \Gamma \leq t\\
              &\forall \Gamma, \Gamma', e, t.\ \Gamma \evdash e t \Gamma' \text{ has a rank-0 negated derivation } \Rightarrow \Refine {e,t} \Gamma \leqA \Gamma' \text{ (for $n_o$ large enough)}
            \end{align*}
          \end{theorem}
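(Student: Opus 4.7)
The plan is to extend the proof of Theorem \ref{completenessAtsPositive} (completeness for positive derivations), strengthening the conclusion from $\tsrep{\tyof{e}{\Gamma}} \leq t$ to $\tyof{e}{\Gamma} \leq t$, i.e., $t \in \tsint{\tyof{e}{\Gamma}}$. This strengthening is both necessary and sufficient to accommodate \Rule{Abs-}: necessary, because the negated-arrow refinements produced by \Rule{Abs-} generally lie in $\tsint{\tsfunone{s_i}{t_i}_{i\in I}}$ but not above its canonical representative; sufficient, because the algorithmic type-scheme constructors ($\tstimes$, $\tsor$, $\tsand$, $\apply{}{}$, $\bpi_i$) are designed so that their interpretations are compatible with the declarative operators when applied to any element of the underlying set. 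I would proceed by structural induction on the rank-0 negated derivation, following the case analysis of the positive proof. All rules except \Rule{Abs-}, \Rule{PAppR}, and \Rule{PAppL} carry over unchanged, with the interpretation-level bound replacing the representative-level bound.

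For the new \Rule{Abs-} case, the induction hypothesis gives $\tyof{\lambda^{\wedge_i s_i \to t_i}x.e}{\Gamma} \leq t$. By the definition of $\tsint{\tsfunone{\cdot}{\cdot}}$, this means $t$ can be written as a supertype of some $\bigwedge_i(s_i \to t_i) \wedge \bigwedge_j \neg(s'_j \to t'_j)$ with non-empty intersection. The side condition of \Rule{Abs-} guarantees that appending one more negated arrow $\neg(t_1\to t_2)$ preserves non-emptiness, so $t \wedge \neg(t_1\to t_2)$ still belongs to $\tsint{\tsfunone{s_i}{t_i}_{i\in I}} = \tsint{\tyof{\lambda^{\wedge_i s_i \to t_i}x.e}{\Gamma}}$, which is exactly the conclusion required.

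The critical case is \Rule{PAppL}, where the rank-0 restriction is invoked. Algorithmically, $\constr{\varpi.0}{\Gamma,e,t} = \neg(\arrow{\env{\Gamma,e,t}(\varpi.1)}{\neg\,\env{\Gamma,e,t}(\varpi)})$, so to derive $\pvdash\Gamma e t \varpi.0:\neg(\arrow{t_1}{\neg t_2})$ it suffices to show $\env{\Gamma,e,t}(\varpi.1)\leq t_1$ and $\env{\Gamma,e,t}(\varpi)\leq t_2$. The second follows from the outer induction hypothesis applied to $\pvdash\Gamma e t \varpi:t_2$. For the first, the rank-0 hypothesis tells us that the derivation of $\pvdash\Gamma e t \varpi.1:t_1$ is positive, so we invoke Theorem \ref{completenessAtsPositive} directly on it to obtain $\tsrep{\tyof{\occ{e}{\varpi.1}}{\cdots}}\leq t_1$, which together with the monotonicity lemma yields $\env{\Gamma,e,t}(\varpi.1)\leq t_1$. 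The second statement of the theorem, about $\Refine{e,t}{\Gamma}\leqA\Gamma'$, is then obtained from transitivity of $\evdash e t$ and the usual iteration argument on $\RefineStep{e,t}$, exactly as in the positive proof.

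The main obstacle is \Rule{PAppR}, where the algorithm applies $\tsrep$ to the function's type scheme before passing it to the $\worra{}{}$ operator. One must verify that this collapse is harmless here: by the algorithmic formula for $\worra{}{}$, only the positive arrows $P_i$ appearing in the DNF of the function type contribute, so $\worra{\tsrep{\ts}}{u}$ coincides with what one would obtain using any element of $\tsint{\ts}$ sharing the same positive arrows. Threading this invariant through the induction — and checking that the rank-0 condition rules out exactly those nested uses where $\tsrep$ would otherwise lose information needed by the conclusion — is the core technical burden; the remaining details then follow the positive proof essentially verbatim.
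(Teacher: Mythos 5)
Your overall strategy matches the paper's: weaken the conclusion of the first statement to the scheme level, dispatch \Rule{Abs-} through the interpretation of $\tsfun{\cdots}$, and use the rank-0 condition to apply the positive-completeness theorem to the left premise of \Rule{PAppL}. (One terminological slip: passing from $\tsrep{\tyof e \Gamma}\leq t$ to $\tyof e \Gamma\leq t$ is a \emph{weakening} of the conclusion, not a strengthening --- and it is precisely this weaker induction hypothesis that creates the difficulties below.) However, there is a genuine gap in your treatment of the path judgments. You claim that all path rules other than \Rule{PAppL} and \Rule{PAppR} ``carry over unchanged'', keeping the inner induction statement $\env {\Gamma'',e,t} (\varpi)\leq t'$. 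This fails already at \Rule{PTypeof}: its premise $\Gamma\vdash \occ e \varpi:t'$ may itself use \Rule{Abs-} (rank-0 only forbids \Rule{Abs-} in the left premise of \Rule{PAppL}), so the outer induction hypothesis only yields $t'\in\tsint{\tyof {\occ e \varpi} {\Gamma''}}$, which does not imply $\tsrep{\constr \varpi {\Gamma'',e,t}\tsand\tyof {\occ e \varpi} {\Gamma''}}\leq t'$ --- take $\occ e \varpi$ a $\lambda$-abstraction and $t'$ a negated arrow. The paper's proof therefore replaces the inner statement by the weaker $\env {\Gamma'',e,t} (\varpi)\tsand\tyof {\occ e \varpi} {\Gamma''}\leq t'$ and threads that through every path rule; your proposal never makes this adjustment.

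Once the inner statement is weakened, the point of \Rule{PAppR} and of the right premise of \Rule{PAppL} is to recover \emph{type-level} bounds from scheme-level ones, and the paper does this with a simple lemma you are missing: if $\occ e \varpi$ is an application, then $\tyof {\occ e \varpi} {\Gamma''}$ contains no $\tsfun{\cdots}$ constructor, hence $\env {\Gamma'',e,t} (\varpi)\tsand\tyof {\occ e \varpi} {\Gamma''}$ collapses to $\env {\Gamma'',e,t} (\varpi)$. Since the occurrence at $\varpi$ in both application rules \emph{is} an application, this gives exactly the inequalities $\env {\Gamma'',e,t} (\varpi)\leq t_2'$ (resp.\ $\leq t_2$) needed to rerun the $\worra{}{}$ argument from the positive proof. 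Your substitute --- an unproven invariant that $\worra{\tsrep\ts}{u}$ depends only on the positive arrows of $\ts$, explicitly deferred as ``the core technical burden'' --- is not established and, more importantly, is aimed at the wrong place: the $\tsrep$ applied to the function's type at $\varpi.0$ is handled essentially as in the positive proof; the real obstruction is the scheme-level induction hypothesis at $\varpi$ and at \Rule{PTypeof}, which your argument leaves unresolved.
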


          \begin{proof}
            This proof is done by induction. It is quite similar to that of the completeness for positive derivations.
            In consequence, we will only detail cases that are quite different from those of the previous proof.

            Let's begin with the first property. We have a rank-0 negated derivation of $\Gamma \vdash e:t$.
            We want to show $\tyof e \Gamma \leq t$ (note that this is weaker than showing $\tsrep {\tyof e \Gamma} \leq t$).

            As in the previous proof, we can suppose that $\Gamma \neq \bot$ and that $e$ is not a variable.

            The case $e\in\dom\Gamma$ is also very similar, but there is an additional case to consider:
            the rule \Rule{Abs-} could possibly be used after a rule \Rule{Env} applied on $e$.
            However, this case can easily be eliminated by changing the premise of this \Rule{Abs-} with another one
            that does not use the rule \Rule{Env} on $e$ (the type of the premise does not matter for the rule \Rule{Abs-},
            even $\Any$ suffices). Thus let's assume $e\not\in\dom\Gamma$.

            Now we analyze the last rule of the derivation (only the cases that are not similar are shown):
            \begin{description}
              \item[\Rule{Abs-}] We know that the derivation of $\tyof e \Gamma$ (if any) ends with the rule \Rule{Abs\Aats}.
              Moreover, by using the induction hypothesis on the premise, we know that $\tyof e \Gamma \neq \tsempty$.
              Thus we have $\tyof e \Gamma \leq \neg (\arrow {t_1} {t_2}) = t$ (because every type $\neg (\arrow {s'} {t'})$
              such that $\neg (\arrow {s'} {t'}) \land \bigwedge_{i\in I} \arrow {s_i} {t_i} \neq \Empty$ is in $\tsint{\tsfun{\arrow {s_i} {t_i}}}$).
            \end{description}

            \

            Now let's prove the second property. We have a rank-0 negated derivation of $\Gamma \evdash e t \Gamma'$.

            \begin{description}
              \item[\Rule{Base}] Any value of $n_o$ will give $\Refine {e,t} \Gamma \leqA \Gamma$, even $n_o = 0$.
              \item[\Rule{Path}] We have $\Gamma' = \Gamma_1,(\occ e \varpi:t')$.

              As in the previous proof of completeness,
              by applying the induction hypothesis on the premise $\Gamma \evdash e t \Gamma_1$, we have
              $\RefineStep {e,t}^n (\Gamma) = \Gamma_2$ with $\Gamma_2 \leqA \Gamma_1$ for a certain $n$.

              However, this time, we can't prove $\env {\Gamma'',e,t} (\varpi) \leq t'$ with $\Gamma'' = \RefineStep {e,t}^{n'} (\Gamma_2)$
              for a certain $n'$: the induction hypothesis is weaker than in the previous proof
              (we don't have $\tsrep {\tyof e \Gamma} \leq t$ but only $\tyof e \Gamma \leq t$).

              Instead, we will prove by induction on the derivation $\pvdash {\Gamma_1} e t \varpi:t'$ that
              $\env {\Gamma'',e,t} (\varpi) \tsand \tyof {\occ e \varpi} {\Gamma''} \leq t'$.
              It suffices to conclude in the same way as in the previous proof:
              by taking $n_o = n+n'$,
              it ensures that our final environment $\Gamma_{n_o}$ verifies $\tyof {\occ e \varpi} \Gamma_{n_o} \leq t'$
              and thus we have $\Gamma_{n_o} \leq \Gamma'$
              (given that $\tsrep{\Empty} = \Empty$, we also easily verify that if $\Gamma' = \bot \Rightarrow \Gamma_{n_o}=\bot$).

              \begin{description}
                \item[\Rule{PSubs}] Trivial using the induction hypothesis.
                \item[\Rule{PInter}] Quite similar to the previous proof (the induction hypothesis is weaker, but it works the same way).
                \item[\Rule{PTypeof}] By using the outer induction hypothesis we get $\tyof {\occ e \varpi} {\Gamma_2} \leq t'$ so it is trivial.
                \item[\Rule{PEps}] Trivial.
                \item[\Rule{PAppR}] By using the induction hypothesis, we get:
                \begin{align*}
                  &\env {\Gamma_1'',e,t} (\varpi.0) \tsand \tyof {\occ e {\varpi.0}} {\Gamma_1''} \leq \arrow {t_1} {t_2}\\
                  &\env {\Gamma_2'',e,t} (\varpi) \tsand \tyof {\occ e {\varpi}} {\Gamma_2''} \leq t_2'\\
                  &t_2 \land t_2' \simeq \Empty\\
                  &\RefineStep {e,t}^{n_1} (\Gamma_1) \leqA \Gamma_1''\\
                  &\RefineStep {e,t}^{n_2} (\Gamma_2) \leqA \Gamma_2''
                \end{align*}

                Moreover, as $\occ e {\varpi}$ is an application, we can use the lemma above to deduce
                $\env {\Gamma_2'',e,t} (\varpi) \tsand \tyof {\occ e {\varpi}} {\Gamma_2''} = \env {\Gamma_2'',e,t} (\varpi)$
                (see definition of $\env {}$).

                Thus we have $\env {\Gamma_2'',e,t} (\varpi) \leq t_2'$.\\
                We also have $\env {\Gamma_1'',e,t} (\varpi.0) \leq \tsrep{\env {\Gamma_1'',e,t} (\varpi.0) \tsand \tyof {\occ e {\varpi.0}} {\Gamma_1''}} \leq \arrow {t_1} {t_2}$.

                Now we can conclude exactly as in the previous proof (by taking $n'=\max (n_1,n_2)$).

                \item[\Rule{PAppL}] We know that the left premise is a positive derivation.
                Thus, using the previous completeness theorem, we get:
                \begin{align*}
                  &\env {\Gamma_1'',e,t} (\varpi.1) \leq t_1\\
                  &\RefineStep {e,t}^{n_1} (\Gamma_1) \leqA \Gamma_1''
                \end{align*}

                By using the induction hypothesis, we also get:
                \begin{align*}
                  &\env {\Gamma_2'',e,t} (\varpi) \tsand \tyof {\occ e {\varpi}} {\Gamma_2''} \leq t_2\\
                  &\RefineStep {e,t}^{n_2} (\Gamma_2) \leqA \Gamma_2''
                \end{align*}

                Moreover, as $\occ e {\varpi}$ is an application, we can use the lemma above to deduce
                $\env {\Gamma_2'',e,t} (\varpi) \tsand \tyof {\occ e {\varpi}} {\Gamma_2''} = \env {\Gamma_2'',e,t} (\varpi)$
                (see definition of $\env {}$).

                Thus we have $\env {\Gamma_2'',e,t} (\varpi) \leq t_2$.

                Now we can conclude exactly as in the previous proof (by taking $n'=\max (n_1,n_2)$).

                \item[\Rule{PPairL}] Quite straightforward using the induction hypothesis and the descriptive definition of $\bpi_1$.
                \item[\Rule{PPairR}] Quite straightforward using the induction hypothesis and the descriptive definition of $\bpi_2$.
                \item[\Rule{PFst}] Quite straightforward using the induction hypothesis.
                \item[\Rule{PSnd}] Quite straightforward using the induction hypothesis.
              \end{description}
            \end{description}
          \end{proof}

    \subsection{Proofs for the algorithmic type system without type
      schemes}
    \label{sec:proofs_algorithmic_without_ts}

    In this section, we consider the algorithmic type system without type schemes, as defined in \ref{sec:algorules}.

    \subsubsection{Soundness}

    \begin{lemma}
      \label{soundness_simple_ts}
    For every $\Gamma$, $e$, $t$, $n_o$, if $\Gamma\vdashA e: t$, then there exists $\ts \leq t$ such that $\Gamma \vdashAts e: \ts$.
    \end{lemma}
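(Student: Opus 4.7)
The plan is to prove this by straightforward induction on the derivation of $\Gamma \vdashA e : t$, exploiting the fact that the algorithmic system $\vdashAts$ has a rule for every rule of $\vdashA$ and that the two systems have the same side-conditions modulo the use of type schemes versus types. For each rule, I would apply the corresponding $\vdashAts$ rule to the inductively-obtained premises and verify that the resulting type scheme $\ts$ satisfies $\ts \leq t$ (equivalently, that $t \in \tsint{\ts}$). The base cases \Rule{Efq\Aa}, \Rule{Var\Aa}, and \Rule{Const\Aa} are immediate: the matching $\vdashAts$ rules produce exactly the same type $t$, and $t \in \tsint{t}$ by definition.

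For the inductive cases the key observation is that type schemes carry more information than the representative types they denote, so the operators on type schemes are monotone: if $\ts \leq t$ then $\dom{\ts} \geq \dom{t}$, and if moreover $\ts' \leq t'$ then $\ts \circ \ts' \leq t \circ t'$, $\bpi_i(\ts) \leq \bpi_i(t)$, $\ts \tsor \ts' \leq t \vee t'$, $\ts \tstimes \ts' \leq t \times t'$, and $t'' \tsand \ts \leq t'' \wedge t$. These monotonicity properties are established in~\cite{Frisch2008} (and \citep[\S4.4]{Cas15}) for the $\dom$, $\circ$, $\bpi_i$ and $\tsand$ operators on schemes. With them in hand, each algorithmic rule translates directly: for \Rule{App\Aa}, the IH gives $\ts_1 \leq t_1 \leq \Empty \to \Any$ and $\ts_2 \leq t_2 \leq \dom{t_1} \leq \dom{\ts_1}$, so \Rule{App\Aats} applies and yields $\ts_1 \circ \ts_2 \leq t_1 \circ t_2$; \Rule{Proj\Aa}, \Rule{Pair\Aa}, and \Rule{Env\Aa} are handled analogously.

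The only slightly non-routine case is \Rule{Abs\Aa}, which in $\vdashA$ produces the plain intersection $\bigwedge_{i\in I} s_i \to t_i$, whereas \Rule{Abs\Aats} produces the scheme $\tsfun{s_i \to t_i}_{i\in I}$. Here we take the empty set $J = \varnothing$ of negated arrows in the definition of $\tsint{\tsfun{s_i \to t_i}_{i\in I}}$ and conclude that $\bigwedge_{i\in I} s_i \to t_i$ itself belongs to $\tsint{\tsfun{s_i \to t_i}_{i\in I}}$, hence $\tsfun{s_i \to t_i}_{i\in I} \leq \bigwedge_{i\in I} s_i \to t_i$. The premises $\Gamma, x:s_i \vdashA e : t_i'$ with $t_i' \leq t_i$ translate via IH to $\Gamma, x:s_i \vdashAts e : \ts_i'$ with $\ts_i' \leq t_i' \leq t_i$, which is exactly what \Rule{Abs\Aats} requires. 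For \Rule{Case\Aa}, the only subtlety is that the premises use the function $\Refinef$, which in both systems is defined through $\tyof{\_}{\_}$ and $\env{\_}{\_}$; since these ultimately rely on the same typing relation ($\vdashAts$ in the type-scheme version), the refined environments $\Refine{e,t}\Gamma$ coincide in both systems, so the IH applies directly.

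The main (and essentially only) obstacle is therefore establishing, or at least clearly citing, the monotonicity of the scheme operators with respect to the order $\leq$ on type schemes. Modulo that, the proof is a purely mechanical rule-by-rule simulation and carries no additional technical content beyond what is already contained in the companion theorems (Theorem~\ref{soundnessAts} and Theorem~\ref{completenessAtsPositive}) for the type-scheme system.
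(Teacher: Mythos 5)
Your proposal is correct and follows the same route as the paper, whose entire proof of this lemma is the single sentence ``Straightforward induction over the structure of $e$'': your rule-by-rule simulation of $\vdashA$ by $\vdashAts$ is exactly that induction written out, and the points you isolate (monotonicity of the scheme operators, the empty set of negated arrows in the \textsf{Abs} case) are the right ones. The only place you are slightly more casual than you could be is the \textsf{Case} rule, where the two $\Refinef$ functions are defined via different typing judgments and so ``coincide'' deserves a word of justification (the scheme-based environment is at least as precise, which suffices by monotonicity), but this does not change the argument.
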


    \begin{proof}
      Straightforward induction over the structure of $e$.
    \end{proof}

    \begin{theorem}[Soundness of the algorithmic type system without type schemes]\label{soundnessA}
    For every $\Gamma$, $e$, $t$, $n_o$, if $\Gamma\vdashA e: t$, then $\Gamma \vdash e:t$.
    \end{theorem}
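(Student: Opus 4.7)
The plan is to derive this theorem as an immediate corollary of the two results that have just been established in the preceding material, namely Lemma \ref{soundness_simple_ts} (which bridges $\vdashA$ and $\vdashAts$) and Theorem \ref{soundnessAts} (soundness of $\vdashAts$ with respect to the declarative system $\vdash$). No new induction or case analysis should be required: the hard work has already been done inside the type-scheme algorithm.

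Concretely, assume $\Gamma \vdashA e : t$. First I would apply Lemma \ref{soundness_simple_ts} to obtain a type scheme $\ts$ such that $\Gamma \vdashAts e : \ts$ and $\ts \leq t$, i.e.\ $t \in \tsint{\ts}$. Next, from $\Gamma \vdashAts e : \ts$ and the definition of $\tyof{e}{\Gamma}$ in Section~\ref{sec:typenv} (recalling that $\tyof{e}{\Gamma}$ equals the unique type scheme derivable for $e$ under $\Gamma$ in $\vdashAts$, since the algorithmic system is syntax directed), I would conclude $\tyof{e}{\Gamma} = \ts$, and hence $\tyof{e}{\Gamma} \leq t$. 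Finally, applying Theorem \ref{soundnessAts} to this inequality yields $\Gamma \vdash e : t$, which is the desired conclusion.

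The only subtlety worth double-checking is that $\vdashAts$ really does associate at most one type scheme to $e$ under a given $\Gamma$, so that speaking of $\tyof{e}{\Gamma}$ as ``the'' scheme is justified. This is immediate from inspecting the rules of $\vdashAts$: the side conditions (priority of \Rule{Efq\Aats}, the disjoint syntactic shapes for \Rule{Var\Aats}, \Rule{Env\Aats}, \Rule{Const\Aats}, \Rule{Abs\Aats}, \Rule{App\Aats}, \Rule{Case\Aats}, \Rule{Proj\Aats}, \Rule{Pair\Aats}) make the system syntax-directed, so each rule uniquely determines the output scheme from the inputs. Consequently the chain $\Gamma \vdashA e : t \;\Rightarrow\; \Gamma \vdashAts e : \tyof{e}{\Gamma} \;\wedge\; \tyof{e}{\Gamma} \leq t \;\Rightarrow\; \Gamma \vdash e : t$ goes through without any further obstacle, and the proof reduces to a two-line argument.
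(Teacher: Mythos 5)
Your proof is correct and matches the paper's own argument exactly: the paper also derives this theorem as an immediate corollary of Lemma~\ref{soundness_simple_ts} and Theorem~\ref{soundnessAts}, which it dismisses as ``trivial.'' Your extra remark that $\vdashAts$ is syntax-directed, so that $\tyof{e}{\Gamma}$ is well defined and equal to the scheme $\ts$ produced by the lemma, is a legitimate detail the paper leaves implicit.
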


    \begin{proof}
      Trivial by using the theorem \ref{soundnessAts} and the previous lemma.
    \end{proof}

    \subsubsection{Completeness}

    \[
      \begin{array}{lrcl}
        \textbf{Simple type} & t_{s} & ::= & b \alt \pair {t_{s}} {t_{s}} \alt t_{s} \vee t_{s} \alt \neg t_{s} \alt \Empty \alt \arrow \Empty \Any\\
        \textbf{Positive type} & t_+ & ::= & t_{s} \alt t_+ \vee t_+ \alt t_+ \land t_+ \alt \arrow {t_+} {t_+} \alt \arrow {t_+} {\neg t_+}\\
        \textbf{Positive abstraction type} & t^\lambda_+ & ::= & \arrow {t_+} {t_+} \alt \arrow {t_+} {\neg t_+} \alt t^\lambda_+ \land t^\lambda_+\\
        \textbf{Positive expression} & e_+ & ::= & c\alt x\alt e_+ e_+\alt\lambda^{t^\lambda_+} x.e_+\alt \pi_j e_+\alt(e_+,e_+)\alt\tcase{e_+}{t_{s}}{e_+}{e_+}
      \end{array}
    \]

    \begin{lemma}
      If we restrict the language to positive expressions $e_+$,
      then we have the following property:

      $\forall \Gamma, e_+, \ts.\ \Gamma \vdashAts e_+:\ts \Rightarrow \Gamma \vdashA e_+:\tsrep{\ts}$
    \end{lemma}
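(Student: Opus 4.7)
The plan is to prove this by induction on the derivation of $\Gamma \vdashAts e_+ : \ts$. Since positive expressions restrict function annotations to intersections of positive arrows and restrict type-case tests to simple types $t_s$ (whose only functional subtype is $\Empty \to \Any$), the type schemes arising during typing have a controlled shape that makes $\tsrep$ commute with the type operators used in the algorithm. I would therefore first establish a small catalogue of commutation lemmas:
\begin{itemize}[nosep]
\item $\tsrep{\tsfunone{s_i}{t_i}_{i\in I}} = \bigwedge_{i\in I}(s_i \to t_i)$, so \Rule{Abs\Aats} and \Rule{Abs\Aa} agree on representatives;
\item $\tsrep{\ts_1 \circ \ts_2} \simeq \tsrep{\ts_1} \circ \tsrep{\ts_2}$ and $\tsrep{\bpi_i \ts} \simeq \bpi_i(\tsrep{\ts})$, since both $\circ$ and $\bpi_i$ depend only on the positive-arrow/product skeleton of the scheme;
\item $\tsrep{\ts_1 \tstimes \ts_2} = \pair{\tsrep{\ts_1}}{\tsrep{\ts_2}}$ and $\tsrep{\ts_1 \tsor \ts_2} = \tsrep{\ts_1} \vee \tsrep{\ts_2}$;
\item $\tsrep{t \tsand \ts} \simeq t \wedge \tsrep{\ts}$ whenever the intersection is non-empty (already recorded in the lemma preceding the monotonicity proof).
\end{itemize}

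Equipped with these, the induction proceeds by case analysis on the last algorithmic rule. The rules \Rule{Efq\Aats}, \Rule{Var\Aats}, \Rule{Const\Aats}, \Rule{Pair\Aats}, \Rule{Proj\Aats}, \Rule{App\Aats} and \Rule{Env\Aats} are immediate from the commutation lemmas combined with the induction hypothesis and the corresponding rule of $\vdashA$; the restriction to positive expressions ensures that the subtyping side-conditions (e.g.\ $\ts_1 \leq \Empty\to\Any$ for applications) transfer unchanged when moving from $\ts_i$ to $\tsrep{\ts_i}$. The rule \Rule{Abs\Aats} is handled by applying the induction hypothesis to each body judgement $\Gamma, x:s_i \vdashAts e : \ts_i'$ with $\ts_i' \leq t_i$, obtaining $\Gamma, x:s_i \vdashA e : \tsrep{\ts_i'}$ with $\tsrep{\ts_i'} \leq t_i$, which is exactly the premise of \Rule{Abs\Aa}.

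The non-trivial case is \Rule{Case\Aats}, because the refinement environments $\Refine{e,t}{\Gamma}$ are built from the auxiliary functions $\constr{}{}$ and $\env{}{}$, whose definitions differ between the two algorithms only in the clause for $\varpi.1$, where the type-scheme version inserts an explicit $\tsrep$ around $\tyof{\occ e{\varpi.0}}{\Gamma}$. I would prove by a nested induction on $\varpi$ that, for positive expressions, $\env{\Gamma,e,t}(\varpi)$ and $\constr{\varpi}{\Gamma,e,t}$ computed by the two algorithms yield equivalent types; the inductive hypothesis on the outer derivation gives $\tyof{\occ e{\varpi'}}{\Gamma}$ in the simple system equal to $\tsrep{\tyof{\occ e{\varpi'}}{\Gamma}}$ in the type-scheme system, and the commutation lemmas close each clause. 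One then concludes by applying the outer induction hypothesis to each branch.

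The main obstacle is precisely this \Rule{Case\Aats} case: even though positive expressions forbid testing functional types beyond $\Empty \to \Any$, lambda-abstractions still appear as sub-expressions and their typings carry $\tsfunone{\cdot}$ schemes, so one has to verify carefully that the $\worra{}{}$ operator invoked on $\tsrep$ of such a scheme gives the same answer as on the scheme itself. Because $\worra{}{}$ is defined algorithmically only from the positive arrows of a function type (see Section~\ref{sec:typeops}), and the $\tsrep$ of a $\tsfunone{\cdot}$ scheme preserves exactly those positive arrows, the desired equality holds; this is the crux of the argument and, once established, the rest of the proof proceeds mechanically.
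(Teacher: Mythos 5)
Your proposal is correct and follows essentially the same route as the paper's own (much terser) proof: a structural induction on $e_+$ whose crux is that, for positive expressions, the operators $\circ$, $\bpi_i$ and $\worra{}{}$ depend only on the positive-arrow part of a scheme, which is exactly what $\tsrep{\cdot}$ retains, and that tests restricted to $\Empty\to\Any$ prevent the negative part from ever mattering in a refinement. The only point the paper makes explicitly that you leave implicit is that a functional type cannot be refined to $\Empty$ on account of its negative arrows, so the \Rule{Efq\Aats} behaviour of the two systems also coincides; this follows from your restriction on tested types, so it is a presentational rather than a substantive gap.
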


    \begin{proof}
      We can prove it by induction over the structure of $e_+$.

      The main idea of this proof is that, as $e_+$ is a positive expression, the rule \Rule{Abs-} is not needed anymore
      because the negative part of functional types (i.e. the $N_i$ part of their DNF) becomes useless:

      \begin{itemize}
        \item When typing an application $e_1 e_2$, the negative part of the type of $e_1$
        is ignored by the operator $\apply {} {}$.
        \item Moreover, as there is no negated arrows in the domain of lambda-abstractions,
        the negative arrows of the type of $e_2$ can also be ignored.
        \item Similarly, negative arrows can be ignored when refining an application ($\worra {} {}$ also ignore the negative part
        of the type of $e_1$).
        \item Finally, as the only functional type that we can test is $\arrow \Empty \Any$, a functional type
        cannot be refined to $\Empty$ due to its negative part, and thus we can ignore its negative part
        (it makes no difference relatively to the rule \Rule{Efq\Aats}).
      \end{itemize}
    \end{proof}

    \begin{theorem}[Completeness of the algorithmic type system for positive expressions]\label{completenessA}
      For every type environment $\Gamma$ and positive expression $e_+$, if
      $\Gamma\vdash e_+: t$, then there exist $n_o$ and  $t'$ such that $\Gamma\vdashA
      e_+: t'$.
    \end{theorem}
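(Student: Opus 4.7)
The plan is to chain together two existing results: Theorem \ref{completenessAtsPositive} (completeness of $\vdashAts$ for positive declarative \emph{derivations}) and the lemma immediately preceding the theorem (which says that for positive \emph{expressions}, any $\vdashAts$ typing of $\ts$ yields a $\vdashA$ typing of $\tsrep{\ts}$). Since the conclusion only asks for the existence of \emph{some} type $t'$ such that $\Gamma \vdashA e_+ : t'$, we do not need to preserve the original type $t$; it is enough to show that the algorithm $\vdashAts$ returns some non-empty type scheme on $e_+$.

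The bridging step, and the main obstacle, is the following auxiliary lemma that I would prove first: \emph{for every positive expression $e_+$, if $\Gamma \vdash e_+ : t$ is derivable in the declarative system, then there exists a type $t''$ such that $\Gamma \vdash e_+ : t''$ is derivable by a positive derivation (i.e., one that does not use \Rule{Abs-})}. The proof is by induction on the declarative derivation. For every rule other than \Rule{Abs-}, the induction hypothesis applied to the premises yields positive derivations with possibly weaker types, and the rule can be reapplied (using \Rule{Subs} if needed) to re-establish the same kind of conclusion. For \Rule{Abs-}, the derived type is $\neg(t_1 \to t_2)$, but the premise already derives some type $t_0$ for the lambda: by the induction hypothesis (applied structurally to the sub-derivation of the lambda, which only uses \Rule{Abs+} and other premises on the positive body $e_+$), we obtain a positive derivation of $\Gamma \vdash \lambda^{\wedge_{i\in I}s_i \to t_i} x.e_+ : \bigwedge_{i\in I} s_i \to t_i$, i.e., of the declared type of the lambda, without invoking \Rule{Abs-}. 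The subtle point is that this weakening must be propagated through the enclosing derivation; here I would rely on the fact that positive expressions test only simple types $t_s$ (whose only functional inhabitant is $\arrow \Empty \Any$), so type-case branches cannot be sensitive to the distinction between $t_1 \to t_2$ and $\neg(t_1 \to t_2)$ for refined $t_1, t_2$, and the path rules \Rule{PAppL}, \Rule{PAppR}, etc., never require negated arrow annotations on sub-lambdas to go through.

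Once this auxiliary lemma is in hand, the completeness theorem is immediate. From $\Gamma \vdash e_+ : t$, the lemma produces a positive derivation of $\Gamma \vdash e_+ : t''$ for some $t''$. Applying Theorem \ref{completenessAtsPositive} to this positive derivation yields some $n_o$ with $\tsrep{\tyof{e_+}{\Gamma}} \leq t''$; in particular $\tyof{e_+}{\Gamma} \neq \tsempty$, so there exists a type scheme $\ts$ with $\Gamma \vdashAts e_+ : \ts$. Finally, applying the lemma stated just above the theorem, we obtain $\Gamma \vdashA e_+ : \tsrep{\ts}$, and taking $t' = \tsrep{\ts}$ concludes the proof.

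The main difficulty is the auxiliary elimination lemma. The subtlety is not that \Rule{Abs-} disappears from top-level use (that part is routine since we allow weakening the final type), but that every \emph{nested} occurrence of \Rule{Abs-} inside sub-derivations must also be eliminable without breaking derivability. The restriction to positive expressions is exactly what makes this possible: simple test types cannot observe the absence of a refined arrow type, so a lambda's declared intersection type is always a sufficient witness. If instead the language allowed testing against negated arrow types (as in the full system of Section~\ref{sec:language}), this elimination would genuinely fail, which is precisely why completeness is stated for positive expressions only.
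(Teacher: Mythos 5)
Your proposal follows the same route as the paper: chain Theorem~\ref{completenessAtsPositive} (completeness of $\vdashAts$ for positive \emph{derivations}) with the lemma that converts a $\vdashAts$ judgement on a positive expression into a $\vdashA$ judgement on $\tsrep{\ts}$. The paper's own proof is literally the one-line ``trivial by the theorem and the previous lemma,'' so the one thing you add --- the elimination lemma turning an arbitrary declarative derivation of $\Gamma\vdash e_+:t$ into a \emph{positive} derivation of $\Gamma\vdash e_+:t''$ for some $t''$ --- is a step the paper silently elides, and you are right that it is needed: Theorem~\ref{completenessAtsPositive} is stated only for positive derivations, whereas the hypothesis of the completeness theorem supplies a derivation that may well use \Rule{Abs-} on nested lambdas. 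Your sketch of that lemma is in the right spirit but is thinner than it should be at one point: the induction hypothesis ``there exists some $t''$'' is not by itself strong enough to push through \Rule{App}, since weakening the type of $e_1$ could a priori lose the arrow $t_1\to t_2$ needed for the application. What saves you is the semantic-subtyping fact (Lemma~6.8 of~\cite{Frisch2008}, already invoked in the subject-reduction proof) that a \emph{non-empty} intersection $\bigwedge_i(s_i\to t_i)\land\bigwedge_j\neg(u_j\to v_j)$ is below an arrow $s\to t$ only if its positive part $\bigwedge_i(s_i\to t_i)$ already is; the non-emptiness is exactly the side condition of \Rule{Abs-}, so erasing the negated-arrow conjuncts preserves every arrow bound actually used by \Rule{App}. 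You cover the type-case and path-rule cases correctly via positivity of the tested types, but the \Rule{App} case deserves the same explicit justification. With that fact stated, your argument is complete and, if anything, more careful than the paper's.
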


    \begin{proof}
      Trivial by using the theorem \ref{completenessAtsPositive} and the previous lemma.
    \end{proof}



\iflongversion
\else
\section{Record types operators}\label{app:recop}
\input{record_operations}
\newpage

\section{A more precise rule for inference}\label{app:optimize}
In our prototype we have implemented for the inference of arrow type the following rule:

\newpage
\section{A Roadmap to Polymorphic Types}
\label{app:roadmap}
\input{roadmappolymorphism}

\fi
\end{document}